\newtheorem{theorem}{Theorem}
\newtheorem{condition}{Condition}
\newtheorem{corollary}{Corollary}
\newtheorem{lemma}{Lemma}
\theoremstyle{plain}
\numberwithin{equation}{section}
\newcommand{\ind}{\overset{d}{\to}}
\newcommand{\inp}{\overset{p}{\to}}
\newcommand{\barr}{\begin{array}}
\newcommand{\earr}{\end{array}}
\newcommand{\ba}{\begin{align}}
\newcommand{\ea}{\end{align}}
\newcommand{\bs}{\begin{align}\begin{split}\nonumber}
\newcommand{\bsnumber}{\begin{align}\begin{split}}
\newcommand{\es}{\end{split}\end{align}}
\newcounter{myletter}\setcounter{myletter}{1}
\newcounter{mynumber}\setcounter{mynumber}{11}
\newcommand{\be}{\begin{eqnarray}}
\newcommand{\ee}{\end{eqnarray}}
\newcommand{\bi}{\begin{itemize}}
\newcommand{\ei}{\end{itemize}}
\renewcommand{\hat}{\widehat}
\begin{document}

\title[ ]{Panel Data Models with Nonadditive Unobserved Heterogeneity: Estimation and Inference}
\author[ ]{Iv\'an Fern\'andez-Val$^\S$ \ \ Joonhwah Lee$^\ddag$}

\date{This version of \today. First version of April 2004. This paper is
based in part on the second chapter of
Fern\'andez-Val (2005)'s MIT PhD dissertation. We wish to thank Josh
Angrist, Victor Chernozhukov and Whitney Newey for encouragement and
advice. For suggestions and comments, we are grateful to Manuel
Arellano, Mingli Chen, the editor Elie Tamer, three anonymous referees and the
participants to the Brown and Harvard-MIT Econometrics seminar. We
thank Aju Fenn for providing us the data for the empirical example.
All remaining errors are ours. Fern\'andez-Val gratefully
acknowledges financial support from Fundaci\'on Caja Madrid,
Fundaci\'on Ram\'on Areces, and the National Science Foundation.
Please send comments or suggestions to ivanf@bu.edu (Iv\'an) or
jhlee82@mit.edu (Joonhwan)}

\thanks{ $\S$ Boston University, Department of Economics, 270 Bay State Road,Boston, MA
02215, ivanf@bu.edu.}
\thanks{ $\ddag$  Department of Economics, MIT, 50 Memorial Drive, Cambridge, MA 02142,
jhlee82@mit.edu.}

\begin{abstract}
This paper considers fixed effects estimation and inference in
linear and nonlinear panel data models with random coefficients and
endogenous regressors. The quantities of interest -- means,
variances, and other moments of the random coefficients -- are
estimated by cross sectional sample moments of GMM estimators
applied separately to the time series of each individual.  To deal with the
incidental parameter problem introduced by the noise
of the within-individual estimators in short panels, we develop bias corrections. These 
corrections are based on higher-order asymptotic expansions of the GMM estimators and produce improved point and interval estimates in
moderately long panels. Under
asymptotic sequences where the cross sectional and time series dimensions of the panel pass to infinity at the same rate, the uncorrected estimator has an asymptotic bias of the same order as the asymptotic variance. The bias corrections remove the bias without increasing variance. An
empirical example on cigarette demand based on Becker, Grossman and
Murphy (1994) shows significant heterogeneity in the price effect
across U.S. states.

\textbf{JEL Classification}: C23; J31; J51.

\textbf{Keywords}: Correlated Random Coefficient Model; Panel Data;
Instrumental Variables; GMM; Fixed Effects; Bias; Incidental Parameter Problem; Cigarette demand.
\end{abstract}

\maketitle

\newpage


\section{Introduction} \label{s1}

This paper considers estimation and inference in linear and
nonlinear panel data models with random coefficients and endogenous
regressors. The quantities of interest are means, variances, and
other moments of the distribution of the random coefficients. In a state level panel
model of rational addiction, for example, we might be interested in
the mean and variance of the distribution of the price effect on cigarette consumption
across states, controlling for endogenous past and future consumptions. These
models pose important challenges in estimation and inference if the
relation between the regressors and random coefficients is left
unrestricted. Fixed effects methods based on GMM estimators applied
separately to the time series of each individual can be severely biased due to
the incidental parameter problem. The source of the bias is the
finite-sample bias of GMM if some of the regressors is endogenous or
the model is nonlinear in parameters, or nonlinearities if the parameter 
of interest is the variance or other high order moment of the random coefficients. Neglecting the heterogeneity and
 imposing fixed coefficients does not solve the problem, because the resulting
estimators are generally inconsistent for the mean of the random coefficients
(Yitzhaki, 1996, and Angrist, Graddy and Imbens,
2000).\footnote{Heckman and Vytlacil (2000) and Angrist (2004) find
sufficient conditions for fixed coefficient OLS and IV estimators to
be consistent for the average coefficient.}
Moreover, imposing fixed coefficients does not allow us to estimate 
other moments of the distribution of the random coefficients.

We introduce a  class of bias-corrected panel fixed effects
GMM estimators. Thus, instead of imposing
fixed coefficients, we estimate different coefficients for each
individual using the time series observations and correct for the resulting incidental parameter bias. For linear models, in addition to the bias correction, these estimators
differ from the standard fixed effects estimators in that both the
intercept and the slopes are different for each individual.
Moreover, unlike for the classical random coefficient estimators, they do not
rely on any restriction in the relationship between the regressors
and random coefficients; see Hsiao and Pesaran (2004) for a recent
survey on random coefficient models. This flexibility allows us to account for Roy (1951) type selection where the regressors are
decision variables with levels determined by their
returns. Linear models with Roy selection are commonly referred to as correlated random
coefficient models in the panel data literature.  In the presence of endogenous regressors, treating the random
coefficients as fixed effects is also convenient to overcome the identification
problems in these models pointed out by Kelejian (1974).

The most general models  we consider are semiparametric in the sense
that the distribution of the random coefficients is unspecified
and the parameters are identified from moment conditions. These
conditions can be nonlinear functions in parameters and variables,
accommodating both linear and nonlinear random coefficient models,
and allowing for the presence of time varying endogeneity in the
regressors not captured by the random coefficients. We use the
moment conditions to estimate the model parameters and other
quantities of interest via GMM methods applied separately to the time series of each individual. The resulting estimates can
be severely biased in short panels due to the incidental parameters
problem, which in this case is a consequence of the finite-sample
bias of GMM (Newey and Smith, 2004) and/or the nonlinearity of the
quantities of interest in the random coefficients. We develop
analytical corrections to reduce the bias.

To derive the bias corrections, we use higher-order expansions of the GMM estimators, extending the 
analysis in Newey and Smith (2004) for cross sectional estimators to
panel data estimators with fixed effects and serial dependence. If $n$ and $T$ denote 
the cross sectional and time series dimensions of the panel, the
corrections remove the leading term of the bias of order  $O(T^{-1})$, and center
the asymptotic distribution at the true parameter value under
sequences where $n$ and $T$ grow at the same rate. This approach is aimed to
perform well in econometric applications that use moderately long
panels, where the most important part of the bias is captured by the
first term of the expansion. Other previous studies that used a
similar approach for the analysis of linear and nonlinear fixed
effects estimators in panel data include, among others, Kiviet
(1995), Phillips and Moon (1999), Alvarez and Arellano (2003), Hahn
and Kuersteiner (2002), Lancaster (2002), Woutersen (2002),  Hahn and Newey (2004), and Hahn and
Kuersteiner (2011). See Arellano and Hahn
(2007) for a survey of this literature and additional references.

A first distinctive feature of our corrections is that
they can be used in overidentified models where the number of moment restrictions is greater than the
dimension of the parameter vector. This situation is common in
economic applications such as rational expectation models.
Overidentification complicates the analysis by introducing an
initial stage for estimating optimal weighting matrices to combine
the moment conditions, and precludes the use of the existing
methods. For example,  Hahn and
Newey's (2004) and Hahn and Kuersteiner's (2011) general bias reduction methods for nonlinear panel
data models do not cover optimal two-step GMM estimators. A second distinctive feature is that 
our results are specifically developed for models with multidimensional
nonadditive heterogeneity, whereas the previous studies focused
mostly on models with additive heterogeneity captured by an
scalar individual effect. Exceptions include Arellano and Hahn (2006) and
Bester and Hansen (2008), which also considered multidimensional
heterogeneity, but they focus on parametric likelihood-based panel
models with exogenous regressors. Bai (2009) analyzed related  linear panel models
with exogenous regressors and multidimensional interactive
individual effects. Bai's nonadditive heterogeneity allows for
interaction between individual effects and unobserved factors,
whereas the nonadditive heterogeneity that we consider allows for
interaction between individual effects and observed regressors.
A third distinctive feature of our analysis is the focus on moments of the 
distribution of the individual effects as  one of the main quantities of interest.


We illustrate the applicability of our methods with empirical and
numerical examples based on the cigarette demand application of
Becker, Grossman and Murphy (1994). Here, we estimate a linear
rational addictive demand model with state-specific coefficients for
price and common parameters for the other regressors using a panel
data set of U.S. states. We find that standard estimators that do
not account for non-additive heterogeneity by imposing a constant
coefficient for price can have important biases for the common parameters,
mean of the price coefficient and demand elasticities. The
analytical bias corrections are effective in removing the bias of
the estimates of the mean and standard deviation of the price
coefficient. Figure \ref{fig1} gives a preview of the empirical
results. It plots a normal approximation to the distribution of the
price effect based on uncorrected and bias corrected estimates of
the mean and standard deviation of the distribution of the price
coefficient. The figure shows that there is important heterogeneity
 in the price effect across states. The bias correction reduces by
more than 15\% the absolute value of the estimate of the mean effect
and by 30\% the estimate of the standard deviation.


Some of the results for the linear model are related to the recent literature on correlated random coefficient
panel models with fixed $T$. Graham and Powell (2008) gave
identification and estimation results for average effects. Arellano
and Bonhomme (2010) studied identification of the distributional
characteristics of the random coefficients in exogenous linear
models. None of these papers considered the case where some of the
regressors have time varying endogeneity not captured by the
random coefficients or the model is nonlinear. For nonlinear
models, Chernozhukov, Fern\'andez-Val, Hahn and Newey (2010)
considered identification and estimation of average and quantile treatment effects. Their
nonparametric and semiparametric bounds do not
require large-$T$, but they do not cover models with continuous
regressors and time varying endogeneity.

The rest of the paper is organized as follows. Section \ref{s2}
illustrates the type of models considered and discusses the nature
of the bias in two examples. Section \ref{s3} introduces the
general model and fixed effects GMM estimators. Section \ref{s4} derives the
asymptotic properties of the estimators. The bias corrections and
their asymptotic properties are given in Section \ref{s5}. Section
\ref{s6} describes the empirical and numerical examples. Section
\ref{s9} concludes with a summary of the main results. Additional
numerical examples, proofs and other technical details are given in
the online supplementary appendix Fern\'andez-Val and Lee (2012).


\section{Motivating examples} \label{s2}
In this section we describe in detail two simple examples to
illustrate the nature of the bias problem. The first example is a
linear correlated random coefficient model with endogenous
regressors. We show that averaging IV
estimators applied separately to the time series of each individual is biased for the mean of the random coefficients because of
the finite-sample bias of IV. The second example considers
estimation of the variance of the individual coefficients in a simple
setting without endogeneity. Here the sample variance of the estimators of the individual coefficients is biased because of the non-linearity of the variance operator in the individual
coefficients. The discussion in this section is heuristic leaving to
Section \ref{s4} the specification of precise regularity conditions
for the validity of the asymptotic expansions used.

\subsection{Correlated random coefficient model with endogenous regressors}
Consider the following panel model:
\begin{equation}\label{livmodel}
y_{it} = \alpha_{0i} + \alpha_{1i} x_{it} + \epsilon_{it}, \ (i = 1,
..., n; t = 1, ..., T);
\end{equation}
where $y_{it}$ is a response variable, $x_{it}$ is an observable
regressor, $\epsilon_{it}$ is an unobservable error term, and $i$
and $t$ usually index individual and time period,
respectively.\footnote{More generally, $i$ denotes a group index and
$t$ indexes the observations within the group. Examples of groups
include individuals, states, households, schools, or twins.} This is
a linear random coefficient model where the effect of the regressor
is heterogenous across individuals, but no restriction is imposed on
the distribution of the individual effect vector $\alpha_{i} :=
(\alpha_{0i}, \alpha_{1i})'$. The regressor can be correlated with
the error term and a valid instrument $(1,z_{it})$  is available for
$(1,x_{it})$, that is $E[\epsilon_{it} \mid \alpha_{i}] = 0$, $E[z_{it} \epsilon_{it} \mid \alpha_{i}] = 0$ and
$Cov[z_{it} x_{it} \mid \alpha_{i}] \neq 0$.  An important example of
this model is the panel version of the treatment-effect model
(Wooldridge, 2002 Chapter 10.2.3, and Angrist and Hahn, 2004). Here,
the objective is to evaluate the effect of a treatment ($D$) on an
outcome variable ($Y$). The average causal effect for each level of
treatment is defined as the difference between the potential outcome
that the individual would obtain with and without the treatment,
$Y_{d} - Y_{0}$. If individuals can choose the level of treatment,
potential outcomes and levels of treatment are generally correlated.
An instrumental variable $Z$ can be used to identify the causal
effect. If potential outcomes are represented as the sum of
permanent individual components and transitory individual-time
specific shocks, that is $Y_{jit} = Y_{ji} + \epsilon_{jit}$  for $j
\in \{0, 1\}$, then we can write this model as a special case of
(\ref{livmodel}) with $y_{it} = (1 - D_{it})Y_{0it} + D_{it}
Y_{1it}$, $\alpha_{0i} = Y_{0i}$, $\alpha_{1i} = Y_{1i} - Y_{0i}$,
$x_{it} = D_{it}$, $z_{it} = Z_{it}$, and $\epsilon_{it} = (1 -
D_{it}) \epsilon_{0it} + D_{it} \epsilon_{1it}$.

Suppose that we are ultimately interested in $\alpha_1 := E [\alpha_{1i}]$, the mean of the random slope coefficient. We could neglect the heterogeneity and run fixed effects OLS and IV
regressions in
\begin{equation*}
y_{it} = \alpha_{0i} + \alpha_{1} x_{it} + u_{it},
\end{equation*}
where $u_{it} = x_{it} (\alpha_{1i} - \alpha_{1}) + \epsilon_{it}$
in terms of the model (\ref{livmodel}). In this case, OLS and IV
estimate weighted means of the random coefficients in the
population; see, for example, Yitzhaki (1996) and Angrist and
Krueger (1999) for OLS, and Angrist, Graddy and Imbens (2000) for
IV. OLS puts more weight on individuals with higher variances of the
regressor because they give more information about the slope;
whereas IV weighs individuals in proportion to the variance of the
first stage fitted values because these variances reflect the amount
of information that the individuals convey about the part of the
slope affected by the instrument. These weighted means are generally 
different from the mean effect because the weights
can be correlated with the individual effects.

To see how these implicit OLS and IV weighting schemes affect the
estimand of the fixed-coefficient estimators, assume for simplicity that
the relationship between  $x_{it}$ and $z_{it}$ is linear, that is
$x_{it} = \pi_{0i} + \pi_{1i} z_{it} + \upsilon_{it},$
($\epsilon_{it}, \upsilon_{it}$) is normal conditional on ($z_{it},
\alpha_{i}, \pi_{i}$), $z_{it}$ is independent of
($\alpha_{i},\pi_{i}$), and ($\alpha_{i},\pi_{i}$) is normal, for
$\pi_{i} := (\pi_{0i}, \pi_{1i})'$. Then, the probability limits of
the OLS and IV estimators are\footnote{The limit of the IV estimator
is obtained from a first stage equation that imposes also fixed
coefficients, that is $x_{it} = \pi_{0i} + \pi_{1} z_{it} + w_{it},$ where $w_{it} = z_{it}(\pi_{1i} - \pi_1) + \upsilon_{it}$.
When the first stage equation is different for each individual, the
limit of the IV estimator is
\begin{equation*}
\alpha_{1}^{IV} = \alpha_{1} + 2 E [\pi_{1i}] Cov[\alpha_{1i},
\pi_{1i}]/\{E [\pi_{1i}]^{2} + Var[\pi_{1i}]\}.
\end{equation*}
See Theorems 2 and 3 in Angrist and Imbens (1995) for a related
discussion.}
\begin{eqnarray*}
\alpha_{1}^{OLS} &=&  \alpha_{1} + \{Cov[\epsilon_{it},
\upsilon_{it}]
+ 2 E [\pi_{1i} ] Var[z_{it}] Cov[\alpha_{1i}, \pi_{1i}]\}/Var[x_{it}],\\
\alpha_{1}^{IV} &=& \alpha_{1} + Cov[\alpha_{1i}, \pi_{1i}]/E
[\pi_{1i}].
\end{eqnarray*}
These expressions show that the OLS estimand differs from the
average coefficient in presence of endogeneity, i.e. non zero correlation
between the individual-time specific error terms, or whenever
 the random coefficients are correlated;
while the IV estimand differs from the average coefficient only in
the latter case.\footnote{This feature of the IV estimator is also
pointed out in Angrist, Graddy and Imbens (1999), p. 507.} In the
treatment-effects model, there exists correlation between the error
terms in presence of endogeneity bias and correlation between the
individual effects arises under Roy-type selection, i.e., when
individuals who experience a higher permanent effect of the
treatment are relatively more prone to accept the offer of
treatment. Wooldridge (2005) and Murtazashvile and Wooldridge (2005)
give sufficient conditions for consistency of standard OLS and IV
fixed effects estimators. These conditions amount to
$Cov[\epsilon_{it}, \upsilon_{it}]=0$ and $Cov[x_{it}, \alpha_{1i} |
\alpha_{i0}] = 0$.

Our proposal is to estimate the mean coefficient from separate
time series estimators for each individual. This strategy consists of
running OLS or IV for each individual, and then
estimating the population moment of interest by the corresponding
sample moment of the individual estimators. For example, the mean of the random
slope coefficient in the population is estimated by the sample average
of the OLS or IV slopes. These sample moments converge to the population moments of interest 
as number of individuals $n$ and time periods $T$ grow. However, since a different coefficient is estimated for each individual, the asymptotic distribution
of the sample moments can have asymptotic bias due to the incidental parameter
problem (Neyman and Scott, 1948).

To illustrate the nature of this bias, consider the estimator of the
mean coefficient $\alpha_{1}$ constructed from individual time series IV
 estimators. In this case the incidental parameter
problem is caused by the finite-sample bias of IV. This can be
explained using some expansions. Thus, assuming independence across
$t$, standard higher-order asymptotics gives (e.g. Rilstone et. al.,
1996), as $T \rightarrow \infty$
\begin{equation*}
\sqrt{T}(\hat{\alpha}_{1i}^{IV} - \alpha_{1i}) = \frac{1}{\sqrt{T}}
\sum_{t=1}^{T} \psi_{it}  + \frac{1}{\sqrt{T}} \beta_{i} +
o_{P}(T^{-1/2}),
\end{equation*}
where $\psi_{it} = E[\tilde z_{it} \tilde x_{it} \mid \alpha_i,
\pi_i]^{-1} \tilde z_{it} \epsilon_{it}$ is the influence function
of IV, $\beta_{i} = - E [\tilde z_{it} \tilde x_{it} \mid
\alpha_{i},\pi_{i}]^{-2} \linebreak E [\tilde z_{it}^{2} \tilde
x_{it} \epsilon_{it} \mid \alpha_{i},\pi_{i} ]$ is the higher-order
bias of IV  (see, e.g., Nagar, 1959, and Buse, 1992),  and the
variables with tilde are in deviation from their individual means,
e.g., $\tilde z_{it} = z_{it} - E[z_{it} \mid \alpha_{i},\pi_{i}]$.
In the previous expression the first order asymptotic distribution
of the individual estimator is centered at the truth since
$\sqrt{T}(\hat{\alpha}_{1i}^{IV} - \alpha_{1i}) \to_d N(0,
\sigma_{i}^{2})$ as $T \rightarrow \infty$, where $\sigma_{i}^{2} =
E[\tilde z_{it} \tilde x_{it} \mid \alpha_{i},\pi_{i}]^{-2} E[
\tilde z_{it}^{2} \epsilon_{it}^{2} \mid \alpha_{i},\pi_{i}].$

Let $\hat{\alpha}_1 = n^{-1} \sum_{i=1}^n \hat{\alpha}_{1i}^{IV}$, the sample average of the IV estimators.
The asymptotic distribution of $\hat{\alpha}_1$ is not centered around
$\alpha_1$ in short panels or more precisely under asymptotic
sequences where $T/\sqrt{n} \to 0$. To see this, consider the expansion for
$\hat{\alpha}_1$
\begin{equation*}
\sqrt{n}(\hat{\alpha}_1 - \alpha_1) = \frac{1}{\sqrt{n}}
\sum_{i=1}^{n} (\alpha_{1i} - \alpha_{1}) + \frac{1}{\sqrt{n}}
\sum_{i=1}^{n} (\hat{\alpha}_{1i}^{IV} - \alpha_{1i}).
\end{equation*}
The first term is the standard influence function for a sample mean
of known elements. The second term comes from the estimation of the
individual elements inside the sample mean.  Assuming independence
across $i$ and combining the previous expansions,
\begin{equation*}
\sqrt{n}(\hat{\alpha}_1 - \alpha_1) =  \underset{=
O_{P}(1)}{\underbrace{\frac{1}{\sqrt{n}} \sum_{i=1}^{n} (\alpha_{1i}
- \alpha_{1})}}  +
\underset{=O_{P}(1/\sqrt{T})}{\underbrace{\frac{1}{\sqrt{T}}
\frac{1}{\sqrt{nT}} \sum_{i=1}^{n} \sum_{t=1}^{T} \psi_{it}}} +
\underset{= O(\sqrt{n}/T)}{\underbrace{\frac{\sqrt{n}}{T}
\frac{1}{n} \sum_{i=1}^{n} \beta_{i}}} + o_{P}\left( 1 \right).
\end{equation*}
This expression shows that the bias term dominates the
asymptotic distribution of $\hat{\alpha}_1$ in short panels under
sequences where $T/\sqrt{n} \to 0$. Averaging reduces the order of
the variance of $\hat{\alpha}_{1i}^{IV}$, without affecting the
order of its bias. In this case the  estimation of the
random coefficients has no first order effect in the asymptotic variance of
$\hat{\alpha}_1$ because the second term is of smaller order than the first term.

A potential drawback of the individual by individual time series estimation
 is that it might more be sensitive to weak identification
problems than fixed coefficient pooled estimation.\footnote{We thank a referee for pointing out this issue.} In the random
coefficient model, for example, we require that $E[\tilde z_{it}
\tilde x_{it} \mid \alpha_{i},\pi_{i}] = \pi_{1i} \neq 0$ with probability one,  i.e., for all
the individuals, whereas fixed
coefficient IV only requires that this condition holds on average,
i.e., $E[\pi_{1i}] \neq 0$. The individual estimators are therefore
more sensitive than traditional pooled estimators to weak
instruments problems. On the other hand, individual by individual
estimation relaxes the exogeneity condition  by conditioning on
additive and non-additive time invariant heterogeneity, i.e, $E[\tilde z_{it}
\epsilon_{it} \mid \alpha_{i},\pi_{i}] = 0$. Traditional
fixed effects estimators only condition on additive time invariant
heterogeneity. A formal treatment of these identification issues is
beyond the scope of this paper.

\subsection{Variance of individual coefficients}
Consider the panel model:
\begin{equation*}
y_{it} = \alpha_i + \epsilon_{it}, \ \epsilon_{it} \mid \alpha_i \sim (0,
\sigma_{\epsilon}^2), \ \alpha_i \sim (\alpha, \sigma_{\alpha}^2), \
(t = 1, ..., T; i = 1, ..., n);
\end{equation*}
where $y_{it}$ is an outcome variable of interest, which can be
decomposed in an individual effect $\alpha_i$ with mean $\alpha$ and
variance $\sigma_{\alpha}^2$, and an error term $\epsilon_{it}$ with
zero mean and variance $\sigma_{\epsilon}^2$ conditional on
$\alpha_i$. The parameter of interest is $\sigma_{\alpha}^2 = Var[\alpha_i]$ and its
fixed effects estimator is
\begin{equation*}
\hat{\sigma}_{\alpha}^2 = (n-1)^{-1} \sum_{i=1}^n (\hat{\alpha}_i -
\hat{\alpha})^2,
\end{equation*}
where $\hat{\alpha}_i = T^{-1} \sum_{t=1}^T y_{it}$ and
$\hat{\alpha} = n^{-1} \sum_{i=1}^n \hat{\alpha}_i$.

Let $\varphi_{\alpha_i}  = (\alpha_i - \alpha)^2 -
\sigma_{\alpha}^2$ and $\varphi_{\epsilon_{it}} = \epsilon_{it}^2 -
\sigma_{\epsilon}^2$. Assuming independence across $i$ and $t$, a
standard asymptotic expansion gives, as $n,T \to \infty$,
\begin{equation*}
\sqrt{n} (\hat{\sigma}_{\alpha}^2 - \sigma_{\alpha}^2) =
\underset{=O_P(1)}{\underbrace{\frac{1}{\sqrt{n}} \sum_{i=1}^n
\varphi_{\alpha_i}}} +
\underset{=O_P(1/\sqrt{T})}{\underbrace{\frac{1}{\sqrt{T}}
\frac{1}{\sqrt{nT}} \sum_{i=1}^n \sum_{t=1}^T
\varphi_{\epsilon_{it}}}} +
\underset{=O(\sqrt{n}/T)}{\underbrace{\frac{\sqrt{n}}{T}
\sigma_{\epsilon}^2}} + o_P ( 1 ).
\end{equation*}
The first term corresponds to the influence function of the sample
variance if the $\alpha_i$'s were known. The second term comes from
the estimation of the $\alpha_i$'s. The third term is a bias term that comes from the
nonlinearity of the variance in $\hat{\alpha}_i$. The bias term dominates
the expansion in short panels under sequences where $T/\sqrt{n} \to
0$. As in the previous example, the estimation of the $\alpha_i$'s
has no first order affect in the asymptotic variance since the second
term is of smaller order than the first term.


\section{The Model and Estimators} \label{s3}
We consider a general model with a finite number of moment
conditions $d_g$. To describe it, let the data be denoted by $z_{it}$
$(i = 1, \ldots, n; t = 1, \ldots, T)$. We assume that $z_{it}$ is
independent over $i$ and stationary and strongly mixing over $t$.
Also, let $\theta$ be a $d_{\theta}$--vector of common parameters,
$\{ \alpha_{i} :  1 \leq i \leq n\}$ be a sequence of $d_{\alpha}$--vectors
with the realizations of the individual effects, and $g(z;\theta,\alpha_{i})$ be an $d_g$--vector of functions, where $d_g \geq d_{\theta}+d_{\alpha}$.\footnote{We impose that some
of the parameters are common for all the individuals to help
preserve degrees of freedom in estimation of short panels with many
regressors. An order condition for this model is that the
number of individual specific parameters $d_{\alpha}$ has to be less than the
time dimension $T$.} The model has true parameters $\theta_{0}$ and
$\{ \alpha_{i0} : 1 \leq i \leq n\}$, satisfying the  moment conditions
\begin{equation*}
E \left[ g(z_{it};\theta_{0},\alpha_{i0}) \right] = 0,   \  (t = 1, ..., T; i = 1,
...,n),
\end{equation*}
where $E[ \cdot ]$ denotes conditional expectation with respect to
 the distribution of $z_{it}$ conditional on the individual effects.

Let  $\bar E[ \cdot ]$ denote the expectation taken with respect to
the distribution of the individual effects. In the previous model,
the ultimate quantities of interest are smooth functions of
parameters and observations, which in some cases could be the
parameters themselves,
\begin{equation*}
\zeta = \bar E  E[ \zeta_i(z_{it}; \theta_{0}, \alpha_{i0})],
\end{equation*}
if $\bar E E |\zeta_i(z_{it}; \theta_{0}, \alpha_{i0})| < \infty$,
or moments or other smooth functions of the individual effects
\begin{equation*}
\mu = \bar E [ \mu(\alpha_{i0}) ],
\end{equation*}
if $\bar E | \mu(\alpha_{i0}) | < \infty$. In the correlated random
coefficient example, $g(z_{it};\theta_{0},\alpha_{i0}) =
z_{it}(y_{it} - \alpha_{0i0} - \alpha_{1i0}x_{it})$, $\theta =
\emptyset$, $d_{\theta} = 0$, $d_{\alpha} = 2$, and $\mu(\alpha_{i0}) = \alpha_{1i0}$.
In the variance of the random coefficients example,
$g(z_{it};\theta_{0},\alpha_{i0}) = (y_{it} - \alpha_{0i0})$,
$\theta = \emptyset$, $d_{\theta} = 0$, $d_{\alpha} = 1 $ , and $\mu(\alpha_{i0}) =
(\alpha_{1i0} - \bar E [\alpha_{1i0}])^2$.

Some more notation, which will be extensively used in the
definition of the estimators and in the analysis of their
asymptotic properties, is the following
\begin{eqnarray*}
\Omega_{ji}(\theta, \alpha_{i}) &:=& E
[g(z_{it};\theta,\alpha_{i})g(z_{i,t-j};\theta,\alpha_{i})'], \ \ j \in \{0, 1, 2, ... \}, \\
G_{\theta_{i}} (\theta, \alpha_{i}) &:=& E [G_{\theta} (z_{it};
\theta, \alpha_i)] = E \ [ \partial
g(z_{it};\theta,\alpha_{i}) / \partial \theta'], \\
G_{\alpha_{i}} (\theta, \alpha_i) &:=& E [G_{\alpha} (z_{it};\theta,
\alpha_i)] = E \ [ \partial g(z_{it};\theta,\alpha_{i})/ \partial
\alpha_{i}'],
\end{eqnarray*}
where superscript $'$ denotes transpose and higher-order derivatives
will be denoted by adding subscripts. Here $\Omega_{ji}$ is the
covariance matrix between the moment conditions for individual
$i$ at times $t$ and $t-j$, and $G_{\theta_{i}}$ and $G_{\alpha_{i}}$ are
time series average derivatives of these conditions. Analogously, for
sample moments
\begin{eqnarray*}
\hat{\Omega}_{ji}(\theta, \alpha_{i}) &:=& T^{-1} \sum_{t=j+1}^{T}
g(z_{it};\theta,\alpha_{i})g(z_{i,t-j};\theta,\alpha_{i})', \ \ j \in \{0, 1, ..., T-1\}, \\
\hat{G}_{\theta_{i}} (\theta, \alpha_{i}) &:=& T^{-1} \sum_{t=1}^{T}
G_{\theta} (z_{it}; \theta, \alpha_i) = T^{-1} \sum_{t=1}^{T}
\partial g(z_{it};\theta,\alpha_{i})/\partial
\theta', \\
\hat{G}_{\alpha_{i}} (\theta, \alpha_i) &:=& T^{-1} \sum_{t=1}^{T}
G_{\alpha} (z_{it};\theta, \alpha_i) = T^{-1} \sum_{t=1}^{T}
\partial g(z_{it};\theta,\alpha_{i})/\partial \alpha_{i}'.
\end{eqnarray*}
In the sequel, the arguments of the expressions will be omitted
when the functions are evaluated at the true parameter values
$(\theta_{0}', \alpha_{i0}')'$, e.g., $g(z_{it})$ means $g(z_{it};
\theta_{0}, \alpha_{i0})$.

In cross-section and time series models, parameters defined from
moment conditions are usually estimated using the two-step GMM
estimator of Hansen (1982). To describe how to adapt this method to
panel models with fixed effects, let $\hat{g}_{i}(\theta,
\alpha_{i}) := T^{-1} \sum_{t=1}^{T} g(z_{it};\theta, \alpha_{i})$,
and let $(\tilde{\theta}',\{ \tilde{\alpha}_{i}' \}_{i = 1}^{n})'$
be some preliminary one-step FE-GMM estimator, given by
$(\tilde{\theta}',\{\tilde{\alpha}_{i}' \}_{i = 1}^{n})' =$
\linebreak $\arg \inf_{\{(\theta', \alpha_{i}' )' \in \Upsilon\}_{i
= 1}^{n} } \sum_{i=1}^{n} \hat{g}_{i}(\theta, \alpha_{i})'$ $
\hat{W}_{i}^{-1} $ $\hat{g}_{i}(\theta, \alpha_{i})$, where $\Upsilon \subset \mathbb{R}^{d_{\theta}+d_{\alpha}}$
denotes the parameter space, and $\{\hat{W}_{i} : 1 \leq i \leq n$\} is a
sequence of positive definite symmetric $d_g \times d_g$ weighting
matrices. The two-step FE-GMM estimator is the solution
to the following program
\begin{equation*}
(\hat{\theta}',\{\hat{\alpha}_{i}' \}_{i = 1}^{n})' = \text{arg}
\inf_{\{(\theta', \alpha_{i}' )' \in \Upsilon \}_{i = 1}^{n} }
\sum_{i=1}^{n} \hat{g}_{i}(\theta, \alpha_{i})'
\hat{\Omega}_{i}(\tilde{\theta},\tilde{\alpha_{i}})^{-1}
\hat{g}_{i}(\theta, \alpha_{i}),
\end{equation*}
where $\hat{\Omega}_{i}(\tilde{\theta},\tilde{\alpha_{i}})$ is an
estimator of the optimal weighting matrix  for
individual $i$
$$
\Omega_{i} = \Omega_{0i} + \sum_{j = 1}^{\infty}  (\Omega_{ji} + \Omega_{ji}').
$$
 To facilitate the asymptotic analysis, in the estimation of the optimal weighting matrix we assume that 
$g(z_{it};\theta_{0},\alpha_{i0})$ is a
martingale difference sequence with respect to the sigma algebra
 $\sigma(\alpha_{i}, z_{i,t-1}, z_{i,t-2}, ...)$, so that $
\Omega_{i} = \Omega_{0i}$ and $\hat{\Omega}_{i}(\tilde{\theta},\tilde{\alpha_{i}}) = \hat{\Omega}_{0i}(\tilde{\theta},\tilde{\alpha_{i}}) $. This
assumption holds in rational expectation models. We do not impose this assumption to derive the limiting distribution of the one-step 
FE-GMM estimator. 


For the subsequent analysis of the asymptotic properties of the
estimator, it is convenient to consider the concentrated or profile
problem. This problem is a two-step procedure. In the first step the
program is solved for the individual effects, given the value of the
common parameter $\theta$. The First Order Conditions (FOC) for this
stage, reparametrized conveniently as in Newey and Smith (2004), are
the following
\begin{eqnarray*}
\hat{t}_{i}(\theta, \hat{\gamma}_{i}(\theta)) = - \left(%
\begin{array}{c}
  \hat{G}_{\alpha_{i}} (\theta, \hat{\alpha}_{i}(\theta))'
\hat{\lambda}_{i}(\theta) \\
  \hat{g}_{i}(\theta,\hat{\alpha}_{i}(\theta)) +
\hat{\Omega}_{i}(\tilde{\theta},\tilde{\alpha}_{i})
\hat{\lambda}_{i}(\theta) \\
\end{array}%
\right) = 0, \text{   }( i = 1, ..., n),
\end{eqnarray*}
where $\lambda_{i}$ is a $d_g$--vector of individual Lagrange
multipliers for the moment conditions, and $\gamma_{i} :=
(\alpha_{i}', \lambda_{i}')'$ is an extended $(d_{\alpha}+d_g)$--vector
of individual effects. Then, the solutions to the previous equations
are plugged into the original problem, leading to the following
first order conditions for $\theta$, $\hat{s}(\hat{\theta}) = 0$, where
\begin{equation*}
\hat{s}(\theta) = n^{-1} \sum_{i=1}^{n}
\hat{s}_{i}(\theta, \hat{\gamma}_{i}(\theta)) = -
n^{-1} \sum_{i=1}^{n} \hat{G}_{\theta_{i}} ( \theta,
\hat{\alpha}_{i}(\theta))' \hat{\lambda}_{i} (\theta),
\end{equation*}
is the profile score function for $\theta$.\footnote{In
the original parametrization, the FOC can be written as
\begin{equation*}
n^{-1} \sum_{i=1}^{n} \hat{G}_{\theta_{i}} ( \hat{\theta},
\hat{\alpha}_{i}(\hat{\theta}))'
\hat{\Omega}_{i}(\tilde{\theta},\tilde{\alpha}_{i})^{-}
\hat{g}_{i}(\theta,\hat{\alpha}_{i}(\theta))  = 0,
\end{equation*}
where the superscript $^{-}$ denotes a generalized inverse.} 

Fixed effects estimators of smooth functions of parameters and observations are constructed using the
plug-in principle, i.e. $\hat \zeta = \hat \zeta(\hat \theta)$ where
$$
\hat \zeta(\theta) = (nT)^{-1} \sum_{i=1}^n \sum_{t=1}^T \zeta(z_{it}; \theta, \hat \alpha_i(\theta)).
$$ 
Similarly, moments of the individual effects are estimated by $\hat \mu = \hat \mu (\hat \theta),$ where
$$
\hat \mu(\theta) = n^{-1} \sum_{i=1}^n \mu(\hat \alpha_i(\theta)). 
$$


\section{Asymptotic Theory for FE-GMM Estimators}
\label{s4}
In this section we analyze the properties of one-step and two-step
FE-GMM estimators in large samples. 
We show
consistency and derive the asymptotic distributions for estimators
of individual effects, common parameters and other quantities of interest under sequences where
both $n$ and $T$ pass to infinity with the sample size. We establish results separately for one-step and two-step
estimators because the former are derived under less restrictive assumptions.

We make the following assumptions to show uniform consistency of
the FE-GMM one-step estimator:
\begin{condition}[Sampling and asymptotics]
\label{cond1} (i)  For each $i$,  conditional on $\alpha_i,$ $z_{i} :=
\left\{z_{it}: 1 \leq t \leq T\right\}$ is a stationary mixing sequence
of random vectors with strong mixing coefficients $a_{i}(l) =
\sup_{t} \sup_{A\in\mathcal{A}_{t}^{i}, D\in\mathcal{D}_{t+l}^{i}}$ $
\left| P(A \cap D) - P(A)P(D)\right|$, where $\mathcal{A}_{t}^{i} = \sigma(\alpha_{i}, z_{it}, z_{i,t-1}, ... ) $ and $\mathcal{D}%
_{t}^{i} = \sigma(\alpha_{i}, z_{it}, z_{i,t+1}, ... ),$ such that  $\sup_{i} \left|
a_{i}(l) \right| \leq C a^{l}$ for some $0 < a < 1$
and some $C > 0$; (ii) $ \left\{(z_{i}, \alpha_i): 1 \leq i \leq n\right\}$ are
independent and identically distributed across $i$; (iii) $n,T \to \infty $ such that $n/T \to \kappa^2$, where $0 < \kappa^2 < \infty$; and (iv) $\dim \left[
g(\cdot; \theta, \alpha_{i}) \right] = d_g < \infty$.
\end{condition}

For a matrix or vector $A$, let $|A|$ denote the Euclidean norm, that is $|A|^{2} = trace [A A']$.
\begin{condition}[Regularity and identification]
\label{cond2}(i) The vector of moment functions $g ( \cdot ;\theta
,\alpha) = (g_{1} \left( \cdot ;\theta ,\alpha \right), ..., g_{d_g}
\left( \cdot ;\theta ,\alpha \right))' $ is continuous in $\left(
\theta ,\alpha \right) \in \Upsilon $; (ii) the parameter space
$\Upsilon $ is a compact, convex subset of $\mathbb{R}^{d_{\theta}+d_{\alpha}}$;
(iii) $dim\left( \theta ,\alpha \right) = d_{\theta}+d_{\alpha} \leq d_g$; (iv) there exists a function $%
M\left( z_{it}\right) $ such that $\left| g_{k} \left( z_{it};\theta
,\alpha _{i}\right) \right| \leq M\left( z_{it}\right) $, $\left|
\partial g_{k} \left( z_{it};\theta ,\alpha _{i}\right) / \partial
\left( \theta
,\alpha _{i}\right) \right| \leq M\left( z_{it}\right) $, for $k = 1, ...,d_g$, and $\sup_{i}E%
\left[ M\left( z_{it}\right) ^{4+\delta}\right] <\infty $ for some $\delta > 0$; and (v) there exists  a deterministic sequence of symmetric finite positive definite matrices $\{W_{i}: 1 \leq i \leq n \}$ such that
$\sup_{1 \leq i \leq n} |\hat W_{i} - W_i | \to_P 0,$ 
and, for each $\eta
>0$ 
\begin{equation*}
\inf_{i}\left[ Q_{i}^{W} \left( \theta _{0},\alpha _{i0}\right)
-\sup_{\left\{ \left( \theta ,\alpha \right) :\left| \left( \theta
,\alpha \right) -\left( \theta _{0},\alpha _{i0}\right) \right|
>\eta \right\} } Q_{i}^{W} \left( \theta ,\alpha \right)
\right] >0,
\end{equation*}
where
\begin{equation*}
Q_{i}^{W} \left( \theta ,\alpha _{i}\right) := - g_{i} \left( \theta
,\alpha _{i}\right)' W_{i}^{-1} g_{i} \left( \theta ,\alpha
_{i}\right), \ \ g_{i} \left( \theta ,\alpha _{i}\right) := E \left[
\hat{g}_{i} \left( \theta ,\alpha _{i}\right) \right].
\end{equation*}
\end{condition}

Conditions \ref{cond1}(i)-(ii) impose cross sectional independence, but
allow for weak time series dependence as in Hahn and Kuersteiner
(2011). Conditions \ref{cond1}(iii)-(iv) describe the asymptotic
sequences that we consider where $T$ and $n$ grow at the same rate with the sample size,
whereas the number of moments $d_g$ is fixed. Condition \ref{cond2} adapts standard 
assumptions of the GMM literature to guarantee
the identification of the parameters based on time series variation
for all the individuals, see Newey and McFadden (1994). The
dominance and moment conditions in \ref{cond2}(iv) 
are used to establish uniform consistency of the estimators of the
individual effects. 

\begin{theorem}[Uniform consistency of one-step estimators] \label{th1} Suppose that Conditions
\ref{cond1} and \ref{cond2} hold. Then, for any
$\eta > 0$
\begin{equation*}
\Pr \left( \left| \tilde{\theta} - \theta_{0} \right| \geq \eta
\right) = o(T^{-1}),
\end{equation*}
where $\tilde{\theta} = \arg \max_{ \left\{ \left(\theta,\alpha_{i}
\right) \in \Upsilon \right\}_{i=1}^{n}
 } \frac{1}{n} \sum_{i=1}^{n}
\hat{Q}_{i}^{W} (\theta, \alpha_{i})$  and $\hat{Q}_{i}^{W} \left(
\theta ,\alpha _{i}\right) :=  - \hat{g}_{i} \left( \theta ,\alpha
_{i}\right)' \hat W_{i}^{-1} \hat{g}_{i} \left( \theta ,\alpha
_{i}\right)$. Also, for any $\eta > 0$
\begin{equation*}
\Pr \left( \sup_{1 \leq i \leq n} \left| \tilde{\alpha}_{i} -
\alpha_{i0} \right| \geq \eta \right) = o\left( T^{-1} \right) \text{ and } \Pr \left( \sup_{1 \leq i \leq n} \left| \tilde{\lambda}_{i} \right| \geq \eta \right)
= o\left( T^{-1} \right),
\end{equation*}
where $\tilde{\alpha}_{i} = \arg \max_{\alpha}
\hat{Q}_{i}^{W}(\tilde{\theta},\alpha)$ and $ \tilde{\lambda}_{i} = -
\hat W_{i}^{-1} \hat{g}_{i}(\tilde{\theta},\tilde{\alpha}_{i})$.
\end{theorem}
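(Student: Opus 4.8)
The plan is to reduce all three statements to a single uniform convergence result carrying the sharp rate $o(T^{-1})$, and then to invoke a standard well-separated-maximum argument. The crux is the estimate
\begin{equation*}
\Pr\left( \sup_{1 \leq i \leq n} \sup_{(\theta,\alpha) \in \Upsilon} \left| \hat{g}_i(\theta,\alpha) - g_i(\theta,\alpha) \right| \geq \epsilon \right) = o(T^{-1}), \quad \text{for every } \epsilon > 0.
\end{equation*}
Granting this, together with $\sup_i |\hat W_i - W_i| \to_P 0$ from Condition \ref{cond2}(v) (which, on an event of the required probability, keeps $\hat W_i^{-1}$ uniformly bounded and nonsingular), the same bound transfers to the objective functions, giving $\Pr(\sup_i \sup_{(\theta,\alpha)} |\hat Q_i^W(\theta,\alpha) - Q_i^W(\theta,\alpha)| \geq \epsilon) = o(T^{-1})$. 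The third claim, on $\tilde\lambda_i$, is then immediate from the first two: once $\tilde\theta$ and $\tilde\alpha_i$ are uniformly close to $(\theta_0,\alpha_{i0})$, continuity of $g$ and the moment condition $g_i(\theta_0,\alpha_{i0}) = E[\hat g_i(\theta_0,\alpha_{i0})] = 0$ force $\hat g_i(\tilde\theta,\tilde\alpha_i)$, hence $\tilde\lambda_i = -\hat W_i^{-1}\hat g_i(\tilde\theta,\tilde\alpha_i)$, to be uniformly small with probability $1 - o(T^{-1})$.

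First I would establish the displayed uniform law of large numbers in three layers. At a fixed $(\theta,\alpha)$, write $\hat g_i(\theta,\alpha) - g_i(\theta,\alpha) = T^{-1}\sum_{t=1}^T (g(z_{it};\theta,\alpha) - E[g(z_{it};\theta,\alpha)])$. Because $z_i$ is geometrically strongly mixing (Condition \ref{cond1}(i)) with an envelope satisfying $\sup_i E[M(z_{it})^{4+\delta}] < \infty$ (Condition \ref{cond2}(iv)), a Rosenthal-type moment inequality for mixing sums gives $E|\hat g_i - g_i|^{4+\delta} \leq C\, T^{-(2+\delta/2)}$, whence by Markov's inequality $\Pr(|\hat g_i(\theta,\alpha) - g_i(\theta,\alpha)| \geq \epsilon) \leq C \epsilon^{-(4+\delta)} T^{-(2+\delta/2)}$, with $C$ uniform in $i$. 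Next I would upgrade to a supremum over $\Upsilon$: since $\Upsilon$ is compact and convex and the derivative of $g$ is dominated by $M$ (Condition \ref{cond2}(iv)), the map $(\theta,\alpha) \mapsto \hat g_i(\theta,\alpha) - g_i(\theta,\alpha)$ is Lipschitz with constant controlled by $\hat M_i + E[M] \leq 2E[M]$ on a high-probability event, so a finite cover of $\Upsilon$ by a fixed number $N_\epsilon$ of balls reduces the supremum to finitely many points. Finally, a union bound over the $N_\epsilon$ centers and over $i = 1,\ldots,n$ gives overall probability of order $n \cdot N_\epsilon \cdot T^{-(2+\delta/2)}$; since $n/T \to \kappa^2 < \infty$ forces $n = O(T)$, this is $O(T^{-(1+\delta/2)}) = o(T^{-1})$. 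The role of the $4+\delta$ moments is precisely to make the per-individual probability $o(T^{-2})$, so that it survives the union bound over the $n \asymp T$ individuals.

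With the uniform convergence in hand, the consistency statements follow from the identification condition. For $\tilde\theta$, concentrate out the individual effects by setting $Q_i^W(\theta) := \sup_\alpha Q_i^W(\theta,\alpha)$ and likewise for $\hat Q_i^W$; the well-separation in Condition \ref{cond2}(v) implies $Q_i^W(\theta_0) - \sup_{|\theta - \theta_0|>\eta} Q_i^W(\theta) \geq c > 0$ uniformly in $i$, a gap preserved under the average $n^{-1}\sum_i$. On the event where the uniform deviation of the concentrated objectives is below $c/2$ — an event of probability $1 - o(T^{-1})$ — the inequality $n^{-1}\sum_i \hat Q_i^W(\tilde\theta) \geq n^{-1}\sum_i \hat Q_i^W(\theta_0)$ is incompatible with $|\tilde\theta - \theta_0| \geq \eta$, which yields the first claim. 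For $\tilde\alpha_i$, I would fix $\tilde\theta$ (already within $\eta$ of $\theta_0$) and apply the joint well-separation of Condition \ref{cond2}(v) individual by individual, using the same uniform deviation bound; the union over $i$ again costs only a factor $n = O(T)$, preserving the $o(T^{-1})$ rate.

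I expect the main obstacle to be the rate bookkeeping in the second paragraph: one must choose the moment inequality and the covering radius so that the per-individual tail is genuinely $o(T^{-2})$, not merely $o(T^{-1})$, because only then does the union bound over the $n \asymp T$ individuals deliver the required $o(T^{-1})$. A secondary technical point is controlling the weighting matrices $\hat W_i$ uniformly in $i$: Condition \ref{cond2}(v) supplies $\sup_i|\hat W_i - W_i| \to_P 0$, and one must argue that on the relevant high-probability event the inverses $\hat W_i^{-1}$ remain uniformly bounded and positive definite, so that the objective-function deviation inherits the $o(T^{-1})$ rate established for $\hat g_i - g_i$.
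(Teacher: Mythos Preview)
Your proposal is correct and follows essentially the same route as the paper: establish a uniform law of large numbers with rate $o(T^{-1})$ for $\hat g_i - g_i$ (the paper outsources this to Lemma~4 of Hahn--Kuersteiner, whose proof is precisely your Rosenthal-inequality-plus-covering-plus-union-bound sketch), transfer it to the objective functions via Lemma~\ref{la8}, and then run the well-separated-maximum argument for $\tilde\theta$, $\tilde\alpha_i$, and $\tilde\lambda_i$ in turn. The only cosmetic difference is that for the $\tilde\theta$ step you concentrate out the individual effects first, whereas the paper argues directly from the joint well-separation in Condition~\ref{cond2}(v); both are valid, and your flagged concern about the rate for $\hat W_i$ is handled in the paper exactly as loosely as you handle it.
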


Let $\Sigma_{\alpha_{i}}^{W} := \left( G_{\alpha_{i}}' W_{i}^{-1}
G_{\alpha_{i}} \right)^{-1}$, $H_{\alpha_{i}}^{W} :=
\Sigma_{\alpha_{i}}^{W} G_{\alpha_{i}}' W_{i}^{-1}$,
$P_{\alpha_{i}}^{W} := W_{i}^{-1} - W_{i}^{-1} G_{\alpha_{i}}
H_{\alpha_{i}}^{W}$, $J_{si}^{W} := G_{\theta_{i}}'
P_{\alpha_{i}}^{W} G_{\theta_{i}}$ and $J^{W}_{s} := \bar{E}
[J_{si}^{W}]$. We use the following additional assumptions to derive
the limiting distribution of the one-step estimator:
\begin{condition}[Regularity]
\label{cond3} (i) For each $i$, $(\theta_{0}, \alpha_{i0}) \in int
\left[ \Upsilon \right]$;  and (ii)  $J_{s}^{W}$ is finite positive definite, and $\{
G_{\alpha_{i}}' W_{i}^{-1} G_{\alpha_{i}}: 1 \leq i \leq n \}$ is a sequence of finite positive definite matrices, where $\{W_i : 1 \leq i \leq n \}$ is the sequence of matrices of  Condition \ref{cond2}(v).
\end{condition}

\begin{condition}[Smoothness]
\label{cond5} (i) There exists a function $M\left( z_{it}\right) $ such
that, for $k = 1, ...,d_g,$
\begin{equation*}
\left| \partial ^{d_{1} + d_{2}} g_{k} \left( z_{it};\theta ,\alpha
_{i}\right) / \partial \theta ^{d_{1}}\partial \alpha
_{i}^{d_{2}}\right| \leq M\left( z_{it}\right), \qquad 0\leq
d_{1}+d_{2}\leq 1,\ldots ,5,
\end{equation*}
and $\sup_{i}E\left[ M\left( z_{it}\right) ^{5(d_{\theta}+d_{\alpha}+6)/(1-10v)+\delta}\right] <\infty,$ for some $\delta > 0$ and $0 < v < 1/10;$ and (ii) there exists $\xi_{i}(z_{it})$ such that
$\hat W_i = W_i + \sum_{t=1}^T \xi_i(z_{it})/T + R_i^W/T,$ where $max_{i} |R_{i}^W| = o_P(T^{1/2}),$  $E[\xi_i(z_{it})] = 0,$ and $\sup_{i} E[|\xi_i(z_{it})|^{20/(1-10v)+\delta}] <\infty,$ for some $\delta > 0$ and $0 < v < 1/10.$ 
\end{condition}
Condition \ref{cond3} is the panel data analog to the standard asymptotic 
normality condition for GMM with cross sectional data,
see  Newey and McFadden (1994).  Condition \ref{cond5} is similar to Condition 4 in
Hahn and Kuersteiner (2011), and guarantees
the existence of higher order expansions for the GMM estimators and
the uniform convergence of their remainder terms.

Let $G_{\alpha \alpha_i} := (G_{\alpha \alpha_{i, 1}}', \ldots,
G_{\alpha \alpha_{i,q}}')',$ where $G_{\alpha \alpha_{i, j}} = E [
\partial G_{\alpha_i}(z_{it}) / \partial \alpha_{i,j}],$
and $G_{\theta \alpha_i} := (G_{\theta \alpha_{i, 1}}', \ldots,
G_{\theta \alpha_{i,q}}')',$ where $G_{\theta \alpha_{i, j}} = E [
\partial G_{\theta_i}(z_{it}) / \partial \alpha_{i,j}]$. The symbol
$\otimes$ denotes kronecker product of matrices,  $I_{d_{\alpha}}$ a $d_{\alpha} \times
d_{\alpha}$ identity matrix, $e_{j}$ a unitary $d_g$--vector with 1 in
row $j$, and $P_{\alpha_{i},j}^{W}$ the $j$-th column of
$P_{\alpha_{i}}^{W}$. Recall that the extended individual effect is
$\gamma_i = (\alpha_i', \lambda_i')'$.

\begin{lemma} [Asymptotic expansion for one-step
estimators of individual effects] \label{lemma1} Under
Conditions \ref{cond1}, \ref{cond2}, \ref{cond3}, and \ref{cond5},
\begin{equation}
\sqrt{T} (\tilde{\gamma}_{i0} - \gamma_{i0}) = \tilde{\psi}_{i}^{W}
+ T^{-1/2} Q_{1i}^{W} + T^{-1} R_{2i}^{W},
\end{equation}
where $\tilde{\gamma}_{i0} := \tilde{\gamma}_{i}(\theta_{0})$,
$$
 \tilde{\psi}_{i}^W = -  \left(
                          \begin{array}{cc}
                            H_{\alpha_i}^W \\
                            P_{\alpha_i}^W \\
                          \end{array}
                        \right) T^{-1/2} \sum_{t=1}^T g(z_{it}) \ind N(0, V_i^{W}),
$$
$n^{-1/2} \sum_{i=1}^{n} \tilde{\psi}_{i}^{W} \ind N(0,
\bar{E}[V_i^{W}]),$ $n^{-1} \sum_{i=1}^n Q_{1i}^{W} \inp \bar{E} [B_{\gamma_{i}}^{W}]$, $B_{\gamma_{i}}^{W} = B_{\gamma_{i}}^{W,I} +
B_{\gamma_{i}}^{W,G} + B_{\gamma_{i}}^{W,1S}$, $\sup_{1 \leq i \leq n} R_{2i}^{W} = o_{P}(\sqrt{T})$, for
\begin{footnotesize}
\begin{eqnarray*}
V_i^{W} &=& \left(%
\begin{array}{cc}
  H_{\alpha_{i}}^{W}  \\
  P_{\alpha_{i}}^{W}  \\
\end{array}%
\right) \Omega_{i} \left(H_{\alpha_{i}}^{W'}, P_{\alpha_{i}}^{W}\right),
 \\
B_{\gamma_{i}}^{W,I} &=& \left(%
\begin{array}{c}
  B_{\alpha_{i}}^{W,I} \\
  B_{\lambda_{i}}^{W,I} \\
\end{array}%
\right) = \left(%
\begin{array}{c}
  H_{\alpha_{i}}^{W}  \\
  P_{\alpha_{i}}^{W}  \\
\end{array}%
\right) \left(\sum_{j=-\infty}^{\infty} E \left[ G_{\alpha_{i}}(z_{it}) H_{\alpha_{i}}^{W} g(z_{i,t-j}) \right] -    \sum_{j=1}^{d_{\alpha}} G_{\alpha \alpha_{i,j}} H_{\alpha_{i}}^{W} \Omega_{i} H_{\alpha_{i}}^{W'}/2\right), \\
B_{\gamma_{i}}^{W,G} &=& \left(%
\begin{array}{c}
  B_{\alpha_{i}}^{W,G} \\
  B_{\lambda_{i}}^{W,G} \\
\end{array}%
\right) = \left(%
\begin{array}{c}
  -\Sigma_{\alpha_{i}}^{W}
      \\
  H_{\alpha_{i}}^{W'}\\
\end{array}%
\right)\sum_{j=-\infty}^{\infty}  E \left[ G_{\alpha_{i}}(z_{it})' P_{\alpha_{i}}^{W} g(z_{i,t-j})\right], \\
B_{\gamma_{i}}^{W,1S} &=& \left(%
\begin{array}{c}
  B_{\alpha_{i}}^{W,1S} \\
  B_{\lambda_{i}}^{W,1S} \\
\end{array}%
\right) = \left(%
\begin{array}{c}
  \Sigma_{\alpha_{i}}^{W} \\
  -  H_{\alpha_{i}}^{W'}  \\
\end{array}%
\right) \left(\sum_{j=1}^{d_{\alpha}} G_{\alpha \alpha_{i,j}}' P_{\alpha_{i}}^{W} \Omega_{i}
  H_{\alpha_{i}}^{W'}/2
  +    \sum_{j=1}^{d_g}
   G_{\alpha \alpha_{i}}' (I_{d_{\alpha}} \otimes e_{j}) H_{\alpha_{i}}^{W} \Omega_{i} P_{\alpha_{i},j}^{W}/2\right),    \\ 
& & \quad  \quad \quad \quad \quad \quad + 
\left(%
\begin{array}{c}
   H_{\alpha_{i}}^{W}  \\
  P_{\alpha_{i}}^{W}   \\
\end{array}%
\right)\sum_{j=-\infty}^{\infty}  E \left[ \xi_i(z_{it}) P_{\alpha_{i}}^{W} g(z_{i,t-j}) \right]. 
\end{eqnarray*}
\end{footnotesize}
\end{lemma}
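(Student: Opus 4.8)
The plan is to derive the expansion directly from the profiled first-order conditions at the true common parameter, expanding the individual score around $\gamma_{i0}=(\alpha_{i0}',0')'$; note $\lambda_{i0}=0$ because $E[g(z_{it})]=0$ forces the population moment $g_i(\theta_0,\alpha_{i0})=0$. Writing the reparametrized score at $\theta_0$ as $\hat t_i(\gamma_i)=(\hat G_{\alpha_i}(\alpha_i)'\lambda_i,\ \hat g_i(\alpha_i)+\hat W_i\lambda_i)'$, the estimator solves $\hat t_i(\tilde\gamma_{i0})=0$, and since $\theta$ is held fixed at $\theta_0$ I only expand in $\gamma_i$. First I would promote the uniform consistency of Theorem \ref{th1} to the uniform rate $\sup_{1\le i\le n}|\tilde\gamma_{i0}-\gamma_{i0}|=O_P(T^{-1/2})$, using Condition \ref{cond3} to guarantee an invertible limiting Jacobian and the mixing/moment bounds of Conditions \ref{cond2}(iv) and \ref{cond5}(i) to control the score uniformly over $i$ by maximal inequalities.

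The leading term follows from inverting the expected Jacobian at the truth. A direct computation gives $E[\partial\hat t_i/\partial\gamma'|_{\gamma_{i0}}]=\left(\begin{smallmatrix}0 & G_{\alpha_i}'\\ G_{\alpha_i} & W_i\end{smallmatrix}\right)$, and using the two identities $G_{\alpha_i}'P_{\alpha_i}^W=0$ and $W_iP_{\alpha_i}^W=I-G_{\alpha_i}H_{\alpha_i}^W$ one checks that its inverse is $\left(\begin{smallmatrix}-\Sigma_{\alpha_i}^W & H_{\alpha_i}^W\\ H_{\alpha_i}^{W'} & P_{\alpha_i}^W\end{smallmatrix}\right)$. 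Since $\hat t_i(\gamma_{i0})=(0,\hat g_i)'$, multiplying by minus this inverse and rescaling by $\sqrt T$ reproduces $\tilde\psi_i^W=-(H_{\alpha_i}^W,P_{\alpha_i}^W)'T^{-1/2}\sum_t g(z_{it})$; the stated convergence $\tilde\psi_i^W\ind N(0,V_i^W)$ follows from a strong-mixing CLT over $t$ (so $V_i^W$ carries the long-run variance $\Omega_i$, the weighting-matrix martingale-difference assumption not being imposed here), and $n^{-1/2}\sum_i\tilde\psi_i^W\ind N(0,\bar E[V_i^W])$ from a Lyapunov CLT across the independent individuals of Condition \ref{cond1}(ii).

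For the $T^{-1}$ term I collect the second-order contributions generated by the expansion. With $\delta_i^{(1)}=-\bar T_i^{-1}\hat t_i(\gamma_{i0})$ the leading correction, the quadratic part is $-\bar T_i^{-1}[(\hat T_i-\bar T_i)\delta_i^{(1)}+\tfrac12\hat t_{2i}[\delta_i^{(1)}\otimes\delta_i^{(1)}]]$ plus the piece carried by $\hat W_i-W_i$. The fluctuation of the second block's Jacobian $\hat G_{\alpha_i}-G_{\alpha_i}$ against the leading $\alpha$-correction, together with the curvature array $G_{\alpha\alpha_{i,j}}$ acting on $\delta_i^{(1)}\otimes\delta_i^{(1)}$, yields $B_{\gamma_i}^{W,I}$; the fluctuation entering through the first block $\hat G_{\alpha_i}'\lambda$ against the leading multiplier $-P_{\alpha_i}^W\hat g_i$ yields $B_{\gamma_i}^{W,G}$; and substituting $\hat W_i-W_i=T^{-1}\sum_t\xi_i(z_{it})+o_P(T^{-1/2})$ from Condition \ref{cond5}(ii) produces the $\xi_i$ part of $B_{\gamma_i}^{W,1S}$, the remaining curvature cross-terms being collected into its first part. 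The essential maneuver in each case is to replace a product of two time averages by its conditional mean; because $z_{it}$ is only weakly dependent over $t$, these means are long-run (auto)covariances, which is exactly what generates the two-sided lag sums $\sum_{j=-\infty}^\infty E[\cdot]$ displayed in $B^{W,I}$, $B^{W,G}$, and $B^{W,1S}$. Averaging over $i$ and applying a law of large numbers across independent individuals then gives $n^{-1}\sum_i Q_{1i}^W\inp\bar E[B_{\gamma_i}^W]$ with $B_{\gamma_i}^W=B_{\gamma_i}^{W,I}+B_{\gamma_i}^{W,G}+B_{\gamma_i}^{W,1S}$.

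The main obstacle is the uniform remainder bound $\sup_{1\le i\le n}R_{2i}^W=o_P(\sqrt T)$ under $n/T\to\kappa^2$: because $n\asymp T$ individual problems are solved in parallel, controlling the maximum of the per-individual third- and higher-order terms requires union bounds over $i$, so each must be shown negligible with probability $1-o(T^{-1})$, as in Theorem \ref{th1}. This is precisely what the high moment exponents of Conditions \ref{cond5}(i)--(ii) are calibrated to deliver: combining Rosenthal-type maximal inequalities for strong-mixing sums with a Markov bound at the available moment order makes the tails of the score, the Jacobian, the derivatives up to order five, and the weighting-matrix remainder $R_i^W/T$ uniformly small, so that $\sup_i|\delta_i|^3$ is $o_P(T^{-3/2})$ and survives the $\sqrt T$ rescaling as $o_P(\sqrt T)$. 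I would organize this step following Hahn and Kuersteiner (2011, Condition 4): expand to fifth order, dominate every $(d_1+d_2)\le5$ derivative by $M(z_{it})$ via Condition \ref{cond5}(i), and apply the maximal inequality uniformly in $i$. The term-by-term matching to the displayed $B^{W,\cdot}$ pieces is routine bookkeeping; the genuine difficulty is this uniform-in-$i$ stochastic control, where the delicate moment conditions and the balanced $n/T$ growth are indispensable.
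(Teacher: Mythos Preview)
Your proposal is correct and follows essentially the same route as the paper: a second-order Taylor expansion of the profiled first-order condition $\hat t_i^W(\theta_0,\tilde\gamma_{i0})=0$ around $\gamma_{i0}$, with the leading term obtained by inverting the limit Jacobian $T_i^W$ and the second-order term $Q_{1i}^W$ collecting the Jacobian fluctuation $\tilde A_i^W\tilde\psi_i^W$, the curvature piece $\tfrac12\sum_j\tilde\psi_{i,j}^W T_{i,j}^W\tilde\psi_i^W$, and the weighting-matrix contribution from $\hat W_i-W_i$; the identification of $B_{\gamma_i}^{W,I}$, $B_{\gamma_i}^{W,G}$, $B_{\gamma_i}^{W,1S}$ with the expected values of these pieces is then straightforward bookkeeping, exactly as you describe.

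One technical point worth flagging: the paper does not establish the uniform rate $\sup_i|\tilde\gamma_{i0}-\gamma_{i0}|=O_P(T^{-1/2})$ that you propose to use. Instead it works with the weaker uniform bound $\sqrt T(\tilde\gamma_{i0}-\gamma_{i0})=o_{uP}(T^{1/10})$, i.e.\ $\sup_i|\tilde\gamma_{i0}-\gamma_{i0}|=o_P(T^{-2/5})$, obtained by combining the uniform consistency of Theorem~\ref{th1} with Lemma~2 of Hahn and Kuersteiner (2011) applied to $\hat t_i^W$ and its derivatives (this is the paper's Lemma~\ref{lb6}). That weaker rate is enough: cubing it gives $o_{uP}(T^{-6/5})$ for the third-order pieces, and tracking orders through the second-order expansion yields $R_{2i}^W=o_{uP}(T^{3/10})$, which is the required $o_P(\sqrt T)$. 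Your proposed sharper uniform rate would also work if it can be established, but it is not needed and the paper does not attempt it.
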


\smallskip

\begin{theorem} [Limit distribution of one-step estimators
of common parameters] \label{th3} Under Conditions \ref{cond1},
\ref{cond2}, \ref{cond3} and \ref{cond5}, 
$$
\sqrt{nT}(\tilde{\theta}-\theta_{0}) 
\ind - (J_{s}^{W})^{-1} N \left( \kappa B_{s}^{W}, V_{s}^{W} \right),
$$
where
\begin{equation*}
J_{s}^{W} = \bar{E} \left[ G_{\theta_{i}}'P_{\alpha_{i}}^{W}
G_{\theta_{i}} \right], V_{s}^{W} = \bar{E} \left[
G_{\theta_{i}}'P_{\alpha_{i}}^{W} \Omega_{i} P_{\alpha_{i}}^{W}
G_{\theta_{i}} \right],  B_{s}^{W} = \bar{E} \left[ B_{si}^{W,B} +
B_{si}^{W,C} + B_{si}^{W,V} \right],
\end{equation*}
and
\begin{eqnarray*}
\begin{array}{ll}
&B_{si}^{W,B} = -  G_{\theta_{i}}' \left( B_{\lambda_{i}}^{W,I} +
B_{\lambda_{i}}^{W,G} +  B_{\lambda_{i}}^{W,1S} \right),
B_{si}^{W,C} = \sum_{j=-\infty}^{\infty}  E[G_{\theta_{i}}(z_{it})'
P_{\alpha_{i}}^{W}
g_{i}(z_{i,t-j})],   \\
&B_{si}^{W,V} = - \sum_{j=1}^{d_{\alpha}} G_{\theta \alpha_{i,j}}'
P_{\alpha_{i}}^{W} \Omega_{i} H_{\alpha_{i}}^{W'}/2 - \sum_{j=1}^{d_g}
 G_{\theta \alpha_{i}}' (I_{d_{\alpha}} \otimes  e_{j}) H_{\alpha_{i}}^{W} \Omega_{i}
P_{\alpha_{i},j}/2.
\end{array}
\end{eqnarray*}
The expressions for $B_{\lambda_{i}}^{W,I}$,
$B_{\lambda_{i}}^{W,G}$, and $B_{\lambda_{i}}^{W,1S}$ are given in
Lemma \ref{lemma1}.
\end{theorem}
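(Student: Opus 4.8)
The plan is to work with the profile (concentrated) score for $\theta$. Denote by $\hat s^W(\theta) = - n^{-1} \sum_{i=1}^n \hat G_{\theta_i}(\theta, \tilde\alpha_i(\theta))' \tilde\lambda_i(\theta)$ the one-step profile score, where $\tilde\lambda_i(\theta) = - \hat W_i^{-1} \hat g_i(\theta, \tilde\alpha_i(\theta))$, so that the one-step estimator solves $\hat s^W(\tilde\theta) = 0$. A mean-value expansion around $\theta_0$ yields $0 = \hat s^W(\theta_0) + \hat J^W(\bar\theta)\,(\tilde\theta - \theta_0)$ for an intermediate $\bar\theta$ on the segment between $\tilde\theta$ and $\theta_0$, hence $\sqrt{nT}(\tilde\theta - \theta_0) = - \hat J^W(\bar\theta)^{-1} \sqrt{nT}\, \hat s^W(\theta_0)$. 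By the consistency of $\tilde\theta$ from Theorem \ref{th1} and uniform laws of large numbers justified by the smoothness and moment bounds of Conditions \ref{cond2} and \ref{cond5}, $\hat J^W(\bar\theta) \inp J_s^W = \bar{E}[G_{\theta_i}' P_{\alpha_i}^W G_{\theta_i}]$, which is invertible by Condition \ref{cond3}(ii). It therefore suffices to prove that $\sqrt{nT}\, \hat s^W(\theta_0) \ind N(\kappa B_s^W, V_s^W)$ and to conclude by Slutsky's theorem.

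The heart of the argument is the expansion of $\sqrt{nT}\, \hat s^W(\theta_0) = - (nT)^{1/2} n^{-1} \sum_i \hat G_{\theta_i}(\theta_0, \tilde\alpha_{i0})' \tilde\lambda_{i0}$. Since the moment conditions hold at the truth, the true multiplier vanishes, $\lambda_{i0} = 0$, so $\tilde\lambda_{i0} = O_P(T^{-1/2})$, and I would expand both $\tilde\lambda_{i0}$ and $\hat G_{\theta_i}(\theta_0, \tilde\alpha_{i0})$ to the order needed, substituting the individual-effect expansion of Lemma \ref{lemma1}, $\sqrt{T}(\tilde\gamma_{i0} - \gamma_{i0}) = \tilde\psi_i^W + T^{-1/2} Q_{1i}^W + T^{-1} R_{2i}^W$. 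Writing $\hat G_{\theta_i}(\theta_0, \tilde\alpha_{i0}) = G_{\theta_i} + (\hat G_{\theta_i}(\theta_0, \alpha_{i0}) - G_{\theta_i}) + \hat G_{\theta\alpha_i}(\theta_0, \alpha_{i0})(\tilde\alpha_{i0} - \alpha_{i0}) + \cdots$ and recalling that the $\lambda$-block of $\tilde\psi_i^W$ is $-P_{\alpha_i}^W T^{-1/2} \sum_t g(z_{it})$, the leading contribution of $-\hat G_{\theta_i}' \tilde\lambda_{i0}$ is $G_{\theta_i}' P_{\alpha_i}^W T^{-1} \sum_t g(z_{it})$. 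The leading stochastic part of $\sqrt{nT}\, \hat s^W(\theta_0)$ is then $(nT)^{-1/2} \sum_i \sum_t G_{\theta_i}' P_{\alpha_i}^W g(z_{it})$, a double sum of summands independent across $i$ and stationary strongly mixing across $t$; a central limit theorem for such heterogeneous arrays, valid under Conditions \ref{cond1} and \ref{cond5}, gives asymptotic variance $V_s^W = \bar{E}[G_{\theta_i}' P_{\alpha_i}^W \Omega_i P_{\alpha_i}^W G_{\theta_i}]$, since $\Omega_i$ is the long-run variance of $g(z_{it})$.

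The mean $\kappa B_s^W$ comes from the three higher-order pieces whose expectations are of order $T^{-1}$ per individual; multiplied by $\sqrt{nT}\, n^{-1}\sum_i T^{-1} = \sqrt{n/T}\, n^{-1}\sum_i(\cdot)$ and using $n/T \to \kappa^2$ together with a law of large numbers across $i$, each piece converges to $\kappa$ times a population average. The term $-G_{\theta_i}'$ times the $\lambda$-block of $T^{-1/2} Q_{1i}^W$ transmits the bias of the individual effects and, using $n^{-1}\sum_i Q_{1i}^W \inp \bar{E}[B_{\gamma_i}^W]$ from Lemma \ref{lemma1}, contributes $B_{si}^{W,B} = - G_{\theta_i}'(B_{\lambda_i}^{W,I} + B_{\lambda_i}^{W,G} + B_{\lambda_i}^{W,1S})$. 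The cross term $-(\hat G_{\theta_i}(\theta_0, \alpha_{i0}) - G_{\theta_i})' \tilde\lambda_{i0}$ has a nonzero expectation that, by stationarity, collapses to the two-sided covariance sum $B_{si}^{W,C} = \sum_{j=-\infty}^{\infty} E[G_{\theta_i}(z_{it})' P_{\alpha_i}^W g(z_{i,t-j})]$. Finally, the quadratic term arising from $\hat G_{\theta\alpha_i}(\theta_0, \alpha_{i0})(\tilde\alpha_{i0} - \alpha_{i0})$ paired with $\tilde\lambda_{i0}$ produces $B_{si}^{W,V}$ with its $G_{\theta\alpha_i}$ factors. Summing, $B_s^W = \bar{E}[B_{si}^{W,B} + B_{si}^{W,C} + B_{si}^{W,V}]$.

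The main obstacle will be the uniform-in-$i$ control of all remainders so that their cumulative effect over the $n$ individuals is negligible after scaling: one must show that $R_{2i}^W$ from Lemma \ref{lemma1}, together with the higher-order Taylor remainders of $\hat G_{\theta_i}$ and of the products, contribute $\sqrt{nT}\, n^{-1}\sum_i o_P(T^{-1}) = o_P(1)$. This rests on the moment and smoothness bounds of Conditions \ref{cond2}(iv) and \ref{cond5} and on the $o(T^{-1})$ probability bounds of Theorem \ref{th1}, which bound uniformly in $i$ the probability of the individual effects leaving a neighborhood of the truth. The companion delicate point is the bookkeeping of which second-order cross terms survive in expectation, and thus enter the bias through the two-sided sums $\sum_{j=-\infty}^{\infty}$, and which average to zero under the mixing and stationarity of Condition \ref{cond1}; this accounting is exactly what pins down the three pieces $B_{si}^{W,B}$, $B_{si}^{W,C}$, and $B_{si}^{W,V}$.
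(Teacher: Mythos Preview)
Your proposal is correct and follows essentially the same route as the paper: a mean-value expansion of the profile score, convergence of the profile Jacobian to $J_s^W$, and a second-order stochastic expansion of $\hat s_i^W(\theta_0,\tilde\gamma_{i0})$ built on Lemma~\ref{lemma1}, with the three bias pieces $B_{si}^{W,B}$, $B_{si}^{W,C}$, $B_{si}^{W,V}$ arising exactly as you describe from $M_i^W Q_{1i}^W$, the covariance between $\hat G_{\theta_i}-G_{\theta_i}$ and $\tilde\psi_i^W$, and the quadratic $\tilde\psi_{i,j}^W M_{i,j}^W \tilde\psi_i^W$ terms. The paper packages these steps into separate supplementary lemmas and controls the remainders via uniform $o_{uP}(T^{1/10})$-type rates derived from Condition~\ref{cond5}, but the architecture and the identification of each term are the same as yours.
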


The source of the bias is the non-zero expectation of the profile score of $\theta$ at the true parameter value, due to the
substitution of the unobserved individual effects by sample
estimators. These estimators converge to their true parameter value at
a rate $\sqrt{T},$ which is slower than $\sqrt{nT}$, the rate of
convergence of the estimator of the common parameter. Intuitively, the rate for $\widetilde \gamma_{i0}$ is $\sqrt{T}$ because only the $T$ observations for individual $i$ convey information about $\gamma_{i0}$. In nonlinear
and dynamic models, the slow convergence of the estimator of the individual
effect introduces bias in the estimators of the rest of parameters. The expression of this bias
 can be explained with an expansion of the score around the true
value of the individual effects\footnote{Using the notation
introduced in Section \ref{s3}, the  score is
\begin{equation*}
\hat{s}^{W}(\theta_{0}) = n^{-1} \sum_{i=1}^{n}
\hat{s}_{i}^{W}(\theta_{0}, \tilde{\gamma}_{i0}) = - n^{-1}
\sum_{i=1}^{n} \hat{G}_{\theta_{i}} ( \theta_{0},
\tilde{\alpha}_{i0})' \tilde{\lambda}_{i0},
\end{equation*}
where $\tilde{\gamma}_{i0} = (\tilde{\alpha}_{i0}',
\tilde{\lambda}_{i0}')$ is the solution to
\begin{eqnarray*}
\hat{t}_{i}^{W}(\theta_{0}, \tilde{\gamma}_{i0}) = - \left(%
\begin{array}{c}
  \hat{G}_{\alpha_{i}} (\theta_{0}, \tilde{\alpha}_{i0})'
\tilde{\lambda}_{i0} \\
  \hat{g}_{i}(\theta_{0},\tilde{\alpha}_{i0}) +
W_{i} \tilde{\lambda}_{i0} \\
\end{array}%
\right) = 0.
\end{eqnarray*}
 }
\begin{eqnarray*}
E \left[ \hat{s}_{i}^{W} (\theta_{0}, \tilde{\gamma}_{i0}) \right]
&=& E \left[ \hat{s}_{i}^{W} \right] +  E \left[ \hat{s}_{\gamma
i}^{W} \right]' E \left[ \tilde{\gamma}_{i0} - \gamma_{i0} \right] +
E \left[ (\hat{s}_{\gamma i}^{W} - E \left[ \hat{s}_{\gamma i}^{W}
\right]
)' (\tilde{\gamma}_{i0} - \gamma_{i0}) \right] \notag\\
&+&  E \left[ \sum_{j=1}^{d_{\alpha}+d_g} (\tilde{\gamma}_{i0,j} -
\gamma_{i0,j}) E \left[ \hat{s}_{\gamma \gamma i}^{W} \right]
(\tilde{\gamma}_{i0} - \gamma_{i0})
\right]/2 + o(T^{-1}) \notag\\
&=&  0 +  B_{s}^{W,B}/T + B_{s}^{W,C}/T + B_{s}^{W,V}/T + o(T^{-1}).
\end{eqnarray*}
This expression shows that the bias has the same three components as in the
MLE case, see Hahn and Newey (2004). The first component,
$B_{s}^{W,B}$, comes from the higher-order bias of the estimator of
the individual effects. The second component, $B_{s}^{W,C}$, is a
correlation term and is present because individual effects and
common parameters are estimated using the same observations. The
third component, $B_{s}^{W,V}$, is a variance term. The bias of the
individual effects, $B_{s}^{W,B}$, can be further decomposed in
three terms corresponding to the asymptotic bias for a GMM estimator
with the optimal score, $B_{\lambda}^{W,I}$, when $W$ is used as the
weighting function; the bias arising from estimation of
$G_{\alpha_{i}}$, $B_{\lambda}^{W,G}$; and the bias arising from not
using an optimal weighting matrix, $B_{\lambda}^{W,1S}$.

We use the following condition to show the consistency of
the two-step FE-GMM estimator: 
\begin{condition}[Smoothness, regularity, and martingale]
\label{cond4}(i) There exists a function $%
M\left( z_{it}\right) $ such that $\left| g_{k} \left( z_{it};\theta
,\alpha _{i}\right) \right| \leq M\left( z_{it}\right) $, $\left|
\partial g_{k} \left( z_{it};\theta ,\alpha _{i}\right) / \partial
\left( \theta
,\alpha _{i}\right) \right| \leq M\left( z_{it}\right) $, for $k = 1, ...,d_g$, and $\sup_{i}E%
\left[ M\left( z_{it}\right) ^{10(d_{\theta}+d_{\alpha}+6)/(1-10v)+\delta}\right] <\infty,$ for some $\delta > 0$ and $0 < v < 1/10;$  (ii) $\{
\Omega_{i}:  1 \leq i \leq n \}$ is a sequence of finite positive definite
matrices; and (iii)  for each $i$, $g (z_{it} ;\theta_0 ,\alpha_{i0})$ is a martingale
difference sequence with respect to
$\sigma(\alpha_{i}, z_{i,t-1}, z_{i,t-2}, \ldots)$.
\end{condition}

Conditions \ref{cond4}(i)-(ii) are used to establish the uniform
consistency of the estimators of the individual weighting matrices. Condition \ref{cond4}(iii) is convenient to simplify the
expressions of the optimal weighting matrices. It holds, for example, in rational expectation models
that commonly arise in economic applications.


\begin{theorem} [Uniform consistency of two-step estimators]
\label{th4}Suppose that Conditions \ref{cond1}, \ref{cond2},
\ref{cond3} and \ref{cond4} hold. Then, for any $\eta > 0$
\begin{equation*}
\Pr \left( \left| \hat{\theta} - \theta_{0} \right| \geq \eta
\right) = o \left( T^{-1} \right),
\end{equation*}
where $\hat{\theta} = \arg \max_{ \{ (\theta',\alpha_{i}')
\}_{i=1}^{n} \in \Upsilon}  \sum_{i=1}^{n}
\hat{Q}_{i}^{\Omega} (\theta, \alpha_{i})$ and $\hat{Q}_{i}^{\Omega}
\left( \theta ,\alpha _{i}\right) :=  - \hat{g}_{i} \left( \theta
,\alpha _{i}\right)' \hat{\Omega}_{i}(\tilde{\theta},
\tilde{\alpha}_{i})^{-1} \hat{g}_{i} \left( \theta ,\alpha
_{i}\right)$. Also, for any $\eta > 0$
\begin{equation*}
\Pr \left( \sup_{1 \leq i \leq n} \left| \hat{\alpha}_{i} -
\alpha_{0} \right| \geq \eta \right) = o \left( T^{-1} \right) \text{ and }  \Pr \left( \sup_{1 \leq i \leq n} \left| \hat{\lambda}_{i} \right| \geq \eta \right) =
o \left( T^{-1} \right),
\end{equation*}
where $\hat{\alpha}_{i} = \arg \max_{\alpha}
\hat{Q}_{i}^{\Omega}(\hat{\theta},\alpha)$ and 
$\hat{g}_{i}(\hat{\theta},\hat{\alpha}_{i}) +
 \hat{\Omega}_{i} (\tilde{\theta},
\tilde{\alpha}_{i}) \hat{\lambda}_{i} = 0$.
\end{theorem}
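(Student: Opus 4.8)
The plan is to reduce the statement to Theorem \ref{th1}. The two-step estimator $(\hat\theta, \{\hat\alpha_i\})$ is literally the one-step FE-GMM estimator obtained by feeding the random matrices $\hat W_i := \hat\Omega_i(\tilde\theta,\tilde\alpha_i)$ into the criterion $\hat Q_i^{W}$. Under Condition \ref{cond4}(iii) the martingale difference property gives $\Omega_i = \Omega_{0i}$, hence $\hat\Omega_i(\tilde\theta,\tilde\alpha_i) = \hat\Omega_{0i}(\tilde\theta,\tilde\alpha_i) = T^{-1}\sum_{t=1}^{T} g(z_{it};\tilde\theta,\tilde\alpha_i)g(z_{it};\tilde\theta,\tilde\alpha_i)'$, so no lag truncation is needed and the natural deterministic limit is $W_i := \Omega_i$. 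Since Conditions \ref{cond1} and \ref{cond2}(i)--(iv) (the latter implied by the higher moment bound in Condition \ref{cond4}(i)) do not involve the weighting matrix, it suffices to verify that the pair $\{\hat W_i = \hat\Omega_i(\tilde\theta,\tilde\alpha_i),\ W_i = \Omega_i\}$ satisfies Condition \ref{cond2}(v); the three conclusions for $\hat\theta$, $\hat\alpha_i$ and $\hat\lambda_i$ then follow by rerunning the argument of Theorem \ref{th1} verbatim.

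First I would verify the identification half of Condition \ref{cond2}(v). By Condition \ref{cond4}(ii) the matrices $\{\Omega_i\}$ are finite and positive definite, and the dominance bound in \ref{cond4}(i) keeps their eigenvalues away from zero and infinity uniformly in $i$, so the same is true of the already-admissible $\{W_i\}$. Consequently there exist constants $0 < \underline c \le \bar c < \infty$, independent of $i$, with $\underline c\, g_i(\theta,\alpha)'W_i^{-1}g_i(\theta,\alpha) \le g_i(\theta,\alpha)'\Omega_i^{-1}g_i(\theta,\alpha) \le \bar c\, g_i(\theta,\alpha)'W_i^{-1}g_i(\theta,\alpha)$ for every argument. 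Because $Q_i^{\Omega}(\theta_0,\alpha_{i0}) = Q_i^{W}(\theta_0,\alpha_{i0}) = 0$, the separation gap $Q_i^{\Omega}(\theta_0,\alpha_{i0}) - \sup_{|(\theta,\alpha)-(\theta_0,\alpha_{i0})|>\eta} Q_i^{\Omega}(\theta,\alpha)$ is bounded below by $\underline c$ times the corresponding gap for $Q_i^{W}$, so the uniform well-separated-maximum condition transfers from $W_i$ to $\Omega_i$ and $(\theta_0,\alpha_{i0})$ remains the unique maximizer of $Q_i^{\Omega}$.

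The main step, and the \emph{main obstacle}, is the uniform convergence of the estimated weighting matrix,
\begin{equation*}
\Pr\left(\sup_{1\le i\le n}\left|\hat\Omega_i(\tilde\theta,\tilde\alpha_i) - \Omega_i\right| \ge \eta\right) = o(T^{-1}).
\end{equation*}
I would split this into a sampling error and a plug-in error. For $\sup_i|\hat\Omega_{0i}(\theta_0,\alpha_{i0}) - \Omega_{0i}|$ the summands $g(z_{it})g(z_{it})' - \Omega_{0i}$ are mean zero, strongly mixing, and dominated by $M(z_{it})^2$; an exponential/maximal inequality for mixing sequences together with the high moment bound $\sup_i E[M(z_{it})^{10(d_{\theta}+d_{\alpha}+6)/(1-10v)+\delta}]<\infty$ of Condition \ref{cond4}(i) gives a per-individual deviation probability of order $o(T^{-2})$, so a union bound over the $n\asymp T$ individuals of Condition \ref{cond1}(iii) yields $o(T^{-1})$. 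For the plug-in error $\sup_i|\hat\Omega_{0i}(\tilde\theta,\tilde\alpha_i)-\hat\Omega_{0i}(\theta_0,\alpha_{i0})|$, a mean value expansion bounds it by $(|\tilde\theta-\theta_0| + \sup_i|\tilde\alpha_i-\alpha_{i0}|)$ times a sample average of a $g$-times-derivative term dominated by $M(z_{it})^2$; that average is uniformly $O_P(1)$ by the same inequality, while Theorem \ref{th1} supplies $\Pr(|\tilde\theta-\theta_0|\ge\eta) + \Pr(\sup_i|\tilde\alpha_i-\alpha_{i0}|\ge\eta) = o(T^{-1})$. The doubled moment requirement of Condition \ref{cond4}(i) relative to \ref{cond2}(iv) is precisely what absorbs the extra second-moment structure and the preliminary estimation.

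With positive definiteness, the transferred separation condition, and the $o(T^{-1})$ tail displayed above in hand, the pair $\{\hat W_i = \hat\Omega_i(\tilde\theta,\tilde\alpha_i),\ W_i = \Omega_i\}$ meets every hypothesis invoked in the proof of Theorem \ref{th1} (and with a convergence rate at least as strong as the one that proof exploits). Rerunning that argument then delivers $\Pr(|\hat\theta-\theta_0|\ge\eta)=o(T^{-1})$, $\Pr(\sup_i|\hat\alpha_i-\alpha_{i0}|\ge\eta)=o(T^{-1})$, and the analogous bound for $\sup_i|\hat\lambda_i|$ via $\hat\lambda_i = -\hat\Omega_i(\tilde\theta,\tilde\alpha_i)^{-1}\hat g_i(\hat\theta,\hat\alpha_i)$, which is the claim.
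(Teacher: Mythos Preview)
Your proposal is correct and follows essentially the same approach as the paper: reduce to Theorem~\ref{th1} by taking $\hat W_i = \hat\Omega_i(\tilde\theta,\tilde\alpha_i)$ and $W_i = \Omega_i$, with the only new ingredient being the uniform convergence $\sup_i|\hat\Omega_i(\tilde\theta,\tilde\alpha_i) - \Omega_i| \to 0$, established by splitting into a sampling term (controlled via the dominance $|g_k g_l| \le M^2$ and the stronger moment bound of Condition~\ref{cond4}(i)) and a plug-in term (controlled by Theorem~\ref{th1} consistency of $(\tilde\theta,\tilde\alpha_i)$). The paper uses the slightly different decomposition $|\hat\Omega_i(\tilde\theta,\tilde\alpha_i) - \Omega_i(\tilde\theta,\tilde\alpha_i)| + |\Omega_i(\tilde\theta,\tilde\alpha_i) - \Omega_i|$ and invokes Lemma~2 of Hahn--Kuersteiner for the first piece, but this is equivalent to yours; your explicit verification that the identification gap in Condition~\ref{cond2}(v) transfers from $W_i$ to $\Omega_i$ is a detail the paper leaves implicit.
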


We replace Condition \ref{cond5} by the following condition to
obtain the limit  distribution of the two-step estimator:
\begin{condition}[Smoothness]
\label{cond7} There exists some $M\left( z_{it}\right) $ such
that, for $k = 1, ...,d_g$
\begin{equation*}
\left| \partial ^{d_{1} + d_{2}} g_{k} \left( z_{it};\theta
,\alpha _{i}\right) / \partial \theta ^{d_{1}}\partial \alpha
_{i}^{d_{2}}\right| \leq M\left( z_{it}\right) \qquad 0\leq
d_{1}+d_{2}\leq 1,\ldots ,5,
\end{equation*}
and $\sup_{i}E\left[ M\left( z_{it}\right)  ^{10(d_{\theta}+d_{\alpha}+6)/(1-10v)+\delta}\right] <\infty,$ for some $\delta > 0$ and $0 < v < 1/10$.
\end{condition}
Condition \ref{cond7} guarantees the existence of higher order
expansions for the estimators of the weighting matrices and uniform
convergence of their remainder terms. Conditions \ref{cond4} and
\ref{cond7} are stronger versions of conditions \ref{cond2}(iv),
\ref{cond2}(v) and \ref{cond5}. They are presented separately
because they are only needed when there is a first stage where the
weighting matrices are estimated.

Let $\Sigma_{\alpha_{i}} := \left( G_{\alpha_{i}}' \Omega_{i}^{-1}
G_{\alpha_{i}} \right)^{-1}$, $H_{\alpha_{i}} := \Sigma_{\alpha_{i}}
G_{\alpha_{i}}' \Omega_{i}^{-1}$, and $P_{\alpha_{i}} :=
\Omega_{i}^{-1} - \Omega_{i}^{-1} G_{\alpha_{i}} H_{\alpha_{i}}$.
\begin{lemma} [Asymptotic expansion for two-step estimators of individual effects] \label{lemma2} Under
the Conditions \ref{cond1}, \ref{cond2}, \ref{cond3}, \ref{cond5}, and
\ref{cond4},
\begin{equation}
\sqrt{T} (\hat{\gamma}_{i0} - \gamma_{i0}) = \tilde{\psi}_{i} +
T^{-1/2} B_{\gamma_{i}} + T^{-1} R_{2i},
\end{equation}
where $\hat{\gamma}_{i0} := \hat{\gamma}_{i}(\theta_{0})$,
$$
 \tilde{\psi}_{i} = -  \left(
                          \begin{array}{cc}
                            H_{\alpha_i} \\
                            P_{\alpha_i}\\
                          \end{array}
                        \right) T^{-1/2} \sum_{t=1}^T g(z_{it}) \ind N(0, V_i),
$$
$n^{-1/2} \sum_{i=1}^{n} \tilde{\psi}_{i} \ind N(0, \bar{E}[V_i]),$
$B_{\gamma_{i}} = B_{\gamma_{i}}^{I} + B_{\gamma_{i}}^{G} +
B_{\gamma_{i}}^{\Omega} + B_{\gamma_{i}}^{W}$, $\sup_{1 \leq i \leq n} R_{2i} = o_{P}(\sqrt{T})$, with, for $\Omega_{\alpha_{i,j}} =
\partial \Omega_{\alpha_{i}}/\partial \alpha_{i,j}$,
\begin{footnotesize}\begin{eqnarray*}
V_i &=&  \text{diag} \left(
  \Sigma_{\alpha_{i}},
  P_{\alpha_{i}} 
\right),
\\
B_{\gamma_{i}}^{I} &=& \left(%
\begin{array}{c}
  B_{\alpha_{i}}^{I} \\
  B_{\lambda_{i}}^{I} \\
\end{array}%
\right) = \left(%
\begin{array}{c}
  H_{\alpha_{i}}  \\
P_{\alpha_{i}} \\
\end{array}%
\right) \left(- \sum_{j=1}^{d_{\alpha}} G_{\alpha \alpha_{i,j}}
  \Sigma_{\alpha_{i}}/2
  + E \left[ G_{\alpha_{i}}(z_{it}) H_{\alpha_{i}} g(z_{i,t-j}) \right]\right), \\
B_{\gamma_{i}}^{G} &=& \left(%
\begin{array}{c}
  B_{\alpha_{i}}^{G} \\
  B_{\lambda_{i}}^{G} \\
\end{array}%
\right) =  \left(%
\begin{array}{c}
  -\Sigma_{\alpha_{i}} \\
  H_{\alpha_{i}}'  \\
\end{array}%
\right) \sum_{j=0}^{\infty}  E \left[ G_{\alpha_{i}}(z_{it})'P_{\alpha_{i}}g(z_{i,t-j}) \right] , \\
B_{\gamma_{i}}^{\Omega} &=& \left(%
\begin{array}{c}
  B_{\alpha_{i}}^{\Omega} \\
  B_{\lambda_{i}}^{\Omega} \\
\end{array}%
\right) = \left(%
\begin{array}{c}
  H_{\alpha_{i}}  \\
  P_{\alpha_{i}}  \\
\end{array}%
\right)   \sum_{j=0}^{\infty} E[g(z_{it}) g(z_{it})'P_{\alpha_{i}} g(z_{i,t-j})], \\
B_{\gamma_{i}}^{W} &=& \left(%
\begin{array}{c}
  B_{\alpha_{i}}^{W} \\
  B_{\lambda_{i}}^{W} \\
\end{array}%
\right) = \left(%
\begin{array}{c}
   H_{\alpha_{i}}  \\
   P_{\alpha_{i}}  \\
\end{array}%
\right) \sum_{j=1}^{d_{\alpha}} \Omega_{\alpha_{i,j}} \left( H_{\alpha_{i, j}}^{W'} - H'_{\alpha_{i,j}} \right).
\end{eqnarray*}\end{footnotesize}
\end{lemma}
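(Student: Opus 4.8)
The plan is to treat the two-step FE-GMM estimator of the individual effects as a one-step estimator whose weighting matrix is the \emph{estimated} optimal matrix $\hat\Omega_i(\tilde\theta,\tilde\alpha_i)$, and then to specialize the expansion of Lemma~\ref{lemma1} to the limiting weighting matrix $W_i=\Omega_i$. The profiled first order conditions $\hat t_i(\theta_0,\hat\gamma_{i0})=0$ have exactly the one-step algebraic form with $\hat W_i$ replaced by $\hat\Omega_i(\tilde\theta,\tilde\alpha_i)$, so once I verify that this data-dependent matrix satisfies Condition~\ref{cond5}(ii) with deterministic limit $\Omega_i$ and a suitable influence term $\xi_i$, the conclusion of Lemma~\ref{lemma1} transfers almost verbatim: the outer factors specialize through $(H_{\alpha_i}^W,P_{\alpha_i}^W,\Sigma_{\alpha_i}^W)\to(H_{\alpha_i},P_{\alpha_i},\Sigma_{\alpha_i})$, while $\xi_i$ retains the one-step objects built from the preliminary weighting $W_i$.

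The new ingredient is a uniform-in-$i$ stochastic expansion of $\hat\Omega_i(\tilde\theta,\tilde\alpha_i)$. Under the martingale difference Condition~\ref{cond4}(iii) one has $\Omega_i=\Omega_{0i}$ and $\hat\Omega_i=\hat\Omega_{0i}$, so I would Taylor-expand $\hat\Omega_{0i}(\tilde\theta,\tilde\alpha_i)=T^{-1}\sum_t g(z_{it};\tilde\theta,\tilde\alpha_i)g(z_{it};\tilde\theta,\tilde\alpha_i)'$ around $(\theta_0,\alpha_{i0})$. The zero-order term contributes the mean-zero fluctuation $\xi_i^{(1)}(z_{it})=g(z_{it})g(z_{it})'-\Omega_i$; the first $\alpha$-derivative contributes $\xi_i^{(2)}(z_{it})=-\sum_{j=1}^{d_\alpha}\Omega_{\alpha_{i,j}}H_{\alpha_{i,j}}^W g(z_{it})$ once the one-step influence representation $\tilde\alpha_{i0,j}-\alpha_{i0,j}=-H_{\alpha_{i,j}}^W T^{-1}\sum_t g(z_{it})+o_P(T^{-1/2})$ from Lemma~\ref{lemma1} is inserted; and the $\theta$-derivative term is $O_P((nT)^{-1/2})=o_P(T^{-1/2})$ by Theorem~\ref{th3}, hence negligible next to the individual $O_P(T^{-1/2})$ terms. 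This identifies $\xi_i=\xi_i^{(1)}+\xi_i^{(2)}$ together with a remainder that I must control.

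With $W_i=\Omega_i$, three optimal-GMM identities then do all the bookkeeping. The relations $H_{\alpha_i}\Omega_i H_{\alpha_i}'=\Sigma_{\alpha_i}$, $G_{\alpha_i}'P_{\alpha_i}=0$ and $P_{\alpha_i}\Omega_i H_{\alpha_i}'=0$ collapse the one-step variance to the block-diagonal $\mathrm{diag}(\Sigma_{\alpha_i},P_{\alpha_i})$ and annihilate the entire non-$\xi$ part of $B_{\gamma_i}^{W,1S}$, so that $B_{\gamma_i}^{W,I}$ and $B_{\gamma_i}^{W,G}$ reduce to $B_{\gamma_i}^{I}$ and $B_{\gamma_i}^{G}$ (the two-sided lag sum in the latter collapsing to $j\ge0$ by the martingale property). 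The surviving $\xi_i$-term of $B_{\gamma_i}^{W,1S}$ then splits along $\xi_i=\xi_i^{(1)}+\xi_i^{(2)}$: the $\xi_i^{(1)}$ piece gives the third-moment bias $B_{\gamma_i}^{\Omega}$ (the $-\Omega_i$ part integrating to zero, only lags $j\ge0$ surviving), while the $\xi_i^{(2)}$ piece gives $B_{\gamma_i}^{W}$ after using $G_{\alpha_i}'H_{\alpha_i}^{W'}=I_{d_\alpha}$ to establish the key identity $P_{\alpha_i}\Omega_i H_{\alpha_{i,j}}^{W'}=H_{\alpha_{i,j}}^{W'}-H_{\alpha_{i,j}}'$ and keeping only the $j=0$ lag. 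A useful sanity check I would record is that $B_{\gamma_i}^{W}$ vanishes when the first stage is already efficient ($W_i=\Omega_i\Rightarrow H_{\alpha_i}^W=H_{\alpha_i}$), confirming that $B_{\gamma_i}^{W}$ is exactly the extra bias injected by a suboptimal preliminary weighting matrix.

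The hard part is not this algebra but the uniform-in-$i$ control of the two remainders along sequences with $n/T\to\kappa^2$. I must show $\sup_{1\le i\le n}|R_i^\Omega|=o_P(\sqrt T)$ for the weighting-matrix expansion and $\sup_{1\le i\le n}R_{2i}=o_P(\sqrt T)$ for the profile expansion, even though the number of separate individual problems grows with $T$. This is where the geometrically mixing Condition~\ref{cond1}(i), the $o(T^{-1})$ uniform consistency rates of Theorem~\ref{th1}, and the high-order moment and smoothness bounds of Conditions~\ref{cond4} and~\ref{cond7} are indispensable: they deliver the maximal and exponential inequalities for the suprema over $i\le n$ of the relevant sample averages and of their derivatives up to fifth order, so that the $o_P$ remainders survive the maximum. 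The genuinely delicate step is verifying that replacing $\Omega_i$ by the estimated $\hat\Omega_i(\tilde\theta,\tilde\alpha_i)$ nowhere inflates a remainder beyond $o_P(\sqrt T)$ uniformly, which requires feeding the weighting-matrix expansion above into the uniform bounds on the profile problem.
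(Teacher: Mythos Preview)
Your proposal is correct and follows essentially the same route as the paper. The appendix carries out the proof through Lemmas~\ref{lh2}, \ref{ld1}, \ref{lh3} and~\ref{lh5}: Lemma~\ref{ld1} is exactly your expansion of $\hat\Omega_i(\tilde\theta,\tilde\alpha_i)$ around $\Omega_i$, with first-order term $\tilde\psi_{\Omega i}^W=\sqrt T(\hat\Omega_i-\Omega_i)+\sum_j\Omega_{\alpha_{i,j}}\tilde\psi_{i,j}^W$ matching your $\xi_i^{(1)}+\xi_i^{(2)}$; Lemma~\ref{lh3} is the second-order expansion of the profiled FOC with this extra weighting-matrix term inserted; and Lemma~\ref{lh5} computes the bias components using precisely the optimal-GMM identities $H_{\alpha_i}\Omega_iH_{\alpha_i}'=\Sigma_{\alpha_i}$, $G_{\alpha_i}'P_{\alpha_i}=0$, $P_{\alpha_i}\Omega_iH_{\alpha_i}'=0$ and $P_{\alpha_i}\Omega_iH_{\alpha_i}^{W'}=H_{\alpha_i}^{W'}-H_{\alpha_i}'$ that you record. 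The only cosmetic difference is that the paper re-derives the expansion directly for the two-step score $\hat t_i=\hat t_i^\Omega+\hat t_i^R$ rather than formally invoking Lemma~\ref{lemma1} as a black box with $\hat W_i=\hat\Omega_i(\tilde\theta,\tilde\alpha_i)$, but the computation is identical.
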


\begin{theorem} [Limit distribution for two-step estimators
of common parameters] \label{th6} Under the Conditions \ref{cond1},
\ref{cond2}, \ref{cond3}, \ref{cond5}, \ref{cond4} and \ref{cond7},
\begin{equation*}
    \sqrt{nT}(\hat{\theta}-\theta_{0}) \ind  -  J_{s}^{-1}  N\left( \kappa B_{s}, J_{s} \right),
\end{equation*}
where $J_{s} = \bar{E} \left[ G_{\theta_{i}}'P_{\alpha_{i}}
G_{\theta_{i}} \right],$  $B_{s} = \bar{E} \left[ B_{si}^{B} +
B_{si}^{C} \right],$ $ B_{si}^{B} = - G_{\theta_{i}}' \left[
B_{\lambda_{i}}^{I}
 + B_{\lambda_{i}}^{G} + B_{\lambda_{i}}^{\Omega} + B_{\lambda_{i}}^{W}  \right]$,
 $B_{si}^{C} =  \sum_{j=0}^{\infty}  E \left[ G_{\theta_{i}}(z_{it})'
P_{\alpha_{i}} g(z_{i,t-j}) \right]$. The expressions for
$B_{\lambda_{i}}^{I}$, $B_{\lambda_{i}}^{G}$,
$B_{\lambda_{i}}^{\Omega}$ and $B_{\lambda_{i}}^{W}$ are given in
Lemma \ref{lemma2}.
\end{theorem}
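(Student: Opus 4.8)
The plan is to mirror the proof of the one-step result in Theorem \ref{th3}, the only genuinely new ingredient being the need to track how the estimated optimal weighting matrices $\hat\Omega_i(\tilde\theta,\tilde\alpha_i)$ propagate through the first-stage estimates $(\tilde\theta,\tilde\alpha_i)$ into the second-stage score. Write the two-step profile score as $\hat{s}(\theta) = -n^{-1}\sum_{i=1}^n \hat{G}_{\theta_i}(\theta,\hat\alpha_i(\theta))'\hat\lambda_i(\theta)$, where now $\hat\alpha_i(\theta)$ and $\hat\lambda_i(\theta)$ solve the inner first-order conditions with $\hat\Omega_i(\tilde\theta,\tilde\alpha_i)$ in place of $\hat W_i$. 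Since $\hat\theta$ solves $\hat{s}(\hat\theta)=0$, a mean-value expansion around $\theta_0$ yields
\begin{equation*}
\sqrt{nT}(\hat\theta-\theta_0) = -\bigl[\hat{s}_\theta(\bar\theta)\bigr]^{-1}\sqrt{nT}\,\hat{s}(\theta_0),
\end{equation*}
for $\bar\theta$ between $\hat\theta$ and $\theta_0$. The argument then separates into (a) the convergence of the Jacobian $\hat{s}_\theta(\bar\theta)$ and (b) the limit distribution of the score $\sqrt{nT}\,\hat{s}(\theta_0)$.

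For (a), I would combine the uniform consistency of $\hat\theta$ from Theorem \ref{th4}, the uniform consistency of $\hat\alpha_i$ and $\hat\lambda_i$, and a uniform law of large numbers justified by the dominance and high-moment conditions of \ref{cond2}(iv) and \ref{cond7}. Differentiating the implicit inner first-order conditions to obtain $\partial\hat\gamma_i(\theta)/\partial\theta'$ and using the identities $H_{\alpha_i}G_{\alpha_i}=I$ and $P_{\alpha_i}G_{\alpha_i}=0$, which follow at once from the definitions of $H_{\alpha_i}$ and $P_{\alpha_i}$, the probability limit collapses to $\hat{s}_\theta(\bar\theta)\to_P -\bar E[G_{\theta_i}'P_{\alpha_i}G_{\theta_i}]=-J_s$. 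Because the weighting matrices enter only through the first-stage estimates, which are held fixed under second-stage differentiation and converge uniformly to $\Omega_i$ by Conditions \ref{cond1}, \ref{cond4} and \ref{cond7}, substituting $\Omega_i$ for $\hat\Omega_i(\tilde\theta,\tilde\alpha_i)$ is negligible at this order.

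For (b), I would expand $\sqrt{nT}\,\hat{s}(\theta_0) = -\sqrt{nT}\,n^{-1}\sum_i \hat{G}_{\theta_i}(\theta_0,\hat\alpha_{i0})'\hat\lambda_{i0}$ stochastically in powers of $\hat\gamma_{i0}-\gamma_{i0}$, which is uniformly $O_P(T^{-1/2})$ by Lemma \ref{lemma2}, and substitute $\sqrt{T}(\hat\gamma_{i0}-\gamma_{i0}) = \tilde\psi_i + T^{-1/2}B_{\gamma_i} + T^{-1}R_{2i}$. Keeping terms through order $T^{-1}$ splits the score into a mean-zero leading term of order $O_P(1)$ and a deterministic drift of order $\sqrt{n}/T$. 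A central limit theorem over the independent-across-$i$ summands gives the Gaussian limit, and the crucial algebraic simplification is the idempotency relation $P_{\alpha_i}\Omega_i P_{\alpha_i}=P_{\alpha_i}$, which reduces the sandwich variance to the efficient form $\bar E[G_{\theta_i}'P_{\alpha_i}G_{\theta_i}]=J_s$. The drift follows from $\sqrt{nT}\,E[\hat{s}(\theta_0)]\to\kappa B_s$, using $\sqrt{nT}/T=\sqrt{n/T}\to\kappa$ from \ref{cond1}(iii) and reading $B_s=\bar E[B_{si}^B+B_{si}^C]$ off the bias blocks $B_{\lambda_i}^I,B_{\lambda_i}^G,B_{\lambda_i}^\Omega,B_{\lambda_i}^W$ of Lemma \ref{lemma2} together with the correlation term $B_{si}^C$.

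The main obstacle is the uniform control of the two bias channels that are new relative to the one-step case: the curvature term $B_{\gamma_i}^\Omega$ from the dependence of the moment variance on the individual effects, and $B_{\gamma_i}^W$ from the first-stage estimation of the weighting matrices. The second requires propagating the expansion $\hat\Omega_i(\tilde\theta,\tilde\alpha_i)=\Omega_i+O_P(T^{-1/2})$, itself a byproduct of the one-step expansion in Lemma \ref{lemma1} under Condition \ref{cond7}, into the second-stage score and verifying that its net effect is precisely the drift $B_{\lambda_i}^W$ with no first-order variance inflation; the martingale difference structure of Condition \ref{cond4}(iii) is what keeps the relevant long-run variances equal to their contemporaneous counterparts. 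The delicate bookkeeping is to show that every remainder is $o_P(1)$ after multiplication by $\sqrt{nT}$ and averaging over $i$: the bound $\sup_i R_{2i}=o_P(\sqrt{T})$ from Lemma \ref{lemma2} contributes a term of order $o_P(\sqrt{n}/T)=o_P(1)$, and the high-moment requirements in Conditions \ref{cond4}(i) and \ref{cond7}, together with the $o(T^{-1})$ tail bounds of Theorem \ref{th4}, are exactly what make this averaging legitimate.
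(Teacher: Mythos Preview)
Your proposal is essentially the paper's own argument: a mean-value expansion of the profile first-order condition, convergence of the profile Jacobian to $J_s$ (the paper obtains $+J_s$, not $-J_s$, so recheck that sign), and a stochastic expansion of the profile score at $\theta_0$ using Lemma~\ref{lemma2} to produce the Gaussian limit with drift $\kappa B_s$ and variance $J_s$ via the idempotency $P_{\alpha_i}\Omega_iP_{\alpha_i}=P_{\alpha_i}$. The paper packages your parts (a) and (b) into appendix lemmas and isolates the weighting-matrix channel by writing the inner score as $\hat t_i=\hat t_i^{\Omega}+\hat t_i^{R}$ with $\hat t_i^{R}$ carrying $\hat\Omega_i(\tilde\theta,\tilde\alpha_i)-\Omega_i$, which is exactly the propagation you describe.
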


Theorem \ref{th6} establishes that one iteration of the GMM
procedure not only improves asymptotic efficiency by reducing the
variance of the influence function, but also removes the variance
and non-optimal weighting matrices components from the bias. The
higher-order bias of the estimator of the individual effects,
$B_{\lambda}^{B}$, now has four components, as in Newey and Smith
(2004). These components correspond to the asymptotic bias for a GMM
estimator with the optimal score, $B_{\lambda}^{I}$; the bias
arising from estimation of $G_{\alpha_{i}}$, $B_{\lambda}^{G}$; the
bias arising from estimation of $\Omega_{i}$,
$B_{\lambda}^{\Omega}$; and the bias arising from the choice of the
preliminary first step estimator, $B_{\lambda}^{W}$. An
additional iteration of the GMM estimator removes the term
$B_{\lambda}^{W}$.

The general procedure for deriving the asymptotic distribution of the FE-GMM
estimators consists of several expansions. First, we derive
higher-order asymptotic expansions for the estimators of the
individual effects, with the common parameter fixed at its true
value $\theta_{0}$. Next, we obtain the asymptotic distribution for the profile
score of the common parameter at $\theta_0$ using the expansions of the estimators
of the individual effects. Finally, we derive the asymptotic distribution of estimator for
the common parameter multiplying the asymptotic distribution of the score 
by the limit profile Jacobian matrix. This
procedure is detailed in the online appendix Fern\'andez-Val and Lee
(2012). Here we characterize the asymptotic bias in a linear
correlated random coefficient model with endogenous regressors.
Motivated by the numerical and empirical examples that follow, we
consider a model where only the variables with common
parameter are endogenous and allow for the moment conditions not to
be martingale difference sequences. 

\noindent \textbf{Example: Correlated random coefficient model with
endogenous regressors.} We consider a simplified version of the
models in the empirical and numerical examples. The notation is the
same as in the theorems discussed above. The moment condition is
$$g(z_{it}; \theta, \alpha_i) = w_{it}(y_{it}-x_{1it}'\alpha_{i}-x_{2it}'\theta),$$
where $w_{it} = (x_{1it}', w_{2it}')'$ and $z_{it}=(x_{1it}',
x_{2it}', w_{2it}', y_{it})'$. That is, only the regressors with
common coefficients are endogenous. Let $\epsilon_{it} = y_{it} -
x_{1it}'\alpha_{i0}-x_{2it}'\theta_0$. To simplify the expressions
for the bias, we assume that $\epsilon_{it} \mid w_i, \alpha_i \sim
i.i.d. (0, \sigma_{\epsilon}^2)$ and $E[x_{2it} \epsilon_{i,t-j}
\mid w_i, \alpha_i] = E[x_{2it} \epsilon_{i,t-j}],$ for $w_{i} =
(w_{i1}, ..., w_{iT})'$ and $j \in \{0, \pm 1, \ldots \}$. Under
these conditions, the optimal weighted matrices are proportional to
$E[w_{it}w_{it}'],$ which do not depend on $\theta_0$ and
$\alpha_{i0}$. We can therefore obtain the optimal GMM estimator in
one step using the sample averages $T^{-1} \sum_{t=1}^T w_{it} w_{it}'$
to estimate the optimal weighting matrices.

In this model, it is straightforward to see that the estimators of
the individual effects have no bias, that is $B_{\gamma_i}^{W,I} =
B_{\gamma_i}^{W,G} = B_{\gamma_i}^{W,1S} = 0$. By linearity of the
first order conditions in $\theta$ and $\alpha_{i},$ 
$B_{si}^{W,V} = 0.$ The only source of bias is the correlation
between the estimators of $\theta$ and $\alpha_i.$ After some
straightforward but tedious algebra, this bias simplifies to
$$
B_{si}^{W,C} = - (d_g - d_{\alpha}) \sum_{j = - \infty}^{\infty} E[x_{2it}
\epsilon_{i,t-j}].
$$
For the
limit Jacobian, we find
$$
J_s^W = \bar{E} \left\{ E[\tilde{x}_{2it} \tilde{w}_{2it}']
E[\tilde{w}_{2it} \tilde{w}_{2it}']^{-1} E[\tilde{w}_{2it}
\tilde{x}_{2it}'] \right\},
$$
where variables with tilde indicate residuals of population linear
projections of the corresponding variable on $x_{1it},$ for example
$\tilde{x}_{2it}= x_{2it} - E[x_{2it} x_{1it}'] E[x_{1it}
x_{1it}']^{-1} x_{1it}$. The expression of the bias is
\begin{equation}\label{bias: linearIV}
\mathcal{B}(\theta_0)  = - (d_g - d_{\alpha}) (J_s^W)^{-1} \bar{E}
 \sum_{j = - \infty}^{\infty} E[\tilde{x}_{2it} (\tilde{y}_{i,t-j} - \tilde{x}_{2i,t-j}'\theta_0)].
\end{equation}

\bigskip

In random coefficient models the ultimate quantities of interest are
often functions of the data, model parameters and individual
effects.  The following corollaries characterize the asymptotic distributions of the fixed effects estimators of these quantities.  The first corollary applies to averages of
functions of the data and individual effects such as average partial effects and average
derivatives in nonlinear models, and average elasticities in linear
models with variables in levels. Section 6 gives an example
of these elasticities. The second corollary applies to averages of smooth
functions of the individual effects including means, variances
and other moments of the distribution of these effects. Sections
2 and 6 give examples of these functions.    We state the results only for estimators constructed from  two-step estimators of the common parameters and individual effects. Similar results apply to estimators constructed from one-step estimators. Both corollaries follow
from Lemma \ref{lemma2} and Theorem \ref{th6} by the delta method.
 
\begin{corollary}[Asymptotic distribution for fixed effects averages] \label{cor1} Let
$\zeta(z;\theta,\alpha_{i})$ be a twice continuously differentiable
function in its second and third argument, such that $\inf_i Var[\zeta(z_{it})]  > 0,$ $\bar{E}E[\zeta(z_{it})^2] < \infty,$ $\bar{E}E|\zeta_{\alpha}(z_{it})|^2 < \infty,$ and $\bar{E}E|\zeta_{\theta}(z_{it})|^2 < \infty,$ where the subscripts on $\zeta$ denote partial derivatives. Then, under
the conditions of Theorem \ref{th6}, for some deterministic sequence $r_{nT} \to \infty$ such that $r_{nT}= O(\sqrt{nT}),$
\begin{equation*}
r_{nT} (\hat{\zeta} - \zeta - B_{\zeta}/T) \ind N(0, V_{\zeta}),
\end{equation*}
where $ \zeta = \bar{E}  E \left[ \zeta(z_{it}) \right],$ 
$$
B_{\zeta} = \bar{E} E \left[
- \sum_{j = 0}^{\infty} \zeta_{\alpha_i} (z_{it})' H_{\alpha_{i}} g(z_{i,t-j}) +
 \zeta_{\alpha_i} (z_{it})' B_{\alpha_{i}}  + \sum_{j=1}^{d_{\alpha}} \zeta_{\alpha \alpha
_{i,j}} (z_{it})'
\Sigma_{\alpha_{i}}/2 - \zeta_{\beta} (z_{it})'J_s^{-1}B_s \right],
$$
for $B_{\alpha_{i}}  = B_{\alpha_{i}}^I  + B_{\alpha_{i}}^G  +  B_{\alpha_{i}}^{\Omega}  + B_{\alpha_{i}}^W,$ and for $r^2 = \lim_{n,T \to \infty} r_{nT}^2/(nT),$
\begin{footnotesize}\begin{equation*}
 V_{\zeta}
=  \bar{E} \Bigg \{ r^2 E \left[  \zeta_{\alpha_i}(z_{it})' \Sigma_{\alpha_{i}}
\zeta_{\alpha_i}(z_{it}) +
\zeta_{\theta}(z_{it})' J_{s}^{-1}
\zeta_{\theta}(z_{it}) \right]   +   \lim_{n,T \to \infty} \frac{r_{nT}^2}{n} E \left[  \left( \frac{1}{T}\sum_{t = 1}^{T}  (\zeta(z_{it}) - \zeta) \right)^2 \right] \Bigg \}.
\end{equation*}\end{footnotesize}
\end{corollary}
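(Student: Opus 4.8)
The plan is to follow the route indicated right after the statement: apply the delta method to the plug-in average $\hat\zeta = \hat\zeta(\hat\theta)$, expanding around $(\theta_0,\{\alpha_{i0}\})$ and substituting the stochastic expansions of Lemma \ref{lemma2} for the individual effects and of Theorem \ref{th6} for the common parameter. First I would write the exact split
$$
\sqrt{nT}(\hat\zeta - \zeta) = (nT)^{-1/2}\sum_{i=1}^n\sum_{t=1}^T[\zeta(z_{it};\theta_0,\alpha_{i0}) - \zeta] + (nT)^{-1/2}\sum_{i=1}^n\sum_{t=1}^T[\zeta(z_{it};\hat\theta,\hat\alpha_i(\hat\theta)) - \zeta(z_{it};\theta_0,\alpha_{i0})],
$$
treat the first (oracle) sum by a central limit theorem for arrays that are independent across $i$ and stationary mixing over $t$ (Condition \ref{cond1}), which produces the long-run variance $\bar{E}E[\sum_{j}(\zeta(z_{it})-\zeta)(\zeta(z_{i,t-j})-\zeta)]$, and Taylor-expand the second (estimation) sum to second order in $(\theta,\alpha_i)$ using the twice differentiability of $\zeta$ and the dominance bounds of Condition \ref{cond5}.

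For the limiting variance I would collect the $O_P(1)$ contributions after scaling. Writing $\hat\alpha_i(\hat\theta)-\alpha_{i0} = [\hat\alpha_i(\theta_0)-\alpha_{i0}] + [\hat\alpha_i(\hat\theta)-\hat\alpha_i(\theta_0)]$ separates an $\alpha$-channel from a $\theta$-channel. Inserting the leading influence $\tilde\psi_i$ from Lemma \ref{lemma2} into the first-order $\alpha$ term, and using $H_{\alpha_i}\Omega_i H_{\alpha_i}' = \Sigma_{\alpha_i}$, produces the $\alpha$-influence; inserting the leading term of $\sqrt{nT}(\hat\theta-\theta_0)\ind -J_s^{-1}N(\kappa B_s,J_s)$ from Theorem \ref{th6}, where the relevant Jacobian is the total derivative of the averaged $\zeta$ in $\theta$ (the direct effect plus the effect through $\hat\alpha_i(\theta)$, encoded by $\zeta_\beta$), produces the $\theta$-influence and, by the two-step efficiency identity $J_s^{-1}J_sJ_s^{-1}=J_s^{-1}$, its variance factor $J_s^{-1}$. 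These influence functions, together with the oracle term and their mutual cross-covariances, are then combined through a single CLT for the independent-across-$i$, mixing-over-$t$ array to yield $N(\cdot,V_\zeta)$; verifying that the bookkeeping collapses to the three displayed summands of $V_\zeta$ is where most of the algebra lies.

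For the asymptotic bias I would isolate the deterministic $O(T^{-1})$ terms, each inflated to order $\sqrt{nT}\cdot T^{-1}=\sqrt{n/T}\to\kappa$ by the $\sqrt{nT}$ scaling under $n/T\to\kappa^2$. Four sources contribute: (i) the higher-order bias $T^{-1}B_{\alpha_i}$ of $\hat\alpha_{i0}$ from Lemma \ref{lemma2}, giving $\bar{E}E[\zeta_{\alpha_i}'B_{\alpha_i}]$; (ii) the correlation between the within-individual fluctuation of $\zeta_{\alpha_i}(z_{it})$ about its time mean and the leading estimation error $-H_{\alpha_i}T^{-1}\sum_s g(z_{is})$ of $\hat\alpha_{i0}$, which under the martingale difference structure of Condition \ref{cond4}(iii) retains only nonnegative lags and gives $-\bar{E}E[\sum_{j=0}^\infty \zeta_{\alpha_i}(z_{it})'H_{\alpha_i}g(z_{i,t-j})]$; (iii) the second-order curvature term $\tfrac12(\hat\alpha_{i0}-\alpha_{i0})'\zeta_{\alpha\alpha}(\hat\alpha_{i0}-\alpha_{i0})$, whose expectation equals $\tfrac12\,\mathrm{tr}[\zeta_{\alpha\alpha}\,T^{-1}\Sigma_{\alpha_i}]$ to leading order and gives $\sum_{j=1}^{d_\alpha}\zeta_{\alpha\alpha_{i,j}}'\Sigma_{\alpha_i}/2$; and (iv) the bias $\kappa B_s$ of $\hat\theta$ from Theorem \ref{th6}, which enters through the total $\theta$-derivative and gives $-\zeta_\beta'J_s^{-1}B_s$. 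Summing these four pieces reproduces $\kappa B_\zeta$.

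The hard part will be the uniform control of remainders and the correct handling of the indirect $\theta$-channel. For the remainders I would need that the second-order Taylor error, summed over all $nT$ observations and scaled by $\sqrt{nT}$, is $o_P(1)$; this calls for bounding the third and higher derivatives of $\zeta$ against the dominating function $M(z_{it})$ uniformly in $i$, and for using $\sup_{1\le i\le n}R_{2i}=o_P(\sqrt T)$ from Lemma \ref{lemma2} together with the moment bounds of Condition \ref{cond5}, so that aggregation over the growing number $n$ of individuals does not accumulate a non-negligible error. For the $\theta$-channel I would obtain $\partial\hat\alpha_i(\theta_0)/\partial\theta'$ by the implicit function theorem applied to the individual first-order conditions $\hat t_i(\theta,\hat\gamma_i(\theta))=0$, establish its uniform convergence to the population counterpart, and thereby confirm that the $\theta$-contribution to the bias is governed by the total derivative $\zeta_\beta$. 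The remaining steps are careful but routine matching of the collected terms to the stated $B_\zeta$ and $V_\zeta$.
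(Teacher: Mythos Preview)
Your proposal is correct and follows essentially the same route as the paper, which states only that the corollary ``follows from Lemma \ref{lemma2} and Theorem \ref{th6} by the delta method.'' Your write-up is a faithful and detailed unpacking of that sentence: the oracle/estimation split, the second-order Taylor expansion in $(\theta,\alpha_i)$, the insertion of the influence functions from Lemma \ref{lemma2} and Theorem \ref{th6}, and the identification of the four bias sources (higher-order bias of $\hat\alpha_{i0}$, the correlation term, the curvature term, and the bias of $\hat\theta$) all match the structure underlying the stated $B_\zeta$ and $V_\zeta$.
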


\begin{corollary}[Asymptotic distribution for smooth functions of individual effects] \label{cor2} Let
$\mu(\alpha_{i})$ be a twice differentiable function such that $\bar{E}[ \mu(\alpha_{i0})^2] < \infty$ and $\bar{E}| \mu_{\alpha}(\alpha_{i0})|^2 < \infty,$ where the subscripts on $\mu$ denote partial derivatives.  Then, under the
conditions of Theorem \ref{th6}
\begin{equation*}
\sqrt{n}(\hat{\mu} - \mu) \ind N(\kappa B_{\mu}, V_{\mu}),
\end{equation*}
where $\mu = \bar{E} \left[ \mu(\alpha_{i0}) \right],$
$$
B_{\mu} =  \bar{E} 
 \left[ \mu_{\alpha_i} (\alpha_{i0})' B_{\alpha_{i}}
+ \sum_{j=1}^{d_{\alpha}} \mu_{\alpha \alpha _{i,j}}
(\alpha_{i0})' \Sigma_{\alpha_{i}}/2
\right],
$$
for $B_{\alpha_{i}}  = B_{\alpha_{i}}^I  + B_{\alpha_{i}}^G  +  B_{\alpha_{i}}^{\Omega}  + B_{\alpha_{i}}^W,$ and $V_{\mu} =
\bar{E} \left[ (\mu(\alpha_{i0}) - \mu)^{2} \right].$ 
\end{corollary}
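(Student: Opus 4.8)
The plan is to combine a second-order Taylor expansion of the plug-in estimator $\hat\mu = n^{-1}\sum_i \mu(\hat\alpha_i(\hat\theta))$ with the expansion of Lemma \ref{lemma2} and the $\sqrt{nT}$-consistency of $\hat\theta$ from Theorem \ref{th6}, and then read off three surviving pieces — a $\sqrt n$-CLT term, a first-order bias from the bias of $\hat\alpha_i$, and a curvature bias from the nonlinearity of $\mu$ — showing everything else is $o_P(1)$. First I would write $\hat\mu - \mu = n^{-1}\sum_i[\mu(\hat\alpha_i(\hat\theta)) - \mu(\alpha_{i0})] + n^{-1}\sum_i[\mu(\alpha_{i0}) - \mu]$ and Taylor-expand each summand of the first sum to second order in $\hat\alpha_i(\hat\theta) - \alpha_{i0}$, using twice differentiability of $\mu$. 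To control $\hat\alpha_i(\hat\theta) - \alpha_{i0}$ I would split it as $[\hat\alpha_i(\theta_0) - \alpha_{i0}] + [\hat\alpha_i(\hat\theta) - \hat\alpha_i(\theta_0)]$: the first bracket is the $\alpha$-block of Lemma \ref{lemma2}, namely $\hat\alpha_i(\theta_0) - \alpha_{i0} = T^{-1/2}\tilde\psi_{\alpha i} + T^{-1}B_{\alpha_i} + T^{-3/2}R_{2i,\alpha}$ with $\tilde\psi_{\alpha i} = -H_{\alpha_i}T^{-1/2}\sum_t g(z_{it})$; the second bracket I would handle by a mean-value expansion in $\theta$, using that $\partial\hat\alpha_i(\theta_0)/\partial\theta' \inp -H_{\alpha_i}G_{\theta_i}$ (implicit-function theorem on the individual FOC) together with $\hat\theta - \theta_0 = O_P((nT)^{-1/2})$.

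Scaling by $\sqrt n$ and sorting by order gives three contributions. The term $n^{-1/2}\sum_i[\mu(\alpha_{i0}) - \mu]$ is a normalized sum of independent (across $i$, by Condition \ref{cond1}(ii)), mean-zero summands with variance $\bar{E}[(\mu(\alpha_{i0}) - \mu)^2] = V_\mu < \infty$, so a Lindeberg--Feller CLT yields $\ind N(0, V_\mu)$, the sole contributor to the limiting variance. The linear Taylor term splits into a stochastic part $-n^{-1/2}T^{-1}\sum_i \mu_{\alpha_i}(\alpha_{i0})'H_{\alpha_i}\sum_t g(z_{it})$, whose variance is $O(T^{-1}) \to 0$ (the ``estimation of $\alpha_i$ has no first-order variance effect'' phenomenon of Section \ref{s2}), and a bias part $\tfrac{\sqrt n}{T}n^{-1}\sum_i \mu_{\alpha_i}(\alpha_{i0})'B_{\alpha_i}$ which, by a law of large numbers and $\sqrt n/T \to \kappa$, converges in probability to $\kappa\,\bar{E}[\mu_{\alpha_i}(\alpha_{i0})'B_{\alpha_i}]$. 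The quadratic Taylor term has, to leading order, $(\hat\alpha_i - \alpha_{i0})(\hat\alpha_i - \alpha_{i0})' \approx T^{-1}\tilde\psi_{\alpha i}\tilde\psi_{\alpha i}'$ with conditional mean $T^{-1}\Sigma_{\alpha_i}$; replacing the outer product by its conditional mean (the fluctuation again $o_P(1)$) gives $\tfrac{\sqrt n}{T}n^{-1}\sum_i \tfrac12\sum_{j=1}^{d_\alpha}\mu_{\alpha\alpha_{i,j}}(\alpha_{i0})'\Sigma_{\alpha_i} \inp \kappa\,\bar{E}[\sum_j\mu_{\alpha\alpha_{i,j}}'\Sigma_{\alpha_i}/2]$.

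The effect of estimating $\theta$ contributes $-n^{-1/2}\sum_i \mu_{\alpha_i}(\alpha_{i0})'H_{\alpha_i}G_{\theta_i}(\hat\theta-\theta_0) = O_P(\sqrt n \cdot (nT)^{-1/2}) = O_P(T^{-1/2}) = o_P(1)$; this is precisely why, in contrast to Corollary \ref{cor1} whose $\sqrt{nT}$ scaling lets the analogous $J_s^{-1}B_s$ term survive, no such term appears in $B_\mu$. Collecting the three pieces and applying Slutsky yields $\sqrt n(\hat\mu - \mu) \ind N(\kappa B_\mu, V_\mu)$ with the stated $B_\mu$ and $V_\mu$, where finiteness of the limits is guaranteed by $\bar{E}[\mu(\alpha_{i0})^2] < \infty$ and $\bar{E}|\mu_\alpha(\alpha_{i0})|^2 < \infty$.

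I expect the main obstacle to be the uniform control of the remainders rather than the identification of the leading terms. Concretely, I must show that the second-order Taylor remainder, accumulated as $n^{-1/2}\sum_i$, and the propagated Lemma \ref{lemma2} remainder $\sup_{1\le i\le n} R_{2i} = o_P(\sqrt T)$ are jointly $o_P(1)$ under $n/T \to \kappa^2$. This needs the uniform consistency $\sup_i|\hat\alpha_i - \alpha_{i0}| \inp 0$ from Theorem \ref{th4}, so that the Hessian of $\mu$ is evaluated on a shrinking neighborhood and can be replaced by $\mu_{\alpha\alpha_i}(\alpha_{i0})$, together with the moment bounds of Conditions \ref{cond4} and \ref{cond7} to dominate the derivatives of $\mu$ and to bound the accumulation of $O(T^{-3/2})$-type terms; checking that these remainders do not blow up when $\sqrt n \asymp T$ is the delicate accounting.
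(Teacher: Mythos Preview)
Your proposal is correct and is precisely the delta-method argument the paper invokes: the paper's own proof consists of the single remark that Corollary~\ref{cor2} ``follow[s] from Lemma~\ref{lemma2} and Theorem~\ref{th6} by the delta method,'' and you have spelled out exactly that computation --- Taylor-expanding $\mu(\hat\alpha_i(\hat\theta))$ to second order, feeding in the $\alpha$-block of Lemma~\ref{lemma2}, showing the $\hat\theta$-contribution is $O_P(T^{-1/2})=o_P(1)$ at the $\sqrt n$ scale (whence no $J_s^{-1}B_s$ term), and reading off the CLT and the two bias pieces. Your identification of the uniform remainder control as the main technical work is accurate and matches what the paper's appendix establishes through the $o_{uP}$ machinery.
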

The convergence rate $r_{nT}$ in Corollary \ref{cor1} depends on the function $\zeta(z; \theta, \alpha_i).$ For example, $r_{nT} = \sqrt{nT}$ for functions that do not depend on $\alpha_i$ such as $\zeta(z; \theta, \alpha_i) = c'\theta$, where $c$ is a known $d_{\theta}$ vector. In general, $r_{nT} = \sqrt{n}$ for functions that depend on $\alpha_i$. In this case  $r^2 = 0$  and the first two terms of $V_{\zeta}$ drop out. Corollary \ref{cor2} is an important special case of Corollary \ref{cor1}. We present it separately because the asymptotic bias and variance have simplified expressions.


\section{Bias Corrections} \label{s5}
The FE-GMM estimators of common parameters, while consistent, have
bias in the asymptotic distributions under sequences where $n$ and $T$
grow at the same rate. These sequences provide a good approximation
to the finite sample behavior of the estimators in empirical
applications where the time dimension is moderately large. The
presence of bias invalidates any asymptotic inference because the
bias is of the same order as the  variance. In
this section we describe bias correction methods to adjust the asymptotic
distribution of the FE-GMM estimators of the common parameter and
smooth functions of the data, model parameters and individual
effects. All the corrections considered are analytical. Alternative
corrections based on variations of Jackknife can be implemented
using the approaches described in Hahn and Newey (2004) and Dhaene
and Jochmans (2010).\footnote{Hahn, Kuersteiner and Newey (2004)
show that analytical, Bootstrap, and Jackknife bias corrections
methods are asymptotically equivalent up to third order for MLE. We
conjecture that the same result applies to GMM estimators, but the
proof is beyond the scope of this paper.}

We consider three analytical methods that differ in whether the bias
is corrected from the estimator or from the first order conditions, and in whether the correction is one-step or iterated for
methods that correct the bias from the estimator. All these methods
reduce the order of the asymptotic bias without increasing the
asymptotic variance.  They are based on analytical estimators of the bias of the profile score $B_{s}$ and the 
profile Jacobian matrix $J_{s}$. Since these quantities include cross sectional and time series means $\bar{E}$ and $E$ evaluated 
at the true parameter values for the common parameter and individual effects,  they are estimated by 
the corresponding cross sectional and time series averages evaluated at the FE-GMM estimates.  Thus, for any function of the data, common parameter and individual effects  $f_{it}(\theta, \alpha_i),$ let $\hat f_{it}(\theta) = f_{it}(\theta, \hat \alpha_i(\theta)),$ $\hat f_i(\theta) = \hat{E} [\hat f_{it}(\theta)] = T^{-1} \sum_{t=1}^T \hat f_{it}(\theta)$ and
$\hat f(\theta) = \widehat{\bar{E}}[\hat f_i(\theta)] = n^{-1} \sum_{i=1}^n  \hat f_i(\theta)$. Next, define $\hat \Sigma_{\alpha_i}(\theta) = [\hat G_{\alpha_i}(\theta)' \hat \Omega_{i}^{-1} \hat G_{\alpha_i}(\theta)]^{-1},$ $\hat H_{\alpha_i}(\theta) = \hat \Sigma_{\alpha_i}(\theta) \hat G_{\alpha_i}(\theta)' \hat \Omega_{i}^{-1},$ and $\hat P_{\alpha_i}(\theta) = \hat \Omega_{i}^{-1} \hat G_{\alpha_i}(\theta) \hat H_{\alpha_i}(\theta).$ To simplify the presentation, we only give explicit formulas for FE-GMM three-step estimators in the main text. We give the expressions for one and two-step estimators in the Supplementary Appendix. Let
$$
\hat{\mathcal{B}}(\theta) = - \hat J_s(\theta)^{-1}\hat B_s(\theta), \ \ \hat B_s(\theta) = \widehat{\bar{E}}[\hat B_{si}^B(\theta) + \hat{B}_{si}^C(\theta)], \ \ \hat J_s(\theta) = \widehat{\bar{E}}[\hat G_{\theta_i}(\theta)' \hat P_{\alpha_i} (\theta) \hat G_{\theta_i}(\theta)],
$$
where  $\hat B_{si}^{B}(\theta) = - \hat G_{\theta_{i}}(\theta)' [
\hat B_{\lambda_{i}}^{I}(\theta)
 + \hat B_{\lambda_{i}}^{G}(\theta) + \hat B_{\lambda_{i}}^{\Omega}(\theta)+ \hat B_{\lambda_{i}}^{W}(\theta)]$,
\begin{footnotesize}\begin{eqnarray*}
\hat B_{\lambda_{i}}^{I}(\theta) &=& -  \hat P_{\alpha_{i}}(\theta) \sum_{j=1}^{d_{\alpha}} \hat G_{\alpha\alpha_{i,j}}(\theta)
\hat \Sigma_{\alpha_{i}}(\theta)/2
   + \hat P_{\alpha_{i}}(\theta) \sum_{j=0}^{\ell} T^{-1} \sum_{t=j+1}^T \hat G_{\alpha_{it}}(\theta) \hat H_{\alpha_{i}}(\theta) \hat g_{i,t-j}(\theta),\\
\hat B_{\lambda_{i}}^{G}(\theta) &=&  \hat H_{\alpha_{i}}(\theta)' \sum_{j=0}^{\infty}  T^{-1} \sum_{t=j+1}^T  \hat G_{\alpha_{it}}(\theta)' \hat P_{\alpha_{i}}(\theta) \hat g_{i,t-j}(\theta),   \\
\hat B_{\lambda_{i}}^{\Omega}(\theta) &=&   \hat P_{\alpha_{i}}(\theta) \sum_{j=0}^{\ell}  T^{-1} \sum_{t=j+1}^T \hat g_{it}(\theta) \hat g_{it}(\theta)'\hat P_{\alpha_{i}}(\theta) \hat g_{i,t-j}(\theta),
\end{eqnarray*}\end{footnotesize}and $\hat{B}_{si}^C(\theta) = T^{-1} \sum_{j=0}^{\ell} \sum_{t=j+1}^T \hat G_{\theta_{it}}(\theta)'
\hat P_{\alpha_{i}}(\theta) \hat g_{i,t-j}(\theta).$ In the previous expressions, the spectral time series averages 
that involve an infinite number of terms are trimmed.  The trimming parameter $\ell$ is a positive bandwidth that need to be chosen such
that $\ell \to \infty$ and $\ell/T \to 0$ as $T \to \infty$ (Hahn
and Kuersteiner, 2011)

The one-step correction of the estimator
subtracts an estimator of the expression of the asymptotic bias from the estimator of the common
parameter. Using the expressions defined above evaluated at $\hat \theta$,  the
bias-corrected estimator is 
\begin{equation}\label{bc}
\hat{\theta}^{BC} = \hat{\theta} -
\hat{\mathcal{B}}(\hat{\theta})/T.
\end{equation}
This bias correction is straightforward to implement because it only
requires one  optimization.  The iterated
correction is equivalent to solving the nonlinear equation
\begin{equation} \label{ibc}
\hat{\theta}^{IBC} = \hat{\theta} -
\hat{\mathcal{B}}(\hat{\theta}^{IBC})/T.
\end{equation}
When $\theta + \hat{\mathcal{B}}(\theta)$ is invertible in
$\theta$, it is possible to obtain a closed-form solution to the
previous equation.\footnote{See MacKinnon and Smith (1998) for a
comparison of one-step and iterated bias correction methods.}
Otherwise, an iterative procedure is needed. The score
bias-corrected estimator is the solution to the following estimating
equation
\begin{equation}\label{sbc}
\hat{s} (\hat{\theta}^{SBC}) -  \hat{B}_{s}
(\hat{\theta}^{SBC})/T = 0.
\end{equation}
This procedure, while computationally more intensive, has the
attractive feature that both estimator and bias are obtained
simultaneously. Hahn and Newey (2004) show that fully iterated
bias-corrected estimators solve approximated bias-corrected first
order conditions. IBC and SBC are equivalent if the first order
conditions are linear in $\theta$.

\noindent \textbf{Example: Correlated random coefficient model with
endogenous regressors.} The previous methods can be illustrated in
the correlated random coefficient model example in Section 4. Here,
the fixed effects GMM estimators have closed forms:
$$
\hat{\alpha}_i(\theta) = \left(\sum_{t=1}^T
x_{1it}x_{1it}'\right)^{-1} \sum_{t=1}^T x_{1it} (y_{it} -
x_{2it}'\theta),
$$
and
$$
\hat{\theta} = (\hat{J}_s^W)^{-1} \sum_{i=1}^n \left[\sum_{t=1}^T
\tilde{x}_{2it} \tilde{w}_{2it}' \left(\sum_{t=1}^T \tilde{w}_{2it}
\tilde{w}_{2it}'\right)^{-1} \sum_{t=1}^T \tilde{w}_{2it}
\tilde{y}_{it}\right],
$$
where $\hat{J}_s^W = \sum_{i=1}^n [\sum_{t=1}^T \tilde{x}_{2it}
\tilde{w}_{2it}' (\sum_{t=1}^T \tilde{w}_{2it}
\tilde{w}_{2it}')^{-1} \sum_{t=1}^T \tilde{w}_{2it}
\tilde{x}_{2it}'],$ and variables with tilde now indicate residuals
of sample linear projections of the corresponding variable on
$x_{1it},$ for example $\tilde{x}_{2it}= x_{2it} - \sum_{t=1}^T
x_{2it} x_{1it}' (\sum_{t=1}^T x_{1it} x_{1it}')^{-1} x_{1it}$.

We can estimate the bias of $\hat{\theta}$ from the analytic formula
in expression (\ref{bias: linearIV}) replacing population by sample
moments and $\theta_0$ by $\hat{\theta},$ and trimming the number of
terms in the spectral expectation,
$$
\widehat{\mathcal{B}}(\hat{\theta}) = - (d_g - d_{\alpha}) (\hat{J}_s^W)^{-1}
\sum_{i=1}^n \sum_{j = -\ell}^{\ell} \sum_{t= \max(1,
j+1)}^{\min(T,T+j)} \tilde{x}_{2it} (\tilde{y}_{i,t-j} -
\tilde{x}_{2i,t-j}'\hat{\theta}).
$$ 
The one-step bias corrected estimates of the
common parameter $\theta$ and the average of the individual
parameter $\alpha := E[\alpha_{i}]$ are
$$
\hat{\theta}^{BC} =
\hat{\theta} -  \hat{\mathcal{B}}(\hat{\theta})/T, \qquad
\hat{\alpha}^{BC}= n^{-1}
\sum_{i=1}^n\hat{\alpha}_{i}(\hat{\theta}^{BC}).
$$
The iterated bias correction estimator can be derived analytically
by solving
\begin{equation*}
\hat{\theta}^{IBC} = \hat{\theta} -
\hat{\mathcal{B}}(\hat{\theta}^{IBC})/T,
\end{equation*}
which has closed-form solution
\begin{footnotesize}\begin{multline*}
\hat{\theta}^{IBC} = \left[ I_{d_{\theta}} + (d_g - d_{\alpha})(\hat{J}_s^W)^{-1}
\sum_{i=1}^n \sum_{j = -\ell}^{\ell} \sum_{t= \max(1,
j+1)}^{\min(T,T+j)} \tilde{x}_{2it} \tilde{x}_{2i,t-j}'
/(nT^2)\right]^{-1} \times \\ \left[ \hat{\theta} + (d_g -
d_{\alpha})(\hat{J}_s^W)^{-1} \sum_{i=1}^n \sum_{j = -\ell}^{\ell} \sum_{t=
\max(1, j+1)}^{\min(T,T+j)} \tilde{x}_{2it} \tilde{y}_{i,t-j}
/(nT^2) \right].
\end{multline*}\end{footnotesize}The score bias correction is the same as the iterated correction
because the first order conditions are linear in $\theta$.

The bias correction methods described above yield normal asymptotic
distributions centered at the true parameter value for panels where
$n$ and $T$ grow at the same rate with the sample size. This result is formally stated in Theorem \ref{th7},
which establishes that all the methods are asymptotically
equivalent, up to first order.


\begin{theorem} [Limit distribution of bias-corrected
FE-GMM] \label{th7} Assume that $\sqrt{nT}(\hat B_{s}(\overline{\theta}) -
B_{s})/T \inp 0$ and $\sqrt{nT}(\hat J_{s}(\overline{\theta}) - J_{s})/T
\inp 0$, for some $\overline \theta = \theta_0 + O_P((nT)^{-1/2})$. Under
Conditions \ref{cond1}, \ref{cond2}, \ref{cond3}, \ref{cond5},
\ref{cond4} and \ref{cond7}, for $C \in \left\{ BC, SBC, IBC
\right\}$
\begin{equation}
\sqrt{nT}(\hat{\theta}^{C} - \theta_{0}) \ind N \left( 0, J_{s}^{-1}
\right),
\end{equation}
where $\hat{\theta}^{BC}$, $\hat{\theta}^{IBC}$ and
$\hat{\theta}^{SBC}$ are defined in (\ref{bc}), (\ref{ibc}) and
(\ref{sbc}), and $J_{s} = \bar{E} [G_{\theta_{i}}'P_{\alpha_{i}}
G_{\theta_{i}}]$.
\end{theorem}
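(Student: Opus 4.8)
The plan is to reduce all three corrections to the first-order expansion already established in Theorem \ref{th6} combined with Slutsky's theorem, using the two displayed hypotheses to neutralize the estimated bias. From Theorem \ref{th6}, $\sqrt{nT}(\hat\theta - \theta_0) \ind N(-\kappa J_s^{-1} B_s, J_s^{-1})$; equivalently, from the profile-score expansion underlying that theorem, $\sqrt{nT}\,\hat s(\theta_0) \ind N(\kappa B_s, J_s)$ while the profile Jacobian satisfies $\partial \hat s(\theta)/\partial \theta' \inp J_s$ uniformly over a shrinking neighborhood of $\theta_0$. Because $\sqrt{nT}/T = \sqrt{n/T} \to \kappa \in (0,\infty)$ under Condition \ref{cond1}(iii), the two hypotheses read $\sqrt{n/T}(\hat B_s(\bar\theta) - B_s) \inp 0$ and $\sqrt{n/T}(\hat J_s(\bar\theta) - J_s) \inp 0$. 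In particular $\hat B_s(\bar\theta) \inp B_s$ and $\hat J_s(\bar\theta) \inp J_s$, so by continuity and positive definiteness of $J_s$ (Condition \ref{cond3}) we get $\hat J_s(\bar\theta)^{-1} \inp J_s^{-1}$ and hence $\sqrt{n/T}\,\hat{\mathcal{B}}(\bar\theta) = -\hat J_s(\bar\theta)^{-1}\,\sqrt{n/T}\,\hat B_s(\bar\theta) \inp -\kappa J_s^{-1} B_s$. The first task is therefore to check that each corrected estimator equals $\theta_0 + O_P((nT)^{-1/2})$, so that it qualifies as the generic $\bar\theta$ and these limits apply.

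For the one-step estimator this is immediate: since $\hat\theta \inp \theta_0$ (Theorem \ref{th4}) and $\hat{\mathcal{B}}(\hat\theta)/T = O_P(1/T) = O_P((nT)^{-1/2})$, we have $\hat\theta^{BC} = \theta_0 + O_P((nT)^{-1/2})$. Writing
\begin{equation*}
\sqrt{nT}(\hat\theta^{BC} - \theta_0) = \sqrt{nT}(\hat\theta - \theta_0) - \sqrt{n/T}\,\hat{\mathcal{B}}(\hat\theta),
\end{equation*}
the first term converges to $N(-\kappa J_s^{-1}B_s, J_s^{-1})$ and the second converges in probability to the constant $-\kappa J_s^{-1}B_s$, so by Slutsky the difference is $N(0, J_s^{-1})$. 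The iterated estimator $\hat\theta^{IBC}$ solves the fixed-point equation (\ref{ibc}); I would first show this map admits a solution in an $O_P((nT)^{-1/2})$-neighborhood of $\hat\theta$ by a contraction argument, using that $\hat{\mathcal{B}}(\cdot)/T$ and its $\theta$-derivative are $O_P(1/T)$ there, so that $\theta \mapsto \hat\theta - \hat{\mathcal{B}}(\theta)/T$ is a contraction for $T$ large. Once $\hat\theta^{IBC} = \theta_0 + O_P((nT)^{-1/2})$ is known, the decomposition used for BC applies with $\hat{\mathcal{B}}$ evaluated at $\hat\theta^{IBC}$; since the probability limit of $\sqrt{n/T}\,\hat{\mathcal{B}}(\bar\theta)$ is the same constant for any root-$(nT)$ consistent $\bar\theta$, the Slutsky argument again yields $N(0, J_s^{-1})$.

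For the score correction $\hat\theta^{SBC}$ solving (\ref{sbc}), a standard consistency-then-rate argument (the correction $\hat B_s/T \inp 0$ renders (\ref{sbc}) a vanishing perturbation of $\hat s(\theta)=0$) gives $\hat\theta^{SBC} = \theta_0 + O_P((nT)^{-1/2})$. I would then expand $\hat s$ about $\theta_0$ by a mean value argument,
\begin{equation*}
0 = \hat s(\hat\theta^{SBC}) - \hat B_s(\hat\theta^{SBC})/T = \hat s(\theta_0) + \bar J(\hat\theta^{SBC} - \theta_0) - \hat B_s(\hat\theta^{SBC})/T,
\end{equation*}
where $\bar J = \partial \hat s(\check\theta)/\partial\theta'$ at an intermediate $\check\theta$ and $\bar J \inp J_s$. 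Multiplying by $\sqrt{nT}$ gives
\begin{equation*}
0 = \sqrt{nT}\,\hat s(\theta_0) + \bar J\,\sqrt{nT}(\hat\theta^{SBC} - \theta_0) - \sqrt{n/T}\,\hat B_s(\hat\theta^{SBC}).
\end{equation*}
Here $\sqrt{nT}\,\hat s(\theta_0) \ind N(\kappa B_s, J_s)$ and $\sqrt{n/T}\,\hat B_s(\hat\theta^{SBC}) \inp \kappa B_s$, so their difference converges to the centered $N(0, J_s)$, cancelling the deterministic bias. Solving for the estimator and using $\bar J^{-1} \inp J_s^{-1}$ yields $\sqrt{nT}(\hat\theta^{SBC} - \theta_0) \ind -J_s^{-1} N(0, J_s) = N(0, J_s^{-1})$. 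This simultaneously delivers the stated distribution and the first-order equivalence of the three methods, each being centered at $\theta_0$ with the same variance $J_s^{-1}$.

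The main obstacle lies in making the heuristic expansions rigorous rather than in the limit arithmetic, and it has two parts. For SBC one must justify the mean value expansion strongly enough to guarantee $\bar J \inp J_s$, which requires uniform convergence of the profile-score Jacobian over a neighborhood of $\theta_0$ and control of the higher-order remainder of the profile score; both rest on the uniform higher-order expansions of Lemma \ref{lemma2} and on Condition \ref{cond7}. For IBC the delicate step is establishing existence and local uniqueness of the fixed point together with its root-$(nT)$ rate, which needs a contraction estimate showing $\sup_\theta |\partial \hat{\mathcal{B}}(\theta)/\partial\theta'|/T \inp 0$ on the relevant neighborhood, again drawing on the smoothness supplied by Conditions \ref{cond5} and \ref{cond7}. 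Given these uniform-expansion inputs, the two scaling hypotheses reduce the remainder of the argument to the Slutsky manipulations above.
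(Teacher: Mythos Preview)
Your proposal is correct and follows essentially the same route as the paper: for BC you subtract the estimated bias and cancel the asymptotic mean, for SBC you expand the corrected score about $\theta_0$ and show the bias term in the score expansion is offset by $\sqrt{n/T}\,\hat B_s$, and for IBC you reduce to the same decomposition after establishing root-$(nT)$ consistency. The only presentational difference is that the paper works directly with the stochastic expansion of the profile score (its Lemma giving $\sqrt{nT}\,\hat s(\theta_0)=n^{-1/2}\sum_i\tilde\psi_{si}+\sqrt{n/T}\,B_s+o_P(1)$) and with the rate $\hat{\mathcal{B}}(\hat\theta)=\mathcal{B}+o_P(T^{-2/5})$, whereas you package the same steps at the level of limiting distributions via Theorem \ref{th6} and Slutsky; the arithmetic is identical. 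The paper is also briefer on IBC, simply noting $\hat\theta^{IBC}$ lies in an $O(T^{-1})$ neighborhood of $\theta_0$ rather than spelling out a contraction argument, so your treatment there is, if anything, more careful.
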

The convergence condition for the estimators of $B_{s}$ and
$J_{s}$ holds for sample analogs evaluated at the initial FE-GMM
one-step or two-step estimators if the trimming sequence is chosen such that $\ell \to
\infty$ and $\ell/T \to 0$ as $T \to \infty$. Theorem \ref{th7} also
shows that all the bias-corrected estimators considered are
first-order asymptotically efficient, since their variances achieve
the semiparametric efficiency bound for the common parameters in
this model, see Chamberlain (1992).

The following corollaries give bias corrected estimators for averages of the data and individual effects and for moments of the individual effects, together with the limit distributions of these estimators and consistent estimators of their asymptotic variances.   To construct the corrections, we use bias corrected estimators of the common parameter. The corollaries then follow from Lemma \ref{lemma2} and Theorem \ref{th7} by the delta method. We use the same notation as in the estimation of the bias of the common parameters above to denote the  estimators of the components of the bias and variance.
 
\begin{corollary}[Bias correction for fixed effects averages] \label{c1th5} Let
$\zeta(z;\theta,\alpha_{i})$ be a twice continuously differentiable
function in its second and third argument,  such that $\inf_i Var[\zeta(z_{it})] > 0,$ $\bar{E}E[\zeta(z_{it})^2] < \infty,$ $\bar{E}E[\zeta_{\alpha}(z_{it})^2] < \infty,$ and $\bar{E}E|\zeta_{\theta}(z_{it})|^2 < \infty.$  For $C \in \left\{ BC, SBC, IBC \right\}$, let $\hat{\zeta}^{C} = \hat{\zeta}(\hat{\theta}^{C}) - \hat{B}_{\zeta}(\hat{\theta}^{C})/T$ where
\begin{eqnarray*}
\hat{B}_{\zeta}(\theta) &=& \widehat{\bar E} \left[
\sum_{j = 0}^{\ell} \frac{1}{T} \sum_{t=  j+1}^{T}
\hat \zeta_{\alpha_{it}} (\theta)'
 \hat{\tilde{\psi}}_{\alpha_{i,t-j}}(\theta) +
 \hat \zeta_{\alpha_{i}} (\theta)'
 \hat{B}_{\alpha_{i}}(\theta)  + \sum_{j=1}^{d_{\alpha}} \hat \zeta_{\alpha \alpha
_{i,j}} (\theta)'
\hat{\Sigma}_{\alpha_{i}}(\theta)/2 \right],
\end{eqnarray*}
where  $\ell$ is a positive bandwidth such that
$\ell \to \infty$ and $\ell/T \to 0$ as $T \to \infty$. Then, under
the conditions of Theorem \ref{th7}
\begin{equation*}
r_{nT}(\hat{\zeta}^{C} - \zeta) \ind N(0, V_{\zeta}),
\end{equation*}
where $r_{nT}$, $ \zeta ,$ and $ V_{\zeta}$ are defined in Corollary \ref{cor1}.
Also,  for any $\bar \theta = \theta_0 + O_P((nT)^{-1/2})$ and $\bar{\zeta} = \zeta + O_P(r_{nT}^{-1})$, 
\begin{eqnarray*}
\hat{V}_{\zeta} &=&  \frac{r_{nT}^2}{nT} \widehat{\bar E}  \Bigg \{ \widehat{E}[ \hat \zeta_{\alpha_{it}}(\bar{\theta})' \hat{\Sigma}_{\alpha_{i}}(\bar{\theta})
\hat \zeta_{\alpha_{it}}(\bar{\theta}) + \hat \zeta_{\theta_{it}}(\bar{\theta})'
\hat{J}_{s}(\bar{\theta})^{-1} \hat \zeta_{\theta_{it}}(\bar{\theta}) ] 
 + T  \left(\widehat{E} [\hat \zeta_{it}(\bar{\theta}) -
\bar{\zeta}]\right)^2  \Bigg \}
\end{eqnarray*}
is a consistent estimator for $V_{\zeta}$.
\end{corollary}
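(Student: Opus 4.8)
The plan is to derive the result as a delta-method corollary of Lemma \ref{lemma2} and Theorem \ref{th7}, mirroring the argument behind Corollary \ref{cor1} but with the $O(1/T)$ bias now eliminated through two distinct channels. The starting point is the identity
\begin{equation*}
\sqrt{nT}\,(\hat\zeta^{C}-\zeta) = \sqrt{nT}\,(\hat\zeta(\hat\theta^{C})-\zeta) - \sqrt{n/T}\;\hat B_\zeta(\hat\theta^{C}),
\end{equation*}
where I have used $\sqrt{nT}/T=\sqrt{n/T}\to\kappa$ by Condition \ref{cond1}(iii). I would then treat the two pieces separately and combine them by Slutsky's theorem.

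For the first piece I would re-run the expansion underlying Corollary \ref{cor1}, but evaluating at the bias-corrected $\hat\theta^{C}$ rather than the uncorrected $\hat\theta$. Splitting $\hat\zeta(\hat\theta^{C})-\zeta=[\hat\zeta(\theta_0)-\zeta]+[\hat\zeta(\hat\theta^{C})-\hat\zeta(\theta_0)]$, I would (i) apply a second-order Taylor expansion of $\zeta(z_{it};\theta_0,\hat\alpha_{i0})$ in its $\alpha$-argument and substitute the Lemma \ref{lemma2} representation $\sqrt T(\hat\alpha_{i0}-\alpha_{i0})=\tilde\psi_{\alpha_i}+T^{-1/2}B_{\alpha_i}+o_P(1)$, and (ii) expand the common-parameter channel by the mean value theorem, with $\hat\theta^{C}-\theta_0=O_P((nT)^{-1/2})$ and, crucially, zero asymptotic bias by Theorem \ref{th7}. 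Collecting terms yields $\sqrt{nT}(\hat\zeta(\hat\theta^{C})-\zeta)\ind N(\kappa\tilde B_\zeta,V_\zeta)$, where $V_\zeta$ is exactly as in Corollary \ref{cor1} (because $\hat\theta^{C}$ and $\hat\theta$ share the same leading influence function) and $\tilde B_\zeta$ equals the bias $B_\zeta$ of Corollary \ref{cor1} with its common-parameter component $-\bar E E[\zeta_\theta' J_s^{-1}B_s]$ deleted; that component is absent precisely because $\hat\theta^{C}$ is centered.

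For the second piece I would show $\hat B_\zeta(\hat\theta^{C})\inp\tilde B_\zeta$, whence $\sqrt{n/T}\,\hat B_\zeta(\hat\theta^{C})\inp\kappa\tilde B_\zeta$ and the bias of the first piece is cancelled, leaving $N(0,V_\zeta)$. Consistency of $\hat B_\zeta$ is established summand by summand: the trimmed spectral average $\sum_{j=0}^{\ell}T^{-1}\sum_t\hat\zeta_{\alpha_{it}}'\hat{\tilde\psi}_{\alpha_{i,t-j}}$ converges to $-\sum_{j\ge0}E[\zeta_{\alpha_i}'H_{\alpha_i}g(z_{i,t-j})]$ under $\ell\to\infty,\ \ell/T\to0$, while $\hat\zeta_{\alpha_i}'\hat B_{\alpha_i}$ and $\sum_{j}\hat\zeta_{\alpha\alpha_{i,j}}'\hat\Sigma_{\alpha_i}/2$ converge to their population analogs by the uniform consistency of $\hat\alpha_i,\hat\Sigma_{\alpha_i},\hat H_{\alpha_i}$ (Theorem \ref{th4}) and a uniform law of large numbers across $i$. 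Because $\hat B_\zeta(\cdot)/T=O_P(1/T)$ and is smooth in $\theta$ near $\theta_0$, replacing $\hat\theta^{C}$ by $\theta_0$ in its argument costs only $o_P((nT)^{-1/2})$, so the coupling of the two pieces through $\hat\theta^{C}$ is asymptotically negligible.

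Finally, I would establish $\hat V_\zeta\inp V_\zeta$ term by term, matching $\hat\zeta_{\alpha_{it}}'\hat\Sigma_{\alpha_i}\hat\zeta_{\alpha_{it}}$, $\hat\zeta_{\theta_{it}}'\hat J_s^{-1}\hat\zeta_{\theta_{it}}$ and the trimmed spectral average to the three components of $V_\zeta$, again invoking uniform consistency of the plug-in estimators and the trimming conditions on $\ell$; evaluation at $\bar\theta=\theta_0+O_P((nT)^{-1/2})$ and $\bar\zeta=\zeta+O_P((nT)^{-1/2})$ is harmless since $\theta\mapsto\hat V_\zeta$ is continuous with stochastically bounded derivative. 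I expect the main obstacle to be the uniform-in-$i$ control demanded throughout: with $n\to\infty$ incidental parameters each estimated from only $T$ observations, the higher-order remainders in the Lemma \ref{lemma2} expansion and the trimmed spectral sums must be bounded uniformly over $i$, which is exactly where the strong moment and smoothness hypotheses of Conditions \ref{cond5} and \ref{cond7} and the $o(T^{-1})$ uniform consistency rates of Theorems \ref{th1} and \ref{th4} enter. The other delicate point is the bookkeeping that cleanly separates the order-$1/T$ bias into the $\alpha$-channel (removed by $\hat B_\zeta$) and the $\theta$-channel (removed by using $\hat\theta^{C}$).
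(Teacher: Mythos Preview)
Your proposal is correct and follows essentially the same approach as the paper: the paper states only that the corollary follows from Lemma \ref{lemma2} and Theorem \ref{th7} by the delta method, and your argument is precisely a detailed delta-method expansion built on those two results, with the bias split into the $\alpha$-channel (removed by $\hat B_\zeta$) and the $\theta$-channel (removed by using $\hat\theta^{C}$). Your write-up in fact supplies considerably more detail than the paper itself provides.
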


\begin{corollary}[Bias correction for smooth functions of individual effects] \label{c2th5} Let
$\mu(\alpha_{i})$ be a twice differentiable function such that $\bar{E}[ \mu(\alpha_{i0})^2] < \infty$ and $\bar{E}| \mu_{\alpha}(\alpha_{i0})|^2 < \infty$. For $C \in
\left\{ BC, SBC, IBC \right\}$, let $\hat{\mu}^{C} = \widehat{\bar E}[
\hat \mu_i(\hat{\theta}^{C}) ]-  \hat{B}_{\mu}(\hat{\theta}^{C})/T,$ where $\hat \mu_i(\theta) = \mu(\hat{\alpha_{i}}(\theta)),$
and $\hat{B}_{\mu}(\theta) = \widehat{\bar E}[ \hat \mu_{\alpha_i} (\theta)' \hat{B}_{\alpha_{i}}(\theta)
+ \sum_{j=1}^{d_{\alpha}} \hat \mu_{\alpha \alpha _{i,j}}(\theta)' \hat{\Sigma}_{\alpha_{i}}(\theta)/2].$ Then, under the
conditions of Theorem \ref{th7}
\begin{equation*}
\sqrt{n}(\hat{\mu}^{C} - \mu) \ind N(0, V_{\mu}),
\end{equation*}
where $\mu = \bar{E} \left[ \mu(\alpha_{i0}) \right]$ and $V_{\mu} =
\bar{E} \left[ (\mu(\alpha_{i0}) - \mu)^{2} \right].$ Also, for any $\bar \theta = \theta_0 + O_P((nT)^{-1/2})$ and $\bar{\mu} = \mu + O_P(n^{-1/2})$,
\begin{equation}\label{eq: std_errors}
\hat{V}_{\mu} = \widehat{\bar{E}} \left[\{\hat \mu_i(\bar{\theta}) - \bar{\mu}\}^{2} +
\hat \mu_{\alpha_i}(\bar{\theta})'
\hat{\Sigma}_{\alpha_{i}}(\bar{\theta})
\hat \mu_{\alpha_i}(\bar{\theta})/T \right],
\end{equation}
is a  consistent estimator for $V_{\mu}$. The second term in (\ref{eq: std_errors}) is included to improve the finite sample
properties of the estimator in short panels.
\end{corollary}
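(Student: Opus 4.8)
The plan is to derive the result from Lemma \ref{lemma2}, Corollary \ref{cor2}, and Theorem \ref{th7} by a second-order delta method, and then to verify consistency of $\hat V_\mu$ by a cross-sectional law of large numbers. Write $\hat\alpha_{i0} := \hat\alpha_i(\theta_0)$ and use the individual-effect block of the Lemma \ref{lemma2} expansion,
\begin{equation*}
\hat\alpha_{i0} - \alpha_{i0} = T^{-1/2}\tilde\psi_{\alpha_i} + T^{-1} B_{\alpha_i} + T^{-3/2} R_{2\alpha_i}, \qquad \tilde\psi_{\alpha_i} = - H_{\alpha_i} T^{-1/2}\sum_{t=1}^T g(z_{it}),
\end{equation*}
where $\tilde\psi_{\alpha_i}$ is mean zero with $\mathrm{Var}(\tilde\psi_{\alpha_i}) = \Sigma_{\alpha_i}$, $B_{\alpha_i} = B_{\alpha_i}^I + B_{\alpha_i}^G + B_{\alpha_i}^\Omega + B_{\alpha_i}^W$, and $\sup_{1\le i\le n} R_{2\alpha_i} = o_P(\sqrt T)$. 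First I would show that estimating $\theta$ is asymptotically irrelevant at the rate $\sqrt n$: since $\hat\theta^C - \theta_0 = O_P((nT)^{-1/2})$ by Theorem \ref{th7} and $\partial\hat\alpha_i(\theta)/\partial\theta = O_P(1)$ uniformly in $i$ under Conditions \ref{cond2} and \ref{cond5}, a mean-value expansion gives $\sup_i |\hat\alpha_i(\hat\theta^C) - \hat\alpha_{i0}| = O_P((nT)^{-1/2})$, so $\sqrt n\,\widehat{\bar E}[\mu(\hat\alpha_i(\hat\theta^C)) - \mu(\hat\alpha_{i0})] = O_P(T^{-1/2}) = o_P(1)$. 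This is precisely why, in contrast to Corollary \ref{cor1}, no common-parameter term $J_s^{-1}B_s$ enters $B_\mu$, and it lets me fix $\theta_0$ from here on.

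Next I would Taylor-expand $\mu$ about $\alpha_{i0}$, insert the expansion above, average over $i$, and scale by $\sqrt n$ to obtain
\begin{equation*}
\sqrt n\big(\widehat{\bar E}[\mu(\hat\alpha_{i0})] - \mu\big) = \frac{1}{\sqrt n}\sum_{i=1}^n (\mu(\alpha_{i0}) - \mu) + \frac{\sqrt n}{T}\,\widehat{\bar E}\Big[\mu_{\alpha_i}' B_{\alpha_i} + \tfrac12\operatorname{tr}(\mu_{\alpha\alpha_i}\Sigma_{\alpha_i})\Big] + o_P(1).
\end{equation*}
The linear-in-$\tilde\psi_{\alpha_i}$ contribution is $\frac{1}{\sqrt{nT}}\sum_i\mu_{\alpha_i}'\tilde\psi_{\alpha_i} = O_P(T^{-1/2}) = o_P(1)$, being a mean-zero average over both $i$ and $t$; the quadratic term $\frac{1}{2T}\tilde\psi_{\alpha_i}'\mu_{\alpha\alpha_i}\tilde\psi_{\alpha_i}$ splits into its conditional mean $\tfrac12\operatorname{tr}(\mu_{\alpha\alpha_i}\Sigma_{\alpha_i})/T$ (the trace equals the paper's $\sum_{j=1}^{d_\alpha}\mu_{\alpha\alpha_{i,j}}'\Sigma_{\alpha_i}$) plus a fluctuation of order $O_P(1/T)$; and the $R_{2\alpha_i}$ remainder is negligible by its uniform bound and the moment conditions. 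The first term converges to $N(0,V_\mu)$ by the Lindeberg--Feller CLT, using cross-sectional independence (Condition \ref{cond1}(ii)) and $\bar E[\mu(\alpha_{i0})^2]<\infty$, while the second is $(\sqrt n/T)B_\mu + o_P(1)$, reproducing Corollary \ref{cor2}. The correction then cancels the bias: since $\hat B_\mu(\hat\theta^C)\inp B_\mu$ and $\sqrt n/T$ stays bounded under Condition \ref{cond1}(iii), $(\sqrt n/T)\hat B_\mu(\hat\theta^C) = (\sqrt n/T)B_\mu + o_P(1)$, whence $\sqrt n(\hat\mu^C - \mu) = \frac{1}{\sqrt n}\sum_i(\mu(\alpha_{i0})-\mu) + o_P(1)\ind N(0,V_\mu)$.

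For consistency of $\hat V_\mu$ I would note that, by Theorem \ref{th4} and $\bar\theta\inp\theta_0$, the plug-ins $\hat\mu_i(\bar\theta)$, $\hat\mu_{\alpha_i}(\bar\theta)$ and $\hat\Sigma_{\alpha_i}(\bar\theta)$ are uniformly consistent in $i$. A cross-sectional law of large numbers (Condition \ref{cond1}(ii) and the moment bounds) then gives $\widehat{\bar E}[\{\hat\mu_i(\bar\theta) - \bar\mu\}^2]\inp \bar E[(\mu(\alpha_{i0})-\mu)^2] = V_\mu$, the $O(1/T)$ contribution of the estimation noise in $\hat\alpha_i$ to this sample variance vanishing as $T\to\infty$; the second term $\widehat{\bar E}[\hat\mu_{\alpha_i}(\bar\theta)'\hat\Sigma_{\alpha_i}(\bar\theta)\hat\mu_{\alpha_i}(\bar\theta)]/T$ is explicitly $O_P(1/T)$ and hence asymptotically negligible, so $\hat V_\mu\inp V_\mu$.

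The main obstacle is not the delta-method bookkeeping but the two uniform-in-$i$ statements it rests on, both inherited from the earlier results: the uniform control $\sup_i R_{2\alpha_i} = o_P(\sqrt T)$ of the expansion remainder, which relies on the high-order smoothness and moment bounds of Conditions \ref{cond5} and \ref{cond7}; and the convergence $\hat B_\mu(\hat\theta^C)\inp B_\mu$. The latter is the genuinely delicate step, since $B_{\alpha_i}$ and $\Sigma_{\alpha_i}$ contain long-run (spectral) sums that must be estimated by trimmed time-series averages. As for $B_s$ and $J_s$ after Theorem \ref{th7}, this is handled exactly as in Hahn and Kuersteiner (2011): the strong-mixing decay of Condition \ref{cond1}(i) bounds the truncation bias of the spectral sums, and the bandwidth choice $\ell\to\infty$ with $\ell/T\to0$ balances it against sampling variability, delivering uniform consistency of the trimmed plug-in bias components.
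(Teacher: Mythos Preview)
Your proposal is correct and follows essentially the same approach as the paper, which simply states that the corollary follows from Lemma \ref{lemma2} and Theorem \ref{th7} by the delta method. You have carefully filled in the second-order delta-method bookkeeping that the paper leaves implicit, including the key observation that the $\sqrt n$ scaling (rather than $\sqrt{nT}$) renders the estimation of $\theta$ asymptotically negligible, which is exactly why $B_\mu$ in Corollary \ref{cor2} lacks the $J_s^{-1}B_s$ term present in Corollary \ref{cor1}.
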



\section{Empirical example} \label{s6}
We illustrate the new estimators with an empirical example based on
the classical cigarette demand study of Becker, Grossman and Murphy
(1994) (BGM hereafter). Cigarettes are addictive goods. To account
for this addictive nature, early cigarette demand studies included
lagged consumption as explanatory variables (e.g., Baltagi and
Levin, 1986).  This approach, however, ignores that rational or
forward-looking consumers take into account the effect of  today's
consumption decision on future consumption decisions. Becker and
Murphy (1988) developed a model of rational addiction where expected
changes in future prices affect the current consumption. BGM empirically tested
this model using a linear structural demand function based
on quadratic utility assumptions. The demand function includes both
future and past consumptions as determinants of current demand, and
the future price affects the current demand only through the future
consumption. They found that the effect of future consumption on
current consumption is significant, what they took as evidence in
favor of the rational model.

Most of the empirical studies in this literature use yearly
state-level panel data sets. They include fixed effects to control
for additive heterogeneity at the state-level and use leads and lags
of cigarette prices and taxes as instruments for leads and lags of
consumption. These studies, however, do not consider possible
non-additive heterogeneity in price elasticities or sensitivities
across states. There are multiple reasons why there may be
heterogeneity in the price effects across states correlated with the
price level. First, the considerable differences in income,
industrial, ethnic and religious composition at inter-state level
can translate into different tastes and policies toward cigarettes.
Second, from the perspective of the theoretical model developed by
Becker and Murphy (1988), the price effect is a function of the
marginal utility of wealth that varies across states and depends on
cigarette prices. If the price effect is heterogenous and correlated
with the price level, a fixed coefficient specification may produce
substantial bias in estimating the average elasticity of cigarette
consumption because the between variation of price is much larger
than the within variation. Wangen (2004) gives additional
theoretical reasons against a fixed coefficient specification for
the demand function in this application.

We consider the following linear specification for the demand
function
\begin{equation}\label{eq: rc}
C_{it}=\alpha_{0i} +
\alpha_{1i}P_{it}+\theta_{1}C_{i,t-1}+\theta_{2}C_{i,t+1}+X_{it}'\delta+\epsilon_{it},
\end{equation}
where $C_{it}$ is cigarette consumption in state $i$ at time $t$
measured by per capita sales in packs; $\alpha_{0i}$ is an additive
state effect; $\alpha_{1i}$ is a state specific price coefficient;
$P_{it}$ is the price in 1982-1984 dollars; and $X_{it}$ is a vector
of covariates which includes income, various measures of incentive
for smuggling across states, and year dummies. We estimate the model
parameters using OLS and IV methods with both fixed coefficient for
price  and random coefficient  for price. The data set, consisting
of an unbalanced panel of 51 U.S. states over the years 1957 to
1994, is the same as in Fenn, Antonovitz and Schroeter (2001).  The
set of instruments for $C_{i,t-1}$ and $C_{i,t+1}$ in the IV
estimators is the same as in specification 3 of BGM and includes
$X_{it}$, $P_{it}$, $P_{i,t-1}$, $P_{i,t+1}$, $Tax_{it}$,
$Tax_{i,t-1}$, and $Tax_{i,t+1}$, where $Tax_{it}$ is the state
excise tax for cigarettes in 1982-1984 dollars.

Table 1 reports estimates of coefficients and demand elasticities.
We focus on the coefficients of the key variables, namely $P_{it}$,
$C_{i,t-1}$ and $C_{i,t+1}$. Throughout the table, FC refers to the
fixed coefficient specification with $\alpha_{1i} = \alpha_1$ and RC
refers to the random coefficient specification in equation (\ref{eq:
rc}). BC and IBC refer to estimates after bias correction and
iterated bias correction, respectively. Demand elasticities are
calculated using the expressions in Appendix A of BGM. They are
functions of $C_{it}$,$P_{it}$, $\alpha_{1i}$, $\theta_{1}$ and
$\theta_{2}$, linear in $\alpha_{1i}$. For random coefficient
estimators, we report the mean of individual elasticities, i.e.
$$\hat{\zeta}_h = \frac{1}{nT}\sum_{i=1}^{n} \sum_{t=1}^T \zeta_{h}(z_{it}; \hat \theta, \hat \alpha_i),$$
where $\zeta_{h}(z_{it}; \theta, \alpha_i) = \partial \log C_{it(h)}
/
\partial \log P_{it(h)}$ are price elasticities at different time horizons
$h$. Standard errors for the elasticities are obtained by the delta
method as described in Corollaries \ref{c1th5} and \ref{c2th5}. For bias-corrected RC estimators
the standard errors use bias-corrected estimates of $\theta$ and $\alpha_i$.

As BGM, we find that OLS  estimates substantially differ from their
IV counterparts. IV-FC underestimates the elasticities relative to
IV-RC. For example, the long-run elasticity estimate is  $-0.70$
with IV-FC, whereas it is $-0.88$ with IV-RC. This difference is
also pronounced for short-run elasticities, where the IV-RC
estimates are more than 25 percent larger than the IV-FC estimates.
We observe the same pattern throughout the table for every
elasticity. The bias comes from both the estimation of the common
parameter $\theta_{2}$ and the mean of the individual specific
parameter $E[\alpha_{1i}]$. The bias corrections increase the
coefficient of future consumption $C_{i,t+1}$ and reduce the
absolute value of the mean of the price coefficient. Moreover, they
have significant impact on the estimator of dispersion of the price
coefficient. The uncorrected estimates of the standard deviation are
more than $20\%$ larger than the bias corrected counterparts. In the
online appendix Fern\'andez-Val and Lee (2012), we show through a
Monte-Carlo experiment calibrated to this empirical example, that
the bias is generally large for dispersion parameters and the bias
corrections are effective in reducing this bias. As a consequence of
shrinking the estimates of the dispersion of $\alpha_{1i}$, we
obtain smaller standard errors for the estimates of $E[\alpha_{1i}]$
throughout the table. In the Monte-Carlo experiment, we also find
that this correction in the standard errors provides improved
inference.

\section{Conclusion} \label{s9}
This paper introduces a new class of fixed effects GMM estimators
for panel data models with unrestricted nonadditive heterogeneity and
endogenous regressors.
Bias correction methods are developed because these estimators
suffer from the incidental parameters problem. Other estimators
based on moment conditions, like the class of GEL estimators, can be
analyzed using a similar methodology.  An attractive alternative
framework for estimation and inference in random coefficient models
is a flexible Bayesian approach. It would be interesting to explore
whether there are connections between moments of posterior
distributions in the Bayesian approach and the fixed effects
estimators considered in the paper. Another interesting extension
would be to find bias reducing priors in the GMM framework similar
to the ones characterized by Arellano and Bonhomme (2009) in the MLE
framework. We leave these extensions to future research.



\begin{figure}[h]

\begin{center}

\centering\epsfig{figure=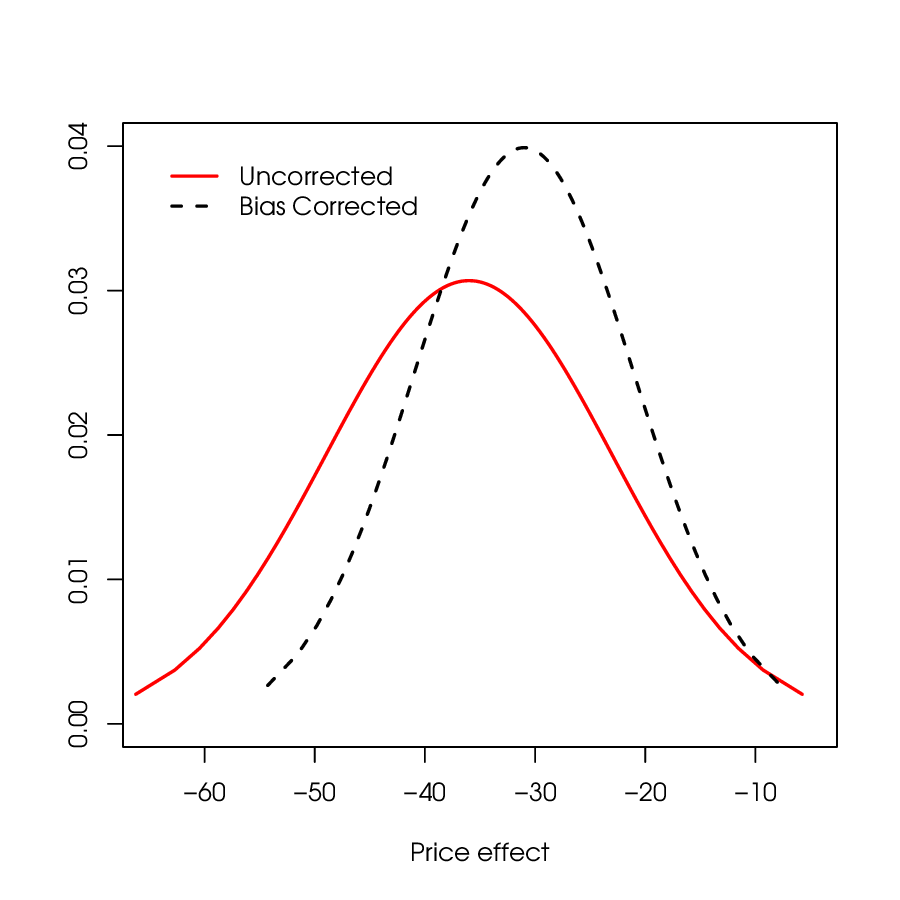,width=6in,height=6in}

\caption{\label{fig1} Normal approximation to the distribution of
price effects using uncorrected (solid line) and bias corrected
(dashed line) estimates of the mean and standard deviation of the
distribution of price effects. Uncorrected estimates of the mean and
standard deviation are -36 and 13, bias corrected estimates are -31
and 10.}

\end{center}

\end{figure}



\begin{table}
\begin{centering}
\textbf{Table 1: Estimates of Rational Addiction Model for Cigarette
Demand}
\par\end{centering}

\begin{center}
\begin{tabular}{ccccccccc}
\toprule\toprule \noalign{\vskip\doublerulesep}
   & OLS-FC  & IV-FC  & \multicolumn{3}{c}{OLS-RC} & \multicolumn{3}{c}{IV-RC}\tabularnewline[\doublerulesep]
\noalign{\vskip\doublerulesep}
 \cmidrule(rl){4-6}\cmidrule(rl){7-9} &  &  & NBC  & BC  & IBC  & NBC  & BC  & IBC\tabularnewline[\doublerulesep]
\hline 
 &  &  &  &  &  &  &  & \tabularnewline
\multicolumn{9}{l}{Coefficients} \tabularnewline
 &  &  &  &  &  &  &  & \tabularnewline
(Mean) $P_{t}$  & -9.58  & -34.10  & -13.49  & -13.58  &
-13.26  & -36.39  & -31.26  & -31.26 \tabularnewline
 & (1.86) & (4.10) & (3.55) & (3.55) & (3.55) & (4.85) & (4.62) & (4.64)\tabularnewline
 &  &  &  &  &  &  &  & \tabularnewline
(Std. Dev.) $P_{t}$  &  &  & 4.35  & 4.22  & 4.07  & 12.86  & 10.45  &
10.60 \tabularnewline
 &  &  & (0.98) & (1.02) & (1.03) & (2.35) & (2.13) & (2.15)\tabularnewline
 &  &  &  &  &  &  &  & \tabularnewline
$C_{t-1}$  & 0.49  & 0.45  & 0.48  & 0.48  & 0.48  & 0.44  & 0.44  &
0.45 \tabularnewline
 & (0.01) & (0.06) & (0.04) & (0.04) & (0.04) & (0.04) & (0.04) & (0.04)\tabularnewline
 &  &  &  &  &  &  &  & \tabularnewline
$C_{t+1}$  & 0.44  & 0.17  & 0.44  & 0.43  & 0.44  & 0.23  & 0.29  &
0.27 \tabularnewline
 & (0.01) & (0.07) & (0.04) & (0.04) & (0.04) & (0.05) & (0.05) & (0.05)\tabularnewline
 &  &  &  &  &  &  &  & \tabularnewline
\hline
 &  &  &  &  &  &  &  & \tabularnewline
\multicolumn{9}{l}{Price elasticities} \tabularnewline
 &  &  &  &  &  &  &  & \tabularnewline
Long-run  & -1.05  & -0.70  & -1.30  & -1.31  & -1.28  & -0.88  &
-0.91  & -0.90 \tabularnewline
 & (0.24) & (0.12) & (0.28) & (0.28) & (0.28) & (0.09) & (0.10) & (0.10)\tabularnewline
 &  &  &  &  &  &  &  & \tabularnewline
Own Price  & -0.20  & -0.32  & -0.27  & -0.27  & -0.27  & -0.38  &
-0.35  & -0.35 \tabularnewline (Anticipated)  & (0.04) & (0.04) &
(0.06) & (0.06) & (0.06) & (0.04) & (0.04) & (0.04)\tabularnewline
 &  &  &  &  &  &  &  & \tabularnewline
Own Price  & -0.11  & -0.29  & -0.15  & -0.16  & -0.15  & -0.33  &
-0.29  & -0.29 \tabularnewline (Unanticipated)  & (0.02) & (0.03) &
(0.04) & (0.04) & (0.04) & (0.04) & (0.04) & (0.04)\tabularnewline
 &  &  &  &  &  &  &  & \tabularnewline
Future Price  & -0.07  & -0.05  & -0.10  & -0.10  & -0.09  & -0.09
& -0.10  & -0.09 \tabularnewline (Unanticipated)  & (0.01) & (0.03)
& (0.02) & (0.02) & (0.02) & (0.02) & (0.02) & (0.02)\tabularnewline
 &  &  &  &  &  &  &  & \tabularnewline
Past Price  & -0.08  & -0.14  & -0.11  & -0.11  & -0.10  & -0.16  &
-0.15  & -0.15 \tabularnewline (Unanticipated)  & (0.01) & (0.02) &
(0.03) & (0.02) & (0.03) & (0.02) & (0.02) & (0.02)\tabularnewline
 &  &  &  &  &  &  &  & \tabularnewline
Short-Run  & -0.30  & -0.35  & -0.41  & -0.41  & -0.40  & -0.44  &
-0.44  & -0.43 \tabularnewline
 & (0.05) & (0.06) & (0.12) & (0.12) & (0.12) & (0.06) & (0.06) & (0.06)\tabularnewline
\bottomrule\bottomrule
\end{tabular}
\end{center}
{\footnotesize
\begin{flushleft}
RC/FC refers to random/fixed coefficient model. NBC/BC/IBC refers to
no bias-correction/bias correction/iterated bias correction
estimates.\\Note: \it{Standard errors are in parenthesis.}
\end{flushleft}}

\end{table}

\clearpage
\newpage

\newpage
\setcounter{page}{1}
\begin{center}
\textbf{Supplementary Appendix to Panel Data Models with Nonadditive Unobserved Heterogeneity: Estimation and Inference}\\
\normalsize{Iv\'an Fern\'andez-Val and Joonhwan Lee}\\
\normalsize{\today}
\end{center}

\bigskip

\footnotesize

 This supplement to the paper ``Panel Data Models with Nonadditive Unobserved Heterogeneity: Estimation and Inference'' provides
additional numerical examples and the proofs of the main results.
It is organized in seven appendices. Appendix A contains a Monte Carlo
simulation calibrated to the empirical example of the paper. Appendix B gives the proofs of the consistency of
the one-step and two-step FE-GMM estimators. Appendix C includes the derivations of the
asymptotic distribution of one-step and two-step FE-GMM estimators. Appendix D
provides the derivations of the asymptotic distribution of bias corrected FE-GMM estimators. Appendix E and 
Appendix F contain the characterization of the stochastic expansions for the estimators of the individual effects and the scores.
Appendix G includes the expressions for the
scores and their derivatives.

Throughout the appendices $O_{uP}$ and $o_{uP}$ will denote uniform
orders in probability. For example, for a sequence of random
variables $\{\xi_{i}: 1 \leq i \leq n \}$, $\xi_{i} = O_{uP}(1)$ means
$\sup_{1 \leq i \leq n} \xi_{i} = O_{P}(1)$ as $n \to \infty$, and $\xi_{i} =
o_{uP}(1)$ means $\sup_{1 \leq i \leq n} \xi_{i} = o_{P}(1)$ as $n \to \infty$. It can
be shown  that the usual algebraic properties for $O_P$ and  $o_{P}$ orders also
apply to the uniform orders  $O_{uP}$ and $o_{uP}$. Let $e_{j}$ denote a $1 \times
d_g$ unitary vector with a one in position $j$. For a matrix $A$,
$|A|$ denotes Euclidean norm, that is $|A|^{2} = trace [A A']$. HK
refers to Hahn and Kuersteiner (2011). 

\begin{appendix}

\appendix


\section{Numerical example} \label{s7}

We design a Monte Carlo experiment to closely match the cigarette
demand empirical example in the paper. In particular, we consider the following
linear model with common and individual specific parameters:
\begin{eqnarray*}
C_{it} &=& \alpha_{0i}+\alpha_{1i}P_{it}+\theta_{1}C_{i,t-1}+\theta_{2}C_{i,t+1}+\psi\epsilon_{it},\\
P_{it} &=& \eta_{0i}+\eta_{1i}Tax_{it}+u_{it},  \ \
(i=1,2,\ldots,n,\, t=1,2,\ldots,T);
\end{eqnarray*}
where $\{(\alpha_{ji}, \eta_{ji}) :  1 \leq i \leq n\}$ is i.i.d.
bivariate normal with mean $(\mu_j, \mu_{\eta_j})$, variances
$(\sigma_j^2, \sigma_{\eta_j}^2)$, and correlation $\rho_j$, for $j
\in \{0,1\},$ independent across $j$; $\{u_{it}: 1 \leq t \leq T, 1 \leq  i \leq n \}$ is i.i.d
$N(0,\sigma_{u}^{2})$; and $\{\epsilon_{it}: 1 \leq t  \leq T, 1 \leq  i \leq n
\}$ is i.i.d. standard normal. We fix the values of $Tax_{it}$ to
the values in the data set. All the parameters other than $\rho_1$
and $\psi$ are calibrated to the data set. Since the panel is
balanced for only $1972$ to $1994$, we set $T=23$ and generate
balanced panels for the simulations. Specifically, we consider
\begin{eqnarray*}
& & n = 51,\, T=23;\,\mu_{0}=72.86,\,\mu_{1}=-31.26,\,\mu_{\eta_0}=0.81,\,\mu_{\eta_1}=0.13,\ \
\sigma_{0}  = 18.54,\,\sigma_{1}=10.60,\,\sigma_{\eta_0}=0.14,\\ 
& &\sigma_{\eta_1}=2.05,\,\sigma_{u}=0.15,\,\theta_{1}=0.45,\,\theta_{2}=0.27,\ \
\rho_0=-0.17,\,\rho_1  \in
\{0,\,0.3,\,0.6,\,0.9\},\,\psi\in\{2,\,4,\,6\}.
\end{eqnarray*}
In the empirical example, the estimated values of $\rho_1$ and
$\psi$ are close to $0.3$ and $5$, respectively.

Since the model is dynamic with leads and lags of the dependent
variable on the right hand side, we construct the series of $C_{it}$
by solving the difference equation following BGM. The stationary
part of the solution is
$$C_{it}=\frac{1}{\theta_{1}\phi_{1}(\phi_{2}-\phi_{1})}\sum_{s=1}^{\infty}\phi_{1}^{s}h_{i}(t+s)+\frac{1}{\theta_{1}\phi_{2}(\phi_{2}-\phi_{1})}\sum_{s=0}^{\infty}\phi_{2}^{-s}h_{i}(t-s)$$
where
\begin{equation*}
h_{i}(t)  = \alpha_{0i}+\alpha_{1i}P_{i,t-1}+\psi\epsilon_{i,t-1}, \
\phi_{1}
 =
 \frac{1-(1-4\theta_{1}\theta_{2})^{1/2}}{2\theta_{1}},\,\phi_{2}=\frac{1+(1-4\theta_{1}\theta_{2})^{1/2}}{2\theta_{1}}.
 \end{equation*}
In our specification, these values are $\phi_1 = 0.31$ and $\phi_2 =
1.91$. The parameters that we vary across the experiments are
$\rho_1$ and $\psi$. The parameter $\rho_1$ controls the degree of
correlation between $\alpha_{1i}$ and $P_{it}$ and determines the
bias caused by using fixed coefficient estimators. The parameter
$\psi$ controls the degree of endogeneity in $C_{i,t-1}$ and
$C_{i,t+1}$, which determines the bias of OLS and the incidental
parameter bias of random coefficient IV estimators. Although $\psi$
is not an ideal experimental parameter because it is the variance of
the error, it is the only free parameter that affects the
endogeneity of $C_{i,t-1}$ and $C_{i,t+1}$. In this design we cannot
fully remove the endogeneity of $C_{i,t-1}$ and $C_{i,t+1}$ because
of the dynamics.

In each simulation, we estimate the parameters with standard fixed
coefficient OLS and IV with additive individual effects (FC) , and the FE-GMM OLS and IV
estimators with the individual specific coefficients (RC). For IV,
we use the same set of instruments as in the empirical example. We
report results only for the common coefficient $\theta_{2}$, and the
mean and standard deviation of the individual-specific coefficient
$\alpha_{1i}$. Throughout the tables, $Bias$ refers to the mean of
the bias across simulations; $SD$ refers to the standard deviation
of the estimates; $SE/SD$ denotes the ratio of the average standard
error to the standard deviation; and $p;.05$ is the rejection
frequency of a two-sided test with nominal level of $0.05$ that the parameter is equal to its true value. For bias-corrected RC
estimators the standard errors are calculated using bias corrected estimates of the common parameter and individual effects.

Table A.1 reports the results for the estimators of $\theta_{2}$. We
find significant biases in all the OLS estimators relative to the standard deviations of these estimators.
The bias of OLS grows with $\psi$.  The IV-RC estimator has bias unless $\rho_1=0$, that is unless there is no correlation
between $\alpha_{1i}$ and $P_{it},$ and its test shows size distortions due to the bias and underestimation in the standard errors. IV-RC
estimators have no bias in every configuration and their tests display much
smaller size distortions than for the other estimators. The bias
corrections preserve the bias and inference properties of the RC-IV
estimator.

Table A2 reports similar results for the estimators of the mean of
the individual specific coefficient $\mu_{1}=\bar{E}[\alpha_{1i}]$. We
find substantial biases for OLS and IV-FC estimators. RC-IV displays
some bias, which is removed by the corrections in some
configurations. The bias corrections provide significant
improvements in the estimation of standard errors. IV-RC standard
errors overestimate the dispersion by more than $15\%$ when $\psi$
is greater than $2$, whereas IV-BC or IV-IBC estimators
have SE/SD ratios close to $1$. As a result bias corrected
estimators show smaller size distortions. This improvement comes
from the bias correction in the estimates of the dispersion of
$\alpha_{1i}$ that we use to construct the standard errors. The bias
of the estimator of the dispersion is generally large, and is
effectively removed by the correction. We can see more evidence on
this phenomenon in Table A3.

Table A3 shows the results for the estimators of the standard
deviation of the individual specific coefficient
$\sigma_{1}=\bar{E}[(\alpha_{1i} - \mu_1)^2]^{1/2}$. As noted above, the bias corrections
are relevant in this case. As $\psi$ increases, the bias grows in
orders of $\psi$. Most of bias is removed by the correction even
when $\psi$ is large. For example, when $\psi=6$, the bias of IV-RC
estimator is about $4$ which is larger than two times its standard
deviation. The correction reduces the bias to about $0.5$, which is
small relative to the standard deviation. Moreover, despite the
overestimation in the standard errors, there are important size
distortions for IV-RC estimators for tests on $\sigma_1$ when $\psi$
is large. The bias corrections bring the rejection frequencies close
to their nominal levels.

Overall, the calibrated Monte-Carlo experiment confirms that the
IV-RC estimator with bias correction provides improved estimation
and inference for all the parameters of interest for the model
considered in the empirical example.


\section{Consistency of One-Step and Two-Step FE-GMM Estimator}

\begin{lemma}
\label{la8}Suppose that the Conditions \ref{cond1} and \ref{cond2} hold. Then, for every $\eta > 0$
$$
\Pr \left\{ \sup_{1 \leq i \leq n} \sup_{(\theta,\alpha) \in
\Upsilon} \left| \hat{Q}_{i}^{W} (\theta, \alpha) - Q_{i}^{W}
(\theta, \alpha) \right| \geq \eta \right\} = o(T^{-1}),
$$
and
$$
\sup_{\alpha} \left| Q_{i}^{W}(\theta,\alpha) -
Q_{i}^{W}(\theta',\alpha) \right| \leq C \cdot E[M(z_{it})]^{2}
\left| \theta - \theta' \right|
$$
for some constant $C > 0$.
\end{lemma}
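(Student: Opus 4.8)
The plan is to prove the two assertions separately: the deterministic Lipschitz bound first, since it is the easier ingredient, and then the uniform stochastic bound, which carries the real work and feeds directly into the consistency argument of Theorem~\ref{th1}. For the Lipschitz bound I would use that $Q_{i}^{W}$ is a quadratic form and apply the symmetric-matrix identity $a'Ma-b'Mb=(a+b)'M(a-b)$ with $M=W_{i}^{-1}$, $a=g_{i}(\theta,\alpha)$ and $b=g_{i}(\theta',\alpha)$, giving
\begin{equation*}
Q_{i}^{W}(\theta,\alpha)-Q_{i}^{W}(\theta',\alpha)=-\left(g_{i}(\theta,\alpha)+g_{i}(\theta',\alpha)\right)'W_{i}^{-1}\left(g_{i}(\theta,\alpha)-g_{i}(\theta',\alpha)\right).
\end{equation*}
By Cauchy--Schwarz the right side is at most $|W_{i}^{-1}|\,|g_{i}(\theta,\alpha)+g_{i}(\theta',\alpha)|\,|g_{i}(\theta,\alpha)-g_{i}(\theta',\alpha)|$. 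The level dominance $|g_{k}(z_{it};\theta,\alpha)|\leq M(z_{it})$ of Condition~\ref{cond2}(iv) bounds the sum factor by $2\sqrt{d_{g}}\,E[M(z_{it})]$, while the derivative dominance $|\partial g_{k}/\partial(\theta,\alpha)|\leq M(z_{it})$ together with the mean value theorem bounds the difference factor by $C\,E[M(z_{it})]\,|\theta-\theta'|$. Since $\{W_{i}\}$ is uniformly positive definite, $\sup_{i}|W_{i}^{-1}|<\infty$ is absorbed into the constant, yielding the claimed $C\,E[M(z_{it})]^{2}|\theta-\theta'|$.

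For the uniform bound I would decompose
\begin{equation*}
\hat{Q}_{i}^{W}-Q_{i}^{W}=-\,\hat{g}_{i}'(\hat{W}_{i}^{-1}-W_{i}^{-1})\hat{g}_{i}-(\hat{g}_{i}+g_{i})'W_{i}^{-1}(\hat{g}_{i}-g_{i}),
\end{equation*}
with the arguments $(\theta,\alpha)$ suppressed, so that $|\hat{Q}_{i}^{W}-Q_{i}^{W}|$ is controlled by $|\hat{g}_{i}|^{2}\,|\hat{W}_{i}^{-1}-W_{i}^{-1}|$ plus $|W_{i}^{-1}|\,|\hat{g}_{i}+g_{i}|\,|\hat{g}_{i}-g_{i}|$. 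Since $|\hat{g}_{i}|\leq T^{-1}\sum_{t}\sqrt{d_{g}}M(z_{it})$ and $|g_{i}|\leq\sqrt{d_{g}}E[M]$, the problem reduces to two uniform statements: that $\sup_{i}\sup_{(\theta,\alpha)\in\Upsilon}|\hat{g}_{i}(\theta,\alpha)-g_{i}(\theta,\alpha)|$ falls below any threshold with probability $o(T^{-1})$, and that $\sup_{i}|\hat{W}_{i}^{-1}-W_{i}^{-1}|$ is small, the latter coming from Condition~\ref{cond2}(v) together with the concentration of $\sup_{i}T^{-1}\sum_{t}M(z_{it})$ near $\sup_{i}E[M]$.

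The heart is this uniform law of large numbers for $\hat{g}_{i}$ with the sharp rate. I would first establish stochastic equicontinuity: the derivative dominance gives $|\hat{g}_{i}(\theta,\alpha)-\hat{g}_{i}(\theta',\alpha')|\leq(T^{-1}\sum_{t}\sqrt{d_{g}}M(z_{it}))\,|(\theta,\alpha)-(\theta',\alpha')|$, so covering the compact set $\Upsilon$ by a finite $\rho$-grid (of cardinality independent of $T$) reduces the supremum over $\Upsilon$ to a maximum over finitely many grid points, plus an oscillation term bounded by $\rho\,\sup_{i}T^{-1}\sum_{t}\sqrt{d_{g}}M(z_{it})$ that is forced below $\eta/2$ by taking $\rho$ small. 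At each grid point I would control the per-individual deviation by a Rosenthal-type moment inequality for stationary strongly mixing sequences: the geometric mixing of Condition~\ref{cond1}(i) and $\sup_{i}E[M^{4+\delta}]<\infty$ give $E|\hat{g}_{i}(\theta,\alpha)-g_{i}(\theta,\alpha)|^{q}=O(T^{-q/2})$ for some $q>4$, so Markov's inequality produces a per-individual, per-point probability of order $T^{-q/2}=o(T^{-2})$; a union bound over the finitely many grid points and over the $n\asymp T$ individuals then costs a factor $O(T)$, leaving $o(T^{-1})$.

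I expect the main obstacle to be precisely this matching of the deviation rate to the cross-sectional union bound: because $n$ and $T$ grow at the same rate (Condition~\ref{cond1}(iii)), the union over individuals multiplies the per-individual probability by $O(T)$, so the pointwise bound must be $o(T^{-2})$. This is exactly why the fourth-plus moment $\sup_{i}E[M^{4+\delta}]<\infty$ is required rather than a second moment, since $E|\hat{g}_{i}-g_{i}|^{q}=O(T^{-q/2})$ is $o(T^{-2})$ only for $q>4$; establishing that moment inequality under only geometric mixing and uniformly in $i$ is the delicate step.
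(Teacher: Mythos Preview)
Your proposal is correct and follows essentially the same route as the paper: the same quadratic-form decomposition separating the $\hat W_i^{-1}-W_i^{-1}$ contribution from the $\hat g_i-g_i$ contribution, and the same mean-value/Lipschitz argument for the second claim. The only difference is packaging: where you spell out the stochastic-equicontinuity plus covering plus Rosenthal-type moment bound that delivers the uniform $o(T^{-1})$ rate for $\sup_i\sup_{(\theta,\alpha)}|\hat g_{k,i}-g_{k,i}|$, the paper simply invokes Lemma~4 of Hahn and Kuersteiner (2011), which encapsulates exactly that argument under the geometric-mixing and $(4+\delta)$-moment conditions you identified.
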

\begin{proof}
First, note that
\begin{multline*}
\left| \hat{Q}_{i}^{W} (\theta, \alpha) - Q_{i}^{W} (\theta, \alpha)
\right| \leq \left|  \hat{g}_{i}(\theta, \alpha)' W_{i}^{-1} \hat{g}_{i}(\theta, \alpha) - g_{i}(\theta, \alpha)' W_{i}^{-1} g_{i}(\theta, \alpha) \right| +  \left| \hat{g}_{i}(\theta, \alpha)' (\hat W_{i}^{-1} - W_{i}^{-1}) \hat{g}_{i}(\theta, \alpha) \right| \\ 
\leq \left| [ \hat{g}_{i}(\theta, \alpha) -
g_{i}(\theta,\alpha) ]' W_{i}^{-1} [ \hat{g}_{i}(\theta, \alpha) -
g_{i}(\theta,\alpha) ] \right| 
+  2 \cdot \left| g_{i}(\theta,\alpha)' W_{i}^{-1} [
\hat{g}_{i}(\theta, \alpha) -
g_{i}(\theta,\alpha) ] \right|  \\
+  \left| [\hat{g}_{i}(\theta, \alpha) - g_{i}(\theta,\alpha)]' (\hat W_{i}^{-1} - W_{i}^{-1})  [\hat{g}_{i}(\theta, \alpha) - g_{i}(\theta,\alpha)] \right| 
+ 2 \left| [\hat{g}_{i}(\theta, \alpha) - g_{i}(\theta,\alpha)]' (\hat W_{i}^{-1} - W_{i}^{-1}) g_{i}(\theta,\alpha) \right| 
\\ + \left| g_{i}(\theta,\alpha)' (\hat W_{i}^{-1} - W_{i}^{-1}) g_{i}(\theta,\alpha) \right| 
\leq d_g^{2} \max_{1 \leq k \leq d_g} \left|  \hat{g}_{k,i}(\theta,
\alpha) - g_{k,i}(\theta,\alpha) 
\right|^{2} \left| W_{i} \right|^{-1} \\
+ 2 d_g^{2} \sup_{1 \leq i \leq n} E[ M(z_{it}) ] \left| W_{i}
\right|^{-1} \max_{1 \leq k \leq d_g} \left| \hat{g}_{k,i}(\theta,
\alpha) - g_{k,i}(\theta,\alpha)  \right| + o_P\left(\max_{1 \leq k \leq d_g} \left| \hat{g}_{k,i}(\theta,
\alpha) - g_{k,i}(\theta,\alpha) 
\right|\right),
\end{multline*}
where we use that $\sup_{1 \leq i \leq n} |\hat W_i - W_i| = o_P(1)$. Then, by Condition \ref{cond2}, we can apply Lemma 4 of HK to $|\hat{g}_{k,i}(\theta,\alpha)-g_{k,i}(\theta,\alpha)|$ to obtain the first part.

The second part follows from
\begin{eqnarray*}
\left| Q_{i}^{W}(\theta,\alpha) - Q_{i}^{W}(\theta',\alpha) \right|
&\leq& \left|
g_{i}(\theta, \alpha) ' W_{i}^{-1} [g_{i}(\theta, \alpha) - g_{i}(\theta',\alpha) ] \right| \notag + \left| [g_{i}(\theta,\alpha) - g_{i}(\theta',\alpha)]'
W_{i}^{-1} g_{i}(\theta', \alpha) \right|
\notag \\
&\leq& 2 \cdot d_g^{2} E[M(z_{it})]^{2} \left| W_{i} \right|^{-1}
\left| \theta - \theta' \right|.
\end{eqnarray*}
\end{proof}

\subsection{Proof of Theorem 1}
\begin{proof}
\textbf{Part I: Consistency of $\tilde{\theta}$.} For any $\eta>0$, let $\varepsilon := \inf_{i}[ Q^{W}_{i}( \theta
_{0},\alpha _{i0}) -\sup_{\left\{ \left( \theta ,\alpha
\right) :\left| \left( \theta ,\alpha \right) -\left( \theta
_{0},\alpha _{i0}\right) \right|
>\eta \right\} } Q_{i}^{W}( \theta ,\alpha)] >0$ as defined in Condition \ref{cond2}. Using the standard argument for consistency of extremum estimator, as in Newey and McFadden (1994), with probability $1 -
o(T^{-1})$
\begin{eqnarray*}
\max_{\left| \theta -\theta _{0}\right| >\eta ,\alpha _{1},\ldots
,\alpha _{n}}n^{-1}\sum_{i=1}^{n}\hat{Q}_{i}^{W} \left( \theta
,\alpha _{i}\right) &<& n^{-1}\sum_{i=1}^{n}\hat{Q}_{i}^{W} \left( \theta _{0},\alpha
_{i0}\right) -\frac{1}{3}\varepsilon ,
\end{eqnarray*}
by definition of $\varepsilon$ and Lemma \ref{la8}. Thus, by continuity of $\hat{Q}_{i}^{W}$ and the definition of the lefthand side
above, we conclude that $\Pr \left[ \left| \tilde{\theta} -\theta _{0}\right| \geq \eta \right] = o\left( T^{-1} \right)$.

\textbf{Part II: Consistency of $\tilde{\alpha}_{i}$.}   By Part I and Lemma \ref{la8},
\begin{equation}
\Pr \left[ \sup_{1 \leq i \leq n}\sup_{\alpha }\left| \hat{Q}_{i}^{W}
\left( \tilde{\theta},\alpha \right) - Q_{i}^{W} \left( \theta
_{0},\alpha \right) \right| \geq \eta \right] = o(T^{-1}) \  \ \
\label{uniform-in-hat-theta}
\end{equation}
for any $\eta >0$. Let
\begin{equation*}
\varepsilon := \inf_{i} \left[ Q_{i}^{W} \left( \theta _{0},\alpha
_{i0}\right) -\sup_{\left\{ \alpha _{i}:\left| \alpha _{i}-\alpha
_{i0}\right| >\eta \right\} } Q_{i}^{W} \left( \theta _{0},\alpha
_{i}\right) \right]
>0.
\end{equation*}
Condition on the event
\begin{equation*}
\left\{ \sup_{1 \leq i \leq n}\sup_{\alpha }\left| \hat{Q}_{i}^{W}
\left( \tilde{\theta},\alpha \right) - Q_{i}^{W} \left( \theta
_{0},\alpha \right) \right| \leq \frac{1}{3}\varepsilon \right\},
\end{equation*}
which has a probability equal to $%
1- o\left( T^{-1} \right)  $ by (\ref{uniform-in-hat-theta}). Then
\begin{equation*}
\max_{\left| \alpha _{i}-\alpha _{i0}\right| >\eta
}\hat{Q}_{i}^{W}\left( \tilde{\theta},\alpha _{i}\right)
<\max_{\left| \alpha _{i}-\alpha _{i0}\right| >\eta } Q_{i}^{W}
\left( \theta _{0},\alpha _{i}\right) +\frac{1}{3}\varepsilon
<Q_{i}^{W} \left( \theta _{0},\alpha _{i0}\right)
-\frac{2}{3}\varepsilon <\hat{Q}_{i}^{W} \left( \tilde{\theta},\alpha _{i0}\right) -\frac{1}{3}%
\varepsilon.
\end{equation*}
This is inconsistent with $\hat{Q}_{i}^{W} \left( \tilde{%
\theta },\tilde{\alpha}_{i}\right) \geq \hat{Q}_{i}^{W} \left(
\tilde{\theta},\alpha _{i0}\right) $, and therefore, $\left|
\tilde{\alpha}_{i}-\alpha _{i0}\right| \leq \eta $ with probability $1 - o(T^{-1})$ for every $i$.
\\
\textbf{Part III: Consistency of $\tilde{\lambda}_{i}$.} First, note
that
\begin{eqnarray*}
\left| \tilde{\lambda}_{i} \right| &=& \left|\hat W_{i}^{-1}
\hat{g}_{i}(\tilde{\theta},\tilde{\alpha}_{i}) \right| \leq d_g
\left|\hat W_{i} \right|^{-1} \max_{1 \leq k \leq d_g} \left(\left|
\hat{g}_{k,i} (\tilde{\theta},\tilde{\alpha}_{i}) - g_{k,i}
(\tilde{\theta},\tilde{\alpha}_{i})\right| + \left|
g_{k,i} (\tilde{\theta},\tilde{\alpha}_{i}) \right|\right) \\
&\leq& d_g \left|\hat W_{i} \right|^{-1} \max_{1 \leq k \leq d_g} 
\sup_{(\theta,\alpha_{i}) \in \Upsilon} \left| \hat{g}_{k,i}
(\theta,\alpha_{i}) - g_{k,i} (\theta,\alpha_{i})\right| 
\\
&+& d_g \left|\hat W_{i} \right|^{-1} M(z_{it}) \left| \tilde{\theta} -
\theta_{0} \right| + d_g \left|\hat W_{i} \right|^{-1} M(z_{it}) \left|
\tilde{\alpha}_{i} - \alpha_{i0} \right|.
\end{eqnarray*}
Then, the result follows because $\sup_{ 1 \leq i \leq n} |\hat W_i - W_i| = o_P(1)$ and $\{W_i: 1 \leq i \leq n\}$ are positive definite by Condition \ref{cond2}, 
$\max_{1 \leq k \leq d_g} 
\sup_{(\theta,\alpha_{i}) \in \Upsilon} \left| \hat{g}_{k,i}
(\theta,\alpha_{i}) - g_{k,i} (\theta,\alpha_{i})\right|  = o_P(1)$ by Lemma 4 in HK,  and $\left| \tilde{\theta} -
\theta_{0} \right| = o_P(1)$ and $\sup_{1 \leq i \leq n} \left|
\tilde{\alpha}_{i} - \alpha_{i0} \right| = o_P(1)$ by Parts I and II.
\end{proof}


\subsection{Proof of Theorem 3}
\begin{proof}
First, assume that Conditions \ref{cond1}, \ref{cond2}, \ref{cond3} and \ref{cond4} hold.
The proofs are exactly the same as that of Theorem \ref{th1} using the uniform convergence of the criterion function.

To establish the uniform convergence of the criterion function as in
Lemma \ref{la8}, we need
\begin{equation*}
\sup_{1 \leq i \leq n} \left|
\hat{\Omega}_{i}(\tilde{\theta},\tilde{\alpha}_{i}) -
\Omega_{i}(\theta_{0},\alpha_{i0}) \right| =
o_P(1),
\end{equation*}
along with an extended version of the continuous mapping theorem for $o_{uP}$.
This can be shown by noting that
\begin{eqnarray*}
\left| \hat{\Omega}_{i}(\tilde{\theta},\tilde{\alpha}_{i}) -
\Omega_{i}(\theta_{0},\alpha_{i0}) \right| \leq \left|
\hat{\Omega}_{i}(\tilde{\theta},\tilde{\alpha}_{i}) -
\Omega_{i}(\tilde{\theta},\tilde{\alpha}_{i}) \right| + \left|
\Omega_{i}(\tilde{\theta},\tilde{\alpha}_{i}) -
\Omega_{i}(\theta_{0},\alpha_{i0}) \right| \\
\leq \left| \hat{\Omega}_{i}(\tilde{\theta},\tilde{\alpha}_{i}) -
\Omega_{i}(\tilde{\theta},\tilde{\alpha}_{i}) \right| + d_g^{2} E
\left[ M(z_{it})^{2} \right] \left|
(\tilde{\theta},\tilde{\alpha}_{i}) - (\theta_{0},\alpha_{i0})
\right|.
\end{eqnarray*}
The convergence follows by the consistency of $\tilde\theta$ and $\tilde{\alpha}_i$'s, and the application of Lemma 2 of HK to
$g_{k}(z_{it};\theta,\alpha_i) g_{l}(z_{it};\theta,\alpha_i)$ using that $\left| g_{k}(z_{it};\theta,\alpha_i) g_{l}(z_{it};\theta,\alpha_i) \right| \leq M(z_{it})^2$.
\end{proof}


\section{Asymptotic Distribution of One-step and Two-step FE-GMM Estimator}
\label{appendixB}

\subsection{Some Lemmas}

\begin{lemma}
\label{lb4} Assume that Condition \ref{cond1} holds.  Let $h(z_{it};\theta,\alpha_{i})$ be a
function such that (i) $h(z_{it};\theta,\alpha_{i})$ is 
continuously differentiable in $(\theta,\alpha_{i}) \in \Upsilon \subset \mathbb{R}^{d_{\theta}+d_{\alpha}}$;
(ii) $\Upsilon$ is convex; (iii) there exists a function
$M(z_{it})$ such that $\left| h(z_{it};\theta,\alpha_{i}) \right|
\leq M(z_{it})$ and $\left| \partial h(z_{it};\theta,\alpha_{i}) / \partial (\theta, \alpha_i) \right|
\leq M(z_{it})$ with $E \left[ M(z_{it})^{5(d_{\theta}+d_{\alpha}+6)/(1-10v) + \delta}
\right] < \infty$ for some $\delta > 0$ and $0 < v < 1/10$.  Define $\hat{H}_{i}(\theta,\alpha_{i}) :=
T^{-1} \sum_{t=1}^{T} h(z_{it};\theta,\alpha_{i})$, and
$H_{i}(\theta,\alpha_{i}) := E \left[ \hat{H}_{i}(\theta,\alpha_{i})
\right]$. Let
\begin{equation*}
\alpha_{i}^{*}  = \arg \max_{\alpha_i} \hat{Q}_{i}^{W} (\theta^{*},
\alpha_i),
\end{equation*}
such that $\alpha_{i}^{*} - \alpha_{i0} = o_{uP}(T^{a_{\alpha}})$
and $\theta^{*} - \theta_{0} = o_{P} (T^{a_{\theta}})$, with
$-2/5 \leq a \leq 0$, for $a =
\max(a_{\alpha},a_{\theta})$.  Then, for any $\overline{\theta}$ between $\theta^{*}$ and
$\theta_{0}$, and $\overline{\alpha}_{i}$ between $\alpha_{i}^{*}$ and
$\alpha_{i0}$, 
\begin{equation*}
\sqrt{T}[\hat{H}_{i}(\overline{\theta},\overline{\alpha}_{i}) -
 H_{i}(\overline{\theta},\overline{\alpha}_{i})] = o_{uP}(T^{1/10}), \ \  \hat{H}_{i}(\overline{\theta},\overline{\alpha}_{i}) - H_{i}(\theta_{0},\alpha_{i0})  = o_{uP}(T^{a}).
\end{equation*}
\end{lemma}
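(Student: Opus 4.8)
The plan is to reduce both displayed conclusions to two ingredients: a uniform-in-$(i,\theta,\alpha_i)$ stochastic bound on the centered sample average $\hat{H}_i - H_i$, and a deterministic Lipschitz bound on the population average $H_i$. The argmax structure defining $\alpha_i^*$ plays no role beyond guaranteeing $(\theta^*,\alpha_i^*) \in \Upsilon$; only the rate hypotheses $\theta^*-\theta_0 = o_P(T^{a_\theta})$ and $\alpha_i^* - \alpha_{i0} = o_{uP}(T^{a_\alpha})$ are used.

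First I would establish the maximal inequality
\begin{equation*}
\sup_{1 \leq i \leq n} \sup_{(\theta,\alpha_i) \in \Upsilon} \left| \hat{H}_i(\theta,\alpha_i) - H_i(\theta,\alpha_i) \right| = o_{uP}(T^{-1/2+v})
\end{equation*}
by applying Lemma 2 of HK to $h(z_{it};\theta,\alpha_i)$. Its hypotheses are met: Condition \ref{cond1}(i)--(ii) supply the uniform strong mixing $\sup_i a_i(l) \leq C a^l$ and cross-sectional independence; parts (i)--(ii) of the present lemma give continuous differentiability of $h$ on the convex set $\Upsilon$; and part (iii) provides the envelope $M(z_{it})$ dominating both $h$ and its gradient, with the moment exponent $5(d_\theta+d_\alpha+6)/(1-10v)+\delta$ calibrated precisely so that the chaining argument over the $(d_\theta+d_\alpha)$-dimensional set delivers the rate $T^{-1/2+v}$. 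The uniformity over $i$ (that is, $o_{uP}$ rather than $o_P$) is inherited from the fact that both the moment bound and the mixing coefficients are controlled uniformly in $i$.

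The first conclusion is then immediate. Since $\theta_0 \in \Upsilon$, $(\theta^*,\alpha_i^*) \in \Upsilon$ by construction, and $\Upsilon$ is convex, the intermediate point $(\overline{\theta},\overline{\alpha}_i)$ lies in $\Upsilon$. Bounding the empirical process at this random point by its supremum over $\Upsilon$ gives $\hat{H}_i(\overline{\theta},\overline{\alpha}_i) - H_i(\overline{\theta},\overline{\alpha}_i) = o_{uP}(T^{-1/2+v})$, and since $v < 1/10$ forces $-1/2+v < -2/5$, multiplying by $\sqrt{T}$ yields $\sqrt{T}[\hat{H}_i(\overline{\theta},\overline{\alpha}_i) - H_i(\overline{\theta},\overline{\alpha}_i)] = o_{uP}(T^{v}) = o_{uP}(T^{1/10})$. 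For the second conclusion I would split
\begin{equation*}
\hat{H}_i(\overline{\theta},\overline{\alpha}_i) - H_i(\theta_0,\alpha_{i0}) = [\hat{H}_i(\overline{\theta},\overline{\alpha}_i) - H_i(\overline{\theta},\overline{\alpha}_i)] + [H_i(\overline{\theta},\overline{\alpha}_i) - H_i(\theta_0,\alpha_{i0})].
\end{equation*}
The first bracket is $o_{uP}(T^{-1/2+v}) = o_{uP}(T^a)$ because $-1/2+v < -2/5 \leq a$. For the second bracket, the mean value theorem together with $|\partial h / \partial(\theta,\alpha_i)| \leq M(z_{it})$ (and interchange of $E$ and differentiation justified by the envelope) gives $|H_i(\overline{\theta},\overline{\alpha}_i) - H_i(\theta_0,\alpha_{i0})| \leq \sup_i E[M(z_{it})] \cdot |(\overline{\theta},\overline{\alpha}_i) - (\theta_0,\alpha_{i0})|$, where $\sup_i E[M(z_{it})] < \infty$ by Lyapunov's inequality. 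Since $\overline{\theta}$ lies between $\theta^*$ and $\theta_0$ and $\overline{\alpha}_i$ between $\alpha_i^*$ and $\alpha_{i0}$, we get $|(\overline{\theta},\overline{\alpha}_i)-(\theta_0,\alpha_{i0})| \leq |\theta^* - \theta_0| + \sup_i|\alpha_i^* - \alpha_{i0}| = o_{uP}(T^a)$, so the second bracket is $o_{uP}(T^a)$ as well.

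The main obstacle is the maximal inequality in the first step: controlling the centered empirical average uniformly over both the cross-sectional index $i$ and the full parameter space $\Upsilon$ in the presence of serial dependence. This is exactly the content of HK's Lemma 2, whose proof rests on a chaining/covering argument; the only genuine verification here is that the specific moment exponent $5(d_\theta+d_\alpha+6)/(1-10v)$ in hypothesis (iii) matches the dimension of $\Upsilon$ and the target rate $T^{-1/2+v}$ demanded by that lemma, and that every relevant constant (the mixing bound and the envelope moment) is uniform in $i$ so that the conclusion upgrades from $o_P$ to $o_{uP}$.
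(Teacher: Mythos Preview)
Your proposal is correct and follows essentially the same route as the paper: invoke Lemma 2 of HK for the uniform-in-$(i,\theta,\alpha_i)$ rate $o_{uP}(T^{-1/2+v})$, then combine with a mean value/Lipschitz bound in the parameter. The only cosmetic difference is that the paper applies the mean value theorem to the \emph{sample} average $\hat{H}_i$ (so the Lipschitz constant is $T^{-1}\sum_t M(z_{it}) = O_{uP}(1)$) and evaluates the empirical-process piece at the fixed point $(\theta_0,\alpha_{i0})$, whereas you apply it to the \emph{population} average $H_i$ (Lipschitz constant $\sup_i E[M(z_{it})]$) and evaluate the empirical-process piece at the random $(\overline{\theta},\overline{\alpha}_i)$; both decompositions are valid and yield the same orders.
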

\begin{proof}
The first statement follows from Lemma 2 in HK. The second statement follows by the first statement and the conditions of the Lemma by a mean value expansion since 
\begin{eqnarray*}
\left| \hat{H}_{i}(\overline{\theta},\overline{\alpha}_{i}) -
 H_{i}(\theta_{0},\alpha_{i0})  \right|
 & \leq &   \underset{= o_{uP}(T^{a})}{\underbrace{\left| \overline{\theta} - \theta_{0} \right|}}
 \underset{= O_{uP}(1)}{\underbrace{\left| \frac{1}{T} \sum_{t=1}^{T} M(z_{it})
 \right|}}
 + \underset{= o_{uP}(T^{a})}{\underbrace{\left| \overline{\alpha}_{i} - \alpha_{i0} \right| }}
 \underset{= O_{uP}(1)}{\underbrace{\left| \frac{1}{T} \sum_{t=1}^{T} M(z_{it})
 \right|}}
 \\ &+& \underset{= o_{uP}\left(T^{- 2/5}\right)}{\underbrace{\left|   \hat{H}_{i}\left(\theta_{0} ,\alpha _{i0} \right)
- H_{i}\left( \theta _{0},\alpha _{i0}\right)
\right|}} = o_{uP}(T^{a}).
\end{eqnarray*}
\end{proof}


\begin{lemma}
\label{lb6} Assume that Conditions \ref{cond1}, \ref{cond2},
\ref{cond3} and \ref{cond5} hold. Let $\hat{t}_{i}^{W}(\theta, \gamma_{i})$ denote the first stage GMM score of the fixed effects, that
is
\begin{equation*}
\hat{t}_{i}^{W}(\theta, \gamma_{i}) =  - \left(%
\begin{array}{c}
  \hat{G}_{\alpha_{i}}(\theta, \alpha_{i})'\lambda_{i} \\
  \hat{g}_{i}(\theta,\alpha_{i}) + \hat W_{i} \lambda_{i} \\
\end{array}%
\right),
\end{equation*}
where $\gamma_{i} = (\alpha_{i}',\lambda_{i}')'$, 
$\hat{s}_{i}^{W} (\theta, \gamma_{i})$ denote the
one-step GMM score for the common parameter, that is
\begin{equation*}
\hat{s}_{i}^{W} (\theta, \gamma_{i}) =  -
\hat{G}_{\theta
i}(\theta,\alpha_{i})'\lambda_{i},
\end{equation*}
and  $\tilde{\gamma}_{i}(\theta)$ be such that
$\hat{t}_{i}^{W}(\theta, \tilde{\gamma}_{i}(\theta)) = 0$.

Let $\hat{T}_{i,j}^{W} (\theta, \gamma_{i})$ denote $\partial
\hat{t}_{i}^{W} (\theta, \gamma_{i}) / \partial \gamma_{i}' \partial \gamma_{i,j}$ and
$\hat{M}_{i,j}^{W} (\theta, \gamma_{i})$ denote $\partial
\hat{s}_{i}^{W} (\theta, \gamma_{i}) / \partial
\gamma_{i}' \partial \gamma_{i,j}$, for some $0 \leq j \leq d_g + d_{\alpha},$  where $\gamma_{i,j}$ is the $j$th element of $\gamma_{i}$ and $j=0$ denotes no second derivative. Let
$\hat{N}_{i}^{W} (\theta, \gamma_{i})$ denote $\partial
\hat{t}_{i}^{W}(\theta, \gamma_{i})/\partial \theta'$ and
$\hat{S}_{i}^{W} (\theta, \gamma_{i})$ denote $ \partial
\hat{s}_{i}^{W} (\theta, \gamma_{i}) / \partial \theta'$. Let
$(\tilde{\theta},\tilde{\gamma}_{1}, \ldots, \tilde{\gamma}_{n})$ be the one-step
GMM estimator. Then, for any $\overline{\theta}$ between
$\tilde{\theta}$ and $\theta_{0}$, and $\overline{\gamma}_{i}$
between $\tilde{\gamma}_{i}$ and $\gamma_{i0}$, 
\begin{equation*}
\hat{T}_{i,j}^{W} (\overline{\theta},\overline{\gamma}_{i}) -
T_{i,j}^{W} = o_{uP} \left( 1 \right),  \ \hat{M}_{i,j}^{W}
(\overline{\theta},\overline{\gamma}_{i}) - M_{i,j}^{W} = o_{uP}
\left( 1 \right), \ \hat{N}_{i}^{W}
(\overline{\theta},\overline{\gamma}_{i}) -
 N_{i}^{W}  = o_{uP} \left( 1 \right), \
\hat{S}_{i}^{W} (\overline{\theta},\overline{\gamma}_{i}) -
 S_{i}^{W}  = o_{uP} \left( 1 \right).
\end{equation*}
Also, for any $\overline{\gamma}_{i0}$ between $\gamma_{i0}$ and $\tilde \gamma_{i0} = \tilde \gamma_{i}(\theta_0),$
\begin{eqnarray*}
& &\sqrt{T} \hat{t}_{i}^{W} (\theta_0, \overline{\gamma}_{i0}) = o_{uP} \left(
T^{1/10} \right), \sqrt{T} \left( \hat{T}_{i,j}^{W} (\theta_0, \overline{\gamma}_{i0}) -
T_{i,j}^{W} \right) = o_{uP} \left( T^{1/10} \right), \\
& & \sqrt{T} \left( \hat{M}_{i,j}^{W} (\theta_0, \overline{\gamma}_{i0}) -
M_{i,j}^{W} \right) = o_{uP} \left( T^{1/10} \right), \
\end{eqnarray*}
\end{lemma}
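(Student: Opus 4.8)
The plan is to reduce every object in the statement to sample time-series averages of $g$ and its derivatives and then apply Lemma \ref{lb4} term by term. Writing $\gamma_i=(\alpha_i',\lambda_i')'$, each of $\hat t_i^W$, $\hat s_i^W$ and their derivatives $\hat T_{i,j}^W,\hat M_{i,j}^W,\hat N_i^W,\hat S_i^W$ is a finite sum of terms of the form $\hat H_i(\theta,\alpha_i)=T^{-1}\sum_{t=1}^T h(z_{it};\theta,\alpha_i)$, where $h$ is a derivative of $g$ of order at most three (the highest one enters the second $\gamma$-derivatives $\hat T_{i,j}^W$ and $\hat M_{i,j}^W$, which differentiate $\hat G_{\alpha_i}$ and $\hat G_{\theta_i}$ two further times in $\alpha$), each such average being possibly multiplied by a factor $\lambda_i$ or by $\hat W_i$. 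Condition \ref{cond5} makes every such $h$ dominated by $M(z_{it})$ with $\sup_i E[M(z_{it})^{5(d_\theta+d_\alpha+6)/(1-10v)+\delta}]<\infty$, which is exactly the moment hypothesis of Lemma \ref{lb4}; I will also use repeatedly that $\lambda_{i0}=0$, since $g_i(\theta_0,\alpha_{i0})=E[g(z_{it})]=0$ forces the second block of the score to vanish at the truth, and that the population blocks $H_i(\theta_0,\alpha_{i0})$ are uniformly bounded.

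\textbf{Consistency block ($o_{uP}(1)$ statements).} By Theorem \ref{th1}, $|\tilde\theta-\theta_0|=o_P(1)$, $\sup_i|\tilde\alpha_i-\alpha_{i0}|=o_{uP}(1)$ and $\sup_i|\tilde\lambda_i|=o_{uP}(1)$, so any intermediate $(\overline\theta,\overline\gamma_i)$ is uniformly consistent with $\overline\lambda_i\to\lambda_{i0}=0$. Applying the second statement of Lemma \ref{lb4} with $a=0$ to each $g$-derivative gives $\hat H_i(\overline\theta,\overline\alpha_i)-H_i(\theta_0,\alpha_{i0})=o_{uP}(1)$. Since each of $\hat T_{i,j}^W,\hat M_{i,j}^W,\hat N_i^W,\hat S_i^W$ is a continuous (polynomial) function of these averages, of $\overline\lambda_i$, and of $\hat W_i$, and the corresponding limits $T_{i,j}^W,M_{i,j}^W,N_i^W,S_i^W$ are the same functions evaluated at $H_i(\theta_0,\alpha_{i0})$, $\lambda_{i0}=0$ and $W_i$ (with $\hat W_i\to_P W_i$ uniformly by Condition \ref{cond2}(v)), the continuous-mapping property for $o_{uP}$ delivers the four claimed differences as $o_{uP}(1)$; the $\lambda$-carrying summands disappear because $\overline\lambda_i=o_{uP}(1)$ while its cofactors are $O_{uP}(1)$.

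\textbf{Rate block ($\sqrt T$-scaled statements).} The extra sharpness comes from fixing $\theta=\theta_0$ and from a uniform rate on $\tilde\gamma_{i0}$. First, $\sqrt T\,\hat t_i^W(\theta_0,\gamma_{i0})=o_{uP}(T^{1/10})$: at the exact truth the population part vanishes, so this is the centered empirical process and the bound is the first statement of Lemma \ref{lb4}. Expanding $\hat t_i^W(\theta_0,\tilde\gamma_{i0})=0$ by the mean value theorem and inverting the Jacobian $\hat T_{i,0}^W$, whose uniform limit $T_{i,0}^W=-\bigl(\begin{smallmatrix}0 & G_{\alpha_i}'\\ G_{\alpha_i} & W_i\end{smallmatrix}\bigr)$ is invertible with uniformly bounded inverse by Conditions \ref{cond3}(ii) and \ref{cond2}(v), gives $\sqrt T(\tilde\gamma_{i0}-\gamma_{i0})=O_{uP}(1)\,o_{uP}(T^{1/10})=o_{uP}(T^{1/10})$, hence $\overline\gamma_{i0}-\gamma_{i0}=o_{uP}(T^{-2/5})$ for any point between $\gamma_{i0}$ and $\tilde\gamma_{i0}$. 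With this rate, I split each target into an empirical-process part, which is $o_{uP}(T^{1/10})$ after multiplication by $\sqrt T$ by the first statement of Lemma \ref{lb4}, plus a population part, which by a mean-value expansion about $\gamma_{i0}$ (using $t_i^W(\theta_0,\gamma_{i0})=0$ and $T_{i,j}^W(\theta_0,\gamma_{i0})=T_{i,j}^W$, and likewise for $M$) equals $O_{uP}(1)(\overline\gamma_{i0}-\gamma_{i0})=o_{uP}(T^{-2/5})$ and so contributes $o_{uP}(T^{1/10})$ after scaling. The decisive arithmetic is $\sqrt T\cdot T^{-2/5}=T^{1/10}$, so even the $\lambda$-carrying terms, e.g. $\sqrt T\,G_{\alpha_i}'\overline\lambda_{i0}=O_{uP}(\sqrt T)\,o_{uP}(T^{-2/5})$, land exactly at the target order.

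\textbf{Main obstacle.} The routine parts are the componentwise invocations of Lemma \ref{lb4} and the $o_{uP}$ bookkeeping. The delicate step is the uniform rate $\sup_i|\tilde\gamma_{i0}-\gamma_{i0}|=o_{uP}(T^{-2/5})$, which couples the uniform empirical-process bound at the true value with the uniform invertibility of the score Jacobian; one must check that the block inverse of $T_{i,0}^W$ is bounded away from singularity uniformly in $i$ (this is where Conditions \ref{cond3}(ii) and \ref{cond2}(v) enter) and that the multiplicative $\lambda_i$ factors never inflate the order past $T^{1/10}$. One should also confirm that the exponent in Condition \ref{cond5} is large enough for the third-order derivatives of $g$ appearing in $\hat T_{i,j}^W$ and $\hat M_{i,j}^W$, which is exactly why it is set to $5(d_\theta+d_\alpha+6)/(1-10v)+\delta$.
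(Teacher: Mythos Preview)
Your proposal is correct and follows essentially the same route as the paper's proof: both establish the $o_{uP}(1)$ block by reducing the score derivatives to time-series averages of $g$-derivatives and invoking Lemma~\ref{lb4} with $a=0$, and both obtain the $\sqrt{T}$-scaled block by first deriving the uniform rate $\sqrt{T}(\tilde\gamma_{i0}-\gamma_{i0})=o_{uP}(T^{1/10})$ via a mean-value expansion of the FOC $\hat t_i^W(\theta_0,\tilde\gamma_{i0})=0$ and the uniform invertibility of $T_{i,0}^W$, and then re-applying Lemma~\ref{lb4} with $a=-2/5$. One small refinement: when you argue that $\sqrt{T}(\hat T_{i,0}^W(\theta_0,\overline\gamma_{i0})-T_{i,0}^W)=o_{uP}(T^{1/10})$, the lower-right block contains $\sqrt{T}(\hat W_i-W_i)$, and it is Condition~\ref{cond5}(ii) (the expansion $\hat W_i=W_i+T^{-1}\sum_t\xi_i(z_{it})+T^{-1}R_i^W$), not Condition~\ref{cond2}(v), that delivers the required $o_{uP}(T^{1/10})$ rate for this piece.
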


\begin{proof}
 The first set of  results follows by inspection of the scores and their derivatives (the expressions
 are given in Appendix \ref{appendixF}),  uniform consistency of $\tilde{\gamma}_{i}$
 by Theorem \ref{th1} and application of the first part of Lemma \ref{lb4} to $\theta^{*} = \tilde{\theta}$ and $\alpha_{i}^{*} = \tilde{\alpha}_{i}$ with $a=0$.

The following steps are used to prove the second set of result.  By Lemma
\ref{lb4}, 
\begin{equation*}
\sqrt{T} \hat{t}_{i}^{W}  = o_{uP} \left(
T^{1/10} \right), \ \   \hat{T}_{i}^{W}
(\theta_0, \overline{\gamma}_{i0}) -
 T_{i}^{W} = o_{uP} \left(
1 \right)
\end{equation*}
where $\overline{\gamma}_{i0}$ is between $\tilde{\gamma}_{i0}$ and
$\gamma_{i0}$. Then, a mean value expansion of the FOC of
$\tilde{\gamma}_{i0},$  $\hat{t}_{i}^{W}( \theta_{0}, \tilde \gamma_{i0}) = 0$, around $\tilde \gamma_{i0} = \gamma_{i0}$ gives 
\begin{eqnarray*}
\sqrt{T} (\tilde{\gamma}_{i0} - \gamma_{i0}) &=& -
\underset{=O_{u}(1)}{\underbrace{\left( T_{i}^{W} \right)^{-1}}}
\underset{= o_{uP}(T^{1/10})}{\underbrace{\sqrt{T}
\hat{t}_{i}^{W}}} -
\underset{=O_{u}(1)}{\underbrace{\left( T_{i}^{W} \right)^{-1}}}
\underset{=o_{uP}(1)}{\underbrace{\left(
\hat{T}_{i}^{W}(\theta_0, \overline{\gamma}_{i0}) - T_{i}^{W} \right)}}
\sqrt{T} (\tilde{\gamma}_{i0} - \gamma_{i0}) \\
&=& o_{uP}(T^{1/10}) + o_{uP}\left(\sqrt{T} (\tilde{\gamma}_{i0} -
\gamma_{i0})\right),
\end{eqnarray*}
by Condition \ref{cond3} and the previous result. Therefore,
\begin{equation*}
\left( 1 + o_{uP}(1) \right) \sqrt{T} (\tilde{\gamma}_{i0} -
\gamma_{i0}) = o_{uP}(T^{1/10}) \Rightarrow \sqrt{T}
(\tilde{\gamma}_{i0} - \gamma_{i0}) = o_{uP}(T^{1/10}).
\end{equation*}

Given this uniform rate for $\tilde{\gamma}_{i0}$, the desired
result can be obtained by applying the second part of Lemma
\ref{lb4} to $\theta^{*} = \theta_{0}$ and $\alpha_{i}^{*} =
\tilde{\alpha}_{i0}$ with $a = -2/5$.
\end{proof}


\subsection{Proof of Theorem 2}

\begin{proof}
By a mean value expansion of the FOC for $\tilde{\theta}$ around $\tilde \theta = \theta_0,$
\begin{eqnarray*}
0 = \hat s^{W}(\tilde{\theta}) = \hat s^{W}(\theta_{0}) + \frac{d
\hat s^{W}(\overline{\theta})}{d \theta'} (\tilde{\theta} -
\theta_{0}),
\end{eqnarray*}
where $\overline{\theta}$ lies between $\tilde{\theta}$ and
$\theta_{0}$.

\textbf{Part I:} Asymptotic limit of $d
\hat s^{W}(\overline{\theta})/ d \theta'$. Note that
\begin{eqnarray}
\frac{d \hat s^{W}(\overline{\theta})}{d \theta'} &=& \frac{1}{n}
\sum_{i=1}^{n} \frac{d \hat s_{i}^{W}(\overline{\theta},
\tilde{\gamma}_{i}(\overline{\theta}))}{d \theta'}, \notag \\
\frac{d \hat s_{i}^{W}(\overline{\theta},
\tilde{\gamma}_{i}(\overline{\theta}))}{d \theta'} &=&
\frac{\partial \hat s_{i}^{W}(\overline{\theta},
\tilde{\gamma}_{i}(\overline{\theta}))}{\partial \theta'} +
\frac{\partial \hat s_{i}^{W}(\overline{\theta},
\tilde{\gamma}_{i}(\overline{\theta}))}{\partial \gamma_{i}'}
\frac{\partial \tilde{\gamma}_{i} (\overline{\theta})}{\theta'}.
\label{fth3_1}
\end{eqnarray}
By  Lemma \ref{lb6},
\begin{equation*}
\frac{\partial \hat s_{i}^{W}(\overline{\theta},
\tilde{\gamma}_{i}(\overline{\theta}))}{\partial \theta'} =
S_{i}^{W} + o_{uP}(1) , \ \ \frac{\partial
\hat s_{i}^{W}(\overline{\theta},
\tilde{\gamma}_{i}(\overline{\theta}))}{\partial \gamma_{i}'} =
M_{i}^{W} + o_{uP}(1).
\end{equation*}
Then, differentiation of the FOC for $\tilde{\gamma}_{i}(\overline{\theta})$, \
$\hat{t}_{i}^{W}
(\overline{\theta}, \tilde{\gamma}_{i}(\overline{\theta})) = 0$, with
respect to $\theta$ and $\tilde{\gamma}_{i}$ gives
\begin{eqnarray*}
\hat{T}_{i}^{W} (\overline{\theta}, \tilde{\gamma}_{i}(\overline{\theta})) \frac{\partial \tilde{\gamma}_{i}
(\overline{\theta})}{\partial \theta'} + \hat{N}_{i}^{W}
(\overline{\theta}, \tilde{\gamma}_{i}(\overline{\theta})) = 0,
\end{eqnarray*}
By repeated application of Lemma \ref{lb6} and
Condition \ref{cond3}, 
\begin{eqnarray*}
\frac{\partial \tilde{\gamma}_{i} (\overline{\theta})}{\partial
\theta'} = - \left( T_{i}^{W} \right)^{-1} N_{i}^{W} + o_{uP}(1).
\end{eqnarray*}
Finally, replacing the expressions for the components in
(\ref{fth3_1}) and using the formulae for the derivatives, which are
provided in the Appendix \ref{appendixF},
\begin{equation}
\frac{d \hat s^{W}(\overline{\theta})}{d \theta'} = \frac{1}{n} \sum_{i=1}^{n} G_{\theta_{i}}'
P_{\alpha_{i}}^{W} G_{\theta_{i}} + o_{P}(1) = J_{s}^{W} + o_{P}(1),
\ J_{s}^{W} = \bar E [ G_{\theta_{i}}'
P_{\alpha_{i}}^{W} G_{\theta_{i}}] .
\label{fth3_2} \end{equation}

\textbf{Part II:} Asymptotic Expansion for $\tilde{\theta} -
\theta_{0}$. By (\ref{fth3_2}) and
Lemma \ref{lg3}, which states the stochastic expansion of $\sqrt{nT}
\hat{s}^{W}(\theta_{0})$, 
\begin{eqnarray*}
0 &=& \underset{O_{P}(1)}{\underbrace{\sqrt{nT} \hat s^{W}
(\theta_{0})}} + \underset{O_{P}(1)}{\underbrace{\frac{d
\hat s^{W}(\overline{\theta})}{d \theta'}}} \sqrt{nT} (\tilde{\theta}
- \theta_{0}).
\end{eqnarray*}
Therefore, $\sqrt{nT} (\tilde{\theta} - \theta_{0}) = O_{P}(1),$ and by part I, Lemma \ref{lg3} and Condition \ref{cond3},
$$\sqrt{nT}(\tilde{\theta}-\theta_{0}) \ind - (J_{s}^{W})^{-1} N \left( \kappa B_{s}^{W}, V_{s}^{W} \right).$$

\end{proof}

\setcounter{theorem}{2}




\subsection{Proof of Theorem 4}
\begin{proof}
Applying Lemma \ref{lb4} with a minor modification, along with
Condition \ref{cond5}, we can prove an exact counterpart to Lemma
\ref{lb6} for the two-step GMM score for the fixed effects
\begin{equation*}
\hat{t}_{i}(\theta, \gamma_{i}) = \hat{t}_{i}^{\Omega}(\theta, \gamma_{i}) + \hat{t}_{i}^{R}(\theta, \gamma_{i}),
\end{equation*}
where the expressions of $\hat{t}_{i}^{\Omega}$ and
$\hat{t}_{i}^{R}$ are given in the Appendix \ref{appendixF}, and for
the two-step score of the common parameter
\begin{equation*}
\hat{s}_{i} (\theta, \hat{\gamma}_{i}(\theta)) =  - \hat{G}_{\theta
i}(\theta,\hat{\alpha}_{i}(\theta))'\hat{\lambda}_{i}(\theta),
\end{equation*}
The only difference arises due to the term
$\hat{t}_{i}^{R}(\theta, \gamma_{i})$, which involves
$\hat{\Omega}_{i}(\tilde{\theta},\tilde{\alpha}_{i}) - \Omega_{i}$.
Lemma \ref{le3} shows that $\sqrt{T}(\hat{\Omega}_{i}(\tilde{\theta},\tilde{\alpha}_{i}) - \Omega_{i}) = o_{uP}(T^{1/10}),$
so that a result similar to Lemma \ref{lb6} holds for the
two-step scores.

Thus, we can make the same argument as in the proof of Theorem 2 using
the stochastic expansion of $\sqrt{nT} \hat{s}(\theta_{0})$
given  in Lemma \ref{li1}.
\end{proof}



\section{Asymptotic Distribution of Bias-Corrected Two-Step GMM
Estimator}\label{appendixC}

\subsection{Some Lemmas}


\begin{lemma}
\label{le2} Assume that Conditions \ref{cond1}, \ref{cond2},
\ref{cond3}, \ref{cond5}  and \ref{cond4} hold. Let
$\hat{t}_{i}(\theta, \gamma_{i})$ denote the two-step GMM score for
the fixed effects,  $\hat{s}_{i} (\theta,
\gamma_{i})$ denote the two-step GMM score for the
common parameter, and  $\hat{\gamma}_{i}(\theta)$ be such that
$\hat{t}_{i}(\theta, \hat{\gamma}_{i}(\theta)) = 0$. Let
$\hat{T}_{i,j} ( \theta,\gamma_{i})$ denote $\partial
\hat{t}_{i} ( \theta,\gamma_{i})/\partial \gamma_{i}' \partial \gamma_{i,j}$, for some
$0 \leq j \leq d_g + d_{\alpha},$ where $\gamma_{i,j}$ is the $j$th component of $\gamma_{i}$ and $j=0$ denotes no second derivative. Let $\hat{N}_{i} ( \theta,\gamma_{i})$ denote
$\partial \hat{t}_{i} ( \theta,\gamma_{i})/\partial \theta'$. Let
$\hat{M}_{i,j} ( \theta,\gamma_{i})$ denote $\partial
\hat{s}_{i} ( \theta,\gamma_{i})/\partial
\gamma_{i}' \partial \gamma_{i,j}$ , for some $0 \leq j \leq d_g + d_{\alpha}$. Let
$\hat{S}_{i} ( \theta,\gamma_{i})$ denote $ \partial
\hat{s}_{i} ( \theta,\gamma_{i}) / \partial \theta'$. Let
$(\hat{\theta},\{\hat{\gamma}_{i} \}_{i=1}^{n})$ be the two-step GMM
estimators.

Then, for any $\overline{\theta}$ between $\hat{\theta}$ and
$\theta_{0}$, and $\overline{\gamma}_{i}$ between $\hat{\gamma}_{i}$
and $\gamma_{i0}$,
\begin{eqnarray*}
\sqrt{T} \left( \hat{T}_{i,d}
(\overline{\theta},\overline{\gamma}_{i}) - T_{i,d} \right) &=&
o_{uP} \left( T^{1/10} \right), \ \ \sqrt{T} \left( \hat{M}_{i,j}
(\overline{\theta},\overline{\gamma}_{i}) -
M_{i,j} \right) = o_{uP} \left( T^{1/10} \right),\\
\sqrt{T} \left( \hat{N}_{i}
(\overline{\theta},\overline{\gamma}_{i}) -
 N_{i} \right) &=& o_{uP} \left( T^{1/10} \right), \ \
\sqrt{T} \left( \hat{S}_{i}
(\overline{\theta},\overline{\gamma}_{i}) -
 S_{i}  \right) = o_{uP} \left( T^{1/10} \right).
\end{eqnarray*}

\end{lemma}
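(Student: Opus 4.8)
The plan is to mirror the two‑part structure of the proof of Lemma \ref{lb6}, treating the estimated optimal weighting matrix as a small perturbation of the deterministic matrix $\Omega_i$. The starting point is the decomposition of the two‑step fixed‑effects score used in the proof of Theorem \ref{th4}, namely $\hat{t}_i(\theta,\gamma_i) = \hat{t}_i^{\Omega}(\theta,\gamma_i) + \hat{t}_i^{R}(\theta,\gamma_i)$, where $\hat{t}_i^{\Omega}$ has exactly the form of the one‑step score $\hat{t}_i^{W}$ with the deterministic positive definite matrix $\Omega_i$ (positive definite by Condition \ref{cond4}(ii)) in place of $\hat{W}_i$, and $\hat{t}_i^{R}$ collects the terms carrying the factor $\hat{\Omega}_i(\tilde{\theta},\tilde{\alpha}_i) - \Omega_i$. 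The key structural observation is that the weighting matrix enters the scores only linearly through the multiplier block $\hat{g}_i(\theta,\alpha_i) + \hat{\Omega}_i(\tilde{\theta},\tilde{\alpha}_i)\lambda_i$, and that $\hat{\Omega}_i(\tilde{\theta},\tilde{\alpha}_i)$ does not depend on the running arguments $(\theta,\gamma_i)$. Hence it survives only in the first derivative $\partial \hat{t}_i/\partial \lambda_i'$ and drops out of the pure second derivatives $\hat{T}_{i,j}, \hat{M}_{i,j}$ for $j \geq 1$ and out of $\hat{N}_i$ and $\hat{S}_i$.

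First I would handle $\hat{t}_i^{\Omega}$ and its derivatives. Since this piece is formally identical to the one‑step score with $W_i = \Omega_i$, the conclusions of Lemma \ref{lb6} apply verbatim under Conditions \ref{cond1}, \ref{cond2}, \ref{cond3}, \ref{cond5}: applying the first part of Lemma \ref{lb4} with $a=0$ yields the $o_{uP}(1)$ bounds at the intermediate points $(\overline{\theta},\overline{\gamma}_i)$, and a mean‑value expansion of the first order condition $\hat{t}_i(\theta_0,\hat{\gamma}_{i0}) = 0$ around $\gamma_{i0}$ delivers the uniform rate $\sqrt{T}(\hat{\gamma}_{i0} - \gamma_{i0}) = o_{uP}(T^{1/10})$, exactly as in the second part of Lemma \ref{lb6}. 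The only new ingredient is that the expansion now inverts the population Jacobian built from $\Omega_i$ rather than $W_i$, which is licensed by Condition \ref{cond3}(ii) read off the two‑step objects $\Sigma_{\alpha_i}, H_{\alpha_i}, P_{\alpha_i}$.

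Next I would control the remainder $\hat{t}_i^{R}$ together with its one surviving first derivative, the $\partial/\partial\lambda_i'$ block of $\hat{T}_{i,0}$, whose discrepancy from its limit is precisely $-(\hat{\Omega}_i(\tilde{\theta},\tilde{\alpha}_i) - \Omega_i)$. By Lemma \ref{le3}, $\sqrt{T}(\hat{\Omega}_i(\tilde{\theta},\tilde{\alpha}_i) - \Omega_i) = o_{uP}(T^{1/10})$; combined with the uniform consistency $\sup_{1\leq i \leq n}|\hat{\lambda}_i| = o_{uP}(1)$ from Theorem \ref{th4}, this shows that the remainder contributes only $o_{uP}(T^{1/10})$ terms to $\sqrt{T}\hat{t}_i$ and to $\sqrt{T}(\hat{T}_{i,0} - T_{i,0})$, while leaving the pure second derivatives untouched by the linearity noted above. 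This is where the roles of the two smoothness conditions separate: Condition \ref{cond5} governs the derivatives of $g$ appearing in the scores, whereas Condition \ref{cond4} (through Lemma \ref{le3}) governs the estimation error of $\Omega_i$, whose entries are products $g_k g_l$ and therefore require the higher moments built into Condition \ref{cond4}(i).

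Finally, having established $\sqrt{T}(\hat{\gamma}_{i0} - \gamma_{i0}) = o_{uP}(T^{1/10})$ — which places the estimator in the admissible range $a = -2/5$ of Lemma \ref{lb4} — I would apply the second part of Lemma \ref{lb4} to each of the smooth functions underlying $\hat{T}_{i,j}, \hat{M}_{i,j}, \hat{N}_i, \hat{S}_i$, whose explicit expressions are given in Appendix \ref{appendixF}, evaluated at $(\overline{\theta},\overline{\gamma}_i)$. This yields $\sqrt{T}(\hat{T}_{i,j}(\overline{\theta},\overline{\gamma}_i) - T_{i,j}) = o_{uP}(T^{1/10})$ and the analogous statements for $\hat{M}_{i,j}, \hat{N}_i, \hat{S}_i$, combining the $\hat{t}_i^{\Omega}$ and $\hat{t}_i^{R}$ pieces. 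I expect the main obstacle to be bookkeeping rather than conceptual: one must verify that after each differentiation the relevant envelope $M(z_{it})$ still satisfies the moment requirement of Lemma \ref{lb4} (this is exactly why Condition \ref{cond5} bounds derivatives up to fifth order), and that the $o_{uP}(T^{1/10})$ rate imported from the weighting‑matrix error does not degrade when carried through the multiplier block. The martingale structure of Condition \ref{cond4}(iii), which simplifies $\Omega_i$ to $\Omega_{0i}$, is what ultimately makes the estimation error in Lemma \ref{le3} small enough for the argument to close.
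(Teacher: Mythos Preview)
Your decomposition $\hat{t}_i = \hat{t}_i^{\Omega} + \hat{t}_i^{R}$ and the use of Lemma \ref{le3} to control the weighting-matrix error are exactly right, and match the paper's treatment of the two-step score. However, there is a gap in the final step. You establish $\sqrt{T}(\hat{\gamma}_{i0} - \gamma_{i0}) = o_{uP}(T^{1/10})$ for $\hat{\gamma}_{i0} = \hat{\gamma}_i(\theta_0)$, and then invoke Lemma \ref{lb4} with $a=-2/5$ at the intermediate points $(\overline{\theta}, \overline{\gamma}_i)$. But the statement of Lemma \ref{le2} places $\overline{\gamma}_i$ between $\gamma_{i0}$ and $\hat{\gamma}_i = \hat{\gamma}_i(\hat{\theta})$, not $\hat{\gamma}_{i0}$. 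Lemma \ref{lb4} at $a = -2/5$ requires the endpoint $\alpha_i^*$ to be within $o_{uP}(T^{-2/5})$ of $\alpha_{i0}$, so you need the rate for $\hat{\gamma}_i$, not just for $\hat{\gamma}_{i0}$. Note also that the second part of Lemma \ref{lb6}, which you are mirroring, only delivers the $o_{uP}(T^{1/10})$ rate at points $(\theta_0, \overline{\gamma}_{i0})$ with $\overline{\gamma}_{i0}$ between $\gamma_{i0}$ and $\tilde{\gamma}_{i0}$; it says nothing about general $(\overline{\theta},\overline{\gamma}_i)$.

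The paper closes this gap with a step you omit: a mean-value expansion in $\theta$ gives
\[
\sqrt{T}(\hat{\gamma}_i - \hat{\gamma}_{i0}) \;=\; \frac{\partial\hat{\gamma}_i(\overline{\theta})}{\partial\theta'}\,\sqrt{T}(\hat{\theta} - \theta_0),
\]
and since $\partial\hat{\gamma}_i(\overline{\theta})/\partial\theta' = -(T_i^{\Omega})^{-1}N_i + o_{uP}(1) = O_{uP}(1)$ (from the proofs of Theorems 2 and 4) while $\sqrt{T}(\hat{\theta} - \theta_0) = O_P(n^{-1/2})$, one obtains $\sqrt{T}(\hat{\gamma}_i - \hat{\gamma}_{i0}) = O_{uP}(n^{-1/2})$. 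Combining this with your rate for $\hat{\gamma}_{i0}$ yields $\sqrt{T}(\hat{\gamma}_i - \gamma_{i0}) = o_{uP}(T^{1/10})$, which is what Lemma \ref{lb4} actually needs at the points $(\overline{\theta}, \overline{\gamma}_i)$. Once this bridge is in place, the rest of your argument goes through as written.
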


\begin{proof}
Let $\hat{\gamma}_i = \hat{\gamma}_i(\hat{\theta})$ and
$\hat{\gamma}_{i0}  = \hat{\gamma}_i(\theta_0)$. First, note that
\begin{eqnarray*}
\sqrt{T} (\hat{\gamma}_{i} - \hat{\gamma}_{i0}) = \frac{\partial
\hat{\gamma}_{i}(\overline{\theta})}{\partial \theta'} \sqrt{T}
(\hat{\theta} - \theta_{0}) = - \underset{=
O_{u}(1)}{\underbrace{\left( T_{i}^{\Omega} \right)^{-1} N_{i}}}
\underset{=O_{P}(n^{-1/2})}{\underbrace{\sqrt{T} (\hat{\theta} -
\theta_{0})}} + o_{uP} \left(\sqrt{T} (\hat{\theta} - \theta_{0})
\right)= O_{uP}(n^{-1/2}).
\end{eqnarray*}
where the second equality follows from the proof of Theorem 2 and 4.
Thus, by the same argument used in the proof of Lemma \ref{lb6}, 
\begin{equation*}
\sqrt{T} (\hat{\gamma}_{i} - \gamma_{i0}) = \sqrt{T} (\hat{\gamma}_{i} - \hat{\gamma}_{i0}) + \sqrt{T} (\hat{\gamma}_{i0} - \gamma_{i0}) = o_{uP}(T^{1/10}).
\end{equation*}
Given this result and inspection of the scores and their
derivatives (see the Appendix \ref{appendixF}), the proof is similar
to the proof of the second part of Lemma \ref{lb6}.
\end{proof}


\begin{lemma}
\label{ld0}  Assume that Condition \ref{cond1} holds.  Let $h_{j}(z_{it};\theta,\alpha_{i}), \ j = 1,2$ be two
functions such that (i) $h_{j}(z_{it};\theta,\alpha_{i})$ is continuously differentiable in $(\theta,\alpha_{i}) \in
\Upsilon \subset \mathbb{R}^{d_{\theta} + d_{\alpha}}$; (ii) $\Upsilon$ is convex; (iii) there exists a function
$M(z_{it})$ such that $\left| h_{j}(z_{it};\theta,\alpha_{i})
\right| \leq M(z_{it})$ and $\left| \partial h_{j}(z_{it};\theta,\alpha_{i})/ \partial (\theta, \alpha_i)
\right| \leq M(z_{it})$ with $E \left[ M(z_{it})^{10(d_{\theta}+d_{\alpha}+6)/(1-10v) + \delta}
\right] < \infty$ for some $\delta > 0$ and $0 < v < 1/10$.  Define $\hat{F}_{i}(\theta,\alpha_{i}) :=
T^{-1} \sum_{t=1}^{T} h_1(z_{it};\theta,\alpha_{i})h_2(z_{it};\theta,\alpha_{i})$, and
$F_{i}(\theta,\alpha_{i}) := E \left[ \hat{F}_{i}(\theta,\alpha_{i})
\right]$. Let
\begin{equation*}
\alpha_{i}^{*}  = \arg \sup_{\alpha} \hat{Q}_{i}^{W} (\theta^{*},
\alpha),
\end{equation*}
such that $\alpha_{i}^{*} - \alpha_{i0} = o_{uP}(T^{a_{\alpha}})$
and $\theta^{*} - \theta_{0} = o_{P} (T^{a_{\theta}})$, with
$-2/5 \leq a \leq 0$, for $a =
\max(a_{\alpha},a_{\theta})$.  Then, for any $\overline{\theta}$ between $\theta^{*}$ and
$\theta_{0}$, and $\overline{\alpha}_{i}$ between $\alpha_{i}^{*}$ and
$\alpha_{i0}$, 
\begin{equation*}
\hat{F}_{i}(\overline{\theta},\overline{\alpha}_{i}) -
 F_{i}(\theta_{0},\alpha_{i0})  = o_{uP}(T^{a}), \ \ \sqrt{T}[\hat{F}_{i}(\overline{\theta},\overline{\alpha}_{i}) -
 F_{i}(\overline{\theta},\overline{\alpha}_{i})] = o_{uP}(T^{1/10}).
\end{equation*}
\end{lemma}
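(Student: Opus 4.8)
The plan is to deduce this product statement directly from Lemma \ref{lb4} by treating the product $h_1 h_2$ as a single dominated function with a dominated derivative. First I would set $h(z_{it};\theta,\alpha_i) := h_1(z_{it};\theta,\alpha_i) h_2(z_{it};\theta,\alpha_i)$, so that $\hat F_i(\theta,\alpha_i) = T^{-1}\sum_{t=1}^T h(z_{it};\theta,\alpha_i) =: \hat H_i(\theta,\alpha_i)$ and $F_i(\theta,\alpha_i) = E[\hat F_i(\theta,\alpha_i)] = H_i(\theta,\alpha_i)$. With this identification the two displayed conclusions of the present lemma are exactly the two conclusions of Lemma \ref{lb4} (listed in the opposite order), so it suffices to verify that $h$ satisfies hypotheses (i)--(iii) of Lemma \ref{lb4} for a suitable dominating function; the constraints on $\theta^*$, $\alpha_i^*$ and the admissible range $-2/5 \leq a \leq 0$ are identical in the two statements and so carry over verbatim.

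The key step is constructing the dominating function. Since $h_1$ and $h_2$ are each continuously differentiable in $(\theta,\alpha_i) \in \Upsilon$, so is their product, giving (i); convexity of $\Upsilon$ is assumed, giving (ii). For (iii), the factorwise bounds give $|h| = |h_1 h_2| \leq M^2$, and the product rule gives $\partial h/\partial(\theta,\alpha_i) = h_2\,\partial h_1/\partial(\theta,\alpha_i) + h_1\,\partial h_2/\partial(\theta,\alpha_i)$, whence $|\partial h/\partial(\theta,\alpha_i)| \leq 2M^2$. Setting $\tilde M := 2M^2$, both $h$ and its derivative are dominated by $\tilde M$.

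The one point that needs care is the moment condition, and this is precisely where the doubled exponent in the hypothesis of the present lemma is used. I would take $\delta' := \delta/2$ and compute, with $p := 5(d_\theta+d_\alpha+6)/(1-10v) + \delta'$, that $\tilde M^{p} = 2^{p} M^{2p}$ where $2p = 10(d_\theta+d_\alpha+6)/(1-10v) + \delta$, so that $E[\tilde M^{p}] = 2^{p}\,E[M^{10(d_\theta+d_\alpha+6)/(1-10v)+\delta}] < \infty$ by assumption. Thus $\tilde M$ meets the moment requirement of Lemma \ref{lb4} for the function $h$.

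Having verified (i)--(iii) for $h$ with dominating function $\tilde M$, I would apply Lemma \ref{lb4} to $h$ to obtain $\hat H_i(\overline\theta,\overline\alpha_i) - H_i(\theta_0,\alpha_{i0}) = o_{uP}(T^a)$ and $\sqrt{T}[\hat H_i(\overline\theta,\overline\alpha_i) - H_i(\overline\theta,\overline\alpha_i)] = o_{uP}(T^{1/10})$, which are exactly the two claimed statements after re-expressing $\hat H_i = \hat F_i$ and $H_i = F_i$. There is no genuine analytic obstacle here: the substantive content reduces to the bookkeeping that squaring $M$ merely doubles the required moment exponent, which the hypothesis was designed to supply, after which the rates follow from Lemma \ref{lb4} (hence ultimately from Lemma 2 in HK and a mean value expansion).
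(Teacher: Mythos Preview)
Your proposal is correct and matches the paper's own proof, which simply says ``Same as for Lemma \ref{lb4}, replacing $H_{i}$ by $F_{i}$, and $M(z_{it})$ by $M(z_{it})^{2}$.'' Your version is in fact a bit more careful in tracking the factor of $2$ from the product rule and in spelling out how the doubled moment exponent is consumed, but the underlying idea is identical.
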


\begin{proof}
Same as for Lemma \ref{lb4}, replacing $H_{i}$ by $F_{i}$, and
$M(z_{it})$ by $M(z_{it})^{2}$.
\end{proof}


\begin{lemma}
\label{le3} Assume that Conditions \ref{cond1}, \ref{cond2},
\ref{cond3}, \ref{cond5}, \ref{cond4}, and \ref{cond7} hold. Let 
$\hat{\Omega}_{i}(\overline{\theta}, \overline{\alpha}_{i}) = T^{-1} \sum_{t=1}^{T} g(z_{it};\overline{\theta},
\overline{\alpha}_{i}) g(z_{it};\overline{\theta},
\overline{\alpha}_{i})'$  be
an estimator of the covariance function $\Omega_i = E[g(z_{it})g(z_{it})'],$
where $\overline{\theta}= \theta_0 + o_P(T^{-2/5})$ and $\overline{\alpha}_{i} = \alpha_{i0} + o_{uP}(T^{-2/5})$. Let $\hat{\Omega}_{\alpha^{d_{1}} \theta^{d_{2}}_i}
(\overline{\theta}, \overline{\alpha}_{i}) = \partial^{d_{1}+d_{2}} \hat{\Omega}_{i}(\overline{\theta},
\overline{\alpha}_{i}) / \partial^{d_{1}} \alpha_{i}
\partial^{d_{2}} \theta,$
for $0 \leq d_{1} + d_{2} \leq 2$. Then, 
\begin{eqnarray*}
\sqrt{T} \left( \hat{\Omega}_{\alpha^{d_{1}} \theta^{d_{2}} _i}
(\overline{\theta}, \overline{\alpha}_{i}) - \Omega_{\alpha^{d_{1}}
\theta^{d_{2}} _i} \right)&=& o_{up} \left( T^{1/10}
\right).
\end{eqnarray*}
\end{lemma}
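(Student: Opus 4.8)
The plan is to reduce the statement to an entrywise application of Lemma \ref{ld0}, since $\hat{\Omega}_i$ is built from products $g_k g_l$ and its derivatives are finite sums of products of derivatives of $g$. First I would write the $(k,l)$ entry of $\hat{\Omega}_i(\theta,\alpha_i)$ as $T^{-1}\sum_{t=1}^{T} g_k(z_{it};\theta,\alpha_i)\, g_l(z_{it};\theta,\alpha_i)$ and differentiate $d_1$ times in $\alpha_i$ and $d_2$ times in $\theta$, with $d_1+d_2\le 2$. By the product rule this produces a finite sum of terms, each of the form $T^{-1}\sum_{t=1}^{T}(\partial^{a} g_k)(\partial^{b} g_l)$, where $\partial^{a},\partial^{b}$ denote mixed partials in $(\theta,\alpha_i)$ whose total orders satisfy $|a|+|b| = d_1+d_2\le 2$. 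Each such term is exactly an $\hat{F}_i(\theta,\alpha_i)$ of the type appearing in Lemma \ref{ld0}, with $h_1 = \partial^{a} g_k$ and $h_2 = \partial^{b} g_l$.

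Next I would verify the hypotheses of Lemma \ref{ld0} for these $h_1,h_2$. Since $|a|,|b|\le 2$, the functions $h_j$ and their first derivatives $\partial h_j$, which are derivatives of $g$ of order at most $3$, are continuously differentiable and dominated by $M(z_{it})$ by the smoothness part of Condition \ref{cond7}, which supplies bounds on derivatives of $g$ up to order $5$; the convexity of $\Upsilon$ is Condition \ref{cond2}(ii), and the moment requirement $\sup_i E[M(z_{it})^{10(d_\theta+d_\alpha+6)/(1-10v)+\delta}] < \infty$ of Lemma \ref{ld0} is precisely the moment bound in Condition \ref{cond7}. The sequences $\bar\theta = \theta_0 + o_P(T^{-2/5})$ and $\bar\alpha_i = \alpha_{i0} + o_{uP}(T^{-2/5})$ correspond to the rates $a_\theta = a_\alpha = -2/5$ in Lemma \ref{ld0}, so that $a = -2/5$. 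The argmax characterization in Lemma \ref{ld0} serves only to fix these rates; inspection of its proof shows that the conclusion depends on $(\bar\theta,\bar\alpha_i)$ only through the orders $o_P(T^{-2/5})$ and $o_{uP}(T^{-2/5})$, so it applies verbatim to the generic sequences here.

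Then I would invoke the first conclusion of Lemma \ref{ld0}, namely $\hat{F}_i(\bar\theta,\bar\alpha_i) - F_i(\theta_0,\alpha_{i0}) = o_{uP}(T^{-2/5})$, for each of the finitely many product terms. Interchanging the order-$(d_1+d_2)$ differentiation with the expectation defining $\Omega_i$, which is justified by the domination of the relevant derivatives by $M(z_{it})^2$ (integrable under Condition \ref{cond7}), identifies the sum of the $F_i(\theta_0,\alpha_{i0})$ with the corresponding entry of $\Omega_{\alpha^{d_{1}}\theta^{d_{2}}_i}$, and the sum of the $\hat{F}_i(\bar\theta,\bar\alpha_i)$ with the corresponding entry of $\hat{\Omega}_{\alpha^{d_{1}}\theta^{d_{2}}_i}(\bar\theta,\bar\alpha_i)$. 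Summing the finitely many terms, multiplying through by $\sqrt{T}$, and using that $d_g$ is fixed then gives $\sqrt{T}(\hat{\Omega}_{\alpha^{d_{1}}\theta^{d_{2}}_i}(\bar\theta,\bar\alpha_i) - \Omega_{\alpha^{d_{1}}\theta^{d_{2}}_i}) = \sqrt{T}\, o_{uP}(T^{-2/5}) = o_{uP}(T^{1/10})$ entrywise, hence in Euclidean norm.

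Once the reduction to Lemma \ref{ld0} is in place the argument is essentially bookkeeping, so I do not expect a serious obstacle. The most delicate points are purely technical: tracking the finitely many terms generated by the product rule and confirming that none of them calls for derivatives of $g$ beyond order $3$, so that Condition \ref{cond7} is enough; and justifying the interchange of differentiation and expectation in the identification of $\Omega_{\alpha^{d_{1}}\theta^{d_{2}}_i}$. Both are controlled by the uniform domination by $M(z_{it})$ together with its high-order moment bound, exactly as in the proof of Lemma \ref{ld0}.
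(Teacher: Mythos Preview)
Your proposal is correct and follows essentially the same route as the paper: reduce to entrywise products $g_k g_l$ (and, for the derivatives, to finite sums of products of partial derivatives of $g$), then apply Lemma~\ref{ld0} with $a=-2/5$ to each such product. The paper's proof is simply a terser version of exactly this argument.
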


\begin{proof}
Note that
\begin{multline*}
\left| g(z_{it};\overline{\theta}, \overline{\alpha}_{i})
g(z_{it};\overline{\theta}, \overline{\alpha}_{i})' - E \left[
g(z_{it};\overline{\theta}, \overline{\alpha}_{i})
g(z_{it};\overline{\theta}, \overline{\alpha}_{i})' \right]
\right| \\
\leq  d_g^{2} \max_{1 \leq k \leq l \leq d_g} \left|
g_{k}(z_{it};\overline{\theta}, \overline{\alpha}_{i})
g_{l}(z_{it};\overline{\theta}, \overline{\alpha}_{i})' - E \left[
g_{k}(z_{it};\overline{\theta}, \overline{\alpha}_{i})
g_{l}(z_{it};\overline{\theta}, \overline{\alpha}_{i})' \right]
\right|.
\end{multline*}
Then we can apply Lemma \ref{ld0} to $h_1 = g_{k}$ and $h_2 = g_{l}$ with $a = -2/5$. A similar argument applies to the derivatives, since they
are sums of products of elements that satisfy the assumption of 
Lemma \ref{ld0}.
\end{proof}


\begin{lemma}
\label{le5} Assume that Conditions \ref{cond1}, \ref{cond2},
\ref{cond3}, \ref{cond5}, \ref{cond4}, and \ref{cond7} hold, and $\ell \to \infty$ such that $\ell/T \to 0$ as $T \to \infty$. For any $\overline{\theta}$ 
 between
$\hat{\theta}$ and $\theta_{0},$ let
$\hat{\Sigma}_{\alpha_{i}}
\left(\overline{\theta}\right) =
\left[\hat{G}_{\alpha_{i}}\left(\overline{\theta}\right)'
\hat{\Omega}_{i}^{-1}
\hat{G}_{\alpha_{i}}\left(\overline{\theta}\right)
\right]^{-1},$  $\hat{H}_{\alpha_{i}}
\left(\overline{\theta}\right) =
\hat{\Sigma}_{\alpha_{i}}\left(\overline{\theta}\right)
\hat{G}_{\alpha_{i}}\left(\overline{\theta}\right)'
\hat{\Omega}_{i}^{-1},$
$\hat{P}_{\alpha_{i}}
\left(\overline{\theta}\right) =
\hat{\Omega}_{i}^{-1}
- \hat{\Omega}_{i}
^{-1}
\hat{G}_{\alpha_{i}}
\left(\overline{\theta}\right)
\hat{H}_{\alpha_{i}}
\left(\overline{\theta}\right),$ 
$\hat{\Sigma}_{\alpha_{i}}^{W}
\left(\overline{\theta}\right) =
\left[\hat{G}_{\alpha_{i}}
\left(\overline{\theta}\right)' W_{i}^{-1}
\hat{G}_{\alpha_{i}}
\left(\overline{\theta}\right)
\right]^{-1},$
$\hat{H}_{\alpha_{i}}^{W}
\left(\overline{\theta}\right) =
\hat{\Sigma}_{\alpha_{i}}^{W}
\left(\overline{\theta}\right)
\hat{G}_{\alpha_{i}}
\left(\overline{\theta}\right)' W_{i}^{-1},$
 $\hat{J}_{si}\left(\overline{\theta}\right) =
\hat{G}_{\theta_{i}}
\left(\overline{\theta}\right)'
\hat{P}_{\alpha_{i}}
\left(\overline{\theta}\right)
 \hat{G}_{\theta_{i}} \left(\overline{\theta}\right),$
$\hat{B}_{si}^C(\bar \theta) = T^{-1} \sum_{j=0}^{\ell} \sum_{t=j+1}^T \hat G_{\theta_{it}}(\bar \theta)'
\hat P_{\alpha_{i}}(\bar \theta) \hat g_{i,t-j}(\bar \theta),$ and $\hat B_{si}^{B}(\bar \theta) = - \hat G_{\theta_{i}}(\bar \theta)' [
\hat B_{\lambda_{i}}^{I}(\bar \theta)
 + \hat B_{\lambda_{i}}^{G}(\bar \theta) + \hat B_{\lambda_{i}}^{\Omega}(\bar \theta)+ \hat B_{\lambda_{i}}^{W}(\bar \theta)]$, where
\begin{footnotesize}\begin{eqnarray*}
\hat B_{\lambda_{i}}^{I}(\theta) &=& -  \hat P_{\alpha_{i}}(\theta) \sum_{j=1}^{d_{\alpha}} \hat G_{\alpha\alpha_{i,j}}(\theta)
\hat \Sigma_{\alpha_{i}}(\theta)/2
   + \hat P_{\alpha_{i}}(\theta) \sum_{j=0}^{\ell} T^{-1} \sum_{t=j+1}^T \hat G_{\alpha_{it}}(\theta) \hat H_{\alpha_{i}}(\theta) \hat g_{i,t-j}(\theta),\\
\hat B_{\lambda_{i}}^{G}(\theta) &=&  \hat H_{\alpha_{i}}(\theta)' \sum_{j=0}^{\infty}  T^{-1} \sum_{t=j+1}^T  \hat G_{\alpha_{it}}(\theta)' \hat P_{\alpha_{i}}(\theta) \hat g_{i,t-j}(\theta),   \\
\hat B_{\lambda_{i}}^{\Omega}(\theta) &=&   \hat P_{\alpha_{i}}(\theta) \sum_{j=0}^{\ell}  T^{-1} \sum_{t=j+1}^T \hat g_{it}(\theta) \hat g_{it}(\theta)'\hat P_{\alpha_{i}}(\theta) \hat g_{i,t-j}(\theta),\\
\hat B_{\lambda_{i}}^{W}(\theta) &=&   \hat P_{\alpha_{i}}(\theta) \sum_{j=1}^{d_{\alpha}} \hat \Omega_{\alpha_{i,j}}[ \hat H_{\alpha_{i,j}}^{W'}(\theta) - \hat H_{\alpha_{i,j}}^{'}(\theta)],
\end{eqnarray*}\end{footnotesize}be estimators of
$\Sigma_{\alpha_{i}}$, $H_{\alpha_i}, P_{\alpha_i},$ $\Sigma_{\alpha_{i}}^{W},$
$H_{\alpha_{i}}^{W},$ $J_{si},$ $B_{si}^{C}$
and $B_{si}^{B}.$
Let $\hat{F}_{\alpha^{d_{1}} \theta^{d_{2}}i}
(\theta, \hat{\alpha}_{i}(\theta))$ and $F_{\alpha^{d_{1}}
\theta^{d_{2}}i}(\theta, \alpha_i)$, with $F \in \left\{ \Sigma, H, P, \Sigma^{W},
H^{W}, J_{si}, B_{si}^C,  B_{si}^B \right\}$ denote their derivatives for $0 \leq
d_{1} + d_{2} \leq 1$. Then,
$$
\sqrt{T} \left( \hat{F}_{\alpha^{d_{1}} \theta^{d_{2}}i}
\left(\overline{\theta},\hat{\alpha}_{i}(\overline{\theta}) \right) -
F_{\alpha^{d_{1}} \theta^{d_{2}}i} \right) = o_{uP} \left(
T^{1/10} \right).
$$
where $F_{\alpha^{d_{1}} \theta^{d_{2}}i} := F$ if $d_1+d_2=0$.
\end{lemma}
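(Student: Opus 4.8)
The plan is to regard every object $F \in \{\Sigma, H, P, \Sigma^{W}, H^{W}, J_{si}, B_{si}^{C}, B_{si}^{B}\}$, and each of its first-order derivatives, as a smooth function of a fixed, finite list of elementary sample averages, to show that each elementary average converges to its population value at rate $o_{uP}(T^{1/10})$ after multiplication by $\sqrt{T}$ and uniformly in $i$, and then to propagate this rate through the defining algebraic maps. The elementary averages are the sample Jacobians $\hat{G}_{\alpha_{i}}$ and $\hat{G}_{\theta_{i}}$, the sample second derivatives $\hat{G}_{\alpha\alpha_{i,j}}$ and $\hat{G}_{\theta\alpha_{i,j}}$, the covariance estimator $\hat{\Omega}_{i}$ together with its derivatives $\hat{\Omega}_{\alpha_{i,j}}$, the deterministic matrices $W_{i}$, and the trimmed spectral averages appearing inside $\hat{B}_{\lambda_{i}}^{I}$, $\hat{B}_{\lambda_{i}}^{G}$, $\hat{B}_{\lambda_{i}}^{\Omega}$ and $\hat{B}_{si}^{C}$.

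I would first pin down the relevant stochastic rates at the evaluation point. Since $\hat{\theta} - \theta_{0} = O_{P}((nT)^{-1/2})$ by Theorem \ref{th6} and $n/T \to \kappa^{2} \in (0,\infty)$ by Condition \ref{cond1}(iii), any $\overline{\theta}$ between $\hat{\theta}$ and $\theta_{0}$ satisfies $\overline{\theta} - \theta_{0} = o_{P}(T^{-2/5})$. From the expansion established in the proof of Lemma \ref{le2}, $\sqrt{T}(\hat{\gamma}_{i}(\overline{\theta}) - \gamma_{i0}) = o_{uP}(T^{1/10})$, so $\hat{\alpha}_{i}(\overline{\theta}) - \alpha_{i0} = o_{uP}(T^{-2/5})$. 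Hence $(\overline{\theta}, \hat{\alpha}_{i}(\overline{\theta}))$ lies within $o_{uP}(T^{-2/5})$ of $(\theta_{0}, \alpha_{i0})$, which is the regime $a = -2/5$ of Lemmas \ref{lb4}, \ref{ld0} and \ref{le3}; the preliminary estimator $(\tilde{\theta}, \tilde{\alpha}_{i})$ entering $\hat{\Omega}_{i}$ obeys the same bound by Theorem \ref{th1} and Lemma \ref{lb6}. For the single-function averages $\hat{G}_{\alpha_{i}}, \hat{G}_{\theta_{i}}, \hat{G}_{\alpha\alpha_{i,j}}, \hat{G}_{\theta\alpha_{i,j}}$ the first part of Lemma \ref{lb4} gives $\sqrt{T}[\hat{H}_{i}(\overline{\theta},\hat{\alpha}_{i}(\overline{\theta})) - H_{i}(\overline{\theta},\hat{\alpha}_{i}(\overline{\theta}))] = o_{uP}(T^{1/10})$, while a mean-value bound $|H_{i}(\overline{\theta}, \hat{\alpha}_{i}(\overline{\theta})) - H_{i}(\theta_{0}, \alpha_{i0})| = o_{uP}(T^{-2/5})$ (from the bounded derivatives of $H_{i}$ and the rates above) removes the gap between the population value at $(\overline{\theta}, \hat{\alpha}_{i}(\overline{\theta}))$ and at $(\theta_{0}, \alpha_{i0})$; the required moments are supplied by Condition \ref{cond5}, or the stronger Condition \ref{cond7}. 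Lemma \ref{le3}, which covers derivatives through second order, delivers the same rate for $\hat{\Omega}_{i}$ and $\hat{\Omega}_{\alpha_{i,j}}$.

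The delicate step, and the one I expect to be the main obstacle, is the trimmed spectral averages. For a term such as $T^{-1}\sum_{j=0}^{\ell}\sum_{t=j+1}^{T}\hat{G}_{\theta_{it}}(\overline{\theta}, \hat{\alpha}_{i}(\overline{\theta}))'\hat{P}_{\alpha_{i}}(\overline{\theta})\hat{g}_{i,t-j}(\overline{\theta})$, I would split its deviation from the limit $\sum_{j=0}^{\infty}E[G_{\theta_{i}}(z_{it})'P_{\alpha_{i}}g(z_{i,t-j})]$ into a truncation part $\sum_{j>\ell}$ and a sampling part. The truncation part is bounded by $C\sum_{j>\ell}a^{j} = O(a^{\ell})$ through the exponential strong-mixing decay of Condition \ref{cond1}(i) combined with the uniform moment bounds of Conditions \ref{cond2} and \ref{cond4} and a covariance inequality, hence is negligible relative to $T^{1/10-1/2}$ as $\ell \to \infty$. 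The sampling part is treated lag by lag: each $T^{-1}\sum_{t}h_{1}(z_{it})h_{2}(z_{i,t-j})$ is a product average to which the product-form uniform law of large numbers, Lemma \ref{ld0}, applies, its stronger moment requirement being met by Conditions \ref{cond4} and \ref{cond7}; summing the $o_{uP}(T^{1/10})$ contributions over the $\ell = o(T)$ lags preserves the order, as in the spectral-estimator arguments of Hahn and Kuersteiner (2011). The accumulation of errors over a growing number of lags, uniformly in $i$, is where most of the technical effort concentrates.

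Finally I would assemble the pieces. By Conditions \ref{cond3}(ii) and \ref{cond4}(ii) the matrices $G_{\alpha_{i}}'\Omega_{i}^{-1}G_{\alpha_{i}}$, $G_{\alpha_{i}}'W_{i}^{-1}G_{\alpha_{i}}$, $\Omega_{i}$ and $W_{i}$ are finite and positive definite uniformly in $i$, so the inverses entering $\Sigma_{\alpha_{i}}, H_{\alpha_{i}}, P_{\alpha_{i}}, \Sigma_{\alpha_{i}}^{W}, H_{\alpha_{i}}^{W}$ are well defined in the limit, and the identity $\hat{A}^{-1} - A^{-1} = -A^{-1}(\hat{A}-A)\hat{A}^{-1}$ transfers the $o_{uP}(T^{1/10})$ rate from $\hat{A}$ to $\hat{A}^{-1}$. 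Since each $F$ is built from the elementary averages by sums, products and such inverses, and every factor is $O_{uP}(1)$ with an $o_{uP}(T^{1/10})$ deviation, the algebra of $o_{uP}$ and $O_{uP}$ orders (the extended continuous-mapping argument already used in the proof of Theorem \ref{th3}) gives $\sqrt{T}(\hat{F} - F) = o_{uP}(T^{1/10})$. The first-derivative cases $d_{1}+d_{2}=1$ follow by differentiating the closed forms: differentiation raises the order of the $g$-derivatives by one, still at most the fifth allowed by Conditions \ref{cond5} and \ref{cond7}, and brings in $\hat{\Omega}_{\alpha_{i,j}}$, which is within the second-order reach of Lemma \ref{le3}, so every elementary piece again converges at the required rate and the same assembly applies.
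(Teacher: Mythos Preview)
Your proposal is correct and follows essentially the same approach as the paper's own proof, which is a terse one-sentence reference to Theorem \ref{th4}, Lemma \ref{le2}, the algebra of $o_{uP}$ orders, and Lemma 12 of Hahn and Kuersteiner (2011) for the spectral averages. You have simply unpacked each of these ingredients: Lemma \ref{le2} gives the rate for $\hat{\alpha}_{i}(\overline{\theta})$, Lemmas \ref{lb4}, \ref{ld0}, \ref{le3} handle the elementary averages, the $o_{uP}$ algebra propagates the rate through the defining maps, and the Hahn--Kuersteiner spectral argument is precisely what controls the growing-lag accumulation you flag as the main obstacle.
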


\begin{proof}
The results follow by Theorem \ref{th4} and Lemma \ref{le2}, using the algebraic properties
of the $o_{uP}$ orders and Lemma 12 of HK to show the properties of the estimators of the spectral expectations.
\end{proof}


\begin{lemma}
\label{le8} Assume that Conditions \ref{cond1}, \ref{cond2},
\ref{cond3}, \ref{cond5}, \ref{cond4}, and \ref{cond7} hold. Then, for any $\overline{\theta}$ between $\hat{\theta}$ and
$\theta_{0},$
$$
\hat{J}_{s} \left( \overline{\theta} \right) =  J_{s} + o_P(T^{-2/5}).
$$
\end{lemma}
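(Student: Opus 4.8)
The plan is to write $\hat J_s(\overline\theta) - J_s$ as the sum of a term that is controlled uniformly over $i$ by Lemma \ref{le5} and a purely cross-sectional sampling error, and then show that each piece is $o_P(T^{-2/5})$. Writing $\hat J_{si}(\theta) = \hat G_{\theta_i}(\theta)' \hat P_{\alpha_i}(\theta) \hat G_{\theta_i}(\theta)$ and $J_{si} = G_{\theta_i}' P_{\alpha_i} G_{\theta_i}$, so that $\hat J_s(\overline\theta) = n^{-1}\sum_{i=1}^n \hat J_{si}(\overline\theta)$ and $J_s = \bar E[J_{si}]$, I would decompose
\begin{equation*}
\hat J_s(\overline\theta) - J_s = \underbrace{\frac1n \sum_{i=1}^n \left( \hat J_{si}(\overline\theta) - J_{si} \right)}_{(A)} + \underbrace{\left( \frac1n \sum_{i=1}^n J_{si} - \bar E[J_{si}] \right)}_{(B)}.
\end{equation*}

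For term $(A)$ I would apply Lemma \ref{le5} with $F = J_{si}$ and $d_1 = d_2 = 0$. The hypothesis of that lemma requires $\overline\theta$ to lie between $\hat\theta$ and $\theta_0$, which is exactly the hypothesis of the present lemma, so it applies verbatim and yields $\sqrt T\,(\hat J_{si}(\overline\theta) - J_{si}) = o_{uP}(T^{1/10})$, i.e. $\sup_{1\le i\le n}|\hat J_{si}(\overline\theta) - J_{si}| = o_P(T^{-2/5})$. Since $|(A)| \le \sup_{1\le i\le n}|\hat J_{si}(\overline\theta) - J_{si}|$ by the triangle inequality, term $(A)$ is $o_P(T^{-2/5})$.

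For term $(B)$ I would use that $\{z_i\}$ are independent across $i$ by Condition \ref{cond1}(ii), so that $\{J_{si}\}$ are independent with common mean $\bar E[J_{si}] = J_s$; the moment and positive-definiteness conditions (Conditions \ref{cond2} and \ref{cond3}) ensure $\sup_i \bar E|J_{si}|^2 < \infty$, so a Chebyshev argument gives $\mathrm{Var}[(B)] = n^{-2}\sum_{i=1}^n \mathrm{Var}[J_{si}] = O(n^{-1})$ and hence $(B) = O_P(n^{-1/2})$. Invoking Condition \ref{cond1}(iii), $n/T \to \kappa^2 \in (0,\infty)$, so $n^{-1/2} = O(T^{-1/2})$; and since $1/2 > 2/5$ we have $T^{-1/2} = o(T^{-2/5})$, whence $(B) = o_P(T^{-2/5})$. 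Combining the bounds on $(A)$ and $(B)$ gives $\hat J_s(\overline\theta) - J_s = o_P(T^{-2/5})$.

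This lemma is essentially a wrap-up corollary of Lemma \ref{le5}, so I do not expect a genuine obstacle. The only new ingredient is controlling the cross-sectional average deviation $(B)$, and the one point I would be careful about is the rate comparison: the target tolerance $T^{-2/5}$ is deliberately coarser than the $\sqrt{T}$-rate of the individual estimators, and it is precisely the balanced-growth assumption $n \asymp T$ together with $1/2 > 2/5$ that makes the pure sampling error $O_P(n^{-1/2})$ negligible at this tolerance. (Analogously, $\overline\theta = \theta_0 + O_P((nT)^{-1/2})$ by Theorem \ref{th6} and the uniform rate $\hat\alpha_i(\overline\theta) - \alpha_{i0} = o_{uP}(T^{-2/5})$ coming from Theorem \ref{th4} and Lemma \ref{lemma2} are already absorbed into Lemma \ref{le5}, so nothing further need be re-derived here.) Everything else reduces to the triangle inequality and this rate bookkeeping.
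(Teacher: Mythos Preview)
Your proposal is correct and follows essentially the same route as the paper's proof: the paper decomposes $\hat J_s(\overline\theta)-J_s$ into the uniform-in-$i$ estimation error (controlled by Lemma~\ref{le5}, giving $o_{uP}(T^{-2/5})$) plus the cross-sectional sampling error $n^{-1}\sum_i(J_{si}-\bar E[J_{si}])$, and then invokes a CLT for independent sequences where you use Chebyshev---both yield $O_P(n^{-1/2})=o_P(T^{-2/5})$ under $n\asymp T$. The only cosmetic difference is that the paper cites Theorem~\ref{th4} and Lemma~\ref{le2} alongside Lemma~\ref{le5} to obtain the uniform rate on $\hat J_{si}(\overline\theta)-J_{si}$, whereas you invoke Lemma~\ref{le5} with $F=J_{si}$ directly; since $J_{si}$ is explicitly listed in that lemma, your shortcut is legitimate.
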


\begin{proof}
Note that 
$$
\sqrt{T} \left[\hat G_{\theta_i}(\overline \theta)' \hat P_{\alpha_i} (\overline \theta) \hat G_{\theta_i}(\overline \theta) - G_{\theta_i}'P_{\alpha_i}G_{\theta_i}\right] = o_{uP}(T^{1/10}),
$$
by Theorem \ref{th4} and Lemmas \ref{le2} and \ref{le5}, using the algebraic properties
of the $o_{uP}$ orders. The result then follows by a CLT for independent sequences since
$$
\hat J_s(\overline \theta) - J_s = \widehat{\bar{E}}[\hat G_{\theta_i}(\overline \theta)' \hat P_{\alpha_i} (\overline \theta) \hat G_{\theta_i}(\overline \theta)] - \bar{E}[G_{\theta_i}'P_{\alpha_i}G_{\theta_i}] = n^{-1} \sum_{i=1}^n \left(G_{\theta_i}'P_{\alpha_i}G_{\theta_i} - \bar{E}[G_{\theta_i}'P_{\alpha_i}G_{\theta_i}] \right) + o_{uP}(T^{-2/5}).
$$
\end{proof}

\begin{lemma}
\label{le9} Assume that Conditions \ref{cond1}, \ref{cond2},
\ref{cond3}, \ref{cond5}, \ref{cond4}, and \ref{cond7} hold. Then, for any $\overline{\theta}$ between $\hat{\theta}$ and
$\theta_{0},$
$$
\hat{B}_{s} \left( \tilde{\theta} \right) =  B_{s} + o_P(T^{-2/5}).
$$
\end{lemma}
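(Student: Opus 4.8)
The plan is to follow verbatim the two-step structure used for the profile Jacobian in Lemma \ref{le8}, replacing the single quadratic form $G_{\theta_i}'P_{\alpha_i}G_{\theta_i}$ by the two bias components whose cross-sectional average defines the score bias, $B_s = \bar{E}[B_{si}^B + B_{si}^C]$. Since $\hat{B}_s(\overline{\theta}) = \widehat{\bar{E}}[\hat{B}_{si}^B(\overline{\theta}) + \hat{B}_{si}^C(\overline{\theta})]$ is a cross-sectional average of plug-in estimators evaluated at $\overline{\theta}$ and $\hat{\alpha}_i(\overline{\theta})$, the argument separates into a within-individual uniform convergence step and a cross-sectional aggregation step, and the conclusion holds for any $\overline{\theta}$ lying between $\hat{\theta}$ and $\theta_0$.

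First I would invoke Lemma \ref{le5} with $F \in \{B_{si}^C, B_{si}^B\}$ and $d_1 = d_2 = 0$. This yields, uniformly over $i$, both $\sqrt{T}(\hat{B}_{si}^B(\overline{\theta}) - B_{si}^B) = o_{uP}(T^{1/10})$ and $\sqrt{T}(\hat{B}_{si}^C(\overline{\theta}) - B_{si}^C) = o_{uP}(T^{1/10})$, hence $\hat{B}_{si}^B(\overline{\theta}) + \hat{B}_{si}^C(\overline{\theta}) - (B_{si}^B + B_{si}^C) = o_{uP}(T^{-2/5})$. This step carries all of the technical weight, because $B_{si}^B$ is a product of the estimated building blocks $\hat{G}_{\theta_i}$, $\hat{P}_{\alpha_i}$, $\hat{H}_{\alpha_i}$, $\hat{\Sigma}_{\alpha_i}$ together with the trimmed spectral averages $\hat{B}_{\lambda_i}^{I}, \hat{B}_{\lambda_i}^{G}, \hat{B}_{\lambda_i}^{\Omega}, \hat{B}_{\lambda_i}^{W}$; all of this, however, is already absorbed into Lemma \ref{le5}, whose proof combines the algebraic properties of the $o_{uP}$ orders (via Theorem \ref{th4} and Lemma \ref{le2}) with Lemma 12 of HK to control the trimmed long-run covariance estimators under $\ell \to \infty$, $\ell/T \to 0$.

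Next I would aggregate over $i$. Averaging the preceding uniform bound and subtracting $B_s$ gives
$$
\hat{B}_s(\overline{\theta}) - B_s = n^{-1} \sum_{i=1}^{n} \left( B_{si}^B + B_{si}^C - \bar{E}[B_{si}^B + B_{si}^C] \right) + o_{uP}(T^{-2/5}),
$$
where the remainder is $o_P(T^{-2/5})$ since an average of $o_{uP}(T^{-2/5})$ terms is $o_P(T^{-2/5})$. The leading centered average is a sum of independent mean-zero terms by Condition \ref{cond1}(ii), with uniformly bounded second moments under the moment conditions imposed, so a CLT for independent sequences gives that it is $O_P(n^{-1/2})$. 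The rate conversion then uses Condition \ref{cond1}(iii): because $n/T \to \kappa^2 \in (0,\infty)$, we have $n^{-1/2} = O(T^{-1/2}) = o(T^{-2/5})$, whence the centered average is also $o_P(T^{-2/5})$ and therefore $\hat{B}_s(\overline{\theta}) = B_s + o_P(T^{-2/5})$.

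The main obstacle is conceptual rather than computational: one must verify that Lemma \ref{le5} genuinely applies to the composite object $B_{si}^B$ — in particular that the trimming bandwidth $\ell$ enters only through the spectral-average estimators that are already handled there, and that the moment requirements of Conditions \ref{cond5}, \ref{cond4} and \ref{cond7} suffice to push every nested product through the $o_{uP}$ calculus at the stated rate. Once that is granted, the present lemma reduces to exactly the same two-line aggregation argument as Lemma \ref{le8}.
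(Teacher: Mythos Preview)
Your proposal is correct and follows essentially the same approach as the paper: the paper's proof is simply ``Analogous to the proof of Lemma \ref{le8} replacing $J_s$ by $B_s$,'' and you have accurately spelled out what that analogous argument entails, invoking Lemma \ref{le5} for the uniform $o_{uP}(T^{-2/5})$ rate on the individual bias components and then aggregating via a CLT for independent sequences together with $n^{-1/2} = o(T^{-2/5})$ under Condition \ref{cond1}(iii).
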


\begin{proof}
Analogous to the proof of Lemma \ref{le8} replacing $J_{s}$ by
$B_{s}$.
\end{proof}

\begin{lemma}
\label{le11} Assume that Conditions \ref{cond1}, \ref{cond2},
\ref{cond3}, \ref{cond5}, \ref{cond4}, and \ref{cond7} hold. Then, for any $\overline{\theta}$ between $\hat{\theta}$ and
$\theta_{0},$ and $ \mathcal{B}  = - J_s^{-1} B_s,$
\begin{equation*}
\hat{\mathcal{B}} (\hat{\theta}) = - \hat{J}_{s} (\hat{\theta})^{-1}
\hat{B}_{s} (\hat{\theta}) = \mathcal{B} + o_{P}(T^{-2/5}).
\end{equation*}
\end{lemma}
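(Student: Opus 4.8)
The plan is to combine Lemmas~\ref{le8} and~\ref{le9}, both of which apply at the endpoint $\overline{\theta} = \hat{\theta}$ of the segment between $\hat{\theta}$ and $\theta_{0}$, and then to propagate the resulting orders through matrix inversion and a product. Thus, the starting point is
\begin{equation*}
\hat{J}_{s}(\hat{\theta}) = J_{s} + o_{P}(T^{-2/5}), \qquad \hat{B}_{s}(\hat{\theta}) = B_{s} + o_{P}(T^{-2/5}),
\end{equation*}
together with the fact that $J_{s}$ is finite and positive definite by Condition~\ref{cond3}(ii), so that $J_{s}^{-1}$ exists and has bounded norm.

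First I would control the inverse. Since $\hat{J}_{s}(\hat{\theta}) = J_{s} + o_{P}(1)$ and $J_{s}$ is nonsingular, the continuous mapping theorem gives $\hat{J}_{s}(\hat{\theta})^{-1} \inp J_{s}^{-1}$; in particular $\hat{J}_{s}(\hat{\theta})$ is invertible with probability approaching one and $\hat{J}_{s}(\hat{\theta})^{-1} = O_{P}(1)$. Writing the standard resolvent identity
\begin{equation*}
\hat{J}_{s}(\hat{\theta})^{-1} - J_{s}^{-1} = - J_{s}^{-1} \big( \hat{J}_{s}(\hat{\theta}) - J_{s} \big) \hat{J}_{s}(\hat{\theta})^{-1},
\end{equation*}
and using that $J_{s}^{-1} = O(1)$, $\hat{J}_{s}(\hat{\theta}) - J_{s} = o_{P}(T^{-2/5})$, and $\hat{J}_{s}(\hat{\theta})^{-1} = O_{P}(1)$, the algebraic properties of the $o_{P}$ orders yield $\hat{J}_{s}(\hat{\theta})^{-1} = J_{s}^{-1} + o_{P}(T^{-2/5})$.

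Finally I would expand the product. Using the two displays above,
\begin{equation*}
\hat{\mathcal{B}}(\hat{\theta}) = - \hat{J}_{s}(\hat{\theta})^{-1} \hat{B}_{s}(\hat{\theta}) = - \big( J_{s}^{-1} + o_{P}(T^{-2/5}) \big) \big( B_{s} + o_{P}(T^{-2/5}) \big).
\end{equation*}
Since $J_{s}^{-1}$ and $B_{s}$ are both $O(1)$, every cross term on the right is $o_{P}(T^{-2/5})$, and hence $\hat{\mathcal{B}}(\hat{\theta}) = - J_{s}^{-1} B_{s} + o_{P}(T^{-2/5}) = \mathcal{B} + o_{P}(T^{-2/5})$, as claimed.

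The argument is essentially a routine combination of the two preceding lemmas; the only point requiring care is the inversion step, where one must first establish $\hat{J}_{s}(\hat{\theta})^{-1} = O_{P}(1)$ from the positive definiteness of $J_{s}$ before the resolvent identity can be used to transfer the $o_{P}(T^{-2/5})$ rate from $\hat{J}_{s}$ to its inverse.
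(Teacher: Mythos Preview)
Your proof is correct and follows essentially the same route as the paper, which simply records that the result follows from Lemmas~\ref{le8} and~\ref{le9} via a Taylor expansion argument. Your resolvent-identity step for the inverse is exactly the first-order Taylor expansion of $A \mapsto A^{-1}$, so the two arguments coincide; you have merely spelled out the details.
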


\begin{proof}
The result follows from Lemmas \ref{le8} and \ref{le9}, using a Taylor expansion argument.
\end{proof}

\subsection{Proof of Theorem 5}

\begin{proof}
\textbf{Case I: C = BC}. By Lemmas \ref{le8} and \ref{li5}
\begin{eqnarray*}
\sqrt{nT} \left( \hat{\theta} - \theta_{0} \right) &=& - \hat J_{s}
\left( \overline{\theta} \right)^{-1} \hat{s} (\theta_{0}) = -
J_{s}^{-1} \hat{s} (\theta_{0}) + o_{P}(T^{-2/5}) O_{P} \left(
\sqrt{\frac{n}{T}} \right) = - J_{s}^{-1} \hat{s} (\theta_{0})
+ o_{P}(1).
\end{eqnarray*}
Then, by Lemmas \ref{le11} and \ref{li5}
\begin{eqnarray*}
\sqrt{nT} \left( \hat{\theta}^{BC} - \theta_{0} \right) &=&
\sqrt{nT} \left( \hat{\theta} - \theta_{0} \right) - \sqrt{nT}
\frac{1}{T} \hat{\mathcal{B}} \left( \hat{\theta} \right) = - J_{s}^{-1}
\hat{s} (\theta_{0}) + \sqrt{\frac{n}{T}}
J_{s}^{-1} B_{s}  + o_{P}(1) \\
&=& - J_{s}^{-1} \left[ \frac{1}{\sqrt{n}} \sum_{i=1}^{n}
\tilde{\psi}_{si} + \sqrt{\frac{n}{T}}  B_{s} - \sqrt{\frac{n}{T}}
B_{s} \right] + o_{P}(1) \ind N(0, J_{s}^{-1}).
\end{eqnarray*}
\textbf{Case II: C = SBC}. First, note that since the correction of
the score is of order $O_{P}(T^{-1})$, 
$\hat{\theta}^{SBC} - \hat \theta = O_P(T^{-1})$. Then, by a Taylor
expansion of the corrected FOC around $\hat{\theta}^{SBC} = \theta_0$
\begin{eqnarray*}
0 &=& \hat{s} \left( \hat{\theta}^{SBC} \right) - T^{-1}
\hat{B}_{s} \left( \hat{\theta}^{SBC} \right) = \hat{s}
(\theta_{0}) + \hat J_{s} \left( \overline{\theta} \right)
(\hat{\theta}^{SBC} - \theta_{0}) - T^{-1} B_{s} + o_{P}(T^{-2}),
\end{eqnarray*}
where $\overline{\theta}$ lies between $\hat{\theta}^{SBC}$ and
$\theta_{0}$. Then by Lemma \ref{li5}
\begin{eqnarray*}
\sqrt{nT} \left( \hat{\theta}^{SBC} - \theta_{0} \right) &=& -
\hat J_{s} \left( \overline{\theta} \right)^{-1} \left[ \sqrt{nT}
 \hat{s} (\theta_{0}) - n^{1/2} T^{-1/2} B_{s} \right]
 + o_{P}(1) \\
  &=& - \hat J_{s} \left( \overline{\theta} \right)^{-1}
 \left[ \frac{1}{\sqrt{n}} \sum_{i=1}^{n}
\tilde{\psi}_{si} +  \sqrt{\frac{n}{T}}   B_{s} -  \sqrt{\frac{n}{T}} 
B_{s} \right] + o_{P}(1) \ind N(0, J_{s}^{-1}).
\end{eqnarray*}

\textbf{Case III: C = IBC}. A similar argument applies to the
estimating equation (\ref{ibc}), since $\hat{\theta}^{IBC}$ is in a
$O(T^{-1})$ neighborhood of $\theta_{0}$.
\end{proof}

\section{Stochastic Expansion for $\tilde{\gamma}_{i0} = \tilde{\gamma}_{i}(\theta_{0})$ and $\hat{\gamma}_{i0} = \hat{\gamma}_{i}(\theta_{0})$} \label{appendixF}
We characterize the stochastic expansions up to second order for one-step and two-step
estimators of the individual effects given the true common
parameter. We only provide detailed proofs of the results for the two-step estimator $\hat{\gamma}_{i0},$ 
because the proofs  the one-step estimator
$\tilde{\gamma}_{i0}$ follow by similar arguments.  Lemmas 1
and 2 in the main text are corollaries of these expansions. The expressions for the scores and their derivatives in the components of the expansions are given in Appendix \ref{appendixF}.


\begin{lemma}
\label{lf2} Suppose that Conditions \ref{cond1}, \ref{cond2},
\ref{cond3}, and \ref{cond5} hold. Then
\begin{equation*}
\sqrt{T} (\tilde{\gamma}_{i0} - \gamma_{i0}) = \tilde{\psi}_{i}^{W}
+ T^{-1/2}  R_{1i}^{W} \ind N(0, V_i^W),
\end{equation*}
where
\begin{eqnarray*}
\tilde{\psi}_{i}^{W} &=& \frac{1}{\sqrt{T}} \sum_{t=1}^{T}
\psi_{it}^{W} = - \left( T_{i}^{W} \right) ^{-1} \sqrt{T}
\hat{t}_{i}^{W} = o_{uP}(T^{1/10}), \ \
R_{1i}^{W} = o_{uP}(T^{1/5}), \ \ V_i^W = E[\tilde{\psi}_{i}^{W}\tilde{\psi}_{i}^{W'}].
\end{eqnarray*}
Also
\begin{equation*}
\frac{1}{\sqrt{n}} \sum_{i=1}^{n} \tilde \psi_{i}^{W} = O_{P}(1).
\end{equation*}
\end{lemma}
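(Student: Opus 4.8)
The plan is to derive the expansion from a mean-value expansion of the fixed-effects first order condition $\hat{t}_{i}^{W}(\theta_{0}, \tilde{\gamma}_{i0}) = 0$ about the true value $\gamma_{i0}$, reading off the leading term and controlling the remainder uniformly in $i$ by means of Lemma \ref{lb6}. First I would note that at the truth the Lagrange multiplier vanishes, $\lambda_{i0} = -W_{i}^{-1} g_{i}(\theta_{0},\alpha_{i0}) = 0$, since $g_{i}(\theta_{0},\alpha_{i0}) = E[g(z_{it})] = 0$; hence $\sqrt{T}\hat{t}_{i}^{W}(\theta_{0},\gamma_{i0}) = -(0', \sqrt{T}\hat{g}_{i}')'$, and the population Jacobian evaluated at $\lambda_{i0}=0$ is the block matrix $T_{i}^{W} = -\left(\begin{smallmatrix} 0 & G_{\alpha_{i}}' \\ G_{\alpha_{i}} & W_{i}\end{smallmatrix}\right)$. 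Its inverse, obtained from the standard block-inverse formula with Schur complement $-G_{\alpha_{i}}'W_{i}^{-1}G_{\alpha_{i}} = -(\Sigma_{\alpha_{i}}^{W})^{-1}$, has rows $H_{\alpha_{i}}^{W}$ and $P_{\alpha_{i}}^{W}$, so that $\tilde{\psi}_{i}^{W} = -(T_{i}^{W})^{-1}\sqrt{T}\hat{t}_{i}^{W}$ reduces to the expression recorded in Lemma \ref{lemma1}.

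Next I would carry out the expansion. Taking the mean-value expansion row by row, with intermediate points $\bar{\gamma}_{i0}$ on the segment joining $\tilde{\gamma}_{i0}$ and $\gamma_{i0}$,
$$
0 = \sqrt{T}\hat{t}_{i}^{W}(\theta_{0},\gamma_{i0}) + \hat{T}_{i}^{W}(\theta_{0},\bar{\gamma}_{i0})\,\sqrt{T}(\tilde{\gamma}_{i0} - \gamma_{i0}),
$$
whence $\sqrt{T}(\tilde{\gamma}_{i0}-\gamma_{i0}) = -\hat{T}_{i}^{W}(\theta_{0},\bar{\gamma}_{i0})^{-1}\sqrt{T}\hat{t}_{i}^{W}(\theta_{0},\gamma_{i0})$. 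Lemma \ref{lb6} supplies the preliminary uniform rate $\sqrt{T}(\tilde{\gamma}_{i0}-\gamma_{i0}) = o_{uP}(T^{1/10})$, which keeps the intermediate points inside the region where its conclusions hold, and it delivers $\hat{T}_{i}^{W}(\theta_{0},\bar{\gamma}_{i0}) - T_{i}^{W} = o_{uP}(T^{-2/5})$ together with $\sqrt{T}\hat{t}_{i}^{W} = o_{uP}(T^{1/10})$; the positive definiteness in Condition \ref{cond3}(ii) makes $(T_{i}^{W})^{-1} = O_{u}(1)$ and $\hat{T}_{i}^{W}(\theta_{0},\bar{\gamma}_{i0})^{-1} = (T_{i}^{W})^{-1} + o_{uP}(T^{-2/5})$. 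Substituting gives the stated decomposition, with $\tilde{\psi}_{i}^{W} = o_{uP}(T^{1/10})$ and remainder $T^{-1/2}R_{1i}^{W} = [(T_{i}^{W})^{-1} - \hat{T}_{i}^{W}(\theta_{0},\bar{\gamma}_{i0})^{-1}]\sqrt{T}\hat{t}_{i}^{W} = o_{uP}(T^{-2/5})o_{uP}(T^{1/10}) = o_{uP}(T^{-3/10})$, i.e. $R_{1i}^{W} = o_{uP}(T^{1/5})$.

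The marginal limit $\tilde{\psi}_{i}^{W} \ind N(0,V_{i}^{W})$ would then follow from a central limit theorem for stationary strongly mixing sequences (Condition \ref{cond1}(i)) applied to $\sqrt{T}\hat{g}_{i} = T^{-1/2}\sum_{t=1}^{T} g(z_{it}) \ind N(0,\Omega_{i})$ and the linear map $-(T_{i}^{W})^{-1}$, giving $V_{i}^{W} = E[\tilde{\psi}_{i}^{W}\tilde{\psi}_{i}^{W'}] = \left(\begin{smallmatrix} H_{\alpha_{i}}^{W} \\ P_{\alpha_{i}}^{W}\end{smallmatrix}\right)\Omega_{i}\left(H_{\alpha_{i}}^{W'}, P_{\alpha_{i}}^{W}\right)$. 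Finally, for $n^{-1/2}\sum_{i=1}^{n}\tilde{\psi}_{i}^{W} = O_{P}(1)$ I would use cross-sectional independence (Condition \ref{cond1}(ii)): each $\tilde{\psi}_{i}^{W}$ has mean zero because $E[g(z_{it})] = 0$, so the sum has mean zero and variance $n^{-1}\sum_{i=1}^{n} V_{i}^{W} \to \bar{E}[V_{i}^{W}] < \infty$, and Chebyshev's inequality yields the stochastic boundedness.

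I expect the main obstacle to be the uniform-in-$i$ control of the remainder, in particular guaranteeing that the intermediate points $\bar{\gamma}_{i0}$ produced by the vector mean-value expansion remain in the shrinking neighborhood of $\gamma_{i0}$ on which the higher-order uniform bounds of Lemma \ref{lb6} are valid. The apparent circularity --- the convergence rate of $\tilde{\gamma}_{i0}$ is needed to bound the Jacobian deviation, yet that deviation is what pins down the rate --- is resolved precisely by the self-improving argument already contained in the second part of the proof of Lemma \ref{lb6}, which I would invoke rather than reprove.
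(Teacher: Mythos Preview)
Your proof is correct and follows essentially the same route as the paper: a mean-value expansion of the FOC $\hat{t}_i^W(\theta_0,\tilde{\gamma}_{i0})=0$, with the uniform rates for $\sqrt{T}(\tilde{\gamma}_{i0}-\gamma_{i0})$, $\sqrt{T}\hat{t}_i^W$, and the Jacobian deviation all drawn from (the proof of) Lemma~\ref{lb6}. The only cosmetic difference is that you write the remainder as $[(T_i^W)^{-1}-\hat{T}_i^W(\theta_0,\bar{\gamma}_{i0})^{-1}]\sqrt{T}\hat{t}_i^W$, whereas the paper writes the algebraically equivalent form $-(T_i^W)^{-1}\sqrt{T}(\hat{T}_i^W(\theta_0,\bar{\gamma}_{i0})-T_i^W)\sqrt{T}(\tilde{\gamma}_{i0}-\gamma_{i0})$ and invokes the CLT of Lemma~3 in Hahn and Kuersteiner (2011) rather than a generic mixing CLT.
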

\begin{proof}
We just show the part of the remainder term because the rest of the proof is similar to the proof of Lemma \ref{lh2}. By the proof of Lemma \ref{lb6}, $\sqrt{T} (\tilde{\gamma}_{i0} - \gamma_{i0}) = o_{uP}(T^{1/10})$ and 
$$
R_{1i}^{W} = -
\underset{=O_{u}(1)}{\underbrace{\left( T_{i}^{W} \right)^{-1}}}
\underset{=o_{uP}(T^{1/10})}{\underbrace{\left(
\hat{T}_{i}^{W}(\theta_0, \overline{\gamma}_{i0}) - T_{i}^{W} \right)}}
\underset{=o_{uP}(T^{1/10})}{\underbrace{\sqrt{T} (\tilde{\gamma}_{i0} - \gamma_{i0})}}
= o_{uP}(T^{1/5}).
$$ 
\end{proof}


\begin{lemma}
\label{lf3} Suppose that Conditions \ref{cond1}, \ref{cond2},
\ref{cond3}, and \ref{cond5} hold. Then,
\begin{equation*}
\sqrt{T} (\tilde{\gamma}_{i0} - \gamma_{i0}) = \tilde{\psi}_{i}^{W}
+T^{-1/2} Q_{1i}^{W} + T^{-1} R_{2i}^{W},
\end{equation*}
where
\begin{eqnarray*}
Q_{1i}^{W} &=& - \left( T_{i}^{W} \right) ^{-1} \left[
\tilde{A}_{i}^{W} \tilde{\psi}_{i}^{W} + \frac{1}{2}
\sum_{j=1}^{d_g+d_{\alpha}} \tilde{\psi}_{i,j}^{W} T_{i,j}^{W}
\tilde{\psi}_{i}^{W} 
\right] = o_{uP}(T^{1/5}), \\
\tilde{A}_{i}^{W} &=& \sqrt{T} (\hat{T}_{i}^{W} -
T_{i}^{W}) = o_{uP} (T^{1/10}),  \ \ 
R_{2i}^{W} = o_{uP}(T^{3/10}).
\end{eqnarray*}
Also,
\begin{equation*}
\frac{1}{n} \sum_{i=1}^{n}  Q_{1i}^{W} = O_P(1).
\end{equation*}
\end{lemma}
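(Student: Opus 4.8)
The plan is to carry the Taylor expansion of the first-stage first-order conditions $\hat t_i^W(\theta_0,\tilde\gamma_{i0})=0$ one order further than in Lemma \ref{lf2}, and then feed the first-order expansion back into the resulting quadratic term. Writing $u_i := \sqrt T(\tilde\gamma_{i0}-\gamma_{i0})$ and using $(\tilde\gamma_{i0,j}-\gamma_{i0,j})=u_{i,j}/\sqrt T$, an exact second-order mean-value expansion of the FOC yields
\begin{equation*}
0 = \sqrt T\,\hat t_i^W + \hat T_i^W u_i + \frac{1}{2\sqrt T}\sum_{j=1}^{d_g+d_\alpha} u_{i,j}\,\hat T_{i,j}^W(\theta_0,\bar\gamma_i)\, u_i,
\end{equation*}
where $\bar\gamma_i$ lies between $\tilde\gamma_{i0}$ and $\gamma_{i0}$ and $\hat T_i^W,\hat T_{i,j}^W$ are the derivatives of $\hat t_i^W$ from Lemma \ref{lb6}. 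Decomposing $\hat T_i^W = T_i^W + T^{-1/2}\tilde A_i^W$ with $\tilde A_i^W := \sqrt T(\hat T_i^W - T_i^W)$, pre-multiplying by $-(T_i^W)^{-1}$, and using $\tilde\psi_i^W = -(T_i^W)^{-1}\sqrt T\,\hat t_i^W$ from Lemma \ref{lf2}, I would solve for $u_i$ to get
\begin{equation*}
u_i = \tilde\psi_i^W - (T_i^W)^{-1}\Big[ T^{-1/2}\tilde A_i^W u_i + \tfrac{1}{2\sqrt T}\sum_{j} u_{i,j}\,\hat T_{i,j}^W(\bar\gamma_i)\, u_i \Big].
\end{equation*}

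Next I would substitute the first-order expansion $u_i=\tilde\psi_i^W+T^{-1/2}R_{1i}^W$, with $R_{1i}^W=o_{uP}(T^{1/5})$ from Lemma \ref{lf2}, into the bracket, and replace $\hat T_{i,j}^W(\bar\gamma_i)$ by $T_{i,j}^W$ using the intermediate-point bounds of Lemma \ref{lb6}, namely $\tilde A_i^W=o_{uP}(T^{1/10})$ and $\hat T_{i,j}^W(\bar\gamma_i)-T_{i,j}^W=o_{uP}(T^{-2/5})$. Collecting the terms of exact order $T^{-1/2}$ isolates
\begin{equation*}
Q_{1i}^W = -(T_i^W)^{-1}\Big[ \tilde A_i^W\tilde\psi_i^W + \tfrac12\sum_{j=1}^{d_g+d_\alpha}\tilde\psi_{i,j}^W\, T_{i,j}^W\,\tilde\psi_i^W \Big],
\end{equation*}
and the bound $Q_{1i}^W=o_{uP}(T^{1/5})$ follows from $(T_i^W)^{-1}=O_u(1)$, $\tilde A_i^W=o_{uP}(T^{1/10})$, $\tilde\psi_i^W=o_{uP}(T^{1/10})$ and $T_{i,j}^W=O_u(1)$ by multiplying the uniform orders.

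The heart of the argument, and the step I expect to be most delicate, is showing that everything left over equals $T^{-1}R_{2i}^W$ with $R_{2i}^W=o_{uP}(T^{3/10})$; this is pure bookkeeping with the calculus of $o_{uP}/O_{uP}$ orders, but the exponents are tight. The residual from the linear term is $-(T_i^W)^{-1}T^{-1/2}\tilde A_i^W(u_i-\tilde\psi_i^W)=O_u(1)\,T^{-1/2}o_{uP}(T^{1/10})\,T^{-1/2}o_{uP}(T^{1/5})=o_{uP}(T^{-7/10})$. For the quadratic term, expanding $u=\tilde\psi+(u-\tilde\psi)$ with $u_i-\tilde\psi_i^W=o_{uP}(T^{-3/10})$ and $\hat T_{i,j}^W(\bar\gamma_i)=T_{i,j}^W+o_{uP}(T^{-2/5})$, the largest surviving pieces are $T^{-1/2}o_{uP}(T^{1/10})o_{uP}(T^{-2/5})o_{uP}(T^{1/10})$ and $T^{-1/2}o_{uP}(T^{-3/10})o_{uP}(T^{1/10})$, both $o_{uP}(T^{-7/10})$. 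Multiplying through by $T$ gives $R_{2i}^W=o_{uP}(T^{3/10})$, matching the claim exactly, so the expansion does not close at a coarser rate.

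Finally, for $n^{-1}\sum_{i=1}^n Q_{1i}^W=O_P(1)$, I would split $n^{-1}\sum_i Q_{1i}^W = n^{-1}\sum_i E[Q_{1i}^W] + n^{-1}\sum_i (Q_{1i}^W-E[Q_{1i}^W])$. Since $\lambda_{i0}=-W_i^{-1}g_i(\theta_0,\alpha_{i0})=0$, one checks $E[\tilde\psi_i^W]=0$, so $E[Q_{1i}^W]$ reduces to cross moments of $g(z_{it})$ and its derivatives at distinct lags, which are absolutely summable spectral sums under the mixing Condition \ref{cond1} and the moment bounds of Condition \ref{cond5}; in fact this average converges in probability to the finite limit $\bar E[B_{\gamma_i}^W]$ identified in Lemma \ref{lemma1}, so the first term is $O(1)$. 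The second term is $o_P(1)$ by a weak law of large numbers for summands independent across $i$ with uniformly bounded second moments. Hence $n^{-1}\sum_i Q_{1i}^W=O_P(1)$.
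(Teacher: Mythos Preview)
Your approach is essentially the same as the paper's: a second-order mean-value expansion of the FOC $\hat t_i^W(\theta_0,\tilde\gamma_{i0})=0$, substitution of the first-order expansion from Lemma~\ref{lf2}, and order bookkeeping for the remainder; the paper's proof simply says ``Similar to the proof of Lemma~\ref{lh3}'' and the remainder there is written out term by term exactly as you decompose it. The only place your write-up is looser than the paper's is the final claim $n^{-1}\sum_i Q_{1i}^W = O_P(1)$: the paper (via Lemma~\ref{lh3}) does not rely on a bare WLLN with ``uniformly bounded second moments,'' but rather on the argument of Theorem~1 in Hahn and Kuersteiner (2011) together with Corollary~A.2 of Hall and Heyde (1980) and Lemma~1 of Andrews (1991) to control the spectral cross-moments that appear in $E[\tilde A_i^W\tilde\psi_i^W]$ and $E[\tilde\psi_{i,j}^W T_{i,j}^W\tilde\psi_i^W]$; your invocation of Lemma~\ref{lemma1} for the limit is circular since that lemma is downstream of this one, so you should point directly to those mixing-based technical results instead.
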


\begin{proof}
Similar to the proof of Lemma \ref{lh3}. 
\end{proof}


%


\begin{lemma}
\label{lf5} Suppose that Conditions \ref{cond1}, \ref{cond2},
\ref{cond3}, and \ref{cond5} hold. Then, 
$$
\frac{1}{\sqrt{n}} \sum_{i=1}^{n} \tilde  \psi_{i}^{W} \ind N(0, \bar{E}[V_i^{W}]), \
\ \frac{1}{n} \sum_{i=1}^{n}  Q_{1i}^{W}  \inp
\bar{E}[B_{\gamma_{i}}^{W,I} + B_{\gamma_{i}}^{W,G}
+ B_{\gamma_{i}}^{W,1S}] =:  B_{\gamma}^{W}, 
$$
where
\begin{eqnarray*}
V_i^{W} &=& \left(%
\begin{array}{cc}
  H_{\alpha_{i}}^{W}  \\
  P_{\alpha_{i}}^{W}  \\
\end{array}%
\right) \Omega_{i} \left(H_{\alpha_{i}}^{W'}, P_{\alpha_{i}}^{W}\right),
 \\
B_{\gamma_{i}}^{W,I} &=& \left(%
\begin{array}{c}
  B_{\alpha_{i}}^{W,I} \\
  B_{\lambda_{i}}^{W,I} \\
\end{array}%
\right) = \left(%
\begin{array}{c}
  H_{\alpha_{i}}^{W}  \\
  P_{\alpha_{i}}^{W}  \\
\end{array}%
\right) \left(\sum_{j=-\infty}^{\infty} E \left[ G_{\alpha_{i}}(z_{it}) H_{\alpha_{i}}^{W} g(z_{i,t-j}) \right] -    \sum_{j=1}^{d_{\alpha}} G_{\alpha \alpha_{i,j}} H_{\alpha_{i}}^{W} \Omega_{i} H_{\alpha_{i}}^{W'}/2\right), \\
B_{\gamma_{i}}^{W,G} &=& \left(%
\begin{array}{c}
  B_{\alpha_{i}}^{W,G} \\
  B_{\lambda_{i}}^{W,G} \\
\end{array}%
\right) = \left(%
\begin{array}{c}
  -\Sigma_{\alpha_{i}}^{W}
      \\
  H_{\alpha_{i}}^{W'}\\
\end{array}%
\right)\sum_{j=-\infty}^{\infty}  E \left[ G_{\alpha_{i}}(z_{it})' P_{\alpha_{i}}^{W} g(z_{i,t-j})\right], \\
B_{\gamma_{i}}^{W,1S} &=& \left(%
\begin{array}{c}
  B_{\alpha_{i}}^{W,1S} \\
  B_{\lambda_{i}}^{W,1S} \\
\end{array}%
\right) = \left(%
\begin{array}{c}
  \Sigma_{\alpha_{i}}^{W} \\
  -  H_{\alpha_{i}}^{W'}  \\
\end{array}%
\right) \left(\sum_{j=1}^{d_{\alpha}} G_{\alpha \alpha_{i,j}}' P_{\alpha_{i}}^{W} \Omega_{i}
  H_{\alpha_{i}}^{W'}/2
  +    \sum_{j=1}^{d_g}
   G_{\alpha \alpha_{i}}' (I_{d_{\alpha}} \otimes e_{j}) H_{\alpha_{i}}^{W} \Omega_{i} P_{\alpha_{i},j}^{W}/2\right),    \\ 
& & \quad  \quad \quad \quad \quad \quad + 
\left(%
\begin{array}{c}
   H_{\alpha_{i}}^{W}  \\
  P_{\alpha_{i}}^{W}   \\
\end{array}%
\right)\sum_{j=-\infty}^{\infty}  E \left[ \xi_i(z_{it}) P_{\alpha_{i}}^{W} g(z_{i,t-j}) \right],
\end{eqnarray*}
for $\Sigma_{\alpha_{i}}^{W} = \left( G_{\alpha_{i}}' W_{i}^{-1}
G_{\alpha_{i}} \right)^{-1},$ $H_{\alpha_{i}}^{W} =
\Sigma_{\alpha_{i}}^{W} G_{\alpha_{i}}' W_{i}^{-1},$ and
$P_{\alpha_{i}}^{W} = W_{i}^{-1} - W_{i}^{-1} G_{\alpha_{i}}
H_{\alpha_{i}}^{W}$.
\end{lemma}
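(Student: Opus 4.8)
The plan is to establish the two assertions separately: the first is a central limit theorem across the cross-sectional index $i$, and the second is a law of large numbers combined with an explicit bias computation. For $n^{-1/2}\sum_{i=1}^n\tilde\psi_i^W \ind N(0,\bar E[V_i^W])$, I would first substitute the explicit influence function from Lemma \ref{lf2} and Lemma \ref{lemma1}, $\tilde\psi_i^W = -\begin{pmatrix} H_{\alpha_i}^W \\ P_{\alpha_i}^W \end{pmatrix} T^{-1/2}\sum_{t=1}^T g(z_{it})$, so that $n^{-1/2}\sum_i\tilde\psi_i^W = -(nT)^{-1/2}\sum_i\sum_t \begin{pmatrix} H_{\alpha_i}^W \\ P_{\alpha_i}^W \end{pmatrix} g(z_{it})$ is a double-indexed sum whose summands are independent across $i$ by Condition \ref{cond1}(ii), stationary and strongly mixing across $t$ by Condition \ref{cond1}(i), and mean zero. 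I would then verify a Lyapunov condition from the moment bounds of Conditions \ref{cond2}(iv) and \ref{cond5} and apply a central limit theorem for weakly dependent double arrays along the sequence $n/T\to\kappa^2$ of Condition \ref{cond1}(iii). The variance is the cross-sectional limit of the long-run variances $T\,\mathrm{Var}(T^{-1/2}\sum_t g(z_{it}))\to\Omega_i$, which converges by a law of large numbers over the independent individual effects to $\bar E[V_i^W]$, with $V_i^W$ as displayed.

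For $n^{-1}\sum_i Q_{1i}^W\inp B_\gamma^W$, I would start from the representation in Lemma \ref{lf3}, $Q_{1i}^W = -(T_i^W)^{-1}[\tilde A_i^W\tilde\psi_i^W + \tfrac12\sum_{j}\tilde\psi_{i,j}^W T_{i,j}^W\tilde\psi_i^W]$, and compute the probability limit of the per-individual conditional expectation. Using stationarity and mixing, the cross term $E[\tilde A_i^W\tilde\psi_i^W]$, in which $\tilde A_i^W=\sqrt T(\hat T_i^W - T_i^W)$ also carries the fluctuation $\sqrt T(\hat W_i - W_i)$ of the estimated weight through Condition \ref{cond5}(ii), reduces to the spectral sums $\sum_{j=-\infty}^\infty E[G_{\alpha_i}(z_{it})H_{\alpha_i}^W g(z_{i,t-j})]$, $\sum_j E[G_{\alpha_i}(z_{it})'P_{\alpha_i}^W g(z_{i,t-j})]$ and $\sum_j E[\xi_i(z_{it})P_{\alpha_i}^W g(z_{i,t-j})]$; the quadratic term contracts the second-derivative matrices $G_{\alpha\alpha_{i,j}}$ and $G_{\alpha\alpha_i}$ against the variance blocks $H_{\alpha_i}^W\Omega_i H_{\alpha_i}^{W'}$ and $H_{\alpha_i}^W\Omega_i P_{\alpha_i,j}^W$. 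I would then premultiply by $-(T_i^W)^{-1}$ — under which $\lambda$-equation contributions map through $(H_{\alpha_i}^{W'},P_{\alpha_i}^W)'$ and $\alpha$-equation contributions through $(-\Sigma_{\alpha_i}^W,H_{\alpha_i}^{W'})'$ — and sort the resulting terms into the three blocks $B_{\gamma_i}^{W,I}$, $B_{\gamma_i}^{W,G}$ and $B_{\gamma_i}^{W,1S}$. A law of large numbers across the independent index $i$ then delivers $n^{-1}\sum_i Q_{1i}^W\inp\bar E[B_{\gamma_i}^{W,I}+B_{\gamma_i}^{W,G}+B_{\gamma_i}^{W,1S}]$, the centered fluctuations vanishing at rate $n^{-1/2}$.

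The hard part will be the second assertion: the explicit evaluation and correct grouping of these expectations. This relies on the exact expressions for the first-stage score $\hat t_i^W$ and its first and second derivatives in $\gamma_i=(\alpha_i',\lambda_i')'$ recorded in Appendix \ref{appendixF}, and on the saddle-point structure $T_i^W=-\begin{pmatrix}0 & G_{\alpha_i}'\\ G_{\alpha_i} & W_i\end{pmatrix}$, whose inverse encodes $\Sigma_{\alpha_i}^W$, $H_{\alpha_i}^W$ and $P_{\alpha_i}^W$. The bookkeeping must separate the bias of the optimal score when $W$ is used ($B^{W,I}$), the bias from estimating $G_{\alpha_i}$ ($B^{W,G}$), and the bias from the non-optimal, estimated weighting matrix ($B^{W,1S}$); the $\xi_i$ contribution in the last block, generated solely by $\hat W_i\neq W_i$, is the piece most easily mislaid. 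Uniformity in $i$, required to pass from the per-individual limits to the cross-sectional average, is furnished by the $o_{uP}$ bounds in Lemmas \ref{lf2}, \ref{lf3} and \ref{lb6}.
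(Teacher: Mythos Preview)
Your proposal is correct and follows essentially the same route as the paper's proof: invoke the expansions of Lemmas \ref{lf2} and \ref{lf3}, then evaluate the key expectations $E[\tilde\psi_i^W\tilde\psi_i^{W'}]$, $E[\tilde A_i^W\tilde\psi_i^W]$ and $E[\tilde\psi_{i,j}^W T_{i,j}^W\tilde\psi_i^W]$ using the explicit score derivatives from Appendix \ref{appendixF} and the block structure of $(T_i^W)^{-1}$. The paper's proof simply lists those computed expectations (including the $\xi_i$ term you flagged) and asserts the result; your sketch correctly anticipates each of them and the mechanism by which they assemble into the three bias blocks.
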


\begin{proof}
The results follow from Lemmas \ref{lf2} and \ref{lf3}, noting that
\begin{eqnarray*}
\left( T_{i}^{W} \right)^{-1} &=& - \left(%
\begin{array}{cc}
  -\Sigma_{\alpha_{i}}^{W} & H_{\alpha_{i}}^{W} \\
  H_{\alpha_{i}}^{W'} & P_{\alpha_{i}}^{W} \\
\end{array}%
\right), \ \
\psi_{it}^{W} = - \left(%
\begin{array}{c}
  H_{\alpha_{i}}^{W} \\
  P_{\alpha_{i}}^{W} \\
\end{array}%
\right)g(z_{it}), \\
E \left[ \tilde \psi_{i}^{W} \tilde  \psi_{i}^{W'} \right] &=& \left(%
\begin{array}{cc}
  H_{\alpha_{i}}^{W}  \\
  P_{\alpha_{i}}^{W}  \\
\end{array}%
\right) \Omega_{i} \left(H_{\alpha_{i}}^{W'}, P_{\alpha_{i}}^{W}\right),
 \\ E \left[ \tilde{A}_{i}^{W} \tilde{\psi}_{i}^{W} \right]
&=&  \sum_{j=-\infty}^{\infty}  \left(%
\begin{array}{c}
   E \left[ G_{\alpha_{i}}(z_{it})' P_{\alpha_{i}}^{W} g(z_{i,t-j}) \right] \\
    E \left[ G_{\alpha_{i}}(z_{it})' H_{\alpha_{i}}^{W} g(z_{i,t-j}) \right]  + E \left[ \xi_i(z_{it}) P_{\alpha_{i}}^{W} g(z_{i,t-j}) \right]\\
\end{array}%
\right), \\
E \left[ \tilde{\psi}_{i,j}^{W} T_{i,j}^{W}
\tilde{\psi}_{i}^{W} \right] &=& \left\{%
\begin{array}{ll}
    -\left(%
\begin{array}{c}
  G_{\alpha \alpha_{i,j}}' P_{\alpha_{i}}^{W} \Omega_{i} H_{\alpha_{i}}^{W'} \\
  G_{\alpha \alpha_{i,j}}' H_{\alpha_{i}}^{W} \Omega_{i} H_{\alpha_{i}}^{W'} \\
\end{array}%
\right), & \hbox{if $j\le d_{\alpha}$;} \\
    G_{\alpha \alpha_{i}}' (I_{d_{\alpha}} \otimes e_{j-d_{\alpha}}) H_{\alpha_{i}}^{W} \Omega_{i} P_{\alpha_{i},j}^{W}, & \hbox{if $j>d_{\alpha}$.} \\
\end{array}%
\right. .
\end{eqnarray*}
\end{proof}


\begin{lemma}
\label{lh2} Suppose that Conditions \ref{cond1}, \ref{cond2},
\ref{cond3}, \ref{cond5}, \ref{cond4}, and \ref{cond7} hold. Then,
\begin{equation*}
\sqrt{T} (\hat{\gamma}_{i0} - \gamma_{i0}) = \tilde{\psi}_{i} +
T^{-1/2} R_{1i} \ind N(0, V_i),
\end{equation*}
where
\begin{eqnarray*}
\tilde{\psi}_{i} &=& \frac{1}{\sqrt{T}} \sum_{t=1}^{T} \psi_{it} = -
\left( T_{i}^{\Omega} \right) ^{-1} \sqrt{T}
\hat{t}_{i}^{\Omega}
= o_{uP}\left( T^{1/10} \right), \ \
R_{1i} = o_{uP} \left( T^{1/5} \right), \ \ V_i = E[\tilde{\psi}_{i}\tilde{\psi}_{i}'].
\end{eqnarray*}
Also
\begin{equation*}
\frac{1}{\sqrt{n}} \sum_{i=1}^{n} \tilde  \psi_{i} = O_{P}(1)
\end{equation*}
\end{lemma}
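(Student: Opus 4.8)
The plan is to run the same mean-value-expansion argument used for the one-step estimator, now driving it with the two-step score $\hat{t}_{i}(\theta,\gamma_{i}) = \hat{t}_{i}^{\Omega}(\theta,\gamma_{i}) + \hat{t}_{i}^{R}(\theta,\gamma_{i})$ whose decomposition is recalled in the proof of Theorem~\ref{th4}: here $\hat{t}_{i}^{\Omega}$ is the score built with the true weight $\Omega_{i}$ and $\hat{t}_{i}^{R}$ carries the first-stage estimation error $\hat{\Omega}_{i}(\tilde{\theta},\tilde{\alpha}_{i}) - \Omega_{i}$. First I would establish the uniform rate $\sqrt{T}(\hat{\gamma}_{i0} - \gamma_{i0}) = o_{uP}(T^{1/10})$. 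Expanding the first order condition $\hat{t}_{i}(\theta_{0},\hat{\gamma}_{i0}) = 0$ about $\gamma_{i0}$ gives $\sqrt{T}(\hat{\gamma}_{i0} - \gamma_{i0}) = -[\hat{T}_{i}(\theta_{0},\overline{\gamma}_{i0})]^{-1}\sqrt{T}\hat{t}_{i}(\theta_{0},\gamma_{i0})$ for some intermediate $\overline{\gamma}_{i0}$, where $\hat{T}_{i}=\partial\hat{t}_{i}/\partial\gamma_{i}'$ is the Jacobian (the $d=0$ case of Lemma~\ref{le2}). Because the true multiplier satisfies $\lambda_{i0}=0$ (positive definiteness of $\Omega_{i}$ forces $\Omega_{i}\lambda_{i0}=0$), the remainder part vanishes at the truth, $\hat{t}_{i}^{R}(\theta_{0},\gamma_{i0})=0$, so $\sqrt{T}\hat{t}_{i}(\theta_{0},\gamma_{i0}) = \sqrt{T}\hat{t}_{i}^{\Omega}(\theta_{0},\gamma_{i0}) = o_{uP}(T^{1/10})$ by Lemma~\ref{lb4} applied to $g$ together with $E[g(z_{it})]=0$; meanwhile $[\hat{T}_{i}(\theta_{0},\overline{\gamma}_{i0})]^{-1} = O_{u}(1)$ by Theorem~\ref{th4}, Lemma~\ref{le2}, and invertibility of $T_{i}^{\Omega}$ from Condition~\ref{cond3}. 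Multiplying gives the rate, and since $\tilde{\psi}_{i} = -(T_{i}^{\Omega})^{-1}\sqrt{T}\hat{t}_{i}^{\Omega}$ the bound $\tilde{\psi}_{i} = o_{uP}(T^{1/10})$ is immediate.

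With the rate in hand I would isolate the leading term. Rearranging the same expansion yields $\sqrt{T}(\hat{\gamma}_{i0} - \gamma_{i0}) = \tilde{\psi}_{i} + T^{-1/2}R_{1i}$ with $R_{1i} = -(T_{i}^{\Omega})^{-1}[\sqrt{T}(\hat{T}_{i}(\theta_{0},\overline{\gamma}_{i0}) - T_{i}^{\Omega})][\sqrt{T}(\hat{\gamma}_{i0} - \gamma_{i0})]$, where in the two-step problem the weighting-matrix error now sits inside the Jacobian bracket. Each factor is $o_{uP}(T^{1/10})$: the first by Lemma~\ref{le2} (whose two-step scores incorporate $\sqrt{T}(\hat{\Omega}_{i}(\tilde{\theta},\tilde{\alpha}_{i}) - \Omega_{i}) = o_{uP}(T^{1/10})$ from Lemma~\ref{le3}) and the second by the previous step, so $R_{1i} = o_{uP}(T^{1/5})$. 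Using the block inverse of $T_{i}^{\Omega}$ computed as in Lemma~\ref{lf5} with $W_{i}$ replaced by $\Omega_{i}$, and noting that at $\gamma_{i0}$ the first block of $\hat{t}_{i}^{\Omega}$ vanishes while the second reduces to $\hat{g}_{i}(\theta_{0},\alpha_{i0})$, one reads off $\tilde{\psi}_{i} = T^{-1/2}\sum_{t}\psi_{it}$ with $\psi_{it} = -(H_{\alpha_{i}}',P_{\alpha_{i}})'g(z_{it})$, matching Lemma~\ref{lemma2}; under the martingale Condition~\ref{cond4}(iii) the off-diagonal blocks of $V_{i}$ cancel and $V_{i} = \mathrm{diag}(\Sigma_{\alpha_{i}}, P_{\alpha_{i}})$.

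For the limit law I would note that $\psi_{it}$ is a stationary martingale difference sequence with finite second moment by Conditions~\ref{cond1}, \ref{cond4}, and~\ref{cond5}, so a martingale central limit theorem gives $\tilde{\psi}_{i} \ind N(0, V_{i})$ with $V_{i} = E[\tilde{\psi}_{i}\tilde{\psi}_{i}']$; since $T^{-1/2}R_{1i} = o_{uP}(T^{-3/10}) = o_{P}(1)$, Slutsky's lemma upgrades this to $\sqrt{T}(\hat{\gamma}_{i0} - \gamma_{i0}) \ind N(0, V_{i})$. The final claim $n^{-1/2}\sum_{i=1}^{n}\tilde{\psi}_{i} = O_{P}(1)$ follows from cross-sectional independence (Condition~\ref{cond1}(ii)) and $\sup_{i}E[|\psi_{it}|^{2}] < \infty$: each $\tilde{\psi}_{i}$ has mean zero (because $E[\hat{g}_{i}] = 0$) and the sum has variance $n^{-1}\sum_{i}\mathrm{Var}(\tilde{\psi}_{i}) \to \bar{E}[V_{i}] < \infty$, so Chebyshev's inequality gives tightness.

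The main obstacle is the uniform-in-$i$ control of the Jacobian error, which in the two-step problem—unlike the one-step case—carries the first-stage weighting-matrix estimation error $\hat{\Omega}_{i}(\tilde{\theta},\tilde{\alpha}_{i}) - \Omega_{i}$. The crux is verifying, via Lemma~\ref{le3} feeding Lemma~\ref{le2}, that this error propagates through the score and its derivatives while preserving the $o_{uP}(T^{1/10})$ rate simultaneously over all $n$ individuals; once that is secured, the remaining steps are routine adaptations of the one-step Lemmas~\ref{lf2} and~\ref{lf5}.
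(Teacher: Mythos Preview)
Your proposal is correct and follows essentially the same route as the paper: a mean-value expansion of the two-step FOC driven by the decomposition $\hat{t}_{i}=\hat{t}_{i}^{\Omega}+\hat{t}_{i}^{R}$, with the Jacobian error (including the $\hat{\Omega}_{i}-\Omega_{i}$ piece) controlled at rate $o_{uP}(T^{1/10})$ via Lemmas~\ref{le2}--\ref{le3}, yielding $R_{1i}=o_{uP}(T^{1/5})$ as a product of two $o_{uP}(T^{1/10})$ factors. The only cosmetic difference is that the paper displays the remainder with the $\hat{T}_{i}^{\Omega}$ and $\hat{T}_{i}^{R}$ contributions separated, whereas you keep $\hat{T}_{i}-T_{i}^{\Omega}$ as a single bracket; and the paper invokes the CLT of Lemma~3 in Hahn and Kuersteiner (2011) rather than stating a martingale CLT directly.
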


\begin{proof}
The statements about $\tilde{\psi}_{i}$ follow by the proof of Lemma
\ref{lb6} applied to the second stage, and the CLT in Lemma 3 of HK. From a similar argument to the proof of Lemma \ref{lb6},
\begin{eqnarray*}
R_{1i} &=& - \underset{= O_{u}(1)} { \underbrace{\left(
T_{i}^{\Omega} \right) ^{-1}}} \ \underset{= o_{uP}(T^{1/10})} {
\underbrace{ \sqrt{T} (\hat{T}_{i}^{\Omega}(\theta_0,\overline{\gamma}_{i}) -
T_{i}^{\Omega}) }} \ \underset{= o_{uP}(T^{1/10})} { \underbrace{
\sqrt{T} (\hat{\gamma}_{i0} - \gamma_{i0})}}  - \underset{=
O_{u}(1)} { \underbrace{\left( T_{i}^{\Omega} \right) ^{-1}}} \
\underset{= o_{uP}(T^{1/10})} { \underbrace{ \sqrt{T}
(\hat{T}_{i}^{R}(\theta_0,\overline{\gamma}_{i}) - T_{i}^{R}) }} \
\underset{= o_{uP}(T^{1/10})} { \underbrace{ \sqrt{T}
(\hat{\gamma}_{i0} - \gamma_{i0})}} \\
&=& o_{uP}(T^{1/5}),
\end{eqnarray*}
by Conditions \ref{cond3} and \ref{cond5}.
\end{proof}


\begin{lemma}
\label{ld1} Assume that Conditions \ref{cond1}, \ref{cond2},
\ref{cond3}, \ref{cond5} and \ref{cond4} hold. Then,
\begin{equation*}
\hat{\Omega}_{i}(\tilde{\theta},\tilde{\alpha}_{i}) = \Omega_{i} +
T^{-1/2}  \tilde{\psi}_{\Omega i}^{W} + T^{-1} R_{1 \Omega i}^{W},
\end{equation*}
where
$$
\tilde{\psi}_{\Omega i}^{W} = \sqrt{T} \left( \hat{\Omega}_{i} -
\Omega_{i} \right) + \sum_{j=1}^{d_{\alpha}} \Omega_{\alpha_{i,j}} \tilde{\psi}_{i,j}^{W}
= o_{up}(T^{1/10}), \ \
R_{1 \Omega i}^{W} = o_{up}(T^{1/5}),
$$
and $\tilde{\psi}_{i,j}^{W}$ is the $j$th element of $\tilde{\psi}_{i,j}^{W}.$
\end{lemma}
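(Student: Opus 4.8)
The plan is to obtain the expansion by a first-order mean-value expansion of the sample second-moment matrix around the true values $(\theta_0,\alpha_{i0})$. Condition \ref{cond4}(iii) (the martingale difference property) collapses the optimal weighting matrix to the simple sample covariance $\hat\Omega_i(\theta,\alpha_i)=T^{-1}\sum_{t=1}^T g(z_{it};\theta,\alpha_i)g(z_{it};\theta,\alpha_i)'$, so that $\hat\Omega_i(\tilde\theta,\tilde\alpha_i)$ is a smooth product-average to which Lemma \ref{ld0} applies with $h_1,h_2$ ranging over $g$ and its first derivatives. Writing $\hat\Omega_{\theta_i}$ and $\hat\Omega_{\alpha_{i,j}}$ for the first derivatives of $\hat\Omega_i(\theta,\alpha_i)$ and $\Omega_{\theta_i},\Omega_{\alpha_{i,j}}$ for their population limits, a mean-value expansion gives
\begin{equation*}
\sqrt{T}[\hat\Omega_i(\tilde\theta,\tilde\alpha_i)-\Omega_i] = \sqrt{T}(\hat\Omega_i-\Omega_i) + \hat\Omega_{\theta_i}(\bar\theta,\bar\alpha_i)\sqrt{T}(\tilde\theta-\theta_0) + \sum_{j=1}^{d_{\alpha}}\hat\Omega_{\alpha_{i,j}}(\bar\theta,\bar\alpha_i)\sqrt{T}(\tilde\alpha_{i,j}-\alpha_{i0,j}),
\end{equation*}
for $(\bar\theta,\bar\alpha_i)$ on the segment joining $(\tilde\theta,\tilde\alpha_i)$ and $(\theta_0,\alpha_{i0})$. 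I would then show that the first summand is the leading stochastic term, the second is negligible, and the third reproduces $\sum_j\Omega_{\alpha_{i,j}}\tilde\psi_{i,j}^W$ up to remainder.

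For the $\theta$-term, Theorem \ref{th3} gives $\sqrt{T}(\tilde\theta-\theta_0)=O_P(n^{-1/2})$, which equals $O_P(T^{-1/2})$ under the balanced sequence $n/T\to\kappa^2$ of Condition \ref{cond1}(iii). Since $\hat\Omega_{\theta_i}(\bar\theta,\bar\alpha_i)=\Omega_{\theta_i}+o_{uP}(1)=O_{uP}(1)$ by the dominance in Condition \ref{cond4}(i) (via Lemma \ref{ld0}), this term is $O_{uP}(T^{-1/2})=o_{uP}(T^{-3/10})$ and is absorbed into the remainder. This is why no $\theta$-derivative appears in $\tilde\psi_{\Omega i}^W$.

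For the $\alpha$-term I would replace $\hat\Omega_{\alpha_{i,j}}(\bar\theta,\bar\alpha_i)$ by its population limit and $\sqrt{T}(\tilde\alpha_{i,j}-\alpha_{i0,j})$ by the influence $\tilde\psi_{i,j}^W$. The first replacement uses Lemma \ref{ld0} with $a=-2/5$: this is legitimate because $\tilde\theta-\theta_0=O_P((nT)^{-1/2})=o_P(T^{-2/5})$ and, by Lemma \ref{lf2}, $\tilde\alpha_i-\alpha_{i0}=o_{uP}(T^{-2/5})$, so the mean value meets the hypotheses and $\hat\Omega_{\alpha_{i,j}}(\bar\theta,\bar\alpha_i)=\Omega_{\alpha_{i,j}}+o_{uP}(T^{-2/5})$. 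The second replacement combines Lemma \ref{lf2}, which gives $\sqrt{T}(\tilde\alpha_{i,j}(\theta_0)-\alpha_{i0,j})=\tilde\psi_{i,j}^W+o_{uP}(T^{-3/10})$ for $j\le d_{\alpha}$, with the bound $\sqrt{T}(\tilde\alpha_i(\tilde\theta)-\tilde\alpha_i(\theta_0))=O_{uP}(1)\cdot\sqrt{T}(\tilde\theta-\theta_0)=O_{uP}(T^{-1/2})$ coming from $\partial\tilde\alpha_i/\partial\theta'=O_{uP}(1)$ (as in the proof of Theorem 2), so that $\sqrt{T}(\tilde\alpha_{i,j}-\alpha_{i0,j})=\tilde\psi_{i,j}^W+o_{uP}(T^{-3/10})$. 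Multiplying the two expansions and using $\tilde\psi_{i,j}^W=o_{uP}(T^{1/10})$, every cross term is $o_{uP}(T^{-2/5})\cdot o_{uP}(T^{1/10})=o_{uP}(T^{-3/10})$, leaving $\sum_j\Omega_{\alpha_{i,j}}\tilde\psi_{i,j}^W+o_{uP}(T^{-3/10})$.

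Collecting the three pieces yields $\sqrt{T}[\hat\Omega_i(\tilde\theta,\tilde\alpha_i)-\Omega_i]=\tilde\psi_{\Omega i}^W+o_{uP}(T^{-3/10})$, hence $R_{1\Omega i}^W=T^{1/2}\cdot o_{uP}(T^{-3/10})=o_{uP}(T^{1/5})$; the order $\tilde\psi_{\Omega i}^W=o_{uP}(T^{1/10})$ follows from the second statement of Lemma \ref{ld0} (for $\sqrt{T}(\hat\Omega_i-\Omega_i)$) together with $\tilde\psi_{i,j}^W=o_{uP}(T^{1/10})$. The main obstacle is the uniform-in-$i$ bookkeeping: each cross term must land exactly at the boundary $o_{uP}(T^{-3/10})$, and in particular the rate $o_{uP}(T^{-2/5})$ for the derivatives of $\hat\Omega_i$ must be extracted from Lemma \ref{ld0} under the moment bounds of Condition \ref{cond4}(i) alone, since Condition \ref{cond7} is not assumed here.
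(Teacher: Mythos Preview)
Your approach is essentially the same as the paper's: a first-order mean-value expansion of $\hat\Omega_i(\tilde\theta,\tilde\alpha_i)$ around $(\theta_0,\alpha_{i0})$, with the $\theta$-term absorbed into the remainder via $\sqrt{T}(\tilde\theta-\theta_0)=O_P(n^{-1/2})$, and the $\alpha$-term handled by substituting the influence expansion of Lemma~\ref{lf2} together with $\tilde\alpha_i(\tilde\theta)-\tilde\alpha_i(\theta_0)=o_{uP}(T^{-3/10})$. The paper organizes the remainder slightly differently (writing it as $\sum_j[\Omega_{\alpha_{i,j}}R_{1i,j}^W+\sqrt{T}(\hat\Omega_{\alpha_{i,j}}(\bar\theta,\bar\alpha_i)-\Omega_{\alpha_{i,j}})\sqrt{T}(\tilde\alpha_{i,j}-\alpha_{i0,j})]+\sum_j\hat\Omega_{\theta_j}(\bar\theta,\bar\alpha_i)T(\tilde\theta_j-\theta_{0,j})$ and invoking Lemma~\ref{le3} rather than Lemma~\ref{ld0} directly), but the content and rate bookkeeping are identical; your observation that Lemma~\ref{le3} formally requires Condition~\ref{cond7}, which is not listed among the hypotheses of Lemma~\ref{ld1}, is a correct reading of the paper.
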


\begin{proof}
By a mean value expansion around $(\theta_{0}, \alpha_{i0})$,
\begin{eqnarray*}
\hat{\Omega}_{i}(\tilde{\theta},\tilde{\alpha}_{i}) &=&
\hat{\Omega}_{i} + \sum_{j=1}^{d_{\alpha}} \hat{\Omega}_{\alpha_{i,j}}(\overline{\theta},\overline{\alpha}_{i}) (\tilde{\alpha}_{i,j} -
\alpha_{i0,j}) + \sum_{j=1}^{d_{\theta}} \hat{\Omega}_{\theta_{j}}(\overline{\theta},\overline{\alpha}_{i}) 
(\tilde{\theta}_{j} - \theta_{0,j}) ,
\end{eqnarray*}
where $(\overline{\theta},\overline{\alpha}_{i})$ lies between
$(\tilde{\theta},\tilde{\alpha}_{i})$ and
$(\theta_{0},\alpha_{i0})$. The expressions for
$\tilde{\psi}_{\Omega i}^{W}$ can be
obtained using the expansions for $\tilde{\gamma}_{i0}$ in Lemma
\ref{lf2} since $\tilde{\gamma}_{i} - \tilde{\gamma}_{i0} = o_{uP}(T^{-3/10})$. The order of this term follows from Lemma
\ref{lf2} and the CLT for independent sequences. The remainder
term is
\begin{eqnarray*} R_{1 \Omega i}^{W} &=&
\sum_{j=1}^{d_{\alpha}} \left[ \Omega_{\alpha_{i,j}}  R_{1i,j}^{W} + \sqrt{T}
(\hat{\Omega}_{\alpha_{i,j}}(\overline{\theta},\overline{\alpha}_{i}) - \Omega_{\alpha_{i,j}})  \sqrt{T} (\tilde{\alpha}_{i,j} -
\alpha_{i0,j})\right]
+  \sum_{j=1}^{d_{\theta}} \hat \Omega_{\theta_{j}}(\overline{\theta},\overline{\alpha}_{i}) T(\tilde \theta_j - \theta_{0,j}).
\end{eqnarray*}
The uniform rate of convergence then follows by Lemmas  \ref{le3} and \ref{lf2}, and
 Theorem \ref{th1}.

\end{proof}


\begin{lemma}
\label{lh3} Suppose that Conditions \ref{cond1}, \ref{cond2},
\ref{cond3}, \ref{cond5}, and \ref{cond4} hold. Then,
\begin{equation}
\sqrt{T} (\hat{\gamma}_{i0} - \gamma_{i0}) = \tilde{\psi}_{i} +
T^{-1/2} Q_{1i} + T^{-1} R_{2i},
\end{equation}
where
\begin{eqnarray*}
Q_{1i}(\tilde{\psi}_{i}, \tilde{a}_{i}) &=& - \left( T_{i}^{\Omega}
\right) ^{-1} \left[ \tilde{A}_{i}^{\Omega} \tilde{\psi}_{i} +
\frac{1}{2} \sum_{j=1}^{d_g+d_{\alpha}} \tilde{\psi}_{i,j} T_{i,j}^{\Omega}
\tilde{\psi}_{i} + \text{diag}[0,\tilde{\psi}_{\Omega_{i}}^{W}]
\tilde{\psi}_{i}\right] 
= o_{uP} \left( T^{1/5} \right), \\
\tilde{A}_{i}^{\Omega} &=& \sqrt{T} (\hat{T}_{i}^{\Omega}
- T_{i}^{\Omega}) = o_{uP} \left( T^{1/10} \right), \ \
R_{2i} = o_{uP} \left( T^{3/10} \right).
\end{eqnarray*}
Also,
\begin{equation*}
\frac{1}{n} \sum_{i=1}^{n}   Q_{1i}  = O_{P}(1).
\end{equation*}
\end{lemma}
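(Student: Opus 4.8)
The plan is to carry out a second-order stochastic (Taylor) expansion of the two-step first-order conditions $\hat{t}_i(\theta_0, \hat{\gamma}_{i0}) = 0$ around the true value $\gamma_{i0}$, refining the first-order expansion already obtained in Lemma \ref{lh2}. Following the decomposition introduced in the proof of Theorem 4, I would write $\hat{t}_i = \hat{t}_i^{\Omega} + \hat{t}_i^{R}$, where $\hat{t}_i^{\Omega}$ is the score evaluated at the true weighting matrix $\Omega_i$ and $\hat{t}_i^{R}$ collects the contribution of the estimation error $\hat{\Omega}_i(\tilde{\theta},\tilde{\alpha}_i) - \Omega_i$ through the Lagrange-multiplier block of the score.

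First I would expand $\hat{t}_i^{\Omega}(\theta_0, \hat{\gamma}_{i0})$ to second order, obtaining $0 = \hat{t}_i^{\Omega} + \hat{T}_i^{\Omega}(\hat{\gamma}_{i0} - \gamma_{i0}) + \tfrac{1}{2}\sum_{j}(\hat{\gamma}_{i0,j} - \gamma_{i0,j})\hat{T}_{i,j}^{\Omega}(\hat{\gamma}_{i0} - \gamma_{i0}) + (\text{third-order remainder})$, with derivatives evaluated at $\gamma_{i0}$ and at intermediate points. Multiplying through by $-(T_i^{\Omega})^{-1}$ and using $\tilde{\psi}_i = -(T_i^{\Omega})^{-1}\sqrt{T}\,\hat{t}_i^{\Omega}$ from Lemma \ref{lh2} identifies the leading term; substituting the first-order approximation $\sqrt{T}(\hat{\gamma}_{i0} - \gamma_{i0}) = \tilde{\psi}_i + o_{uP}(T^{1/5})$ into the Jacobian deviation and the quadratic term produces the first two pieces of $Q_{1i}$, namely $\tilde{A}_i^{\Omega}\tilde{\psi}_i$ with $\tilde{A}_i^{\Omega} = \sqrt{T}(\hat{T}_i^{\Omega} - T_i^{\Omega})$ and $\tfrac{1}{2}\sum_j \tilde{\psi}_{i,j}T_{i,j}^{\Omega}\tilde{\psi}_i$. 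The third piece, $\text{diag}[0,\tilde{\psi}_{\Omega_i}^{W}]\tilde{\psi}_i$, arises from $\hat{t}_i^{R}$: since at the truth the multiplier vanishes, $\sqrt{T}\hat{\lambda}_{i0}$ equals the lower block of $\tilde{\psi}_i$ up to smaller order, and Lemma \ref{ld1} lets me replace $\sqrt{T}(\hat{\Omega}_i(\tilde{\theta},\tilde{\alpha}_i) - \Omega_i)$ by $\tilde{\psi}_{\Omega_i}^{W}$ acting on that block, which because $\hat{t}_i^{R}$ enters only the second coordinate of the score yields precisely the $\text{diag}[0,\cdot]$ structure. Collecting the three contributions gives the stated $Q_{1i}$, and the leftover third-order terms together with the error of the first-order substitution are absorbed into $T^{-1}R_{2i}$.

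For the order bookkeeping I would combine the uniform rates already available: $\tilde{\psi}_i = o_{uP}(T^{1/10})$ (Lemma \ref{lh2}), $\tilde{A}_i^{\Omega} = o_{uP}(T^{1/10})$ and the corresponding second-derivative deviations from the two-step analog of Lemma \ref{lb6} (Lemma \ref{le2}), and $\tilde{\psi}_{\Omega_i}^{W} = o_{uP}(T^{1/10})$ (Lemma \ref{ld1}). Each summand of $Q_{1i}$ is then a product of two $o_{uP}(T^{1/10})$ factors, giving $Q_{1i} = o_{uP}(T^{1/5})$, while the remaining cross terms and third derivatives yield $R_{2i} = o_{uP}(T^{3/10})$. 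Finally, $n^{-1}\sum_{i=1}^{n} Q_{1i} = O_P(1)$ follows from a law of large numbers for sequences independent across $i$, using that each $Q_{1i}$ has a bounded conditional mean whose cross-sectional average converges to the spectral expectations computed explicitly in the companion Lemma \ref{lf5}.

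The main obstacle is the correct treatment of the estimated weighting matrix. In contrast to the one-step case, the two-step score carries $\hat{\Omega}_i(\tilde{\theta},\tilde{\alpha}_i)$, whose own expansion (Lemma \ref{ld1}) injects the $\tilde{\psi}_{\Omega_i}^{W}$ term into $Q_{1i}$; the delicate point is to verify that the cross terms between $\hat{\Omega}_i - \Omega_i$ and $\hat{\gamma}_{i0} - \gamma_{i0}$ land in $R_{2i}$ at order $o_{uP}(T^{3/10})$ rather than contaminating the $T^{-1/2}$ level, which is exactly where the uniform-order estimates of Lemmas \ref{le2}, \ref{le3}, and \ref{ld1} do the essential work.
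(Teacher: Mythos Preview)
Your proposal is correct and follows essentially the same route as the paper: a second-order Taylor expansion of the two-step FOC $\hat t_i(\theta_0,\hat\gamma_{i0})=0$ around $\gamma_{i0}$, decomposing $\hat t_i = \hat t_i^{\Omega} + \hat t_i^{R}$ so that the estimated-weighting-matrix contribution enters through Lemma~\ref{ld1} as the $\text{diag}[0,\tilde\psi_{\Omega_i}^{W}]$ term, with the uniform orders read off from Lemmas~\ref{lb6}, \ref{lh2}, and~\ref{ld1}. The only place where the paper is slightly more explicit is the claim $n^{-1}\sum_i Q_{1i} = O_P(1)$, which it handles by invoking the argument of Theorem~1 in Hahn--Kuersteiner (via Corollary~A.2 of Hall--Heide and Lemma~1 of Andrews) rather than a bare cross-sectional LLN, since each $Q_{1i}$ is a product of time-series averages whose limiting mean involves spectral expectations.
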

\begin{proof}
By a second order Taylor expansion of the FOC for
$\hat{\gamma}_{i0}$, we have
\begin{equation*}
0 = \hat{t}_{i}(\theta_0, \hat{\gamma}_{i0}) =
\hat{t}_{i}^{\Omega} + \hat{T}_{i} (\hat{\gamma}_{i0} -
\gamma_{i0}) + \frac{1}{2} \sum_{j=1}^{d_g+d_{\alpha}} (\hat{\gamma}_{i0,j} -
\gamma_{i0,j}) \hat{T}_{i,j}(\theta_0,\overline{\gamma}_{i})
(\hat{\gamma}_{i0} - \gamma_{i0}),
\end{equation*}
where $\overline{\gamma}_{i}$ is between $\hat{\gamma}_{i0}$ and
$\gamma_{i0}$. The expression for $Q_{1i}$ can be obtained in a
similar fashion as in Lemma A4 in Newey and Smith (2004). The rest of
the properties for $Q_{1i}$ follow by Lemma \ref{lb6} applied to the
second stage, Lemma \ref{lh2},  and an argument similar to the proof of Theorem 1 in HK that uses Corollary A.2 of Hall and Heide (1980, p. 278) 
and Lemma 1 of Andrews (1991). The
remainder term is
\begin{eqnarray*}
R_{2i} &=& - \left( T_{i}^{\Omega} \right) ^{-1} \left[
\tilde{A}_{i}^{\Omega} R_{1i} +  \sum_{j=1}^{d_g+d_{\alpha}} \left[ R_{1i,j}
T_{i,j}^{\Omega} \sqrt{T} (\hat{\gamma}_{i0} - \gamma_{i0}) +
 \tilde{\psi}_{i,j} T_{i,j}^{\Omega} R_{1i} \right]/2 \right]
 \\
 &-&  \left( T_{i}^{\Omega} \right) ^{-1}  \sum_{j=1}^{d_g+d_{\alpha}} \sqrt{T} (\hat{\gamma}_{i0,j} -
\gamma_{i0,j}) \sqrt{T}
(\hat{T}_{i,j}^{\Omega}(\theta_0, \overline{\gamma}_{i}) - T_{i,j}^{\Omega})
\sqrt{T} (\hat{\gamma}_{i0} - \gamma_{i0}) /2  \\
 &-&  \left( T_{i}^{\Omega} \right) ^{-1} \left[ diag[0,R_{1
 \Omega_{i}}^{W}]\sqrt{T} (\hat{\gamma}_{i0} - \gamma_{i0}) +
 \text{diag}[0,\tilde{\psi}_{\Omega_{i}}^{W}] R_{1i}
\right].
\end{eqnarray*}
The uniform rate of convergence then follows by Lemmas \ref{lb6} and
\ref{lh2}, and Conditions \ref{cond3} and \ref{cond5}.
\end{proof}

\begin{lemma}
\label{lh5} Suppose that Conditions \ref{cond1}, \ref{cond2},
\ref{cond3}, \ref{cond5}, \ref{cond4}, and \ref{cond7} hold. Then,
$$
\frac{1}{\sqrt{n}} \sum_{i=1}^{n} \tilde \psi_{i} \ind N(0, \bar{E}[V_i]), \ \ 
\frac{1}{n} \sum_{i=1}^{n}  Q_{1i}  \inp  \bar{E}[B_{\gamma_{i}}^{I} + B_{\gamma_{i}}^{G} +
B_{\gamma_{i}}^{\Omega} + B_{\gamma_{i}}^{W}] =: B_{\gamma},
$$
where
\begin{eqnarray*}
V_i &=&  \text{diag} \left(
  \Sigma_{\alpha_{i}},
  P_{\alpha_{i}} 
\right),
\\
B_{\gamma_{i}}^{I} &=& \left(%
\begin{array}{c}
  B_{\alpha_{i}}^{I} \\
  B_{\lambda_{i}}^{I} \\
\end{array}%
\right) = \left(%
\begin{array}{c}
  H_{\alpha_{i}}  \\
P_{\alpha_{i}} \\
\end{array}%
\right) \left(- \sum_{j=1}^{d_{\alpha}} G_{\alpha \alpha_{i,j}}
  \Sigma_{\alpha_{i}}/2
  + E \left[ G_{\alpha_{i}}(z_{it}) H_{\alpha_{i}} g(z_{i,t-j}) \right]\right), \\
B_{\gamma_{i}}^{G} &=& \left(%
\begin{array}{c}
  B_{\alpha_{i}}^{G} \\
  B_{\lambda_{i}}^{G} \\
\end{array}%
\right) =  \left(%
\begin{array}{c}
  -\Sigma_{\alpha_{i}} \\
  H_{\alpha_{i}}'  \\
\end{array}%
\right) \sum_{j=0}^{\infty}  E \left[ G_{\alpha_{i}}(z_{it})'P_{\alpha_{i}}g(z_{i,t-j}) \right] , \\
B_{\gamma_{i}}^{\Omega} &=& \left(%
\begin{array}{c}
  B_{\alpha_{i}}^{\Omega} \\
  B_{\lambda_{i}}^{\Omega} \\
\end{array}%
\right) = \left(%
\begin{array}{c}
  H_{\alpha_{i}} \\
  P_{\alpha_{i}} \\
\end{array}%
\right)    \sum_{j=0}^{\infty} E[g(z_{it}) g(z_{it})'P_{\alpha_{i}} g(z_{i,t-j})], \\
B_{\gamma_{i}}^{W} &=& \left(%
\begin{array}{c}
  B_{\alpha_{i}}^{W} \\
  B_{\lambda_{i}}^{W} \\
\end{array}%
\right) = \left(%
\begin{array}{c}
   H_{\alpha_{i}}  \\
   P_{\alpha_{i}}  \\
\end{array}%
\right) \sum_{j=1}^{d_{\alpha}} \Omega_{\alpha_{i,j}} \left( H_{\alpha_{i, j}}^{W'} - H'_{\alpha_{i,j}} \right),
\end{eqnarray*}
for $\Sigma_{\alpha_{i}} = \left( G_{\alpha_{i}}' \Omega_{i}^{-1}
G_{\alpha_{i}} \right)^{-1},$ $H_{\alpha_{i}} = \Sigma_{\alpha_{i}}
G_{\alpha_{i}}' \Omega_{i}^{-1},$ and $P_{\alpha_{i}} =
\Omega_{i}^{-1} - \Omega_{i}^{-1} G_{\alpha_{i}} H_{\alpha_{i}}$.
\end{lemma}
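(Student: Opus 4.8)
The plan is to mirror the proof of Lemma \ref{lf5} for the one-step case, now building on the two-step expansions in Lemmas \ref{lh2} and \ref{lh3} rather than on Lemmas \ref{lf2} and \ref{lf3}. Both stated conclusions are read off from the leading influence term $\tilde{\psi}_i$ and the second-order term $Q_{1i}$, whose explicit forms are already supplied by those lemmas; the work is to (i) compute the covariance of $\tilde{\psi}_i$ and apply a central limit theorem across $i$, and (ii) take the probability limit of $n^{-1}\sum_i Q_{1i}$ term by term and collect the pieces into the four bias components $B_{\gamma_i}^{I}, B_{\gamma_i}^{G}, B_{\gamma_i}^{\Omega}, B_{\gamma_i}^{W}$.

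For the central limit theorem, I would write $n^{-1/2}\sum_i \tilde{\psi}_i = (nT)^{-1/2}\sum_{i=1}^n\sum_{t=1}^T \psi_{it}$ with $\psi_{it} = - \left( \begin{array}{c} H_{\alpha_i} \\ P_{\alpha_i} \end{array} \right) g(z_{it})$, and apply the triangular-array CLT of Lemma 3 in HK together with the cross-sectional independence in Condition \ref{cond1}(ii); the Lyapunov requirement is discharged by the moment bounds in Conditions \ref{cond4}(i) and \ref{cond7}. Because $g(z_{it})$ is a martingale difference sequence by Condition \ref{cond4}(iii), the long-run variance collapses to $\Omega_i = \Omega_{0i}$, so $V_i = \left( \begin{array}{c} H_{\alpha_i} \\ P_{\alpha_i} \end{array} \right) \Omega_i (H_{\alpha_i}', P_{\alpha_i})$. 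The block algebra then gives $H_{\alpha_i}\Omega_i H_{\alpha_i}' = \Sigma_{\alpha_i}$, $P_{\alpha_i}\Omega_i P_{\alpha_i} = P_{\alpha_i}$ (using $P_{\alpha_i}G_{\alpha_i} = 0$), and $H_{\alpha_i}\Omega_i P_{\alpha_i} = 0$ (using $H_{\alpha_i}G_{\alpha_i} = I$), so that $V_i = \mathrm{diag}(\Sigma_{\alpha_i},P_{\alpha_i})$ and the cross-sectional limit of the variance is $\bar{E}[V_i]$.

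For the probability limit of $n^{-1}\sum_i Q_{1i}$, I would start from the representation in Lemma \ref{lh3}, compute the expectation of each of the three bracketed pieces, and then pass from the sample average to $\bar{E}[\cdot]$ by a weak law for independent heterogeneous arrays, controlling the cross-sectional second moments with the moment conditions and using HK's Lemma 12 to handle the spectral (infinite-$j$) expectations. The Hessian-curvature piece $\tfrac12\sum_j \tilde{\psi}_{i,j}T_{i,j}^{\Omega}\tilde{\psi}_i$ produces the $G_{\alpha\alpha_{i,j}}$ terms of $B_{\gamma_i}^{I}$; the cross term $\tilde{A}_i^{\Omega}\tilde{\psi}_i$, after substituting the score derivatives from Appendix \ref{appendixF}, splits into the $G_{\alpha_i}$-covariance spectral sums that make up $B_{\gamma_i}^{I}$ and $B_{\gamma_i}^{G}$ and the third-moment spectral sums $E[g(z_{it})g(z_{it})'P_{\alpha_i}g(z_{i,t-j})]$ that constitute $B_{\gamma_i}^{\Omega}$; and the term $\mathrm{diag}[0,\tilde{\psi}_{\Omega_i}^{W}]\tilde{\psi}_i$, once $\tilde{\psi}_{\Omega_i}^{W}$ is replaced by its expansion from Lemma \ref{ld1}, delivers $B_{\gamma_i}^{W}$.

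The main obstacle is correctly producing and allocating the weighting-matrix pieces $B_{\gamma_i}^{\Omega}$ and especially $B_{\gamma_i}^{W}$. The term $B_{\gamma_i}^{\Omega}$ requires evaluating the third-order spectral cross-moments coming from the dependence of $\hat{\Omega}_i$ on the data, where Condition \ref{cond4}(iii) is used to keep the expressions tractable. The term $B_{\gamma_i}^{W}$ is genuinely new relative to the one-step Lemma \ref{lf5}: it arises because the feasible estimator uses the preliminary weight $\hat{\Omega}_i(\tilde{\theta},\tilde{\alpha}_i)$ rather than $\Omega_i$, and its contribution must be tracked through the first-stage influence $\tilde{\psi}_{\Omega_i}^{W}$ of Lemma \ref{ld1}; isolating the factor $\Omega_{\alpha_{i,j}}(H_{\alpha_{i,j}}^{W'} - H_{\alpha_{i,j}}')$, the difference between the first- and second-stage $H$ matrices, is the delicate computation. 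Once these expectations are in hand and the weak law is applied, the negligibility of the remainders $R_{1i}, R_{2i}$ from Lemmas \ref{lh2} and \ref{lh3} ensures that the two stated leading-order limits follow directly.
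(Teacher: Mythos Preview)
Your approach is essentially the same as the paper's: both invoke Lemmas \ref{lh2} and \ref{lh3}, compute the block formula for $(T_i^{\Omega})^{-1}$, verify $E[\tilde{\psi}_i\tilde{\psi}_i'] = \mathrm{diag}(\Sigma_{\alpha_i},P_{\alpha_i})$ via the identities $H_{\alpha_i}\Omega_i H_{\alpha_i}' = \Sigma_{\alpha_i}$, $P_{\alpha_i}\Omega_i P_{\alpha_i} = P_{\alpha_i}$, $H_{\alpha_i}\Omega_i P_{\alpha_i} = 0$, and then evaluate the expectation of each of the three summands in $Q_{1i}$.

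There is, however, a bookkeeping error in your allocation of $B_{\gamma_i}^{\Omega}$. You attribute the third-moment spectral sums $E[g(z_{it})g(z_{it})'P_{\alpha_i}g(z_{i,t-j})]$ to the cross term $\tilde{A}_i^{\Omega}\tilde{\psi}_i$, but this is impossible: by the derivative formulas in Appendix \ref{appendixF}, at the true parameter (where $\lambda_{i0}=0$) the lower-right block of $\hat{T}_i^{\Omega}$ equals the nonrandom $\Omega_i$, so $\tilde{A}_i^{\Omega} = \sqrt{T}(\hat{T}_i^{\Omega}-T_i^{\Omega})$ carries only the $\hat{G}_{\alpha_i}-G_{\alpha_i}$ fluctuations and no $\hat{\Omega}_i$ variability. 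The paper's proof shows that $E[\tilde{A}_i^{\Omega}\tilde{\psi}_i]$ contributes only the $G_{\alpha_i}$-covariance spectral sums feeding $B_{\gamma_i}^{I}$ and $B_{\gamma_i}^{G}$. The term $B_{\gamma_i}^{\Omega}$ instead comes from the $\mathrm{diag}[0,\tilde{\psi}_{\Omega_i}^{W}]\tilde{\psi}_i$ piece, specifically from the $\sqrt{T}(\hat{\Omega}_i-\Omega_i)$ part of $\tilde{\psi}_{\Omega_i}^{W}$ in Lemma \ref{ld1}; the remaining $\sum_j \Omega_{\alpha_{i,j}}\tilde{\psi}_{i,j}^{W}$ part of that same expansion is what produces $B_{\gamma_i}^{W}$. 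Once you relocate $B_{\gamma_i}^{\Omega}$ accordingly, your argument coincides with the paper's.
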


\begin{proof}
The results follow by Lemmas \ref{lh2} and \ref{lh3}, noting that
\begin{eqnarray*}
\left( T_{i}^{\Omega} \right)^{-1} &=& - \left(%
\begin{array}{cc}
  -\Sigma_{\alpha_{i}} & H_{\alpha_{i}} \\
  H_{\alpha_{i}}' & P_{\alpha_{i}} \\
\end{array}%
\right), \ \
\psi_{it} = - \left(%
\begin{array}{c}
  H_{\alpha_{i}} \\
  P_{\alpha_{i}} \\
\end{array}%
\right)g(z_{it}), \\
E \left[ \tilde \psi_{i} \tilde \psi_{i}' \right] &=& \left(%
\begin{array}{cc}
  \Sigma_{\alpha_{i}} & 0 \\
  0 & P_{\alpha_{i}} \\
\end{array}%
\right),\ \
 E \left[ \tilde{A}_{i}^{\Omega}
\tilde{\psi}_{i} \right]
=  \sum_{j=0}^{\infty} \left(%
\begin{array}{c}
   E \left[ G_{\alpha_{i}}(z_{it})'P_{\alpha_{i}}g(z_{i,t-j}) \right] \\
   E \left[ G_{\alpha_{i}}(z_{it})'H_{\alpha_{i}}g(z_{i,t-j}) \right]\\
\end{array}%
\right), \\
E \left[ \tilde{\psi}_{i,j} T_{i,j}^{\Omega}
\tilde{\psi}_{i} \right] &=& \left\{%
\begin{array}{ll}
    -\left(%
\begin{array}{c}
  0 \\
  G_{\alpha \alpha_{i,j}}' \Sigma_{\alpha_{i}} \\
\end{array}%
\right), & \hbox{if $j \leq d_{\alpha}$;} \\
    0, & \hbox{if $j>d_{\alpha}$.} \\
\end{array}%
\right. \\
E \left[ \text{diag}[0,\tilde{\psi}_{\Omega_{i}}^{W}] \tilde{\psi}_{i}
\right] &=&
\left(%
\begin{array}{cc}
  0 \\
  \sum_{j=0}^{\infty}  E[g(z_{it}) g(z_{it})'P_{\alpha_{i}} g(z_{i,t-j})]
  + \sum_{j=1}^{d_{\alpha}} \Omega_{\alpha_{i,j}} \left( H_{\alpha_{i,j}}^{W'} - H_{\alpha_{i,j}} \right)
  \\
\end{array}%
\right).
\end{eqnarray*}
\end{proof}

\section{Stochastic Expansion for $\hat{s}_{i}^{W}(\theta_{0}, \tilde \gamma_{i0})$ and $\hat{s}_{i}(\theta_{0}, \hat \gamma_{i0})$} \label{appendixG}
We characterize stochastic expansions up to second order  for one-step and two-step
profile scores of the common parameter evaluated at the true value of the
common parameter. The expressions for the scores and their derivatives in the components of the expansions are given in Appendix \ref{appendixF}.


\begin{lemma}
\label{lg1} Suppose that Conditions \ref{cond1}, \ref{cond2},
\ref{cond3}, and \ref{cond5} hold. Then,
\begin{equation*}
\hat{s}_{i}^{W} (\theta_{0}, \tilde \gamma_{i0}) = T^{-1/2}
\tilde{\psi}_{si}^{W} + T^{-1} Q_{1si}^{W} + T^{-3/2} R_{2si}^{W},
\end{equation*}
where
\begin{eqnarray*}
\tilde{\psi}_{si}^{W} &=&  M_{i}^{W} \tilde{\psi}_{i}^{W}
= o_{uP}(T^{1/10}), \ \
Q_{1si}^{W} = M_{i}^{W} Q_{1i}^{W} + \tilde{C}_{i}^{W}
\tilde{\psi}_{i}^{W} + \frac{1}{2} \sum_{j=1}^{d_g+d_{\alpha}}
\tilde{\psi}_{i,j}^{W} M_{i,j}^{W} \tilde{\psi}_{i}^{W}
 = o_{uP}(T^{1/5}), \\
\tilde{C}_{i}^{W} &=& \sqrt{T} (\hat{M}_{i}^{W} -
M_{i}^{W}) = o_{uP} (T^{1/10}), \ \
R_{2si}^{W} = o_{uP}(T^{2/5}). 
\end{eqnarray*}
Also,
$$
\frac{1}{\sqrt{n}} \sum_{i=1}^{n} \tilde{\psi}_{si}^{W}  =
O_{P}(1), \ \ \frac{1}{n} \sum_{i=1}^{n}  
Q_{1si}^{W}  = O_{P}(1).
$$
\end{lemma}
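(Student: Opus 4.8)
The plan is to obtain the expansion from a second-order Taylor expansion of the one-step profile score $\hat{s}_{i}^{W}(\theta_{0},\gamma_{i}) = -\hat{G}_{\theta i}(\theta_{0},\alpha_{i})'\lambda_{i}$ in its $\gamma_{i}$-argument around the true value $\gamma_{i0}=(\alpha_{i0}',\lambda_{i0}')'$, and then to substitute the stochastic expansion of $\tilde{\gamma}_{i0}-\gamma_{i0}$ already established in Lemma \ref{lf3}. The crucial preliminary observation is that the population first-order conditions force $\lambda_{i0}=0$: the lower block of $t_{i}^{W}(\theta_{0},\gamma_{i0})=0$ reads $g_{i}(\theta_{0},\alpha_{i0})+W_{i}\lambda_{i0}=W_{i}\lambda_{i0}=0$ because the moment conditions hold at the truth, and $W_{i}$ is positive definite by Condition \ref{cond2}(v). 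Hence the zeroth-order term $\hat{s}_{i}^{W}(\theta_{0},\gamma_{i0})=-\hat{G}_{\theta i}'\lambda_{i0}$ vanishes identically, so the leading behavior of the score is driven by its first-order term, and the derivative matrix $M_{i}^{W}$ has a zero $\alpha$-block and $\lambda$-block equal to $-G_{\theta i}'$.

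Writing the mean-value form
\begin{equation*}
\hat{s}_{i}^{W}(\theta_{0},\tilde{\gamma}_{i0}) = \hat{M}_{i}^{W}(\tilde{\gamma}_{i0}-\gamma_{i0}) + \frac{1}{2}\sum_{j=1}^{d_g+d_{\alpha}}(\tilde{\gamma}_{i0,j}-\gamma_{i0,j}) \hat{M}_{i,j}^{W}(\theta_{0},\overline{\gamma}_{i}) (\tilde{\gamma}_{i0}-\gamma_{i0}),
\end{equation*}
with $\hat{M}_{i}^{W}=\hat{M}_{i,0}^{W}(\theta_{0},\gamma_{i0})$ and $\overline{\gamma}_{i}$ between $\tilde{\gamma}_{i0}$ and $\gamma_{i0}$, I would decompose $\hat{M}_{i}^{W}=M_{i}^{W}+T^{-1/2}\tilde{C}_{i}^{W}$ and insert $\tilde{\gamma}_{i0}-\gamma_{i0}=T^{-1/2}\tilde{\psi}_{i}^{W}+T^{-1}Q_{1i}^{W}+T^{-3/2}R_{2i}^{W}$ from Lemma \ref{lf3}. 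Matching powers of $T$ reads off $\tilde{\psi}_{si}^{W}=M_{i}^{W}\tilde{\psi}_{i}^{W}$ at order $T^{-1/2}$; the three contributions $M_{i}^{W}Q_{1i}^{W}$, $\tilde{C}_{i}^{W}\tilde{\psi}_{i}^{W}$, and $\frac{1}{2}\sum_{j}\tilde{\psi}_{i,j}^{W}M_{i,j}^{W}\tilde{\psi}_{i}^{W}$ at order $T^{-1}$ (the last obtained by replacing $\hat{M}_{i,j}^{W}(\theta_{0},\overline{\gamma}_{i})$ by its limit $M_{i,j}^{W}$ in the quadratic term); and all else into $T^{-3/2}R_{2si}^{W}$. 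The required uniform orders $\tilde{\psi}_{si}^{W}=o_{uP}(T^{1/10})$, $\tilde{C}_{i}^{W}=o_{uP}(T^{1/10})$, and $Q_{1si}^{W}=o_{uP}(T^{1/5})$ then follow from boundedness of $M_{i}^{W}$ and $M_{i,j}^{W}$ (Condition \ref{cond2}(iv)), the rates $\tilde{\psi}_{i}^{W}=o_{uP}(T^{1/10})$ and $Q_{1i}^{W}=o_{uP}(T^{1/5})$ of Lemmas \ref{lf2} and \ref{lf3}, and the $j=0$ and $j\geq 1$ cases of the second part of Lemma \ref{lb6}, which supply $\sqrt{T}(\hat{M}_{i,j}^{W}-M_{i,j}^{W})=o_{uP}(T^{1/10})$.

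I expect the main obstacle to be the uniform control of the remainder $R_{2si}^{W}=o_{uP}(T^{2/5})$, since it aggregates several cross terms: products of $\tilde{C}_{i}^{W}$ with $Q_{1i}^{W}$, of $M_{i}^{W}$ with $R_{2i}^{W}$, and of the quadratic-form discrepancy $(\hat{M}_{i,j}^{W}(\theta_{0},\overline{\gamma}_{i})-M_{i,j}^{W})$ with two factors of $\tilde{\gamma}_{i0}-\gamma_{i0}$. Using $\tilde{\gamma}_{i0}-\gamma_{i0}=O_{uP}(T^{-2/5})$, $\hat{M}_{i,j}^{W}(\theta_{0},\overline{\gamma}_{i})-M_{i,j}^{W}=o_{uP}(T^{-2/5})$, and $R_{2i}^{W}=o_{uP}(T^{3/10})$, each such term is $o_{uP}(T^{-6/5})$ after multiplication, which is the required $o_{uP}(T^{-3/2}\cdot T^{2/5})$; the bookkeeping is carried out term by term exactly as in the proof of Lemma \ref{lh3}. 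Finally, the two aggregate statements are routine: $\tilde{\psi}_{si}^{W}=G_{\theta i}'P_{\alpha_{i}}^{W}T^{-1/2}\sum_{t}g(z_{it})$ is mean zero with uniformly bounded variance, so $n^{-1/2}\sum_{i}\tilde{\psi}_{si}^{W}=O_{P}(1)$ by independence across $i$ (Condition \ref{cond1}(ii)) and Chebyshev; and $n^{-1}\sum_{i}Q_{1si}^{W}=O_{P}(1)$ follows by writing it as its mean (bounded, being an average of uniformly integrable covariance-type terms) plus a mean-zero average of independent summands of order $o_{P}(1)$.
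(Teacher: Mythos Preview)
Your proposal is correct and follows essentially the same approach as the paper: a second-order Taylor expansion of $\hat{s}_{i}^{W}(\theta_{0},\cdot)$ around $\gamma_{i0}$, the observation that $\hat{s}_{i}^{W}(\theta_{0},\gamma_{i0})=0$ because $\lambda_{i0}=0$, substitution of the expansion from Lemma~\ref{lf3}, and term-by-term bookkeeping for the remainder using Lemma~\ref{lb6}. The paper's proof is terser (it writes the remainder using $R_{1i}^{W}$ from Lemma~\ref{lf2} rather than $Q_{1i}^{W}$, which amounts to the same thing since $R_{1i}^{W}=Q_{1i}^{W}+T^{-1/2}R_{2i}^{W}$), but the structure and the rate computations are identical to yours.
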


\begin{proof}
By a second order Taylor expansion of $\hat{s}_{i}^{W} (\theta_{0}, \tilde{\gamma}_{i0})$ around $\tilde{\gamma}_{i0} = \gamma_{i0}$,
\begin{eqnarray*}
\hat{s}_{i}^{W} (\theta_{0}, \tilde{\gamma}_{i0}) &=&
\hat{s}_{i}^{W} +
\hat{M}_{i}^W (\tilde{\gamma}_{i0} - \gamma_{i0})
+ \frac{1}{2} \sum_{j=1}^{d_g+d_{\alpha}} (\tilde{\gamma}_{i0,j} -
\gamma_{i0,j}) \hat{M}_{i,j}^W(\theta_0,\overline{\gamma}_{i})
(\tilde{\gamma}_{i0} - \gamma_{i0}),
\end{eqnarray*}
where $\overline{\gamma}_{i}$ is between $\tilde{\gamma}_{i0}$ and
$\gamma_{i0}$. Noting that $\hat{s}_{i}^{W} (\theta_{0},
\gamma_{i0}) = 0$ and using the expansion for $\tilde{\gamma}_{i0}$
in Lemma \ref{lf3}, we can obtain the expressions for
$\tilde{\psi}_{si}^{W}$ and $Q_{1si}^{W}$, after
some algebra. The rest of the properties for these terms follow by
the properties of $\tilde{\psi}_{i}^{W}$ and $Q_{1i}^{W}$.
The remainder term is
\begin{eqnarray*}
R_{2si}^{W} &=& M_{i}^{W}R_{2i}^{W} + \tilde{C}_{i}^{W} R_{1i}^{W} +
\frac{1}{2} \sum_{j=1}^{d_g+d_{\alpha}} \left[ R_{1i,j}^{W} M_{i,j}^{W}
\sqrt{T} (\tilde{\gamma}_{i0} - \gamma_{i0})  +
\tilde{\psi}_{i,j}^{W} M_{i,j}^{W} R_{1i}^{W} \right]
 \\
&+&  \frac{1}{2} \sum_{j=1}^{d_g+d_{\alpha}}   \sqrt{T}
(\tilde{\gamma}_{i0,j} - \gamma_{i0,j})  \sqrt{T}
(\hat{M}_{i,j}^{W}(\theta_0, \overline{\gamma}_{i}) - M_{i,j}^{W}) \sqrt{T}
(\tilde{\gamma}_{i0} - \gamma_{i0}) .
\end{eqnarray*}
The uniform order of $R_{2si}^{W}$ follows by the properties of the
components in the expansion of $\tilde{\gamma}_{i0}$, Lemma
\ref{lb6}, and Conditions \ref{cond3} and \ref{cond5}.
\end{proof}

\begin{lemma}
\label{lg2} Suppose that Conditions \ref{cond1}, \ref{cond2},
\ref{cond3}, and \ref{cond5} hold. We then have
\begin{eqnarray*}
\frac{1}{\sqrt{n}} \sum_{i=1}^{n} \tilde \psi_{si}^{W} &\ind& N(0,
V_{s}^{W}),
\ \
V_{s}^{W} = \bar{E}[
G_{\theta_{i}}' P_{\alpha_{i}}^{W} \Omega_{i} P_{\alpha_{i}}^{W} G_{\theta_{i}}], \\
\frac{1}{n} \sum_{i=1}^{n}  Q_{1si}^{W}  &\inp&  \bar{E} E \left[ Q_{1si}^{W}
\right] = \bar{E}[B_{si}^{W,B} + B_{si}^{W,C} + B_{si}^{W,V}] =: B_{s}^{W}, 
\end{eqnarray*}
where $B_{si}^{W,B} = - G_{\theta_{i}}' B_{\lambda_{i}}^{W} = -
G_{\theta_{i}}' \left( B_{\lambda_{i}}^{W,I} +
B_{\lambda_{i}}^{W,G} + B_{\lambda_{i}}^{W,1S} \right),$ $B_{si}^{W,C} = \sum_{j=-\infty}^{\infty}E \left[ G_{\theta_{i}}(z_{it})'
P_{\alpha_{i}}^{W} g(z_{i,t-j}) \right],$ $B_{si}^{W,V} =  - \sum_{j=1}^{d_{\alpha}}  G_{\theta \alpha_{i,j}}'
P_{\alpha_{i}}^{W} \Omega_{i} H_{\alpha_{i}}^{W'}/2 - \sum_{j=1}^{d_g}
G_{ \theta \alpha_{i}}' (I_{d_{\alpha}} \otimes e_{j}) H_{\alpha_{i}}^{W}
\Omega_{i} P_{\alpha_{i},j}^{W}/2,
$  $H_{\alpha_{i}}^{W} =
\Sigma_{\alpha_{i}}^{W} G_{\alpha_{i}}' W_{i}^{-1},$ $\Sigma_{\alpha_{i}}^{W} = \left( G_{\alpha_{i}}' W_{i}^{-1}
G_{\alpha_{i}} \right)^{-1},$ and
$P_{\alpha_{i}}^{W} = W_{i}^{-1} - W_{i}^{-1} G_{\alpha_{i}}
H_{\alpha_{i}}^{W}$.
\end{lemma}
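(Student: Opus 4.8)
The plan is to obtain both limits directly from the second-order expansion of the profile score established in Lemma \ref{lg1}, namely $\hat{s}_i^W(\theta_0,\tilde{\gamma}_{i0}) = T^{-1/2}\tilde{\psi}_{si}^W + T^{-1}Q_{1si}^W + T^{-3/2}R_{2si}^W$, treating the leading term $\tilde{\psi}_{si}^W$ and the second-order term $Q_{1si}^W$ separately. Throughout I would exploit cross-sectional independence, Condition \ref{cond1}(ii), so that both $\{\tilde{\psi}_{si}^W\}_{i=1}^n$ and $\{Q_{1si}^W\}_{i=1}^n$ are independent (though not identically distributed) across $i$, and I would use the explicit formulas for the profile score and its derivatives collected in Appendix \ref{appendixF}. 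A key simplification is that at the true parameter value $\lambda_{i0}=0$, which annihilates the terms in $M_i^W$ proportional to $\lambda_i$ and reduces it to $M_i^W = (0, -G_{\theta_i}')$ in the limit.

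For the first assertion I would first check that $E[\tilde{\psi}_{si}^W]=0$: since $\tilde{\psi}_{si}^W = M_i^W\tilde{\psi}_i^W = G_{\theta_i}'P_{\alpha_i}^W T^{-1/2}\sum_t g(z_{it})$ and $E[g(z_{it})]=0$ at the truth, the mean vanishes. Next I would compute the limiting variance: writing $\frac{1}{\sqrt{n}}\sum_i\tilde{\psi}_{si}^W = \frac{1}{\sqrt{nT}}\sum_i\sum_t G_{\theta_i}'P_{\alpha_i}^W g(z_{it})$, the long-run variance of the inner time-series sum converges to $\Omega_i$ by the definition $\Omega_i = \Omega_{0i} + \sum_{j\ge 1}(\Omega_{ji}+\Omega_{ji}')$ and stationary mixing in Condition \ref{cond1}(i), so that $E[\tilde{\psi}_{si}^W\tilde{\psi}_{si}^{W\prime}] \to G_{\theta_i}'P_{\alpha_i}^W\Omega_i P_{\alpha_i}^W G_{\theta_i} =: V_{si}^W$, using symmetry of $P_{\alpha_i}^W$. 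The conclusion $\frac{1}{\sqrt{n}}\sum_i\tilde{\psi}_{si}^W \ind N(0, V_s^W)$ with $V_s^W = \bar{E}[V_{si}^W]$ then follows from a central limit theorem for double arrays independent across $i$ and strongly mixing across $t$ (Lemma 3 of HK); the Lindeberg/Lyapunov moment requirement is supplied by the dominance and moment bounds in Conditions \ref{cond2}(iv) and \ref{cond5}, and the averaging of the heterogeneous $V_{si}^W$ to $\bar{E}[V_{si}^W]$ uses the cross-sectional law of large numbers together with Condition \ref{cond3}(ii).

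For the second assertion I would compute $E[Q_{1si}^W]$ term by term from the three pieces of $Q_{1si}^W$ in Lemma \ref{lg1}. The term $M_i^W Q_{1i}^W$ contributes $-G_{\theta_i}'B_{\lambda_i}^W$, i.e.\ $B_{si}^{W,B}$, upon taking expectations and invoking the identification $\frac{1}{n}\sum_i Q_{1i}^W \inp \bar{E}[B_{\gamma_i}^W]$ from Lemma \ref{lf5}, selecting the $\lambda$-block $B_{\lambda_i}^W = B_{\lambda_i}^{W,I}+B_{\lambda_i}^{W,G}+B_{\lambda_i}^{W,1S}$. The correlation term $\tilde{C}_i^W\tilde{\psi}_i^W$, with $\tilde{C}_i^W = \sqrt{T}(\hat{M}_i^W - M_i^W)$, yields the spectral sum $B_{si}^{W,C} = \sum_{j=-\infty}^{\infty} E[G_{\theta_i}(z_{it})'P_{\alpha_i}^W g(z_{i,t-j})]$ once the cross-covariance between the derivative fluctuation and $\tilde{\psi}_i^W$ is evaluated via the moment formulas of Appendix \ref{appendixF}. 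The quadratic curvature term $\frac{1}{2}\sum_j\tilde{\psi}_{i,j}^W M_{i,j}^W\tilde{\psi}_i^W$ produces $B_{si}^{W,V}$ after substituting $\psi_{it}^W = -(H_{\alpha_i}^W, P_{\alpha_i}^W)'g(z_{it})$ and the second derivatives $M_{i,j}^W$, splitting the sum over $j\le d_\alpha$ and $j>d_\alpha$ exactly as in the computation of $E[\tilde{\psi}_{i,j}^W T_{i,j}^W\tilde{\psi}_i^W]$ in Lemma \ref{lf5}. Having shown $E[Q_{1si}^W] = B_{si}^{W,B}+B_{si}^{W,C}+B_{si}^{W,V}$, I would finish via $\frac{1}{n}\sum_i Q_{1si}^W = \frac{1}{n}\sum_i E[Q_{1si}^W] + \frac{1}{n}\sum_i(Q_{1si}^W - E[Q_{1si}^W])$, where the first average converges to $\bar{E}[\,\cdot\,]$ by the cross-sectional law of large numbers and the second is a mean of independent mean-zero terms with variance $O(1/n)$, hence $o_P(1)$, the moment bounds again coming from Conditions \ref{cond2}(iv), \ref{cond3} and \ref{cond5}.

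The main obstacle is the bookkeeping in the second part: correctly evaluating the two serial-dependence spectral sums $E[\tilde{C}_i^W\tilde{\psi}_i^W]$ and $E[\frac{1}{2}\sum_j\tilde{\psi}_{i,j}^W M_{i,j}^W\tilde{\psi}_i^W]$ and matching the resulting matrix expressions exactly to $B_{si}^{W,C}$ and $B_{si}^{W,V}$, including the kronecker-product terms $G_{\theta\alpha_i}'(I_{d_\alpha}\otimes e_j)$ and the correct placement of $H_{\alpha_i}^W$, $P_{\alpha_i}^W$ and $\Sigma_{\alpha_i}^W$. A secondary technical point is verifying that the remainder $T^{-3/2}R_{2si}^W$ is negligible after cross-sectional averaging, which reduces to the uniform order $\sup_i R_{2si}^W = o_{uP}(T^{2/5})$ recorded in Lemma \ref{lg1} together with $n/T\to\kappa^2$.
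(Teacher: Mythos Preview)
Your proposal is correct and follows essentially the same route as the paper's own proof: you derive both limits from the expansion in Lemma \ref{lg1}, invoke Lemma \ref{lf5} for the $M_i^W Q_{1i}^W$ piece, and compute the expectations $E[\tilde{\psi}_{si}^W\tilde{\psi}_{si}^{W\prime}]$, $E[\tilde{C}_i^W\tilde{\psi}_i^W]$, and $E[\tilde{\psi}_{i,j}^W M_{i,j}^W\tilde{\psi}_i^W]$ exactly as the paper does, splitting the last over $j\le d_\alpha$ and $j>d_\alpha$. The only difference is that you spell out the CLT and LLN verification and the simplification $M_i^W=(0,-G_{\theta_i}')$ more explicitly than the paper's terse reference to Lemmas \ref{lg1} and \ref{lf5}.
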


\begin{proof}
The results follow by Lemmas \ref{lg1} and \ref{lf5}, noting that
\begin{eqnarray*}
E \left[ \tilde \psi_{si}^{W} \tilde  \psi_{si}^{W'} \right] &=& M_{i}^{W} \left(%
\begin{array}{cc}
  H_{\alpha_{i}}^{W} \Omega_{i} H_{\alpha_{i}}^{W'} & H_{\alpha_{i}}^{W} \Omega_{i} P_{\alpha_{i}}^{W} \\
  P_{\alpha_{i}}^{W} \Omega_{i} H_{\alpha_{i}}^{W'} & P_{\alpha_{i}}^{W} \Omega_{i} P_{\alpha_{i}}^{W} \\
\end{array}%
\right) M_{i}^{W'}, \\
E \left[ \tilde{C}_{i}^{W} \tilde{\psi}_{i}^{W} \right] &=& \sum_{j=-\infty}^{\infty}E \left[ G_{\theta_{i}}(z_{it})'
P_{\alpha_{i}}^{W} g(z_{i,t-j}) \right], \\
E \left[ \tilde{\psi}_{i,j}^{W} M_{i,j}^{W}
\tilde{\psi}_{i}^{W} \right] &=& \left\{%
\begin{array}{ll}
    - G_{\theta \alpha _{i,j}}' P_{\alpha_{i}}^{W} \Omega_{i}
H_{\alpha_{i}}^{W'}, & \hbox{if $j\le d_{\alpha}$;} \\
    - G_{\theta \alpha _{i}}' (I_{d_{\alpha}} \otimes e_{j-d_{\alpha}}) H_{\alpha_{i}}^{W} \Omega_{i}
P_{\alpha_{i},j}^{W}, & \hbox{if $j>d_{\alpha}$.} \\
\end{array}%
\right.
\end{eqnarray*}
\end{proof}


\begin{lemma}
\label{lg3} Suppose that Conditions \ref{cond1}, \ref{cond2},
\ref{cond3}, and \ref{cond5} hold. Then, for $\hat{s}^{W}(\theta_{0}) = n^{-1} \sum_{i=1}^{n}
\hat{s}_{i}^{W}(\theta_{0}, \tilde{\gamma}_{i0}),$
$$
\sqrt{nT} \hat{s}^{W}(\theta_{0}) \ind N\left( \kappa B_{s}^{W}, V_{s}^{W} \right),
$$
where $B_{s}^{W}$ and $V_{s}^{W}$ are defined in Lemma \ref{lg2}.
\end{lemma}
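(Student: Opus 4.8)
The plan is to substitute the second-order stochastic expansion of the individual profile scores from Lemma \ref{lg1} into the definition $\hat{s}^{W}(\theta_{0}) = n^{-1}\sum_{i=1}^{n} \hat{s}_{i}^{W}(\theta_{0}, \tilde{\gamma}_{i0})$, rescale by $\sqrt{nT}$, and let the three pieces of the expansion separate into a stochastic term governed by a central limit theorem, a deterministic bias term, and an asymptotically negligible remainder. First I would write
\begin{equation*}
\sqrt{nT}\,\hat{s}^{W}(\theta_{0}) = \sqrt{\tfrac{T}{n}} \sum_{i=1}^{n} \left[ T^{-1/2} \tilde{\psi}_{si}^{W} + T^{-1} Q_{1si}^{W} + T^{-3/2} R_{2si}^{W} \right] = \frac{1}{\sqrt{n}}\sum_{i=1}^{n} \tilde{\psi}_{si}^{W} + \sqrt{\tfrac{n}{T}}\,\frac{1}{n}\sum_{i=1}^{n} Q_{1si}^{W} + \frac{1}{\sqrt{n}\,T}\sum_{i=1}^{n} R_{2si}^{W}.
\end{equation*}

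For the leading term, Lemma \ref{lg2} gives directly $\frac{1}{\sqrt{n}}\sum_{i=1}^{n} \tilde{\psi}_{si}^{W} \ind N(0, V_{s}^{W})$, relying on cross-sectional independence (Condition \ref{cond1}(ii)) and the moment conditions underlying that lemma. For the second term, Lemma \ref{lg2} also yields $\frac{1}{n}\sum_{i=1}^{n} Q_{1si}^{W} \inp B_{s}^{W}$, while Condition \ref{cond1}(iii) gives $\sqrt{n/T} \to \kappa$; hence this term converges in probability to the constant $\kappa B_{s}^{W}$, which is precisely the asymptotic bias of the score.

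The remaining work, and the only genuinely delicate point, is to show that the remainder is negligible. Here I would use the uniform bound $\sup_{1\le i\le n} |R_{2si}^{W}| = o_{uP}(T^{2/5})$ from Lemma \ref{lg1} together with the crude estimate $|\sum_{i=1}^{n} R_{2si}^{W}| \le n \sup_{i} |R_{2si}^{W}|$, giving
\begin{equation*}
\left| \frac{1}{\sqrt{n}\,T}\sum_{i=1}^{n} R_{2si}^{W} \right| \leq \frac{\sqrt{n}}{T}\, \sup_{1 \leq i \leq n} |R_{2si}^{W}| = \sqrt{\tfrac{n}{T}}\, T^{-1/2}\, o_{P}(T^{2/5}) = \sqrt{\tfrac{n}{T}}\, o_{P}(T^{-1/10}) = o_{P}(1),
\end{equation*}
where the final step uses that $\sqrt{n/T} \to \kappa < \infty$ stays bounded. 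Combining the three pieces by Slutsky's theorem---the first converging in distribution to $N(0, V_{s}^{W})$, the second converging in probability to the constant $\kappa B_{s}^{W}$, and the third vanishing---then delivers $\sqrt{nT}\,\hat{s}^{W}(\theta_{0}) \ind N(\kappa B_{s}^{W}, V_{s}^{W})$, as claimed.

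In short, the argument is bookkeeping once Lemmas \ref{lg1} and \ref{lg2} are in hand; the main obstacle is confirming that the accumulation of $n$ individual remainder terms against the $T^{-3/2}$ prefactor is controlled. This is exactly where the balanced growth $n/T \to \kappa^{2}$ matters: it forces $\sqrt{n}/T$ to be of order $T^{-1/2}$, which is sharp enough to absorb the $o_{P}(T^{2/5})$ bound on each $R_{2si}^{W}$ and leave an $o_{P}(1)$ remainder.
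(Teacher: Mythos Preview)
Your proof is correct and follows essentially the same approach as the paper: substitute the expansion from Lemma \ref{lg1}, rescale, show the remainder term is $o_{P}(1)$ via the uniform bound $R_{2si}^{W} = o_{uP}(T^{2/5})$ together with $n/T \to \kappa^{2}$, and invoke Lemma \ref{lg2} for the limits of the leading and bias terms. The paper's proof is slightly terser but the decomposition, the orders, and the appeal to Lemmas \ref{lg1} and \ref{lg2} are identical.
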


\begin{proof}
By Lemma \ref{lg1}, 
\begin{eqnarray*}
\sqrt{nT} \hat{s}^{W} (\theta_{0}) &=&
\underset{=O_{P}(1)}{\underbrace{\frac{1}{\sqrt{n}} \sum_{i=1}^{n}
\tilde{\psi}_{si}^{W}}} +
\underset{=O_{P}(1)}{\underbrace{\sqrt{\frac{n}{T}} \frac{1}{n}
\sum_{i=1}^{n} Q_{1si}^{W}}} +
\underset{=o_{P}(1)}{\underbrace{\sqrt{\frac{n}{T^{2}}} \frac{1}{n}
\sum_{i=1}^{n} R_{2si}^{W}}} \\
&=& \frac{1}{\sqrt{n}} \sum_{i=1}^{n} \tilde{\psi}_{si}^{W} +
\sqrt{\frac{n}{T}} \frac{1}{n} \sum_{i=1}^{n} Q_{1si}^{W} +
o_{P}(1).
\end{eqnarray*}
Then, the result follows by Lemma \ref{lg2}.
\end{proof}


\begin{lemma}
\label{li1} Suppose that Conditions \ref{cond1}, \ref{cond2},
\ref{cond3}, \ref{cond5}, \ref{cond4}, and \ref{cond7} hold. Then,
\begin{equation*}
\hat{s}_{i} (\theta_{0}, \hat{\gamma}_{i0}) = T^{-1/2}
\tilde{\psi}_{si} + T^{-1} Q_{1si} + T^{-3/2} R_{2si},
\end{equation*}
where all the terms are identical to that of Lemma \ref{lg1}
after replacing $W$ by $\Omega$. Also, 
the  properties of all the terms of the expansion are the analogous to those of Lemma \ref{lg1}.
\end{lemma}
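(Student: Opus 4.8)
The plan is to follow the proof of Lemma \ref{lg1} essentially verbatim, replacing every one-step object by its two-step counterpart and the preliminary weighting matrix $W$ by the optimal matrix $\Omega$. Concretely, I would start from a second-order Taylor expansion of the two-step profile score $\hat{s}_{i}(\theta_{0}, \hat{\gamma}_{i0})$ around $\hat{\gamma}_{i0} = \gamma_{i0}$,
$$
\hat{s}_{i}(\theta_{0}, \hat{\gamma}_{i0}) = \hat{s}_{i} + \hat{M}_{i}(\hat{\gamma}_{i0} - \gamma_{i0}) + \frac{1}{2} \sum_{j=1}^{d_g+d_{\alpha}} (\hat{\gamma}_{i0,j} - \gamma_{i0,j}) \hat{M}_{i,j}(\theta_{0}, \overline{\gamma}_{i}) (\hat{\gamma}_{i0} - \gamma_{i0}),
$$
where $\overline{\gamma}_{i}$ lies between $\hat{\gamma}_{i0}$ and $\gamma_{i0}$, and $\hat{s}_{i} = \hat{s}_{i}(\theta_{0}, \gamma_{i0})$, $\hat{M}_{i} = \partial \hat{s}_{i}(\theta_{0}, \gamma_{i0}) / \partial \gamma_{i}'$. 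Since the population moment condition gives $g_{i}(\theta_{0}, \alpha_{i0}) = 0$ and hence $\lambda_{i0} = 0$, the constant term vanishes: $\hat{s}_{i}(\theta_{0}, \gamma_{i0}) = -\hat{G}_{\theta_{i}}(\theta_{0}, \alpha_{i0})' \lambda_{i0} = 0$, exactly as in the one-step case.

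Next, I would substitute the two-step stochastic expansion $\sqrt{T}(\hat{\gamma}_{i0} - \gamma_{i0}) = \tilde{\psi}_{i} + T^{-1/2} Q_{1i} + T^{-1} R_{2i}$ from Lemma \ref{lh3}, together with $\hat{M}_{i} = M_{i} + T^{-1/2} \tilde{C}_{i}$ for $\tilde{C}_{i} = \sqrt{T}(\hat{M}_{i} - M_{i})$, into the expansion and collect terms by order in $T^{-1/2}$. The $T^{-1/2}$ term yields $\tilde{\psi}_{si} = M_{i} \tilde{\psi}_{i}$; the $T^{-1}$ term yields $Q_{1si} = M_{i} Q_{1i} + \tilde{C}_{i} \tilde{\psi}_{i} + \frac{1}{2} \sum_{j} \tilde{\psi}_{i,j} M_{i,j} \tilde{\psi}_{i}$; and the leftover products involving $R_{1i}$, $R_{2i}$ and the second-derivative discrepancy $\sqrt{T}(\hat{M}_{i,j}(\theta_{0}, \overline{\gamma}_{i}) - M_{i,j})$ form $R_{2si}$. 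These are precisely the formulas of Lemma \ref{lg1} with $W$ replaced by $\Omega$, which establishes the claimed identity of the terms.

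The orders of the pieces would be read off from the two-step auxiliary lemmas. The uniform rates $\tilde{\psi}_{si} = o_{uP}(T^{1/10})$, $\tilde{C}_{i} = o_{uP}(T^{1/10})$, $Q_{1si} = o_{uP}(T^{1/5})$ and $R_{2si} = o_{uP}(T^{2/5})$ follow from Lemma \ref{le2} (the two-step analog of Lemma \ref{lb6}, controlling $\hat{M}_{i}$, $\hat{M}_{i,j}$ and their centered versions), from the orders of $\tilde{\psi}_{i}, Q_{1i}, R_{1i}, R_{2i}$ in Lemmas \ref{lh2} and \ref{lh3}, and from Conditions \ref{cond3} and \ref{cond5}. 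The $O_{P}(1)$ behavior of the cross-sectional averages $n^{-1/2} \sum_{i} \tilde{\psi}_{si}$ and $n^{-1} \sum_{i} Q_{1si}$ follows from a CLT and LLN for independent sequences, exactly as in Lemma \ref{lg1}.

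The main obstacle relative to the one-step proof is that the two-step score and its derivatives depend on the estimated weighting matrices $\hat{\Omega}_{i}(\tilde{\theta}, \tilde{\alpha}_{i})$ rather than the fixed $W_{i}$. This extra estimation noise is what distinguishes Lemma \ref{le2} from Lemma \ref{lb6}, and it is also what injects the additional $\text{diag}[0, \tilde{\psi}_{\Omega_{i}}^{W}]$ contribution already folded into the expansion of $\hat{\gamma}_{i0}$ in Lemma \ref{lh3}. I expect the delicate point to be checking that this matrix estimation does not inflate $R_{2si}$ beyond $o_{uP}(T^{2/5})$; this is exactly where the uniform rate $\sqrt{T}(\hat{\Omega}_{i}(\tilde{\theta}, \tilde{\alpha}_{i}) - \Omega_{i}) = o_{uP}(T^{1/10})$ supplied by Lemmas \ref{ld1} and \ref{le3} enters, guaranteeing that replacing $\Omega_{i}$ by its estimate leaves the remainder bounds intact.
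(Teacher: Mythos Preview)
Your proposal is correct and follows exactly the approach the paper takes: its proof of Lemma \ref{li1} is the single sentence ``The proof is similar to the proof of Lemma \ref{lg1},'' and you have faithfully unpacked what that entails, using the two-step expansion of $\hat{\gamma}_{i0}$ from Lemma \ref{lh3} in place of Lemma \ref{lf3} and the two-step rate controls (the counterpart to Lemma \ref{lb6} established in the proof of Theorem 4, and Lemma \ref{le2}) in place of Lemma \ref{lb6}. Your observation that the profile score $\hat{s}_{i}(\theta,\gamma_{i}) = -\hat{G}_{\theta_{i}}(\theta,\alpha_{i})'\lambda_{i}$ does not depend explicitly on the weighting matrix, so that the extra $\hat{\Omega}_{i}$ noise enters only through $Q_{1i}$ via Lemma \ref{lh3}, is exactly the point the paper records in Appendix \ref{appendixN}.
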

\begin{proof}
The proof is similar to the proof of Lemma \ref{lg1}.
\end{proof}

%
%


\begin{lemma}
\label{li2} Suppose that Conditions \ref{cond1}, \ref{cond2},
\ref{cond3}, \ref{cond5}, \ref{cond4}, and \ref{cond7} hold. Then,
\begin{eqnarray*}
\frac{1}{\sqrt{n}} \sum_{i=1}^{n} \tilde \psi_{si} &\ind& N(0, J_{s}),
\ \
J_{s} = \bar{E}[
G_{\theta_{i}}' P_{\alpha_{i}} G_{\theta_{i}}] \\
\frac{1}{n} \sum_{i=1}^{n}  Q_{1si}  &\inp&  \bar{E} E \left[ Q_{1si}
\right]  = \bar{E}[B_{si}^{B} + B_{si}^{C}] =: B_{s}, 
\end{eqnarray*}
where $B_{si}^{B} = - G_{\theta_{i}}'\left( B_{\lambda_{i}}^{I} +
B_{\lambda_{i}}^{G} + B_{\lambda_{i}}^{\Omega} + B_{\lambda_{i}}^{W} \right),$ $B_{si}^{C} = \sum_{j=0}^{\infty} E \left[ G_{\theta_{i}}(z_{it})'
P_{\alpha_{i}} g(z_{i,t-j}) \right],$  $P_{\alpha_{i}} = \Omega_{i}^{-1} -
\Omega_{i}^{-1} G_{\alpha_{i}} H_{\alpha_{i}},$ $H_{\alpha_{i}} = \Sigma_{\alpha_{i}} G_{\alpha_{i}}'
\Omega_{i}^{-1}$, and $\Sigma_{\alpha_{i}} = \left( G_{\alpha_{i}}' \Omega_{i}^{-1}
G_{\alpha_{i}} \right)^{-1}$.
\end{lemma}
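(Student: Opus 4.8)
The plan is to follow the template of the proof of Lemma \ref{lg2}, replacing the one-step objects (weighted by $W_i$) with their two-step counterparts (weighted by $\Omega_i$) and invoking Lemmas \ref{li1} and \ref{lh5} in place of Lemmas \ref{lg1} and \ref{lf5}. By Lemma \ref{li1}, $\hat{s}_i(\theta_0,\hat{\gamma}_{i0})$ admits the expansion $T^{-1/2}\tilde{\psi}_{si} + T^{-1}Q_{1si} + T^{-3/2}R_{2si}$, with $n^{-1/2}\sum_i\tilde{\psi}_{si}$ and $n^{-1}\sum_i Q_{1si}$ both $O_P(1)$. First I would establish the two convergences separately: a central limit theorem for independent (across $i$) summands applied to $n^{-1/2}\sum_i\tilde{\psi}_{si}$, and a law of large numbers for independent summands applied to $n^{-1}\sum_i Q_{1si}$. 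Both rely on Condition \ref{cond1}(ii) for cross-sectional independence and on the moment bounds of Conditions \ref{cond4} and \ref{cond7}.

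The substantive content is the identification of the two limits, which reduces to evaluating a handful of within-individual expectations. Since the true Lagrange multiplier satisfies $\lambda_{i0} = -\Omega_i^{-1}g_i(\theta_0,\alpha_{i0}) = 0$, the score derivative $M_i = \partial\hat{s}_i/\partial\gamma_i'$ collapses at the truth to $(0,\,-G_{\theta_i}')$, so that $\tilde{\psi}_{si} = M_i\tilde{\psi}_i = G_{\theta_i}'\,T^{-1/2}\sum_t P_{\alpha_i}g(z_{it})$ using the $\lambda$-block of $\psi_{it}$ from Lemma \ref{lh5}. Under the martingale difference assumption Condition \ref{cond4}(iii), the long-run variance of $T^{-1/2}\sum_t g(z_{it})$ is $\Omega_i = \Omega_{0i}$, giving $E[\tilde{\psi}_{si}\tilde{\psi}_{si}'] = G_{\theta_i}'P_{\alpha_i}\Omega_i P_{\alpha_i}G_{\theta_i}$. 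The crucial algebraic step is the pair of projection identities $H_{\alpha_i}G_{\alpha_i} = I$ and $P_{\alpha_i}G_{\alpha_i} = 0$, which follow directly from the definitions $\Sigma_{\alpha_i} = (G_{\alpha_i}'\Omega_i^{-1}G_{\alpha_i})^{-1}$, $H_{\alpha_i} = \Sigma_{\alpha_i}G_{\alpha_i}'\Omega_i^{-1}$, $P_{\alpha_i} = \Omega_i^{-1} - \Omega_i^{-1}G_{\alpha_i}H_{\alpha_i}$; they yield $P_{\alpha_i}\Omega_i P_{\alpha_i} = P_{\alpha_i}$ and hence the efficient form $V_s = \bar{E}[G_{\theta_i}'P_{\alpha_i}G_{\theta_i}] = J_s$. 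This is exactly the simplification absent from the one-step case, where $P_{\alpha_i}^W\Omega_i P_{\alpha_i}^W \neq P_{\alpha_i}^W$ in general.

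For the probability limit of $n^{-1}\sum_i Q_{1si}$ I would compute $\bar{E}E[Q_{1si}]$ from the expansion of $Q_{1si}$ in Lemma \ref{li1}, namely the sum of $M_i Q_{1i}$, $\tilde{C}_i\tilde{\psi}_i$, and $\tfrac12\sum_j\tilde{\psi}_{i,j}M_{i,j}\tilde{\psi}_i$. Taking expectations term by term, and using Lemma \ref{lh5} for the components $B_{\gamma_i}^I, B_{\gamma_i}^G, B_{\gamma_i}^\Omega, B_{\gamma_i}^W$ of $E[Q_{1i}]$, the first contribution produces $B_{si}^B = -G_{\theta_i}'(B_{\lambda_i}^I + B_{\lambda_i}^G + B_{\lambda_i}^\Omega + B_{\lambda_i}^W)$, while $E[\tilde{C}_i\tilde{\psi}_i] = \sum_{j=0}^\infty E[G_{\theta_i}(z_{it})'P_{\alpha_i}g(z_{i,t-j})]$ produces the correlation term $B_{si}^C$. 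The remaining quadratic term is the two-step analog of the one-step variance component $B_{si}^{W,V}$; here the second projection identity $P_{\alpha_i}\Omega_i H_{\alpha_i}' = H_{\alpha_i}' - \Omega_i^{-1}G_{\alpha_i}(H_{\alpha_i}\Omega_i H_{\alpha_i}') = H_{\alpha_i}' - \Omega_i^{-1}G_{\alpha_i}\Sigma_{\alpha_i} = 0$ forces it to vanish, which is why only $B_{si}^B + B_{si}^C$ survive. I expect the main obstacle to be the bookkeeping in these cross-moment computations — in particular keeping track of the two index ranges ($j\le d_\alpha$ versus $j > d_\alpha$) for the second-derivative blocks $M_{i,j}$ and $T_{i,j}$ — rather than any deep analytic difficulty, since all the stochastic orders and the justification for interchanging limits and expectations are already supplied by Lemmas \ref{li1} and \ref{lh5}.
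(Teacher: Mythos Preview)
Your proposal is correct and follows essentially the same approach as the paper's own proof, which also invokes Lemmas \ref{lh2}, \ref{lh3}, \ref{lh5}, and \ref{li1} and then records the three key expectations $E[\tilde\psi_{si}\tilde\psi_{si}'] = M_i^{\Omega}\,\mathrm{diag}(\Sigma_{\alpha_i},P_{\alpha_i})\,M_i^{\Omega\prime}$, $E[\tilde C_i^{\Omega}\tilde\psi_i] = \sum_{j\ge 0}E[G_{\theta_i}(z_{it})'P_{\alpha_i}g(z_{i,t-j})]$, and $E[\tilde\psi_{i,j}M_{i,j}^{\Omega}\tilde\psi_i]=0$. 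Your explicit verification of the projection identities $P_{\alpha_i}\Omega_iP_{\alpha_i}=P_{\alpha_i}$ and $P_{\alpha_i}\Omega_iH_{\alpha_i}'=0$ is exactly what underlies the paper's terse third equality and the collapse of $V_s$ to $J_s$.
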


\begin{proof}
The results follow by Lemmas \ref{lh2}, \ref{lh3}, \ref{lh5} and
\ref{li1}, noting that
$$
E \left[ \tilde \psi_{si} \tilde \psi_{si}' \right] = M_{i}^{\Omega} \left(%
\begin{array}{cc}
  \Sigma_{\alpha_{i}} & 0 \\
  0 & P_{\alpha_{i}} \\
\end{array}%
\right) M_{i}^{\Omega'}, \ \
E \left[ \tilde{C}_{i}^{\Omega} \tilde{\psi}_{i} \right] = \sum_{j=0}^{\infty} E \left[ G_{\theta_{i}}(z_{it})'
P_{\alpha_{i}} g(z_{i,t-j}) \right], \ \
E \left[ \tilde{\psi}_{i,j} M_{i,j}^{\Omega}
\tilde{\psi}_{i} \right] = 0.
$$
\end{proof}

\begin{lemma}
\label{li5} Suppose that Conditions \ref{cond1}, \ref{cond2},
\ref{cond3}, \ref{cond4}, and \ref{cond5} hold. Then, or $\hat{s}(\theta_{0}) = n^{-1} \sum_{i=1}^{n}
\hat{s}_{i}(\theta_{0}, \hat{\gamma}_{i0}),$
$$
\sqrt{nT} \hat{s} (\theta_{0}) = \frac{1}{\sqrt{n}}
\sum_{i=1}^{n} \tilde{\psi}_{si} + \sqrt{\frac{n}{T}} B_{s}
 + o_{P}(1) \ind N\left( \kappa B_{s}, J_{s} \right),
$$
where $ \tilde{\psi}_{si}$ and $B_{s}$ are defined in Lemmas \ref{li1} and \ref{li2}, respectively.
\end{lemma}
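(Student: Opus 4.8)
The plan is to mirror the one-step argument of Lemma \ref{lg3}, now feeding in the two-step expansion. First I would invoke Lemma \ref{li1}, which supplies the second-order stochastic expansion $\hat{s}_i(\theta_0, \hat{\gamma}_{i0}) = T^{-1/2}\tilde{\psi}_{si} + T^{-1} Q_{1si} + T^{-3/2} R_{2si}$ for each individual, together with the uniform rates $\tilde{\psi}_{si} = o_{uP}(T^{1/10})$, $Q_{1si} = o_{uP}(T^{1/5})$, $R_{2si} = o_{uP}(T^{2/5})$, and the aggregate facts $n^{-1/2}\sum_i \tilde{\psi}_{si} = O_P(1)$ and $n^{-1}\sum_i Q_{1si} = O_P(1)$ inherited from Lemma \ref{lg1}. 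Averaging over $i$ and multiplying by $\sqrt{nT}$ then gives
$$
\sqrt{nT}\,\hat{s}(\theta_0) = \frac{1}{\sqrt{n}}\sum_{i=1}^n \tilde{\psi}_{si} + \sqrt{\frac{n}{T}}\,\frac{1}{n}\sum_{i=1}^n Q_{1si} + \sqrt{\frac{n}{T^2}}\,\frac{1}{n}\sum_{i=1}^n R_{2si}.
$$

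Next I would classify the three terms by order. The leading term is $O_P(1)$, and the middle term is $O_P(1)$ because $n/T \to \kappa^2$ by Condition \ref{cond1}(iii) and $n^{-1}\sum_i Q_{1si}$ is bounded in probability. For the remainder, the \emph{uniform}-in-$i$ rate $R_{2si} = o_{uP}(T^{2/5})$ gives $n^{-1}\sum_i R_{2si} = o_P(T^{2/5})$, whence $\sqrt{n/T^2}\, n^{-1}\sum_i R_{2si} = o_P(\sqrt{n}\,T^{-3/5})$; since $n \asymp T$ under the balanced-growth sequence, $\sqrt{n}\,T^{-3/5} \asymp T^{-1/10} \to 0$, so this term is $o_P(1)$. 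The expansion therefore collapses to $n^{-1/2}\sum_i \tilde{\psi}_{si} + \sqrt{n/T}\, n^{-1}\sum_i Q_{1si} + o_P(1)$.

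Finally I would apply Lemma \ref{li2}: its CLT yields $n^{-1/2}\sum_i \tilde{\psi}_{si} \ind N(0, J_s)$, while its LLN yields $n^{-1}\sum_i Q_{1si} \inp B_s$; combined with $\sqrt{n/T} \to \kappa$, the middle term converges in probability to $\kappa B_s$. Slutsky's theorem then delivers $\sqrt{nT}\,\hat{s}(\theta_0) \ind N(\kappa B_s, J_s)$, which is the claim. The essentially routine steps are direct transcriptions of the one-step proof with $W$ replaced by $\Omega$.

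The one delicate point I expect is the negligibility of the remainder term: it is not enough that each $R_{2si}$ is small, because the prefactor $\sqrt{n}/T$ diverges. The argument works only because Lemma \ref{li1} controls $R_{2si}$ \emph{uniformly} in $i$ (so the cross-sectional average does not inflate the order) and because the asymptotic sequence $n/T \to \kappa^2$ balances the $T^{2/5}$ remainder against the $\sqrt{n}/T$ scaling to leave a net $T^{-1/10}$. Establishing those uniform $o_{uP}$ rates is exactly the content of Lemma \ref{li1}, so here they may be taken as given and the remaining work is bookkeeping plus a standard CLT/LLN-and-Slutsky combination.
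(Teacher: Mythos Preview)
Your proposal is correct and follows exactly the paper's approach: the paper's own proof simply says to use the expansion from Lemma \ref{li1} and examine each term via Lemma \ref{li2}, which is precisely the program you carry out (and is the two-step analog of the more detailed proof given for Lemma \ref{lg3}). Your explicit verification that the remainder term is $o_P(1)$ via the uniform $o_{uP}(T^{2/5})$ rate and the $n \asymp T$ scaling fills in the only step the paper leaves implicit.
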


\begin{proof}
Using the expansion form obtained in Lemma \ref{li1}, we can get the
result by examining each term with Lemma \ref{li2}. 
\end{proof}

\section{Scores and Derivatives}\label{appendixF}

\subsection{One-Step Score and Derivatives: Individual Effects}
\label{appendixK} We denote dimensions of $g(z_{it})$, $\alpha_i$,
and $\theta$ by $d_g$, $d_{\alpha}$ and $d_{\theta}$. The symbol $\otimes$ denotes
kronecker product of matrices, and $I_{d_{\alpha}}$ denotes a $d_{\alpha}$-order
identity matrix. Let $G_{\alpha \alpha_i}(z_{it}; \theta, \alpha_i)
:= (G_{\alpha \alpha_{i, 1}}(z_{it}; \theta, \alpha_i)', ...,
G_{\alpha \alpha_{i,d_{\alpha}}}(z_{it}; \theta, \alpha_i)')',$ where
$$
G_{\alpha \alpha_{i, j}}(z_{it}; \theta, \alpha_i) = \frac{\partial
G_{\alpha_i}(z_{it}; \theta, \alpha_i) }{\partial \alpha_{i,j}}.
$$
We denote derivatives of  $G_{\alpha \alpha_i}(z_{it}; \theta, \alpha_i)$ with respect to $\alpha_{i,j}$  by  $G_{\alpha \alpha, \alpha_{i,j}}(z_{it}; \theta, \alpha_i)$, and use additional subscripts for higher order derivatives.


\subsubsection{Score}

\begin{equation*}
\hat{t}_{i}^{W}(\theta, \gamma_{i}) = - \frac{1}{T}
\sum_{t=1}^{T} \left(%
\begin{array}{c}
  G_{\alpha_{i}}(z_{it};\theta, \alpha_{i})'\lambda_{i} \\
  g(z_{it};\theta,\alpha_{i}) + \hat W_{i} \lambda_{i} \\
\end{array}%
\right) = - \left(%
\begin{array}{c}
  \hat{G}_{\alpha_{i}}(\theta, \alpha_{i})'\lambda_{i} \\
  \hat{g}_{i}(\theta,\alpha_{i}) + \hat W_{i} \lambda_{i} \\
\end{array}%
\right).
\end{equation*}

\subsubsection{Derivatives with respect to the fixed effects}

\noindent \textbf{\\First Derivatives}

\begin{eqnarray*}
\hat{T}_{i}^{W}(\theta, \gamma_{i}) &=& \frac{\partial \hat{t}_{i}^{W}(\gamma_{i}, \theta)}{\partial \gamma_{i}'}= - \left(%
\begin{array}{cc}
  \hat{G}_{\alpha \alpha_{i}}(\theta,\alpha_i)'(I_{d_{\alpha}} \otimes \lambda_i) & \hat{G}_{\alpha_{i}}(\theta,\alpha_{i})' \\
  \hat{G}_{\alpha_{i}}(\theta,\alpha_{i}) & \hat W_{i} \\
\end{array}%
\right). \\
T_{i}^{W} &=& E \left[ \hat{T}_{i}^{W} \right] = - \left(%
\begin{array}{cc}
  0 & G_{\alpha_{i}}' \\
  G_{\alpha_{i}} & W_{i} \\
\end{array}%
\right). \\
\left( T_{i}^{W} \right)^{-1} &=& - \left(%
\begin{array}{cc}
  -\Sigma_{\alpha_{i}}^{W} & H_{\alpha_{i}}^{W} \\
  H_{\alpha_{i}}^{W'} & P_{\alpha_{i}}^{W} \\
\end{array}%
\right).
\end{eqnarray*}

\noindent \textbf{Second Derivatives}

\begin{eqnarray*}
\hat{T}_{i,j}^{W}(\theta, \gamma_{i}) &=& \frac{\partial^{2}
\hat{t}_{i}^{W}(\theta, \gamma_{i})}{\partial \gamma_{i,j}
\partial \gamma_{i}'} =
\left\{%
\begin{array}{ll}
    - \left(%
\begin{array}{cc}
  \hat{G}_{\alpha \alpha, \alpha_{i,j}}(\theta,\alpha_i)'(I_{d_{\alpha}} \otimes \lambda_i) & \hat{G}_{ \alpha \alpha_{i,j}}(\theta,\alpha_{i})' \\
  \hat{G}_{\alpha \alpha_{i,j} }(\theta,\alpha_{i}) & 0 \\
\end{array}%
\right), & \hbox{if $j\le d_{\alpha}$;} \\
    - \left(%
\begin{array}{cc}
  \hat{G}_{\alpha \alpha_{i}}(\theta,\alpha_i)'(I_{d_{\alpha}} \otimes e_{j-d_{\alpha}}) & 0 \\
  0 & 0 \\
\end{array}%
\right), & \hbox{if $j>d_{\alpha}$.} \\
\end{array}%
\right.      \\
T_{i,j}^{W} &=& E \left[ \hat{T}_{i,j}^{W}(\gamma_{i0}; \theta_{0}) \right] = \left\{%
\begin{array}{ll}
    - \left(%
\begin{array}{cc}
  0 & G_{\alpha \alpha_{i,j} }' \\
  G_{\alpha \alpha_{i,j} } & 0 \\
\end{array}%
\right), & \hbox{if $j\le d_{\alpha}$;} \\
    - \left(%
\begin{array}{cc}
 G_{\alpha \alpha_{i}}'(I_{d_{\alpha}} \otimes e_{j-d_{\alpha}}) & 0 \\
  0 & 0 \\
\end{array}%
\right), & \hbox{if $j>d_{\alpha}$.} \\
\end{array}%
\right.
\end{eqnarray*}

\noindent \textbf{Third Derivatives}

\begin{eqnarray*}
\hat{T}_{i,jk}^{W}(\theta, \gamma_{i}) &=& \frac{\partial^{3}
\hat{t}_{i}^{W}(\theta, \gamma_{i})}{\partial \gamma_{i,k}
\partial \gamma_{i,j}
\partial \gamma_{i}'} = \left\{%
\begin{array}{ll}
    - \left(%
\begin{array}{cc}
   \hat{G}_{\alpha \alpha, \alpha \alpha_{i,jk}}(\theta,\alpha_i)'(I_{d_{\alpha}} \otimes \lambda_i) & \hat{G}_{\alpha \alpha \alpha_{i,jk} }(\theta,\alpha_{i})' \\
  \hat{G}_{\alpha \alpha \alpha_{i,jk} }(\theta,\alpha_{i}) & 0 \\
\end{array}%
\right), & \hbox{if $j\le d_{\alpha}, k \le d_{\alpha}$;} \\
    - \left(%
\begin{array}{cc}
  \hat{G}_{\alpha \alpha, \alpha_{i,j}}(\theta,\alpha_i)'(I_{d_{\alpha}} \otimes e_{k-d_{\alpha}}) & 0 \\
  0 & 0 \\
\end{array}%
\right), & \hbox{if $j\le d_{\alpha}, k> d_{\alpha}$;} \\
        - \left(%
\begin{array}{cc}
\hat{G}_{\alpha \alpha, \alpha_{i,k}}(\theta,\alpha_i)'(I_{d_{\alpha}} \otimes e_{j-d_{\alpha}}) & 0 \\
  0 & 0 \\
\end{array}%
\right), & \hbox{if $j>d_{\alpha}, k\le d_{\alpha}$;} \\
    \left(%
\begin{array}{cc}
  0 & 0 \\
  0 & 0 \\
\end{array}%
\right), & \hbox{if $j>d_{\alpha}, k>d_{\alpha}$.} \\
\end{array}%
\right.
\\
T_{i,jk}^{W} &=& E \left[ \hat{T}_{i,jk}^{W} \right] = \left\{%
\begin{array}{ll}
    - \left(%
\begin{array}{cc}
  0 & G_{\alpha \alpha \alpha_{i,jk} }' \\
  G_{\alpha \alpha \alpha_{i,jk}} & 0 \\
\end{array}%
\right), & \hbox{if $j \le d_{\alpha}, k\le d_{\alpha}$;} \\
    - \left(%
\begin{array}{cc}
  G_{\alpha \alpha, \alpha_{i,j}}'(I_{d_{\alpha}} \otimes e_{k-d_{\alpha}}) & 0 \\
  0 & 0 \\
\end{array}%
\right), & \hbox{if $j \le d_{\alpha}, k>d_{\alpha}$;} \\
        - \left(%
\begin{array}{cc}
 G_{\alpha \alpha, \alpha_{i,k}}'(I_{d_{\alpha}} \otimes e_{j-d_{\alpha}}) & 0 \\
  0 & 0 \\
\end{array}%
\right), & \hbox{if $j>d_{\alpha}, k\le d_{\alpha}$;} \\
    \left(%
\begin{array}{cc}
  0 & 0 \\
  0 & 0 \\
\end{array}%
\right), & \hbox{if $j>d_{\alpha}, k>d_{\alpha}$.} \\
\end{array}%
\right.
\end{eqnarray*}

\subsubsection{Derivatives with respect to the common parameter}

\noindent \textbf{\\First Derivatives}

\begin{eqnarray*}
\hat{N}_{i,j}^{W}(\theta,\gamma_{i}) &=& \frac{\partial \hat{t}_{i}^{W}(\gamma_{i},\theta)}{\partial \theta_{j}}= -  \left(%
\begin{array}{c}
  \hat{G}_{\theta_{j} \alpha_{i}}(\theta,\alpha_{i})'\lambda_{i} \\
  \hat{G}_{\theta_{i,j}}(\theta,\alpha_{i}) \\
\end{array}%
\right). \\
N_{i,j}^{W} &=& E \left[ \hat{N}_{i,j}^{W}
\right]
= -  \left(%
\begin{array}{c}
  0 \\
  G_{\theta_{i,j} } \\
\end{array}%
\right).
\end{eqnarray*}


\subsection{One-Step Score and Derivatives: Common Parameters}
\label{appendixL} Let $G_{\theta \alpha_i}(z_{it}; \theta, \alpha_i)
:= $ \\ $(G_{\theta \alpha_{i, 1}}(z_{it}; \theta, \alpha_i)', \ldots,
G_{\theta \alpha_{i,d_{\alpha}}}(z_{it}; \theta, \alpha_i)')',$ where
$$
G_{\theta \alpha_{i, j}}(z_{it}; \theta, \alpha_i) = \frac{\partial
G_{\theta}(z_{it}; \theta, \alpha_i) }{\partial \alpha_{i,j}}.
$$
We denote the derivatives of  $G_{\theta \alpha_i}(z_{it}; \theta, \alpha_i)$ with respect to $\alpha_{i,j}$  by  $G_{\theta \alpha, \alpha_{i,j}}(z_{it}; \theta, \alpha_i)$, and use additional subscripts for higher order derivatives.


\subsubsection{Score}

\begin{equation*}
\hat{s}_{i}^{W} (\theta, \gamma_i) = - \frac{1}{T}
\sum_{t=1}^{T}
G_{\theta}(z_{it};\theta,\alpha_i)'\lambda_i
= - \hat{G}_{\theta_i}(\theta,\alpha_i)'\lambda_i.
\end{equation*}

\subsubsection{Derivatives with respect to the fixed effects}

\noindent \textbf{\\First  Derivatives}

\begin{eqnarray*}
\hat{M}_{i}^{W}(\theta,\gamma_i) &=& \frac{\partial \hat{s}_{i}^{W}(\theta,\gamma_i)}{\partial \gamma_i'}= - \left(%
\begin{array}{cc}
  \hat{G}_{\theta \alpha_i}(\theta,\alpha_i)' (I_{d_{\alpha}} \otimes \lambda_i) & \hat{G}_{\theta_{i}} (\theta, \alpha_i)' \\
\end{array}%
\right).
    \\
M_{i}^{W} &=& E \left[ \hat{M}_{i}^{W}
\right]
= - \left(%
\begin{array}{cc}
  0 & G_{\theta_{i}}' \\
\end{array}%
\right).
\end{eqnarray*}

\noindent \textbf{Second Derivatives}

\begin{eqnarray*}
\hat{M}_{i,j}^{W}(\theta,\gamma_i) &=&
\frac{\partial^{2}
\hat{s}_{i}^{W}(\theta,\gamma_i)}{\partial
\gamma_{i,j} \partial \gamma_i'} =  \left\{%
\begin{array}{ll}
    - \left(%
\begin{array}{cc}
  \hat{G}_{\theta \alpha, \alpha_{i,j}}(\theta,\alpha_i)' (I_{d_{\alpha}} \otimes \lambda_i) & \hat{G}_{\theta \alpha_{i,j}} (\theta, \alpha_i)' \\
\end{array}%
\right), & \hbox{if $j \le d_{\alpha}$;} \\
    - \left(%
\begin{array}{cc}
  \hat{G}_{\theta \alpha_i}(\theta,\alpha_i)' (I_{d_{\alpha}} \otimes e_{j-d_{\alpha}}) & 0 \\
\end{array}%
\right), & \hbox{if $j>d_{\alpha}$.} \\
\end{array}%
\right.
 \\
M_{i,j}^{W} &=& E \left[ \hat{M}_{i,j}^{W} (\theta_{0},\gamma_{i0})
\right]
=  \left\{%
\begin{array}{ll}
    - \left(%
\begin{array}{cc}
  0 & G_{\theta \alpha_{i,j}}' \\
\end{array}%
\right), & \hbox{if $j \le d_{\alpha}$;} \\
    - \left(%
\begin{array}{cc}
  G_{\theta \alpha_i}' (I_{d_{\alpha}} \otimes e_{j-d_{\alpha}}) & 0 \\
\end{array}%
\right), & \hbox{if $j>d_{\alpha}$.} \\
\end{array}%
\right.
\end{eqnarray*}

\noindent \textbf{Third Derivatives}

\begin{eqnarray*}
\hat{M}_{i,jk}^{W}(\theta,\gamma_i) &=&
\frac{\partial^{3}
\hat{s}_{i}^{W}(\theta,\gamma_i)}{\partial
\gamma_{i,k} \partial \gamma_{i,j}
\partial \gamma_i'} = \left\{%
\begin{array}{ll}
    - \left(%
\begin{array}{cc}
   \hat{G}_{\theta \alpha, \alpha \alpha_{i,jk}}(\theta,\alpha_i)' (I_{d_{\alpha}} \otimes \lambda_i)  & \hat{G}_{\theta \alpha \alpha_{i,jk}} (\theta, \alpha_i)' \\
\end{array}%
\right), & \hbox{if $j\le d_{\alpha}, k\le d_{\alpha}$;} \\
   - \left(%
\begin{array}{cc}
  \hat{G}_{\theta \alpha, \alpha_{i,j}}(\theta,\alpha_i)' (I_{d_{\alpha}} \otimes e_{k-d_{\alpha}})  & 0 \\
\end{array}%
\right), & \hbox{if $j \le d_{\alpha}, k>d_{\alpha}$;} \\
   - \left(%
\begin{array}{cc}
  \hat{G}_{\theta \alpha, \alpha_{i,k}}(\theta,\alpha_i)' (I_{d_{\alpha}} \otimes e_{j-d_{\alpha}})  & 0 \\
\end{array}%
\right)
, & \hbox{if $j>d_{\alpha}, k\le d_{\alpha}$;} \\
   - \left(%
\begin{array}{cc}
 0 & 0 \\
\end{array}%
\right), & \hbox{if $j>d_{\alpha}, k >d_{\alpha}$.} \\
\end{array}%
\right.
 \\
M_{i,jk}^{W} &=& E \left[ \hat{M}_{i,jk}^{W}
\right] = \left\{%
\begin{array}{ll}
    - \left(%
\begin{array}{cc}
   0  & G_{\theta \alpha \alpha_{i,jk}}' \\
\end{array}%
\right), & \hbox{if $j\le d_{\alpha}, k\le d_{\alpha}$;} \\
   - \left(%
\begin{array}{cc}
  G_{\theta \alpha, \alpha_{i,j}}' (I_{d_{\alpha}} \otimes e_{k-d_{\alpha}})  & 0 \\
\end{array}%
\right), & \hbox{if $j \le d_{\alpha}, k>d_{\alpha}$;} \\
   - \left(%
\begin{array}{cc}
  G_{\theta \alpha, \alpha_{i,k}}' (I_{d_{\alpha}} \otimes e_{j-d_{\alpha}})  & 0 \\
\end{array}%
\right)
, & \hbox{if $j>d_{\alpha}, k\le d_{\alpha}$;} \\
   - \left(%
\begin{array}{cc}
 0 & 0 \\
\end{array}%
\right), & \hbox{if $j>d_{\alpha}, k >d_{\alpha}$.} \\
\end{array}%
\right.
 \\
\end{eqnarray*}

\subsubsection{Derivatives with respect to the common parameters}

\noindent \textbf{\\First Derivatives}

\begin{eqnarray*}
\hat{S}_{i,j}^{W}(\theta,\gamma_i) &=& \frac{\partial \hat{s}_{i}^{W}(\theta,\gamma_i)}{\partial \theta_{j}}= -   \hat{G}_{\theta \theta _{i,j}} (\theta, \alpha_i)'\lambda_i.  \\
S_{i,j}^{W} &=& E \left[ \hat{S}_{i,j}^{W}
\right] =  0.
\end{eqnarray*}



\subsection{Two-Step Score and Derivatives: Fixed Effects}
\label{appendixM}

\subsubsection{Score}

\begin{eqnarray*}
\hat{t}_{i}(\theta, \gamma_{i}) &=& - \frac{1}{T}
\sum_{t=1}^{T} \left(%
\begin{array}{c}
  G_{\alpha_{i}}(z_{it};\theta, \alpha_{i})'\lambda_{i} \\
  g(z_{it};\theta,\alpha_{i}) + \hat{\Omega}_{i}(\tilde{\theta}, \tilde{\alpha}_{i}) \lambda_{i} \\
\end{array}%
\right) = - \left(%
\begin{array}{c}
  \hat{G}_{\alpha_{i}}(\theta, \alpha_{i})'\lambda_{i} \\
  \hat{g}_{i}(\theta,\alpha_{i}) + \Omega_{i} \lambda_{i} \\
\end{array}%
\right) - \left(%
\begin{array}{c}
  0 \\
  (\hat{\Omega}_{i} - \Omega_{i}) \lambda_{i} \\
\end{array}%
\right) \notag \\
 &=& \hat{t}_{i}^{\Omega} (\theta, \gamma_{i}) +
\hat{t}_{i}^{R} (\theta, \gamma_{i}).
\end{eqnarray*}
Note that the formulae for the derivatives of Appendix
\ref{appendixK} apply for $\hat{t}_{i}^{\Omega}$, replacing $\hat W$ by
$\Omega$. Hence, we only need to derive the derivatives for
$\hat{t}_{i}^{R}$.

\subsubsection{Derivatives with respect to the fixed effects}

\noindent \textbf{\\First Derivatives}

\begin{eqnarray*}
\hat{T}_{i}^{R}(\theta, \gamma_{i}) &=& \frac{\partial \hat{t}_{i}^{R}(\theta, \gamma_{i})}{\partial \gamma_{i}'}= - \left(%
\begin{array}{cc}
  0 & 0 \\
  0 & \hat{\Omega}_{i}(\tilde{\theta}, \tilde{\alpha}_{i}) - \Omega_{i} \\
\end{array}%
\right). \\
T_{i}^{R} &=& E \left[ \hat{T}_{i}^{R} \right] = - \left(%
\begin{array}{cc}
  0 & 0 \\
  0 & E \left[ \hat{\Omega}_{i} - \Omega_{i} \right] \\
\end{array}%
\right).
\end{eqnarray*}

\noindent \textbf{Second and Third Derivatives }

Since $\hat{T}_{i}^{R}(\gamma_{i}, \theta)$ does not depend on
$\gamma_{i}$, the derivatives (and its expectation) of order greater
than one are zero.

\subsubsection{Derivatives with respect to the common parameters}

\noindent \textbf{\\First Derivatives}

\begin{eqnarray*}
\hat{N}_{i}^{R}(\theta, \gamma_{i}) &=& \frac{\partial
\hat{t}_{i}^{R}(\theta, \gamma_{i})}{\partial \theta'}= 0.
\end{eqnarray*}


\subsection{Two-Step Score and Derivatives: Common Parameters}
\label{appendixN}

\subsubsection{Score}

\begin{equation*}
\hat{s}_{i} (\theta,\gamma_{i}) = - \frac{1}{T}
\sum_{t=1}^{T}
G_{\theta}(z_{it};\theta,\alpha_{i})'\lambda_{i}
= - \hat{G}_{\theta_i}(\theta,\alpha_{i})'\lambda_{i}.
\end{equation*}
Since this score does not depend explicitly on
$\hat{\Omega}_{i}(\tilde{\theta}, \tilde{\alpha}_{i})$, the formulae
for the derivatives are the same as in Appendix \ref{appendixL}.

\section*{References}

\textsc{Andrews, D. W. K.} (1991), ``Heteroskedasticity and Autocorrelation Consistent Covariance Matrix Estimation,''
\emph{Econometrica} 59, 817-858.

\textsc{Fern\'andez-Val, I. and, J. Lee} (2012), \enquote{Panel Data Models with Nonadditive Unobserved Heterogeneity: Estimation and Inference,} unpublished manuscript, Boston University.

\textsc{Hahn, J., and G. Kuersteiner} (2011), ``Bias Reduction for
Dynamic Nonlinear Panel Models with Fixed Effects,''
\emph{Econometric Theory} 27, 1152-1191.

\textsc{Hall, P., and C. Heide} (1980), \emph{Martingale Limit Theory and Applications}. Academic Press.

\textsc{Newey, W.K., and D. McFadden} (1994), \textquotedblleft
Large Sample Estimation and Hypothesis Testing,\textquotedblright\
in \textsc{R.F. Engle and D.L. McFadden,} eds., \emph{Handbook of
Econometrics, Vol. 4}. Elsevier Science. Amsterdam: North-Holland.

\textsc{Newey, W.K., and R.
Smith} (2004), ``Higher Order Properties of GMM and Generalized
Empirical Likelihood Estimators,'' \emph{Econometrica} 72,
219-255.

\end{appendix}

\begin{sidewaystable}

\vspace{15cm}

\begin{centering}
\textbf{Table A1: Common Parameter $\theta_{2}$}
\par\end{centering}
\begin{center}
\begin{tabular}{ccccccccccccccccc}
\toprule\toprule & \multicolumn{4}{c}{$\rho_1=0$} &
\multicolumn{4}{c}{$\rho_1=0.3$} & \multicolumn{4}{c}{$\rho_1=0.6$}
& \multicolumn{4}{c}{$\rho_1=0.9$}\tabularnewline
\cmidrule(rl){2-5}\cmidrule(rl){6-9}\cmidrule(rl){10-13}\cmidrule(rl){14-17}Estimator
& Bias & SD & SE/SD & p;.05 & Bias & SD & SE/SD & p;.05 & Bias & SD
& SE/SD & p;.05 & Bias & SD & SE/SD & p;.05\tabularnewline \hline
 & \multicolumn{16}{c}{$\psi=2$}\tabularnewline
$OLS-FC$ & 0.06  & 0.01  & 0.84  & 1.00  & 0.06  & 0.01  & 0.83  &
1.00  & 0.06  & 0.01  & 0.84  & 1.00  & 0.07  & 0.01  & 0.71  & 1.00
\tabularnewline $IV-FC$ & 0.00  & 0.01  & 0.90  & 0.08  & -0.01  &
0.02  & 0.84  & 0.11  & -0.01  & 0.02  & 0.78  & 0.18  & -0.01  &
0.02  & 0.63  & 0.28 \tabularnewline $OLS-RC$ & 0.04  & 0.01  & 0.97
& 1.00  & 0.04  & 0.01  & 0.99  & 1.00  & 0.04  & 0.01  & 1.02  &
1.00  & 0.04  & 0.01  & 0.96  & 1.00 \tabularnewline $BC-OLS$ & 0.04
& 0.01  & 0.97  & 1.00  & 0.04  & 0.01  & 0.99  & 1.00  & 0.04  &
0.01  & 1.02  & 1.00  & 0.04  & 0.01  & 0.96  & 1.00 \tabularnewline
$IBC-OLS$ & 0.04  & 0.01  & 0.97  & 1.00  & 0.04  & 0.01  & 0.99  &
1.00  & 0.04  & 0.01  & 1.02  & 1.00  & 0.04  & 0.01  & 0.96  & 1.00
\tabularnewline $IV-RC$ & 0.00  & 0.01  & 1.00  & 0.06  & 0.00  &
0.01  & 1.01  & 0.05  & 0.00  & 0.01  & 1.00  & 0.05  & 0.00  & 0.01
& 1.01  & 0.05 \tabularnewline $BC-IV$ & 0.00  & 0.01  & 0.99  &
0.06  & 0.00  & 0.01  & 1.01  & 0.05  & 0.00  & 0.01  & 1.00  & 0.05
& 0.00  & 0.01  & 1.00  & 0.05 \tabularnewline $IBC-IV$ & 0.00  &
0.01  & 0.99  & 0.06  & 0.00  & 0.01  & 1.01  & 0.05  & 0.00  & 0.01
& 1.00  & 0.05  & 0.00  & 0.01  & 1.00  & 0.05 \tabularnewline
 & \multicolumn{16}{c}{$\psi=4$}\tabularnewline
$OLS-FC$ & 0.12  & 0.01  & 1.09  & 1.00  & 0.12  & 0.01  & 1.03  &
1.00  & 0.12  & 0.01  & 1.10  & 1.00  & 0.12  & 0.01  & 1.07  & 1.00
\tabularnewline $IV-FC$ & 0.00  & 0.02  & 0.94  & 0.07  & -0.01  &
0.02  & 0.89  & 0.08  & -0.01  & 0.02  & 0.92  & 0.09  & -0.01  &
0.03  & 0.79  & 0.15 \tabularnewline $OLS-RC$ & 0.10  & 0.01  & 1.06
& 1.00  & 0.10  & 0.01  & 1.05  & 1.00  & 0.11  & 0.01  & 1.08  &
1.00  & 0.11  & 0.01  & 1.07  & 1.00 \tabularnewline $BC-OLS$ & 0.10
& 0.01  & 1.06  & 1.00  & 0.10  & 0.01  & 1.05  & 1.00  & 0.11  &
0.01  & 1.08  & 1.00  & 0.11  & 0.01  & 1.07  & 1.00 \tabularnewline
$IBC-OLS$ & 0.10  & 0.01  & 1.06  & 1.00  & 0.10  & 0.01  & 1.05  &
1.00  & 0.11  & 0.01  & 1.08  & 1.00  & 0.11  & 0.01  & 1.07  & 1.00
\tabularnewline $IV-RC$ & 0.00  & 0.02  & 0.98  & 0.06  & 0.00  &
0.02  & 0.96  & 0.06  & 0.00  & 0.02  & 1.01  & 0.05  & 0.00  & 0.02
& 1.00  & 0.06 \tabularnewline $BC-IV$ & 0.00  & 0.02  & 0.97  &
0.05  & 0.00  & 0.02  & 0.95  & 0.06  & 0.00  & 0.02  & 1.00  & 0.05
& 0.00  & 0.02  & 0.99  & 0.06 \tabularnewline $IBC-IV$ & 0.00  &
0.02  & 0.97  & 0.05  & 0.00  & 0.02  & 0.95  & 0.06  & 0.00  & 0.02
& 1.00  & 0.05  & 0.00  & 0.02  & 0.99  & 0.06 \tabularnewline
 & \multicolumn{16}{c}{$\psi=6$}\tabularnewline
$OLS-FC$ & 0.16  & 0.01  & 1.27  & 1.00  & 0.16  & 0.01  & 1.22  &
1.00  & 0.16  & 0.01  & 1.25  & 1.00  & 0.16  & 0.01  & 1.34  & 1.00
\tabularnewline $IV-FC$ & 0.00  & 0.03  & 0.95  & 0.06  & 0.00  &
0.03  & 0.94  & 0.06  & -0.01  & 0.03  & 0.92  & 0.08  & -0.01  &
0.04  & 0.92  & 0.08 \tabularnewline $OLS-RC$ & 0.15  & 0.01  & 1.20
& 1.00  & 0.15  & 0.01  & 1.21  & 1.00  & 0.15  & 0.01  & 1.21  &
1.00  & 0.15  & 0.01  & 1.26  & 1.00 \tabularnewline $BC-OLS$ & 0.15
& 0.01  & 1.20  & 1.00  & 0.15  & 0.01  & 1.21  & 1.00  & 0.15  &
0.01  & 1.21  & 1.00  & 0.15  & 0.01  & 1.26  & 1.00 \tabularnewline
$IBC-OLS$ & 0.15  & 0.01  & 1.20  & 1.00  & 0.15  & 0.01  & 1.21  &
1.00  & 0.15  & 0.01  & 1.21  & 1.00  & 0.15  & 0.01  & 1.26  & 1.00
\tabularnewline $IV-RC$ & 0.00  & 0.03  & 0.98  & 0.06  & 0.00  &
0.03  & 1.00  & 0.04  & 0.00  & 0.03  & 1.01  & 0.05  & 0.00  & 0.03
& 1.05  & 0.04 \tabularnewline $BC-IV$ & 0.00  & 0.03  & 0.95  &
0.06  & 0.00  & 0.03  & 0.97  & 0.04  & 0.00  & 0.03  & 0.98  & 0.05
& 0.00  & 0.03  & 1.02  & 0.04 \tabularnewline $IBC-IV$ & 0.00  &
0.03  & 0.95  & 0.06  & 0.00  & 0.03  & 0.97  & 0.04  & 0.00  & 0.03
& 0.98  & 0.05  & 0.00  & 0.03  & 1.02  & 0.04 \tabularnewline
\bottomrule\bottomrule &  &  &  &  &  &  &  &  &  &  &  &  &  &  &
& \tabularnewline
\end{tabular}
\end{center}
{\footnotesize RC/FC refers to random/fixed coefficient model.
BC/IBC refers to bias corrected/iterated bias corrected estimates.
\\Note: $1,000$ repetitions.}

\end{sidewaystable}

\begin{sidewaystable}

\vspace{15cm}

\begin{centering}
\textbf{Table A2: Mean of Individual Specific Parameter
$\mu_{1}=\bar{E}[\alpha_{1i}]$}
\par\end{centering}

\begin{center}

\begin{tabular}{ccccccccccccccccc}
\toprule\toprule & \multicolumn{4}{c}{$\rho_1=0$} &
\multicolumn{4}{c}{$\rho_1=0.3$} & \multicolumn{4}{c}{$\rho_1=0.6$}
& \multicolumn{4}{c}{$\rho_1=0.9$}\tabularnewline
\cmidrule(rl){2-5}\cmidrule(rl){6-9}\cmidrule(rl){10-13}\cmidrule(rl){14-17}Estimator
& Bias & SD & SE/SD & p;.05 & Bias & SD & SE/SD & p;.05 & Bias & SD
& SE/SD & p;.05 & Bias & SD & SE/SD & p;.05\tabularnewline \hline
 & \multicolumn{16}{c}{$\psi=2$}\tabularnewline
$OLS-FC$ & 2.33  & 1.65  & 0.35  & 0.78  & 2.58  & 1.91  & 0.31  &
0.79  & 3.01  & 1.92  & 0.30  & 0.84  & 3.68  & 2.20  & 0.27  & 0.89
\tabularnewline $IV-FC$ & 0.08  & 1.59  & 0.40  & 0.44  & 0.16  &
1.72  & 0.37  & 0.47  & 0.46  & 1.66  & 0.39  & 0.46  & 0.96  & 1.80
& 0.37  & 0.53 \tabularnewline $OLS-RC$ & 1.16  & 1.53  & 1.02  &
0.12  & 1.15  & 1.65  & 0.96  & 0.12  & 1.19  & 1.59  & 0.99  & 0.12
& 1.25  & 1.62  & 0.97  & 0.13 \tabularnewline $BC-OLS$ & 1.16  &
1.53  & 0.97  & 0.14  & 1.15  & 1.65  & 0.92  & 0.14  & 1.19  & 1.59
& 0.95  & 0.14  & 1.25  & 1.62  & 0.93  & 0.15 \tabularnewline
$IBC-OLS$ & 1.16  & 1.53  & 0.97  & 0.14  & 1.15  & 1.65  & 0.92  &
0.14  & 1.19  & 1.59  & 0.95  & 0.14  & 1.25  & 1.62  & 0.93  & 0.15
\tabularnewline $IV-RC$ & 0.01  & 1.51  & 1.07  & 0.04  & -0.01  &
1.62  & 1.00  & 0.05  & 0.02  & 1.56  & 1.04  & 0.05  & 0.08  & 1.59
& 1.03  & 0.05 \tabularnewline $BC-IV$ & -0.01  & 1.51  & 1.02  &
0.04  & -0.02  & 1.62  & 0.96  & 0.06  & 0.00  & 1.56  & 1.00  &
0.06  & 0.06  & 1.59  & 0.98  & 0.06 \tabularnewline $IBC-IV$ &
-0.01  & 1.51  & 1.02  & 0.04  & -0.03  & 1.62  & 0.96  & 0.06  &
0.00  & 1.56  & 1.00  & 0.06  & 0.06  & 1.59  & 0.98  & 0.06
\tabularnewline
 & \multicolumn{16}{c}{$\psi=4$}\tabularnewline
$OLS-FC$ & 4.15  & 1.84  & 0.52  & 0.90  & 4.43  & 1.95  & 0.49  &
0.90  & 4.90  & 2.08  & 0.46  & 0.93  & 5.45  & 2.22  & 0.43  & 0.95
\tabularnewline $IV-FC$ & 0.09  & 1.85  & 0.59  & 0.25  & 0.21  &
1.92  & 0.57  & 0.27  & 0.56  & 1.89  & 0.58  & 0.27  & 1.03  & 1.94
& 0.57  & 0.32 \tabularnewline $OLS-RC$ & 3.19  & 1.76  & 1.06  &
0.41  & 3.12  & 1.81  & 1.04  & 0.38  & 3.12  & 1.76  & 1.07  & 0.38
& 3.18  & 1.78  & 1.06  & 0.38 \tabularnewline $BC-OLS$ & 3.19  &
1.76  & 0.93  & 0.50  & 3.12  & 1.81  & 0.91  & 0.48  & 3.12  & 1.76
& 0.94  & 0.47  & 3.18  & 1.78  & 0.93  & 0.47 \tabularnewline
$IBC-OLS$ & 3.19  & 1.76  & 0.93  & 0.50  & 3.12  & 1.81  & 0.91  &
0.48  & 3.12  & 1.76  & 0.94  & 0.47  & 3.18  & 1.78  & 0.93  & 0.47
\tabularnewline $IV-RC$ & 0.06  & 1.78  & 1.15  & 0.03  & -0.01  &
1.86  & 1.10  & 0.03  & 0.03  & 1.78  & 1.15  & 0.03  & 0.10  & 1.78
& 1.15  & 0.03 \tabularnewline $BC-IV$ & 0.00  & 1.78  & 1.02  &
0.05  & -0.08  & 1.86  & 0.98  & 0.05  & -0.04  & 1.78  & 1.02  &
0.05  & 0.03  & 1.78  & 1.02  & 0.05 \tabularnewline $IBC-IV$ &
-0.01  & 1.78  & 1.02  & 0.05  & -0.08  & 1.86  & 0.98  & 0.05  &
-0.04  & 1.78  & 1.02  & 0.05  & 0.03  & 1.78  & 1.02  & 0.05
\tabularnewline
 & \multicolumn{16}{c}{$\psi=6$}\tabularnewline
$OLS-FC$ & 5.62  & 2.13  & 0.62  & 0.93  & 5.87  & 2.25  & 0.58  &
0.92  & 6.19  & 2.31  & 0.57  & 0.93  & 6.35  & 2.28  & 0.57  & 0.95
\tabularnewline $IV-FC$ & 0.14  & 2.29  & 0.69  & 0.17  & 0.26  &
2.34  & 0.68  & 0.19  & 0.53  & 2.31  & 0.69  & 0.19  & 0.80  & 2.26
& 0.70  & 0.20 \tabularnewline $OLS-RC$ & 4.69  & 2.10  & 1.08  &
0.53  & 4.59  & 2.14  & 1.07  & 0.51  & 4.52  & 2.11  & 1.09  & 0.50
& 4.30  & 2.01  & 1.15  & 0.46 \tabularnewline $BC-OLS$ & 4.69  &
2.10  & 0.88  & 0.69  & 4.59  & 2.14  & 0.88  & 0.67  & 4.52  & 2.11
& 0.89  & 0.64  & 4.30  & 2.01  & 0.94  & 0.61 \tabularnewline
$IBC-OLS$ & 4.69  & 2.10  & 0.88  & 0.69  & 4.59  & 2.14  & 0.88  &
0.67  & 4.52  & 2.11  & 0.89  & 0.64  & 4.30  & 2.01  & 0.94  & 0.61
\tabularnewline $IV-RC$ & 0.09  & 2.30  & 1.12  & 0.04  & 0.05  &
2.33  & 1.11  & 0.03  & 0.03  & 2.23  & 1.16  & 0.02  & -0.10  &
2.18  & 1.19  & 0.02 \tabularnewline $BC-IV$ & -0.05  & 2.30  & 0.95
& 0.06  & -0.10  & 2.33  & 0.94  & 0.06  & -0.12  & 2.23  & 0.97  &
0.06  & -0.26  & 2.18  & 1.00  & 0.05 \tabularnewline $IBC-IV$ &
-0.06  & 2.30  & 0.95  & 0.06  & -0.10  & 2.32  & 0.94  & 0.06  &
-0.13  & 2.23  & 0.97  & 0.06  & -0.26  & 2.18  & 1.00  & 0.05
\tabularnewline \bottomrule\bottomrule &  &  &  &  &  &  &  &  &  &
&  &  &  &  &  & \tabularnewline
\end{tabular}

\end{center}
{\footnotesize RC/FC refers to random/fixed coefficient model.
BC/IBC refers to bias corrected/iterated bias corrected estimates.
\\Note:  $1,000$ repetitions.}
\end{sidewaystable}

\begin{sidewaystable}

\vspace{15cm}

\begin{centering}
\textbf{Table A3: Standard Deviation of the Individual Specific
Parameter $\sigma_{1}=\bar{E}[(\alpha_{1i} - \mu_1)^2]^{1/2}$}
\par\end{centering}

\begin{center}

\begin{tabular}{ccccccccccccccccc}
\toprule\toprule & \multicolumn{4}{c}{$\rho_1=0$} &
\multicolumn{4}{c}{$\rho_1=0.3$} & \multicolumn{4}{c}{$\rho_1=0.6$}
& \multicolumn{4}{c}{$\rho_1=0.9$}\tabularnewline
\cmidrule(rl){2-5}\cmidrule(rl){6-9}\cmidrule(rl){10-13}\cmidrule(rl){14-17}Estimator
& Bias & SD & SE/SD & p;.05 & Bias & SD & SE/SD & p;.05 & Bias & SD
& SE/SD & p;.05 & Bias & SD & SE/SD & p;.05\tabularnewline \hline
 & \multicolumn{16}{c}{$\psi=2$}\tabularnewline
$OLS-RC$ & 0.01  & 1.06  & 1.02  & 0.05  & 0.15  & 1.06  & 1.02  &
0.05  & 0.11  & 1.08  & 0.99  & 0.06  & 0.17  & 1.06  & 0.99  & 0.06
\tabularnewline $BC-OLS$ & -0.63  & 1.10  & 1.04  & 0.10  & -0.48  &
1.11  & 1.04  & 0.09  & -0.52  & 1.12  & 1.01  & 0.10  & -0.46  &
1.11  & 1.01  & 0.09 \tabularnewline $IBC-OLS$ & -0.63  & 1.10  &
1.04  & 0.10  & -0.48  & 1.11  & 1.04  & 0.09  & -0.52  & 1.12  &
1.01  & 0.10  & -0.46  & 1.11  & 1.01  & 0.09 \tabularnewline
$IV-RC$ & 0.38  & 1.08  & 1.03  & 0.05  & 0.47  & 1.10  & 1.02  &
0.06  & 0.41  & 1.13  & 0.98  & 0.06  & 0.46  & 1.11  & 0.99  & 0.06
\tabularnewline $BC-IV$ & -0.25  & 1.13  & 1.05  & 0.06  & -0.16  &
1.14  & 1.04  & 0.06  & -0.22  & 1.18  & 1.00  & 0.07  & -0.17  &
1.16  & 1.00  & 0.06 \tabularnewline $IBC-IV$ & -0.25  & 1.13  &
1.05  & 0.06  & -0.16  & 1.14  & 1.04  & 0.06  & -0.22  & 1.18  &
1.00  & 0.07  & -0.17  & 1.16  & 1.00  & 0.06 \tabularnewline
 & \multicolumn{16}{c}{$\psi=4$}\tabularnewline
$OLS-RC$ & 0.89  & 1.21  & 1.17  & 0.04  & 0.98  & 1.20  & 1.17  &
0.05  & 1.08  & 1.16  & 1.19  & 0.06  & 1.09  & 1.05  & 1.23  & 0.08
\tabularnewline $BC-OLS$ & -1.24  & 1.46  & 1.19  & 0.08  & -1.13  &
1.44  & 1.20  & 0.08  & -1.02  & 1.41  & 1.20  & 0.05  & -0.98  &
1.23  & 1.24  & 0.03 \tabularnewline $IBC-OLS$ & -1.24  & 1.46  &
1.19  & 0.08  & -1.13  & 1.44  & 1.20  & 0.08  & -1.02  & 1.41  &
1.20  & 0.05  & -0.98  & 1.24  & 1.22  & 0.03 \tabularnewline
$IV-RC$ & 1.84  & 1.28  & 1.17  & 0.17  & 1.83  & 1.29  & 1.16  &
0.16  & 1.85  & 1.26  & 1.17  & 0.18  & 1.87  & 1.18  & 1.17  & 0.20
\tabularnewline $BC-IV$ & -0.25  & 1.52  & 1.20  & 0.03  & -0.26  &
1.52  & 1.19  & 0.02  & -0.26  & 1.51  & 1.18  & 0.03  & -0.21  &
1.37  & 1.18  & 0.02 \tabularnewline $IBC-IV$ & -0.25  & 1.52  &
1.20  & 0.03  & -0.26  & 1.52  & 1.19  & 0.02  & -0.26  & 1.51  &
1.18  & 0.03  & -0.21  & 1.38  & 1.16  & 0.02 \tabularnewline
 & \multicolumn{16}{c}{$\psi=6$}\tabularnewline
$OLS-RC$ & 2.35  & 1.33  & 1.38  & 0.14  & 2.60  & 1.40  & 1.30  &
0.21  & 2.57  & 1.37  & 1.31  & 0.21  & 2.69  & 1.31  & 1.28  & 0.38
\tabularnewline $BC-OLS$ & -2.06  & 2.04  & 1.41  & 0.00  & -1.71  &
2.14  & 1.30  & 0.01  & -1.75  & 2.06  & 1.35  & 0.00  & -1.54  &
1.78  & 1.28  & 0.00 \tabularnewline $IBC-OLS$ & -2.06  & 2.04  &
1.41  & 0.00  & -1.71  & 2.14  & 1.30  & 0.01  & -1.75  & 2.06  &
1.35  & 0.00  & -1.54  & 1.80  & 1.26  & 0.00 \tabularnewline
$IV-RC$ & 3.79  & 1.52  & 1.31  & 0.46  & 3.87  & 1.55  & 1.28  &
0.49  & 3.78  & 1.50  & 1.30  & 0.47  & 3.87  & 1.48  & 1.23  & 0.60
\tabularnewline $BC-IV$ & -0.49  & 2.13  & 1.37  & 0.00  & -0.42  &
2.23  & 1.29  & 0.01  & -0.55  & 2.14  & 1.35  & 0.00  & -0.40  &
1.96  & 1.24  & 0.01 \tabularnewline $IBC-IV$ & -0.49  & 2.13  &
1.37  & 0.00  & -0.41  & 2.23  & 1.29  & 0.01  & -0.55  & 2.14  &
1.35  & 0.00  & -0.39  & 1.97  & 1.22  & 0.02 \tabularnewline
\bottomrule\bottomrule &  &  &  &  &  &  &  &  &  &  &  &  &  &  &
& \tabularnewline
\end{tabular}

\end{center}
{\footnotesize RC/FC refers to random/fixed coefficient model.
BC/IBC refers to bias corrected/iterated bias corrected estimates.
\\Note: $1,000$ repetitions.}
\end{sidewaystable}

\end{document}